\documentclass[10pt]{article}
\usepackage[letterpaper,margin = 1in]{geometry}

\usepackage[none]{hyphenat}  
\usepackage{microtype}
\usepackage{bm}
\usepackage{dsfont}
\usepackage{mathrsfs}
\usepackage{amsmath}  
\usepackage{amsthm}
\usepackage{mathtools}
\usepackage{enumitem}
\usepackage{dsfont}
\usepackage[noend, boxruled, linesnumbered] {algorithm2e}
\SetAlCapSkip{1em}
\SetKwInput{Input}{Input}
\SetKwInput{Output}{Output}
\SetKwRepeat{Do}{do}{while}
\SetKwBlock{Init}{Initialization}{}

\usepackage{color}
\usepackage{float}
 
\usepackage{amssymb}
\usepackage{bm}
\usepackage{dsfont}
\usepackage{graphicx}
\graphicspath{{figures/}{./}}
\usepackage{mathrsfs}
\usepackage[symbol]{footmisc}
\usepackage[font=small]{caption}
\usepackage{titlesec}
\titleformat{\section}{\large\bfseries}{\thesection}{1em}{}
\titleformat{\subsection}{\normalsize\bfseries}{\thesubsection}{1em}{}
\renewenvironment{abstract}
 {\par\noindent\textbf{\abstractname}\ \ignorespaces}
 {\par\medskip}

\usepackage{multirow}
\usepackage{adjustbox}

\usepackage{subcaption}
\usepackage{siunitx}
\usepackage{subfiles}

\usepackage[authoryear,round]{natbib}
\usepackage{hyperref}

\makeatletter
\renewcommand{\maketitle}{\bgroup\setlength{\parindent}{0pt}
\begin{flushleft}
  \Large \textbf{\@title}
    \vspace{11pt} \\
  \normalsize \@author
\end{flushleft}\egroup
}
\makeatother

\titlespacing*{\section}{0pt}{15pt}{10pt}
\titlespacing*{\subsection}{0pt}{15pt}{10pt}
\newcommand{\aref}[1]{Appendix~\ref{#1}}
\newcommand{\eref}[1]{Eq.~(\ref{#1})}
\newcommand{\fref}[1]{Figure~\ref{#1}}

\newcommand{\sref}[1]{Section~\ref{#1}}

\newcommand{\tref}[1]{Table~\ref{#1}}

\newcommand{\thmref}[1]{Theorem~\ref{#1}}
\newcommand{\lemmaref}[1]{Lemma~\ref{#1}}
\newcommand{\asref}[1]{Assumption~\ref{#1}}
\newcommand{\propref}[1]{Proposition~\ref{#1}}
\newcommand{\corref}[1]{Corollary~\ref{#1}}
\newcommand{\gene}[1]{\emph{#1}}

\DeclareMathOperator{\LerchPhi}{LerchPhi}

\DeclareMathOperator{\child}{ch}
\DeclareMathOperator{\cdf}{\text{cdf}}

\DeclareMathOperator{\pmf}{\text{pmf}}

\DeclareMathOperator{\Anc}{Anc}
\DeclareMathOperator{\Desc}{Desc}
\DeclareMathOperator{\rmd}{\mathrm{d}}

\newcommand{\E}{\mathbb{E}}

\newcommand{\R}{\mathbb{R}}

\newcommand{\ba}{\mathbf a}
\newcommand{\bb}{\mathbf b}
\newcommand{\bq}{\mathbf q}
\newcommand{\bn}{\mathbf n}

\newcommand{\bz}{\mathbf Z}
\newcommand{\bzs}{\mathbf z}
\newcommand{\bs}{\mathbf s}
\newcommand{\be}{\mathbf e}
\newcommand{\bu}{\mathbf u}
\newcommand{\bv}{\mathbf v}
\newcommand{\bw}{\mathbf w}

\newcommand{\bg}{\boldsymbol\gamma}

\usepackage{amsthm}

\numberwithin{equation}{section}

\theoremstyle{plain} 
\newtheorem{theorem}{Theorem}[section]
\newtheorem{lemma}[theorem]{Lemma}
\newtheorem{proposition}[theorem]{Proposition}
\newtheorem{corollary}[theorem]{Corollary}

\theoremstyle{definition} 

\newtheorem{assumption}[theorem]{Assumption}

\theoremstyle{remark} 

\titleformat{\paragraph}[runin]
  {\normalsize\bfseries}{\theparagraph}{1em}{}[.]
\titlespacing*{\paragraph}{0pt}{1.5em}{1em}

\begin{document}

\title{{Numerical approximations of population size distributions for multi-type branching processes}}

\author{Xiang Ge Luo$^{1,2}$, Jack Kuipers$^{1,2}$, Niko Beerenwinkel$^{1,2,\ast}$ \\
\small $^{1}$Department of Biosystems Science and Engineering, ETH Zurich, Klingelbergstrasse 48, 4056 Basel, Switzerland \\
\small $^{2}$SIB Swiss Institute of Bioinformatics, Switzerland \\
\small ORCID: Xiang Ge Luo, \href{https://orcid.org/0000-0003-2298-4066}{0000-0003-2298-4066}; Jack Kuipers, \href{https://orcid.org/0000-0001-5357-2705}{0000-0001-5357-2705}; Niko Beerenwinkel, \href{https://orcid.org/0000-0002-0573-6119}{0000-0002-0573-6119} \\
\small $^{\ast}$Correspondence: niko.beerenwinkel@bsse.ethz.ch
}

\date{}

\maketitle

\begin{abstract}
\normalsize
Continuous-time multi-type branching processes are fundamental models for expanding and migrating populations with cancer evolution being a prototypical example. 
Inferring model parameters, like mutation and growth rates, from time-series count data requires efficient computation of population size distributions. 
Existing methods are mainly based on large-time or large-number asymptotics, which rely on either restricted initial conditions or simplified interactions between cell types. 
Here, we introduce two numerical approximations of population size distributions for multi-type branching processes on directed graphs with arbitrary initialization. 
The first approach combines a saddle-point approximation with numerical integration of the probability generating function.
We characterize admissibility and establish conditions for saddle-point existence and uniqueness. 
For directed acyclic graphs, the second approach provides a large-time small-mutation-rate alternative based on closed-form approximate Laplace transforms and efficient numerical inversion. 
We benchmark the accuracy and speed of both solutions in simulations, showing substantial improvement over the state-of-the-art large-number approximation and orders of magnitude speedup over Gillespie's stochastic simulation algorithm at matching accuracy. 
We apply our methods to analyze the relapse dynamics of an acute myeloid leukemia patient, where rapid parameter scans over a six-type patient-specific mutation tree quantify how unobserved remission burden and treatment-altered fitness can explain relapse. 
Our methods provide computational building blocks for future likelihood-based inference in cancer evolution and other expanding populations.
\end{abstract}

\noindent\textbf{Keywords:} multi-type branching processes; saddle-point approximation; Laplace transforms; numerical approximations; cancer evolution

\noindent\textbf{Mathematics Subject Classification (2020):} 60J80; 65C20; 65C40; 92C50

\section{Introduction}
\label{sec:intro}

Continuous-time multi-type branching processes provide a first-principles framework for modeling cancer evolution \citep{athreya1972branching,durrett2015branching}. 
Cells are assumed to be independent agents that randomly replicate, die, and change types by accumulating mutations. 
These mutations can be neutral, advantageous, or deleterious, so that different combinations of mutations confer different growth rates. 
The population is then described by cell counts across types. 
In cancer studies, a central task is to infer evolutionary parameters from time-series cell-count data to understand tumorigenesis and predict treatment response \citep{foo2013dynamics, wu2025statistical}. 
To do so requires efficient computation of the likelihood, which is determined by the population size distributions at the sampling times. 
However, exact solutions are available only in simple cases. 
One-type processes can be described with simple distributions \citep{athreya1972branching,durrett2015branching}. 
For two-type processes, the probability generating function is available in closed form but involves combinations of hypergeometric functions, which requires numerical evaluation \citep{antal2011exact,nguyen2023stochastic}.
For processes with more than two interacting types, tractable closed-form population-size distributions are generally unavailable.

As a result, computing the probabilities for inference often relies on asymptotics. 
For types arranged in a chain with the first one growing supercritically, a foundational class of approximations has been developed, assuming the limit of large times, and sometimes also small mutation rates \citep{durrett2010evolutionary,durrett2010evolution,cheek2018mutation,nicholson2023sequential}. 
The size distribution of each type admits an asymptotic decomposition into a time-dependent deterministic factor and a time-independent random variable, which substantially simplifies probability calculations. 
These asymptotics have been applied to study intratumor heterogeneity \citep{durrett2011intratumor}, cancer initiation \citep{zhang2023waiting}, drug resistance \citep{nicholson2019competing}, and metastasis \citep{avanzini2019cancer}. 
This framework has been extended to ($i$) linear chains that begin from one or more neutral types, followed by a sequence of mutant types with zero initial cells and non-decreasing growth rates \citep{zhang2024accumulation}; and ($ii$) tree-structured models, where the root represents a static wild-type compartment, and the mutant subclones have zero initial cells and arbitrary growth rates \citep{Luo2025Bayesian}. 
The latter enables inference of fitness landscapes from single-cell tumor mutation trees at diagnosis times across patients.

The large-time approaches above have not been generalized to settings with arbitrary mutation directions or initial cell configurations. 
In such cases, one typically appeals to approximations based on the law of large numbers and the central limit theorem (CLT), assuming sufficiently large initial populations. 
These Gaussian surrogates use only the mean and covariance matrix of the system, and are widely used to model drug resistance in cell-line experiments \citep{foo2013dynamics,foo2013cancer,gunnarsson2020understanding,li2023comparison,gunnarsson2023statistical,leder2024parameter,wu2025statistical}. 
However, Gaussian approximations are accurate mainly near the mean and can incur large relative errors in skewed tail regions, particularly when the initial populations are small.

Another direction is to evaluate the probability generating function, since it encodes the population size distribution.
Numerical integration based on the method of characteristics has been used to calculate the probability that no target cells are present \citep{quinn1989hazard,paterson2026fast}. These approaches have not covered the full distribution.
Methods based on Fourier inversion of the generating function can compute probabilities of different population sizes \citep{xu2015compressed,xu2015likelihood,stutz2022computational,awasthi2023fast}.
However, they typically require processing arrays whose size grows with the largest count of interest, making them impractical when population sizes are very large.
Saddle-point approximations can instead evaluate probabilities at selected population sizes \cite[\textit{e.g.},][]{kolassa2006series, Butler2007Saddlepoint}.
Yet, existing approaches have focused on one- or two-type processes \citep{daniels1982birth,renshaw2000applying,lang2020predicting,davison2021birthdeath}, moment calculations \citep{hyrien2010saddlepoint}, or discrete-time networks \citep{degunst2021population}.

In this work, we introduce two numerical solutions to compute population-size distributions for continuous-time multi-type branching processes on directed graphs with arbitrary initialization. 
The first combines a saddle-point approximation with forward integration of the generating function along characteristic curves, providing an efficient alternative to stochastic simulation. 
The second method resides in the large-time and small-mutation-rate regime and generalizes existing solutions to be applicable to directed acyclic graphs. 
We derive closed-form approximate Laplace transforms, which can be combined with any inverse Laplace transform tool to obtain efficient approximations. 
In simulations, we benchmark the accuracy and speed of both solutions. We then apply the saddle-point method to a longitudinal AML relapse example. 
Together, these methods enable likelihood-based inference in the broad settings of expanding and migrating populations across cancer, development, and evolution.


\section{Background}

\subsection{Model definition}
\label{sec:model}

We consider a continuous-time, multi-type branching process
\begin{equation}
\bz(t) = \left(Z_1(t), \dots, Z_d(t)\right), \quad t \geq 0,
\label{eq:process}
\end{equation}
on the state space $\mathbb{N}^d$ where $d \geq 1$ is the number of types \citep{athreya1972branching, durrett2015branching}. 
Each random variable $Z_i(t)$ denotes the number of cells (or particles) of type $i \in V := \{1,\dots,d\}$ at time $t$.
The dependency structure among types is encoded by a directed graph $\mathcal G=(V,\,E)$, where $V$ denotes types and $E$ the edges corresponding to mutational transitions. 
Each cell evolves independently according to the following mechanisms:

\begin{enumerate}
  \item \textbf{Mutation:}  
  For every edge $(i \to j) \in E$, a cell of type $i$ may produce a new cell of type $j$ with rate $\mu_{ij} > 0$.
  \[
  Z_i  \xrightarrow{ \mu_{ij} }  Z_i + Z_j
  \]
    
  \item \textbf{Replication:}  
  A cell of type $i$ may duplicate with rate $\alpha_i > 0$.  
  \[
  Z_i  \xrightarrow{ \alpha_i }  Z_i + Z_i
  \]
  
  \item \textbf{Death:}  
  A cell of type $i$ may die and be removed from the system with rate $\beta_i > 0$.
  \[
  Z_i  \xrightarrow{ \beta_i }  \emptyset
  \]
  
  \item \textbf{Immigration:}  
  New cells of type $i$ may enter the system from an external source with rate $\nu_i \geq 0$.
  \[
  \emptyset  \xrightarrow{ \nu_i }  Z_i
  \]

\end{enumerate}
\fref{fig:MTBP_overview} provides a schematic representation of a multi-type branching process with the four possible transition mechanisms.

\begin{figure}[!t]
    \centering
    \includegraphics[width=0.6\textwidth]{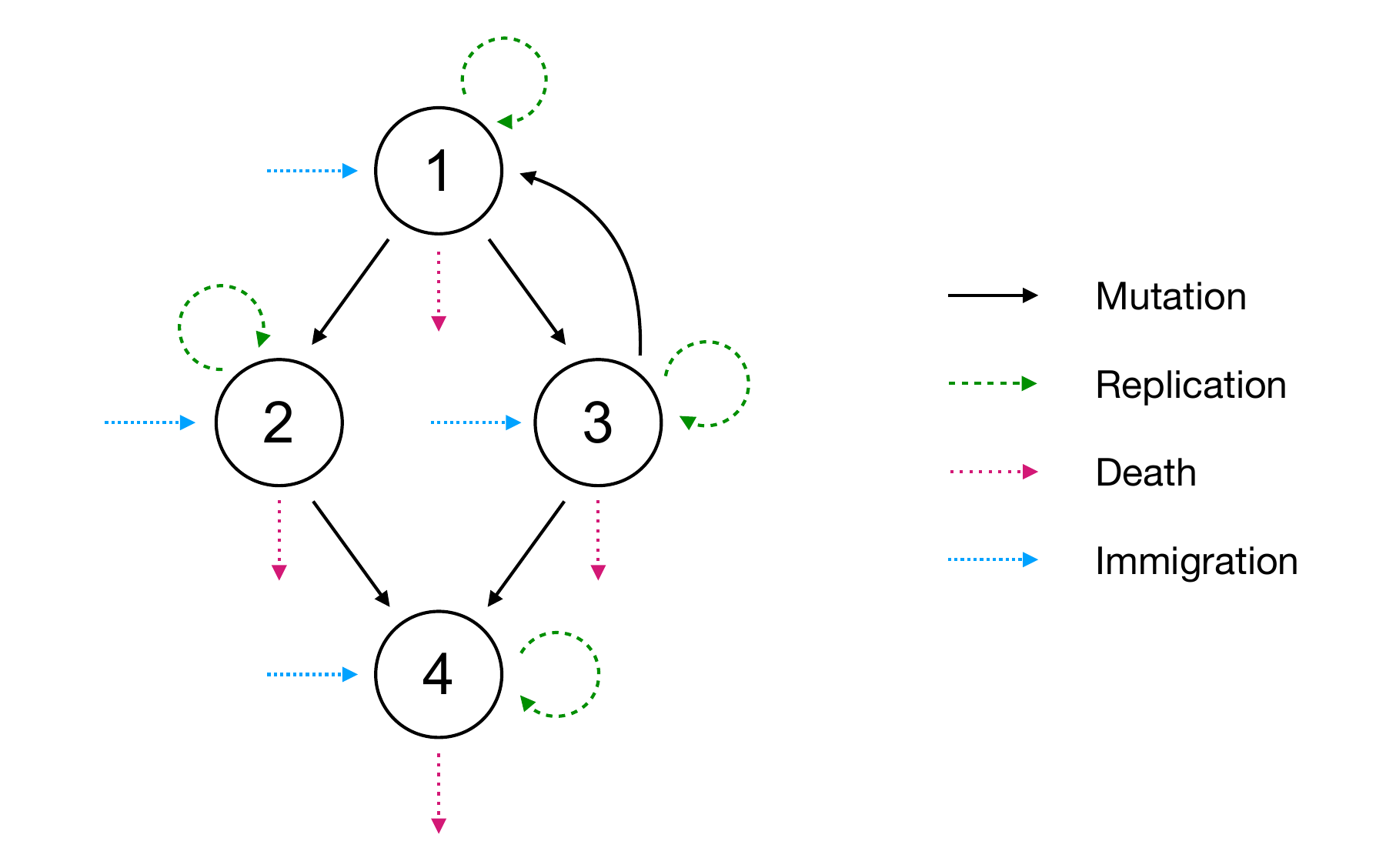}
    \caption{Schematic representation of a multi-type branching process. The figure illustrates the four possible transition mechanisms: Mutation (black solid arrow), Replication (green dashed arrow), Death (pink dotted arrow), and Immigration (blue hatched arrow)}
    \label{fig:MTBP_overview}
\end{figure}

We assume the rate parameters 
\begin{equation}
\label{eq:params}
    \theta := (\theta_i)_{i \in V}, \qquad \theta_i = \{\nu_i,\, \alpha_i,\, \beta_i,\, \bm{\mu}_{i}\}, \qquad \bm{\mu}_{i} = \{\mu_{ij}\}_{(i,j) \in E}
\end{equation}
are constant over time. We are interested in computing the population size distributions of $\bz(t)$ at time $t$, including the joint probability mass function ($\pmf$), 
\[
    p(t, \bn) := P ( \bz(t)=\bn ), \qquad \bn \in \mathbb{N}^d,
\]
the marginal probability mass function,
\[
    p(t, n_j) := P(Z_j(t) = n_j), \qquad n_j \geq 0,
\]
and the marginal cumulative distribution function ($\cdf$), 
\[
    F(t, n_j) := P(Z_j(t) \le n_j), \qquad n_j \geq 0.
\]
Let $\be_i$ denote the $i$-th unit vector. Then, the Kolmogorov forward (or chemical master) equation of the process defined by the four mechanisms above is
\[
\begin{split}
\frac{\rmd}{\rmd t} p(t, \bn)
= \sum_{i\in V} \Bigg[ & \Bigg(
    \nu_i + \alpha_i (n_i-1) + \sum_{j:(j,i)\in E}\mu_{ji}  n_j\Bigg) p(t, \bn - \be_i)\\[1ex]
  &+ \beta_i (n_i+1) p(t, \bn + \be_i)
 -  \left(
    \nu_i +
    \alpha_i n_i
  + \beta_i n_i
  + \sum_{j:(i,j)\in E}\mu_{ij} n_i
\right) p(t, \bn) \Bigg],
\end{split}
\]
with an arbitrary initial condition
$
\bz(0)=\bzs_0 \in\mathbb{N}^d.
$
Solving this equation analytically is generally prohibitive.

\subsection{Characteristic representation of the solution}

The probability generating function of $\bz(t)$ encoding the population size distributions is given by
\[
\Psi(t,\bq)=\E\left[\prod_{i \in V} q_i^{Z_i(t)}\right]
=\sum_{\bn \in \mathbb{N}^d} \bq^{\bn} p(t,\bn), \qquad
\bq^{\bn} = \prod_{i \in V} q_i^{n_i}.
\]
Following \citet{quinn1989hazard} and \citet{paterson2026fast}, for $\bq\in[0,1]^d$, the generating function satisfies the partial differential equation
\[
\dfrac{\partial\Psi}{\partial t} = \sum_{i\in V}X_i(\bq) \dfrac{\partial\Psi}{\partial q_i} +Y(\bq) \Psi,
\]
with
\begin{align}
X_i(\bq) &= \alpha_i q_i(q_i-1) + \beta_i(1 - q_i)+\sum_{j:(i,j) \in E}\mu_{ij} q_i (q_j-1), \label{eq:X}\\
Y(\bq)&=\sum_{i\in V}\nu_i(q_i-1). \label{eq:Y}
\end{align}
The method of characteristics then gives the exact solution
\begin{equation}
\label{eq:exact_solution}
\Psi(t,\bq)=\bg(t;\bq)^{\bzs_0}\exp\left(w(t;\bq)\right),
\end{equation}
where 
\begin{equation}
\label{eq:w}
    \bg(t;\bq)^{\bzs_0} = \prod_{i \in V} \gamma_i(t;\bq)^{z_{0,i}}, \qquad w(t;\bq):=\int_0^t Y\bigl(\bg(\sigma;\bq)\bigr)\rmd\sigma,
\end{equation}
and the characteristics
$\bg(\sigma;\bq)=\bigl(\gamma_i(\sigma;\bq)\bigr)_{i\in V}$
solve the autonomous ODE system
\begin{equation}
\label{eq:char}
    \forall i \in V, \qquad \dfrac{\rmd\gamma_i(\sigma;\bq)}{\rmd \sigma} = X_i\bigl(\bg(\sigma;\bq)\bigr), \qquad \gamma_i(0;\bq)=q_i.
\end{equation}
In other words, $\Psi$ can be evaluated by integrating this ODE system, but extracting the population probabilities $p(t,\bn)$ requires inverting $\Psi$. 
For the probabilities of zero cells, this inversion is unnecessary because they are obtained directly by evaluating $\Psi$ at $q_i\in\{0,1\}$ \citep{quinn1989hazard, paterson2026fast}. 
In the next section, we therefore introduce a saddle-point approximation that uses this characteristic representation of $\Psi$ to approximate population probabilities on general directed graphs.
We show that \eref{eq:exact_solution} extends to $\bq\in[0,\infty)^d$ whenever the corresponding characteristics $\bg$ remain finite, and establish conditions for the existence and uniqueness of a saddle point.
In case of non-existence, \sref{sec:large_time} provides an alternative approximation for directed acyclic graphs in the large-time, small-mutation-rate regime.

\section{Saddle-point approximation}

We first apply the standard multivariate saddle-point approximation to $p(t, \bn)$~\cite[see \textit{e.g.},][]{kolassa2006series, Butler2007Saddlepoint}.
Consider the moment generating function of $\bz(t)$ and its domain of finiteness,
\[
M(t,\bu):=\E\left[e^{\bu^\top\bz(t)}\right],
\qquad
\mathcal M_t:=
\left\{\bu\in\mathbb R^d:M(t,\bu)<\infty\right\}.
\]
For any $\mathbf c\in\operatorname{int}\mathcal M_t$, set
$r_j=e^{c_j}$. 
By Cauchy's integral formula,
\[
p(t,\bn)=\frac{1}{(2\pi i)^d}\oint_{|q_1|=r_1}\cdots\oint_{|q_d|=r_d}
\frac{\Psi(t,\bq)}{\prod_{j=1}^d q_j^{n_j+1}}
\rmd q_1\cdots\rmd q_d.
\]
Parameterizing
\[
q_j=e^{c_j+i\omega_j},
\qquad -\pi\le\omega_j\le\pi,
\]
gives the Fourier inversion form
\begin{equation}
\label{eq:Fourier}
p(t,\bn)=\frac{1}{(2\pi)^d}
\int_{\left[-\pi,\pi\right]^d}
\Psi\left(t,e^{\mathbf c+i\boldsymbol\omega}\right)
e^{-\left(\mathbf c+i\boldsymbol\omega\right)^\top\bn}
\rmd\boldsymbol\omega.
\end{equation}
On $\operatorname{int}\mathcal M_t$, we define the cumulant generating function by
\begin{equation}
\label{eq:K_v1}
K(t,\bu):=\log M(t,\bu) = \log \Psi(t,e^{\bu}),
\end{equation}
and the first and second derivatives at $\bu$ by
\[
\mathbf m(\bu):=\nabla_{\bu}K(t,\bu),\qquad \Sigma(\bu):=\nabla_{\bu}^2 K(t,\bu).
\]
The saddle point $\hat{\bu}\in\R^d$ solves the equation
\begin{equation}
\label{eq:saddle}
\mathbf m(\hat{\bu})=\bn.
\end{equation}
Suppose that a nondegenerate $\hat{\bu}\in\operatorname{int}\mathcal M_t$ exists. Taking $\mathbf c=\hat{\bu}$ in \eref{eq:Fourier}, the exponent of the integrand near $\boldsymbol\omega=\mathbf0$ admits the expansion
\[
K\left(t,\hat{\bu}+i\boldsymbol\omega\right)
-\left(\hat{\bu}+i\boldsymbol\omega\right)^\top\bn
=
K\left(t,\hat{\bu}\right)-\hat{\bu}^\top\bn 
+i\boldsymbol\omega^\top
\left(\mathbf m\left(\hat{\bu}\right)-\bn\right)
-\frac12
\boldsymbol\omega^\top
\Sigma\left(\hat{\bu}\right)
\boldsymbol\omega
+\mathcal O\left(\left\lVert\boldsymbol\omega\right\rVert^3\right).
\]
By \eref{eq:saddle}, the linear term vanishes. Retaining the quadratic term and extending the integral to $\mathbb R^d$ gives the second-order saddle-point approximation formula
\begin{equation}
\label{eq:SP}
p(t,\bn)\approx\dfrac{\exp\left(K\bigl(t,\hat{\bu}\bigr)-\hat{\bu}^{\top} \bn\right)}{(2\pi)^{d/2}\sqrt{\det \Sigma(\hat{\bu})}}.
\end{equation}

Under standard saddle-point asymptotics, the approximation improves as the effective population size increases, which depends on both $\bzs_0$ and $\bn$ \cite[see \textit{e.g.},][]{daniels1954saddlepoint, kolassa2006series}. Higher-order corrections involve the third and fourth derivatives of $K$ at $\hat{\bu}$. Unlike Gaussian approximations derived from the central limit theorem \cite[see \textit{e.g.},][]{gunnarsson2023statistical}, which expand $K(t,\bu)$ at $\bu=0$ and use only the mean $\mathbf m(0)$ and covariance $\Sigma(0)$ under the original measure, the saddle-point method first applies an exponential tilt to match the target count by solving $\mathbf m(\hat{\bu})=\bn$ and then makes the quadratic approximation at $\hat{\bu}$. This uses the full shape of $K$ through the map $\bu\mapsto\mathbf m(\bu)$ and adapts to $\bn$ rather than to the center, reducing large relative errors in skewed tail regions.
\aref{app:saddle_illustration} illustrates this difference on a one-type process with immigration and no initial cells.

\subsection{Admissibility, existence, and uniqueness}
\label{sec:admissibility}

Computing the saddle-point approximation \eref{eq:SP} involves evaluating the generating functions and their derivatives.
Next, we study the model-specific conditions under which the characteristic representation of $\Psi$ (\eref{eq:exact_solution}) and the saddle-point approximation are well defined.
For compact notation, we write
\[
\bg_{\bu}(\sigma):=\bg(\sigma;e^{\bu}),\qquad
\gamma_{i,\bu}(\sigma):=\gamma_i(\sigma;e^{\bu}),\qquad
w_{\bu}(\sigma):=w(\sigma;e^{\bu}).
\]
For fixed $t>0$, we define the admissible set as
\[
\mathcal U_t:=\left\{\bu\in\mathbb R^d:
\bg_{\bu}\text{ exists on }[0,t]\right\},
\]
and call $\bu$ admissible if $\bu\in\mathcal U_t$. 
For $\bu\in\mathcal U_t$, \lemmaref{lemma:positive_characteristics} and \thmref{thm:characteristic_mgf} show that \eref{eq:exact_solution} remains valid for $\bq = e^{\bu}$ and that \eref{eq:K_v1} can be rewritten using the characteristics
\begin{equation}
\label{eq:K_v2}
K(t,\bu)=\bzs_0^{\top}\log\bg_{\bu}(t)+w_{\bu}(t).
\end{equation}
Moreover, \lemmaref{lemma:admissible_mgf_interior} gives
\[
\mathcal U_t\subseteq\operatorname{int}\mathcal M_t,
\]
so $K(t,\cdot)$ is smooth and its derivatives $\mathbf m(\bu)$ and $\Sigma(\bu)$ exist on $\mathcal U_t$.

\begin{assumption}
\label{assum:nondegen}
Let the active type set be
\[
\mathcal A=\{j\in V:\nu_j>0 \text{ or } z_{0,j}>0\},
\]
which contains the types that seed the process through immigration or initial cells.
We assume that $\mathcal A$ is non-empty, and that for every type $k\in V$, either $k\in\mathcal A$, or there exists a directed path in $\mathcal G$ from some type $j\in\mathcal A$ to $k$.
\end{assumption}

Note that if \asref{assum:nondegen} fails, then any type not reachable from $\mathcal A$ is identically zero for all $t\ge0$. Hence, we may restrict attention to the induced subgraph
\[
\mathcal G'=\mathcal G[V'], \qquad
V':=\{k\in V:\exists j\in\mathcal A \text{ such that there is a directed path from } j \text{ to } k\},
\]
on which \asref{assum:nondegen} holds automatically, and all subsequent results can be stated and proved with $\mathcal G'$ in place of $\mathcal G$.

\begin{assumption}
\label{assum:all-active}
Every type is active, \textit{i.e.,}~$\mathcal A=V$.
\end{assumption}

By definition, an admissible saddle for target $\bn$ exists if and only if
$\bn\in\mathbf m(\mathcal U_t)$. Under \asref{assum:nondegen}, \propref{prop:saddle_uniqueness} shows that $\Sigma(\bu)$ is positive definite for all $\bu\in\mathcal U_t$ and the map $\mathbf m:\mathcal U_t\to\R^d$ is injective. Hence, every admissible saddle is nondegenerate and unique whenever it exists.
Under \asref{assum:all-active}, \thmref{thm:saddle_point_existence} gives
$\mathbf m(\mathcal U_t)=(0,\infty)^d$, so an admissible saddle exists for
every positive target $\bn \in (0,\infty)^d$. A three-type counterexample in \sref{app:saddle_counterexample} shows that \asref{assum:nondegen} alone does not guarantee existence for all positive targets. If $n_j=0$ for some $j\in V$, then set $q_j=0$ in $\Psi(t,\bq)$ and apply saddle-point approximation in the remaining positive-target dimensions.

\subsection{Gradients, Hessians, and admissibility conditions}

The search for the saddle point requires repeated evaluations of $K(t, \bu)$, its gradient $\mathbf m(\bu)$, and Hessian $\Sigma(\bu)$, while ensuring the admissibility of $\bu$.
Our strategy is to integrate the characteristic system \eref{eq:char} together with its first- and second-order variational equations, and to derive auxiliary equations that provide necessary conditions for admissibility.

\paragraph{Gradients}
Let
\begin{equation}
\label{eq:S_xi}
    S(\sigma) := \dfrac{\partial \bg_{\bu}(\sigma)}{\partial \bu} \in\R^{d\times d}
    \qquad \text{and} \qquad
    \bm\xi(\sigma):= \dfrac{\partial w_{\bu}(\sigma)}{\partial \bu} \in\R^d.
\end{equation}
Differentiating \eref{eq:w} and \eref{eq:char} with respect to $\bu$ yields the variational equations
\begin{align}
\dfrac{\rmd S}{\rmd \sigma} &= J(\bg_{\bu}) S, & S(0)&=\mathrm{diag}(\bq), \label{eq:Svar}\\
\dfrac{\rmd \bm\xi}{\rmd \sigma}&= S^{\top}\bm\nu, & \bm\xi(0)&=\mathbf 0, \label{eq:xivar}
\end{align}
where $J(\bq)$ is the Jacobian of $\mathbf X$ with entries
\begin{equation}
\label{eq:jacobian}
(J(\bq))_{jk} := \frac{\partial X_j}{\partial q_k} =
    \begin{dcases}
        2\alpha_j q_j-\alpha_j-\beta_j+ \sum_{\ell:(j,\ell) \in E}\mu_{j\ell}(q_{\ell}-1) & \text{if } j = k, \\
        \mu_{jk} q_j & \text{if } (j,k) \in E, \\
         & \\
        0 & \text{otherwise.}
    \end{dcases}
\end{equation}
and $\bm\nu=(\nu_j)_{j \in V}$. At time $t$, the gradient $\nabla_{\bu} K(t,\bu)$ is given by
\begin{equation}
\label{eq:mean}
\mathbf m (\bu) = S^\top \mathrm{diag}(\bg_{\bu}(t)^{-1}) \bzs_0 + \bm \xi.
\end{equation}

\paragraph{Hessians}
Let
\begin{equation}
    T(\sigma) := \dfrac{\partial^2 \bg_{\bu}(\sigma)}{\partial \bu^2} \in\R^{d\times d\times d} \qquad \text{and} \qquad \Xi(\sigma) := \dfrac{\partial^2 w_{\bu}(\sigma)}{\partial \bu^2} \in\R^{d\times d}.
\end{equation}
Differentiating \eref{eq:w} and \eref{eq:char} with respect to \(\bu\) twice yields
\begin{align}
\dfrac{\rmd T_j}{\rmd \sigma} &= \sum_{a \in V}J_{ja}(\bg_{\bu}) T_a
+ S^{\top} H_j(\bg_{\bu}) S,
& T_{j}(0)&=q_j\be_j \be_j^{\top} , \label{eq:Tvar}\\
\dfrac{\rmd \Xi}{\rmd \sigma}&=\sum_{j=1}^d \nu_j T_j, & \Xi(0)&=\mathbf{0}, \label{eq:Xivar}
\end{align}
where $T_j \in \mathbb R^{d\times d}$ is the $j$-th matrix slice of the tensor $T$ with entries $(T_j)_{k{\ell}} =T_{jk{\ell}}$, and $H_j(\bq)$ is the Hessian of $X_j$ with entries
\begin{equation}
(H_j(\bq))_{k{\ell}}:= \dfrac{\partial^2 X_j}{\partial q_k \partial q_{\ell}} =
    \begin{dcases}
        2\alpha_j & \text{if } j = k = \ell,\\
        \mu_{jk} & \text{if } j = \ell \text{ and } (j,k) \in E,\\
        \mu_{j\ell} & \text{if } j = k \text{ and } (j,\ell) \in E,\\
        0 & \text{otherwise.}
    \end{dcases}
\end{equation}
At time $t$, the Hessian $\nabla_{\bu}^2K$ is given by
\begin{equation}
\label{eq:Sigma}
\Sigma(\bu)= \sum_{j \in V} \dfrac{z_{0,j}}{\gamma_{j,\bu}(t)} T_j - S^{\top} L S +\Xi, \qquad L = \mathrm{diag}\left( \dfrac{z_{0,1}}{\gamma_{1,\bu}(t)^2}, \dots, \dfrac{z_{0,d}}{\gamma_{d,\bu}(t)^2}\right).
\end{equation}

\paragraph{Admissibility conditions}
To obtain computable diagnostics for admissibility, we rewrite each component of \eref{eq:char} as a scalar Riccati equation
\begin{equation}
\label{eq:riccati}
\frac{\rmd}{\rmd\sigma}\gamma_{i,\bu}(\sigma)
=
\alpha_i\gamma_{i,\bu}(\sigma)^2
+b_i(\sigma)\gamma_{i,\bu}(\sigma)+\beta_i,
\qquad
b_i(\sigma)=-(\alpha_i+\beta_i)+\sum_{k:(i,k)\in E}\mu_{ik}\left(\gamma_{k,\bu}(\sigma)-1\right),
\end{equation}
with initial condition $\gamma_{i,\bu}(0)=q_i=e^{u_i}$. Define $y_i$ as the solution
of
\begin{equation}
\label{eq:riccati_transformation}
y_i'(\sigma)=-\alpha_i\gamma_{i,\bu}(\sigma)y_i(\sigma) \qquad \Longleftrightarrow \qquad
\gamma_{i,\bu}(\sigma)=-\frac{1}{\alpha_i}
\frac{y_i'(\sigma)}{y_i(\sigma)}.
\end{equation}
Differentiating \eref{eq:riccati_transformation} and using \eref{eq:riccati}
shows that $y_i$ satisfies
\begin{equation}
\label{eq:linear}
y_i''(\sigma)-b_i(\sigma)y_i'(\sigma)+\alpha_i\beta_i y_i(\sigma)=0,
\qquad
y_i(0)=1,\quad y_i'(0)=-\alpha_i q_i.
\end{equation}
Let $\varphi_i,\varrho_i$ be the fundamental solutions of \eref{eq:linear} with
\[
\varphi_i(0)=1,\quad \varphi_i'(0)=0,\qquad
\varrho_i(0)=0,\quad \varrho_i'(0)=1.
\]
Then, the fundamental matrix
\[
Q_i(\sigma)=
\begin{bmatrix}
\varphi_i(\sigma) & \varrho_i(\sigma)\\
\dot\varphi_i(\sigma) & \dot\varrho_i(\sigma)
\end{bmatrix}
\in\mathbb R^{2\times 2}
\]
solves
\begin{equation}
\label{eq:admissible_ODE}
\frac{\rmd}{\rmd\sigma}Q_i(\sigma)=A_i(\sigma)Q_i(\sigma),\qquad Q_i(0)=I_2,
\end{equation}
with
\begin{equation}
\label{eq:admissible_A}
A_i(\sigma)=
\begin{bmatrix}
0 & 1\\
-\alpha_i\beta_i & b_i(\sigma)
\end{bmatrix},
\end{equation}
and
\begin{equation}
\label{eq:linear_solution}
y_i(\sigma)=\varphi_i(\sigma)-\alpha_i q_i\varrho_i(\sigma).
\end{equation}

By \propref{prop:admissibility-caps}, admissibility requires
\[
y_i(\sigma)>0,
\qquad i\in V,\quad 0\le\sigma\le t.
\]
This provides a practical pathwise diagnostic: if $\bu$ is not admissible, then at least one auxiliary coordinate $y_i(\sigma)$ tends to zero as $\sigma\uparrow T$, where $T\le t$ is the first blowup time of $\bg$. 
Moreover, the terminal quantity
\[
q_{i,\max}(t;\bu):=\frac{\varphi_i(t)}{\alpha_i\varrho_i(t)}
\]
provides a convenient \textit{necessary} coordinate-wise bound:
\begin{equation}
\label{eq:coordinate_caps}
u_i<\log q_{i,\max}(t;\bu),
\qquad i\in V.
\end{equation}
If $i\in V$ is a leaf with no children, it reduces to
\[
q_{i,\max}(t)=
\begin{dcases}
\dfrac{e^{\lambda_i t}-\beta_i/\alpha_i}{e^{\lambda_i t}-1} & \text{if } \alpha_i\neq\beta_i, \\[1ex]
1+\dfrac{1}{\alpha_i t} & \text{if } \alpha_i=\beta_i,
\end{dcases}
\qquad \text{where } \lambda_i:=\alpha_i-\beta_i.
\]
Together, $y_i$ and $q_{i,\max}$ provide admissibility diagnostics and the initial step-size heuristic used by the numerical solver in \sref{sec:SP_solver}.

\subsection{Univariate marginals}

To obtain the univariate marginal $\pmf$, $p(t, n_j) = P(Z_j(t) = n_j)$, we set $q_i=1$ for all $i\neq j$ (equivalently, $u_i=0$ for $i\neq j$), which reduces the cumulant generating function to the scalar map
\[
k_j(u):= K\left(t, u \be_j\right),
\]
and apply the same saddle-point construction as for the joint $\pmf$, with the tilt restricted to the $j$-th coordinate.
Under \asref{assum:nondegen}, the univariate saddle point is unique whenever it exists, and under \asref{assum:all-active}, it exists for every positive target (\propref{prop:univariate-inheritance}).
For the marginal $\cdf$, $F(t, n_j) = P(Z_j(t)\le n_j)$, we use the continuity-corrected Lugannani--Rice approximation of \citet{lugannani1980saddle} and \citet{Butler2007Saddlepoint}, a specialized univariate saddle-point method to compute the tail probabilities. Specifically, let the saddle point $\hat{u}_j\in\mathbb{R}$ be the solution to the equation
\[
    k'_j(\hat{u}_j) = \tilde{n}_j := n_j + \frac{1}{2},
\]
where the offset $\frac{1}{2}$ is for continuity correction. With two auxiliary quantities
\[
\hat{\zeta}_j := \operatorname{sign}(\hat{u}_j)\sqrt{2\left[\hat{u}_j \tilde{n}_j - k_j(\hat{u}_j)\right]} \qquad \text{and} \qquad
\hat{\eta}_j := 2 \sinh\left(\dfrac{\hat{u}_j}{2}\right)\sqrt{k''_j(\hat{u}_j)},
\]
the approximation is given by
\begin{equation} 
\label{eq:LR}
P(Z_j(t) \le n_j) \approx
\begin{dcases}
    \Phi(\hat{\zeta}_j) + \phi(\hat{\zeta}_j)\left(\dfrac{1}{\hat{\zeta}_j} - \dfrac{1}{\hat{\eta}_j}\right) & \text{if } \tilde n_j \neq \mu_j, \\[1.5ex]
    \frac{1}{2} + \frac{k'''_j(0)}{6\sqrt{2\pi} k''_j(0)^{3/2}} & \text{if } \tilde n_j = \mu_j.
\end{dcases}
\end{equation}
Here, $\Phi$ and $\phi$ are the standard normal cumulative distribution and density functions, respectively, and $\mu_j := k_j'(0)$ is the mean of $Z_j(t)$. 
For computation, the derivatives $k_j'$ and $k_j''$ are obtained from first- and second-order directional variational equations along $\be_j$ (\aref{app:saddle_univariate}). In the case $\tilde n_j=\mu_j$, we obtain $k'''_j(0)$ by a symmetric finite difference of $k_j''(u)$ around $u=0$.

\subsection{Numerical procedure}
\label{sec:SP_solver}

We now combine the saddle-point formula, variational equations, and admissibility diagnostics in a single numerical solver.
We augment the numerical scheme of \citet{paterson2026fast} by simultaneously integrating the characteristic equations (\eref{eq:w} and \eref{eq:char}), the variational equations (\eref{eq:S_xi}--\eref{eq:Sigma}), and the auxiliary systems (\eref{eq:admissible_ODE}--\eref{eq:linear_solution}) in one forward pass. This yields $(K,\mathbf m,\Sigma)$ for saddle-point evaluation, as well as $\bq_{\max}=(q_{i,\max})_{i\in V}$ and $y_{\min}=\min_{i\in V}\min_{0\le\sigma\le t}y_i(\sigma)$ for pathwise admissibility diagnostics. 
For a target $\bn$, solving the saddle-point equation $\mathbf m(\bu)=\bn$ is equivalent to minimizing the strictly convex objective (\propref{prop:saddle_uniqueness}):
\[
F_{\bn}(\bu):=K(t,\bu)-\bn^\top\bu,
\qquad \nabla F_{\bn}=\mathbf m-\bn,
\qquad \nabla^2F_{\bn}=\Sigma.
\]
We apply a damped Newton method:
\begin{enumerate}[label=(\roman*)]
    \item Starting from $\bu=\mathbf0$, the outer solver computes a direction $\delta$ from the Cholesky factorization of
    \[
        \Sigma \delta = \bn - \mathbf m.
    \]
    The coordinate-wise bounds $\bq_{\max}$ provide an initial step-size heuristic,
    \begin{equation}
    \label{eq:s}
        \tilde{s} = \min_{i:\delta_i>0}
      \frac{\log q_{i,\max}(t;\bu)-u_i}{\delta_i},
    \qquad
    s = \min\left\{1,\epsilon\tilde{s}\right\},\quad \epsilon\in(0,1).
    \end{equation}
    A backtracking line search then halves $s$ until the trial forward pass remains admissible and gives sufficient decrease in $F_{\bn}$. If the line search stalls near $y_{\min}=0$ with a nonzero residual, the solver reports that no interior saddle was detected. 
    Convergence is declared when $\|\mathbf m-\bn\|_2$ drops below a specified tolerance, after which \eref{eq:SP} is evaluated.

    \item The inner solver uses Heun's method to perform one forward pass and returns $(K,\mathbf m,\Sigma,\bq_{\max},y_{\min})$ together with an admissibility flag.
\end{enumerate}
We provide the pseudocode in \aref{app:algorithms}.

\paragraph{Computational cost}
Let $N_{\mathrm{step}}=\lceil t/\Delta\rceil$ and $m=|E|$. A joint forward pass is dominated by the second-order variational equations, which require $\mathcal O\left((d+m)d^2\right)$ operations per Heun step and $\mathcal O(d^3)$ working memory. A forward pass, together with the Cholesky factorization of $\Sigma$ in a Newton update, costs
\[
\mathcal O\left(N_{\mathrm{step}}(d+m)d^2+d^3\right).
\]
For a univariate marginal, directional recursions reduce each forward pass to $\mathcal O\left(N_{\mathrm{step}}(d+m)\right)$ time and $\mathcal O(d)$ working memory.

\section{Large-time small-mutation-rate approximation}
\label{sec:large_time}

Throughout this section, we assume that $\mathcal G=(V,E)$ is a directed acyclic graph (DAG). At large times and small mutation rates, the population size distributions of $\bz(t)$ can be approximated by recursive Laplace transforms. We describe here how to build the approximation out of simpler components, with additional details in \aref{app:ltsm}. This approximation does not rely on \asref{assum:nondegen} or \asref{assum:all-active}.

\subsection{Elementary processes}
\label{sec:elementary_processes}

We begin with two elementary processes whose Laplace transforms are available in closed form \citep{athreya1972branching, durrett2015branching, nicholson2023sequential} (\fref{fig:LTSM}a):
\begin{enumerate}
    \item A one-type branching process with zero initial population and positive immigration:
    \[
        \left(Z^0(t)\right)_{t \geq 0}, \qquad Z^0(0) = 0 \text{ and } \alpha, \beta, \nu > 0.
    \]
    Let $\lambda = \alpha - \beta$. The Laplace transform of $Z^0(t)$ at time $t > 0$ is given by
    \begin{equation}
    \label{eq:L_0}
        \mathcal L^0(s; t) := \E \left[ e^{-sZ^0(t)}\right] = \left[\dfrac{p(t)}{1 - (1-p(t))e^{-s}}\right]^{\dfrac{\nu}{\alpha}},
    \end{equation}
    where
    \[
        p(t) = 
        \begin{dcases}
        \dfrac{1}{1 + \alpha t} & \text{if } \lambda = 0, \\[1ex]
        \dfrac{\lambda}{\alpha \exp(\lambda t) - \beta} & \text{otherwise.} 
        \end{dcases}
    \]
    
    \item A one-type branching process initiating from a single cell and having no immigration:
    \[
        \left(Z^1(t)\right)_{t \geq 0}, \qquad Z^1(0) = 1 \text{ and } \alpha, \beta> 0, \,\nu = 0.
    \]
    The Laplace transform of $Z^1(t)$ at time $t > 0$ is given by
    \begin{equation}
    \label{eq:L_1}
        \mathcal L^1(s; t) := \E \left[ e^{-sZ^1(t)}\right] = a(t) + (1 - a(t)) \dfrac{(1 - b(t))e^{-s}}{1 - b(t) e^{-s}},
    \end{equation}
    where
    \[
        a(t) = 
        \begin{dcases}
            \dfrac{\alpha t}{1 + \alpha t} & \text{ if } \lambda = 0, \\[1ex]
            \dfrac{\beta \exp(\lambda t) - \beta}{\alpha \exp(\lambda t) - \beta}& \text{ otherwise},
        \end{dcases}
         \qquad \text{and} \qquad 
         b(t) = 
        \begin{dcases}
            \dfrac{\alpha t}{1 + \alpha t} & \text{ if } \lambda = 0, \\[1ex]
            \dfrac{\alpha \exp(\lambda t) - \alpha}{\alpha \exp(\lambda t) - \beta} & \text{ otherwise}.
        \end{dcases}
    \]
\end{enumerate}
For $i$ in $V$, we write $Z_i^0, Z_i^1$ to represent the above elementary processes instantiated with the rate parameters $\theta_i = \{\nu_i, \alpha_i, \beta_i, \bm{\mu}_{i}\}$. When $\nu_i=0$, we take $Z_i^0(t)=0$ and $\mathcal L_i^0(s;t)=1$.

\subsection{Large-time behavior along a path}
\label{sec:LT_linear}

Next, we study the large-time behavior along a path, which will become the building block for the full DAG approximation.
A simple directed path $\pi=(v_0\to v_1\to\cdots\to v_L)$ is a finite sequence of $(L+1)$ distinct connected vertices in $\mathcal{G}$. We write $\pi_{a:b}=(v_a\to\cdots\to v_b)$, with $\pi_{:b}=\pi_{0:b}$, $\pi_{a:}=\pi_{a:L}$, and $\pi_{a:a}=(v_a)$. If $(v_L\to u)\in E$, then write $\pi\to u=(v_0\to\cdots\to v_L\to u)$.
A linear system is defined along a path $\pi$: 
the root is a linear combination of independent $Z^0_{v_0}$ and $Z^1_{v_0}$ processes (\fref{fig:LTSM}b,c). For $1\le\ell\le L$, type $v_\ell$ has zero initial population and is
populated by mutations from its parent $v_{\ell-1}$. Let
$\{Z^{1,m}_{v_\ell}\}_{m\in\mathbb N}$ be independent copies of the
single-cell process $Z^1_{v_\ell}$. Then,
\[
Z_{v_\ell}(t)=\sum_{m:T_m\le t}Z^{1,m}_{v_\ell}\left(t-T_m\right),
\]
where, conditional on the parent trajectory $\{Z_{v_{\ell-1}}(u)\}_{u\le t}$, the arrival times $T_m$ of new type-$v_\ell$ cells follow a nonhomogeneous Poisson process with intensity $f(u)=\mu_{v_{\ell-1}v_\ell}Z_{v_{\ell-1}}(u)$. Each arrival founds an
independent type-$v_\ell$ clone, and the sum gives the total size of these clones at time $t$. This is an equivalent representation of the mutation mechanism defined in \sref{sec:model}  \citep{athreya1972branching,durrett2015branching}.

Following \citet{nicholson2023sequential} and \citet{Luo2025Bayesian}, we define the running-max net growth rate at $v_{\ell}$ for $\ell = 0, \dots, L$ by
\begin{equation}
\label{eq:delta}
    \delta_{v_\ell}^{\pi} = \max \left\{0,  \max_{0 \le k \le \ell} \lambda_{v_k}\right\},
\end{equation}
and the number of times $\delta_{v_\ell}^{\pi}$ has been obtained up to $v_\ell$ by
\begin{equation}
\label{eq:r}
    r^{\pi}_{v_\ell} = \#\{0 \le k \le \ell:  \lambda_{v_k} = \delta^{\pi}_{v_\ell} \} + \mathds{1} \{\delta^{\pi}_{v_\ell} = 0\},
\end{equation}
where the indicator function adds one to the counter when the fittest type has non-supercritical growth ($\delta^{\pi}_{v_\ell} \le 0$).
We define the time-dependent scaling function at $v_\ell$ as
\begin{equation}
\label{eq:large_time_weight}
    w_{v_\ell}^{\pi}(t)=t^{-\left(r_{v_\ell}^{\pi}-1\right)}e^{-\delta_{v_\ell}^{\pi}t}.
\end{equation}
The superscript $\pi$ indicates that these quantities are path-dependent.

In \aref{app:convergence}, we show that for all $0 \le \ell \le L$, there exists a nonnegative random variable $\tilde Z_{v_\ell}$ such that
\[
    w_{v_\ell}^{\pi}(t)Z_{v_\ell}(t)\overset{d}{\longrightarrow}\tilde Z_{v_\ell} \qquad \text{as } t\to\infty.
\]
Therefore, we use the approximation
\begin{equation}
\label{eq:large_time_decomp}
    w_{v_\ell}^{\pi}(t)Z_{v_\ell}(t) \approx \tilde Z_{v_\ell}.
\end{equation}
The scaling function $w_{v_\ell}^{\pi}(t)$ captures the deterministic asymptotic growth of type $v_\ell$, with an exponential factor determined by the fittest type along the path and a polynomial factor determined by the number of times this maximal growth rate is attained, while $\tilde Z_{v_\ell}$ captures the remaining stochastic variation.
This extends the results of \citet{nicholson2023sequential} to roots with immigration and to critical and subcritical growth.

\begin{figure}[!t]
    \centering
    \includegraphics[width=\textwidth]{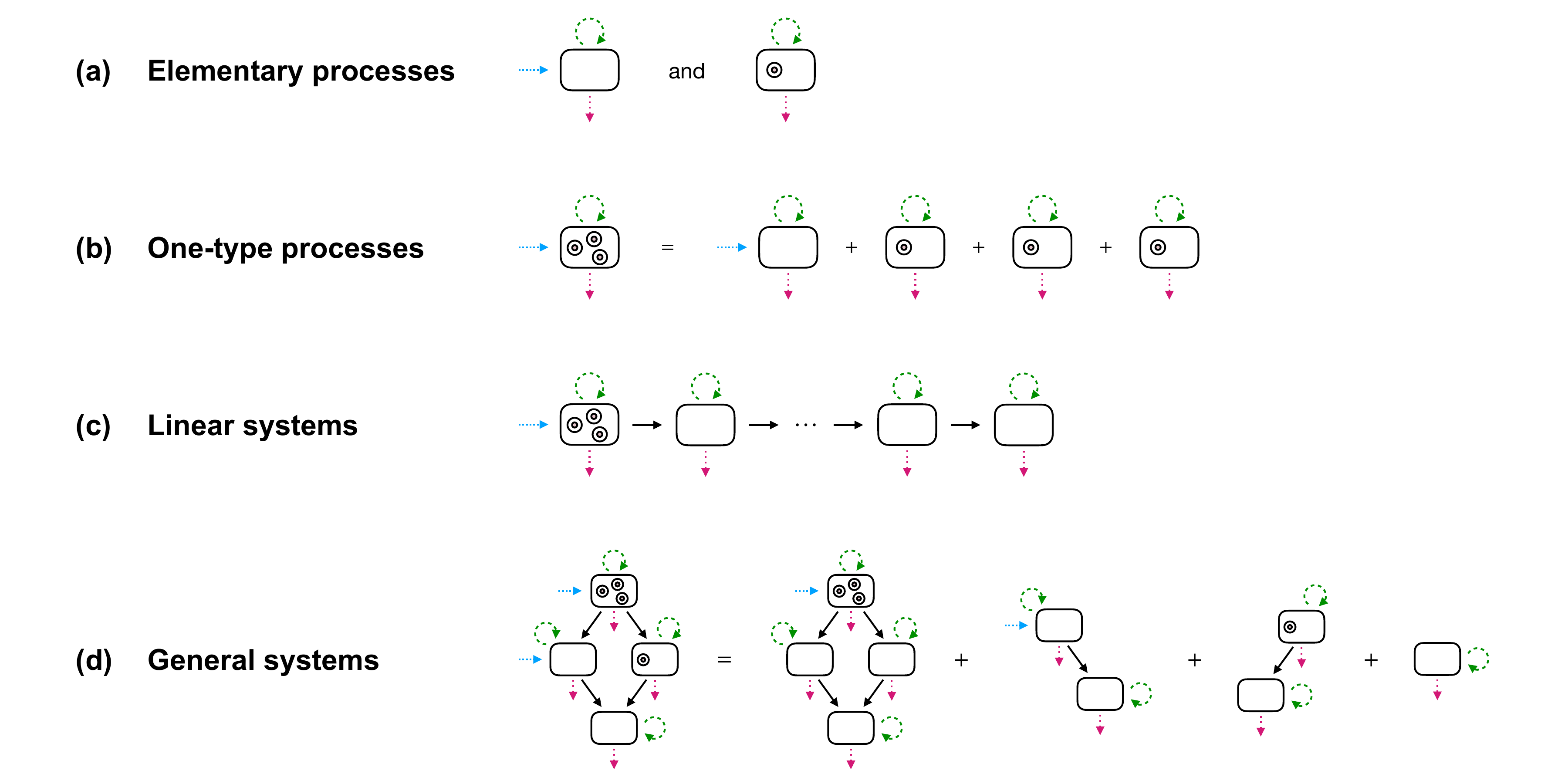}
    \caption{
        Four levels of multi-type branching processes with increasing complexity. (\textbf{a}) Elementary processes: either a one-type process with zero initial population and positive immigration ($Z^0$), or a one-type process initiating from a single cell and having no immigration ($Z^1$).
        (\textbf{b}) One-type processes: a one-type process with arbitrary initial population and positive immigration as a linear combination of independent $Z^0$ and $Z^1$ processes.
        (\textbf{c}) Linear systems: a multi-type branching process along a simple directed path, where the root is a linear combination of independent $Z^0$ and $Z^1$ processes, and each child has zero initial population and receives parental influxes.
        (\textbf{d}) General systems: a multi-type branching process on a directed acyclic graph, where each type has arbitrary initial population and immigration. The system can be decomposed into independent subsystems}
    \label{fig:LTSM}
\end{figure}

\subsection{Approximations via recursive Laplace transforms}
\label{sec:laplace}

We call the multi-type branching process $\bz(t)$ on a directed acyclic graph $\mathcal G$ a general system. 
For $i\in V$, let $\Anc(i)$ and $\Desc(i)$ be the strict ancestors and descendants, and set $\Anc^\ast(i)=\Anc(i)\cup\left\{i\right\}$ and $\Desc^\ast(i)=\Desc(i)\cup\left\{i\right\}$. For $v,u\in V$, let $\Pi_{v\to u}$ be the set of all simple directed paths from $v$ to $u$.

Our strategy is to decompose the process first by active source and then by directed path, propagate the elementary transforms recursively along each path, and finally recombine and invert the resulting transform.

\paragraph{Subsystem decomposition}
Let $\bz^{(i)}(t)$ denote the subsystem where $i$ is the only active type ($\nu_i>0$ or $z_{0,i}>0$). Then, $Z_k^{(i)}(t)=0$ for $k\notin\Desc^\ast(i)$, and $Z^{(i)}_i(t)$ is a linear combination of independent $Z_i^{(i),0}(t)$ and $\left\{Z^{(i),1,m}_i(t)\right\}_{m = 1}^{z_{0,i}}$ processes (\fref{fig:LTSM}b):
\begin{equation}
\label{eq:elementary_decomposition}
Z_i^{(i)}(t)=Z_i^{(i),0}(t)+\sum_{m=1}^{z_{0,i}}Z_i^{(i),1,m}(t),
\end{equation}
so its Laplace transform factorizes exactly using the elementary Laplace transforms (\eref{eq:L_0}, \eref{eq:L_1}),
\[
\E\left[e^{-sZ_i^{(i)}(t)}\right]=\mathcal L_i^0(s; t)\left[\mathcal L_i^1(s; t)\right]^{z_{0,i}}.
\]
The general system $\bz(t)$ can be decomposed into independent subsystems (\fref{fig:LTSM}d):
\begin{equation}
\label{eq:subsystem_decomposition}
\bz(t)=\sum_{i\in V}\bz^{(i)}(t).
\end{equation}

\paragraph{Pathwise contributions}

For each $i \in V$, we write each descendant population $Z_j^{(i)}(t)$ for $j \in \Desc(i)$ as a sum of contributions from every directed path from $i$ to $j$. Let $\mathcal{G}_i := \mathcal G [\Desc^\ast(i)]$ denote the subgraph induced by $\Desc^\ast(i)$. Since $\mathcal G$ is a DAG, $\mathcal G_i$ is also a DAG, and a topological order of $\Desc^\ast(i)$ exists. For all $j \in \Desc(i)$ and all $\pi = (i = v_0 \to v_1 \to \cdots \to v_{\ell-1} \to v_{\ell} = j) \in \Pi_{i \to j}$, we define, in topological order of $\Desc^\ast(i)$, the mutation flow from $v_{\ell - 1}$ to $j$ via the path $\pi$ as
\[
Z^{(i),\pi}_j(t) = \sum_{m: T_m \le t} Z^{(i),1,m}_j\left(t - T_m\right),
\]
where the $Z_j^{(i),1,m}$ are independent copies of $Z_j^1$, chosen separately for each path. Conditional on the parent history, $\{T_m\}_{m \in \mathbb N}$ come from a Poisson process with intensity
\[
f^{(i),\pi}_j(u) = \mu_{v_{\ell-1} j}\, Z^{(i),\pi_{:\ell-1}}_{v_{\ell-1}}(u).
\]
Then, for each variable $Z^{(i),\pi}_j(t)$, the path-level asymptotic decomposition \eref{eq:large_time_decomp} applies.
We denote $Z^{(i), \pi}_i(t) := Z^{(i)}_i(t)$ for $\pi = (i)$. It follows that $Z^{(i)}_j(t)$ is the sum of all pathwise contributions from $i$ to $j$:
\begin{equation}
\label{eq:path_decomposition}
    Z^{(i)}_j(t) = \sum_{\pi \in \Pi_{i \to j}} Z^{(i),\pi}_j(t) \qquad \text{for all }j \in \Desc^\ast(i).
\end{equation}

\paragraph{Conditional approximation}
For an edge $v_{\ell-1}\to v_\ell$ along a path $\pi$, the large-time, small-mutation-rate approximation gives the marginal recursion
\begin{equation}
\label{eq:laplace_recursion}
\E\left[\exp\left(-sZ_{v_\ell}(t)\right)\right]
\approx
\exp\left(-c^{\pi_{:\ell}}_{v_{\ell-1}v_\ell}(s)\right)
\E\left[\exp\left(-h^{\pi_{:\ell}}_{v_{\ell-1}v_\ell}(s)Z_{v_{\ell-1}}(t)\right)\right].
\end{equation}
Detailed derivations and the edge maps $c^{\pi_{:\ell}}_{v_{\ell-1}v_\ell}$ and $h^{\pi_{:\ell}}_{v_{\ell-1}v_\ell}$ are given in \aref{app:lt_path_decomp}--\ref{app:lt_closed}. 

On a DAG, several downstream branches may share the same parent and are only conditionally independent given the full parent history. Therefore, the marginal recursion \eref{eq:laplace_recursion} does not extend directly to the joint transform of the child branches. In the spirit of the approximate model of \citet{nicholson2023sequential}, we make the following ansatz to allow a recursive approximation of the joint transform. Heuristically, at large times and small mutation rates, the parent history can be represented by its endpoint, so the child branches are approximately conditionally independent given the parent endpoint.

\paragraph{Conditional endpoint ansatz}
At large times and small mutation rates, let $\pi=(v_0\to\cdots\to v_\ell)$ be a path, and let $Z_u^{\pi\to u}$ denote its contribution to a child $u$ of $v_\ell$. For $q_u\ge0$, $u\in\child(v_\ell)$, we approximate the joint conditional Laplace transform of the child branches using only the parent endpoint:
\begin{equation}
\label{eq:cond_laplace_recursion}
\E\left[
\left.
\exp\left(-\sum_{u\in\child(v_\ell)}q_uZ_u^{\pi\to u}(t)\right)
\right| Z_{v_\ell}^{\pi}(t)
\right]
\approx
\prod_{u\in\child(v_\ell)}
\exp\left(-c^{\pi\to u}_{v_\ell u}(q_u)
-h^{\pi\to u}_{v_\ell u}(q_u)Z_{v_\ell}^{\pi}(t)\right).
\end{equation}
For a single child, \eref{eq:cond_laplace_recursion} is the endpoint-conditioned counterpart of \eref{eq:laplace_recursion}.

\paragraph{Laplace transforms}

For the general system, we define the edge maps and the corresponding recursive maps $G^{[a]}_\ell$ and $C^{[a]}_\ell$ separately for the $Z^0$ ($a = 0$) and $Z^1$ ($a = 1$) root components in \eref{eq:elementary_decomposition}, with the edge maps given in \aref{app:lt_closed} and the recursive maps in \aref{app:lt_general}. Applying the conditional endpoint ansatz recursively (\aref{app:lt_general}) gives a product of an exponential factor determined by $C^{[a]}_0$ and the elementary Laplace transforms evaluated at $G^{[a]}_0$,
\begin{equation}
\label{eq:joint_LT_channels}
\E\left[e^{-\bs^\top\bz(t)}\right]
\approx
\prod_{v\in V}
\exp\left(
-C^{[0]}_0\left(\bs;(v)\right)
-z_{0,v}C^{[1]}_0\left(\bs;(v)\right)
\right)
\mathcal L_v^0\left(G^{[0]}_0\left(\bs;(v)\right);t\right)
\left[
\mathcal L_v^1\left(G^{[1]}_0\left(\bs;(v)\right);t\right)
\right]^{z_{0,v}} .
\end{equation}
For each $i\in V$, setting $\bs=s\be_i$ gives the marginal approximation.
We note that the branching process model in FiTree~\citep{Luo2025Bayesian} is a tree-structured special case of the DAG model considered here with zero initial populations and positive immigration from a static wild-type source. 
Thus, only the $Z^0$ component in \eref{eq:joint_LT_channels} contributes to FiTree's Laplace transform.

\paragraph{Numerical inverse Laplace transforms}
We recover the probabilities by numerical inversion of the recursive approximation in \eref{eq:joint_LT_channels} using the concentrated matrix exponential (CME) method of \citet{horvath2020numerical} for its monotonicity and numerical stability. With continuity-corrected targets $\tilde n_j=n_j+1/2$ and $\tilde\bn=(\tilde n_1,\dots,\tilde n_d)^\top$, the Laplace transforms on the right below are evaluated using \eref{eq:joint_LT_channels} and its marginal form:
\begin{align}
    & F(t,n_j)=\mathcal L^{-1}\left\{\frac{1}{s}\E\left[e^{-sZ_j(t)}\right]\right\}(\tilde n_j),\\
    & p(t,n_j)=\mathcal L^{-1}\left\{\frac{1-e^{-s}}{s}\E\left[e^{-sZ_j(t)}\right]\right\}(\tilde n_j),\label{eq:marginal_pmf_ILT}\\
    & p(t,\bn)=\mathcal L^{-1}\left\{\prod_{j=1}^d\frac{1-e^{-s_j}}{s_j}\E\left[e^{-\bs^\top\bz(t)}\right]\right\}(\tilde\bn).\label{eq:joint_pmf_ILT}
\end{align}
The factor $\frac{1-e^{-s}}{s}$ and the extension to $d$-dimensional hyperrectangles are described in \aref{app:ILT-rectangular}.

\paragraph{Computational cost}
Let $N_{\mathrm{ILT}}$ be the total number of CME transform nodes evaluated per dimension, including conjugate nodes. Let $d_i^{\mathrm{anc}}:=|\Anc^\ast(i)|$, and define
\[
N_{\mathrm{path}}^i:=
\sum_{v\in\Anc^\ast(i)}
\sum_{u\in\Anc^\ast(i)\cap\Desc(v)}
\left|\Pi_{v\to u}\right|,
\qquad
N_{\mathrm{path}}^{\mathrm{joint}}:=
\sum_{v\in V}
\sum_{u\in\Desc(v)}
\left|\Pi_{v\to u}\right|.
\]
These quantities count all nontrivial directed paths involved in the recursive edge-map calculations. Thus, one marginal inversion at node $i$ and one joint inversion cost
\[
\mathcal O\left(N_{\mathrm{ILT}}\left(N_{\mathrm{path}}^i+d_i^{\mathrm{anc}}\right)\right),
\qquad
\mathcal O\left(N_{\mathrm{ILT}}^d\left(N_{\mathrm{path}}^{\mathrm{joint}}+d\right)\right),
\]
respectively.


\section{Simulations}

We conduct simulation experiments to evaluate the accuracy and computational cost of our saddle-point approximation (SP) and large-time small-mutation-rate approximation (LTSM). 
We include the state-of-the-art central limit theorem approach (CLT) \citep{gunnarsson2023statistical} as a baseline and use Gillespie's exact stochastic simulation algorithm (SSA) \citep{gillespie1977exact} as the ground truth reference. 
In all experiments, SP uses a residual tolerance of $10^{-9}$ and a time step of $10^{-3}$. The CME order used for LTSM is $20$~\citep{horvath2020numerical}.
We implemented both approximations in C++17, available at \url{https://github.com/cbg-ethz/MTBP_Numerics}.
The repository also provides unit tests and workflows that reproduce the simulations and AML analysis in \sref{sec:application}.

\subsection{Diagnostic example}

\begin{figure}[!b]
    \centering
    \includegraphics[width=0.45\textwidth]{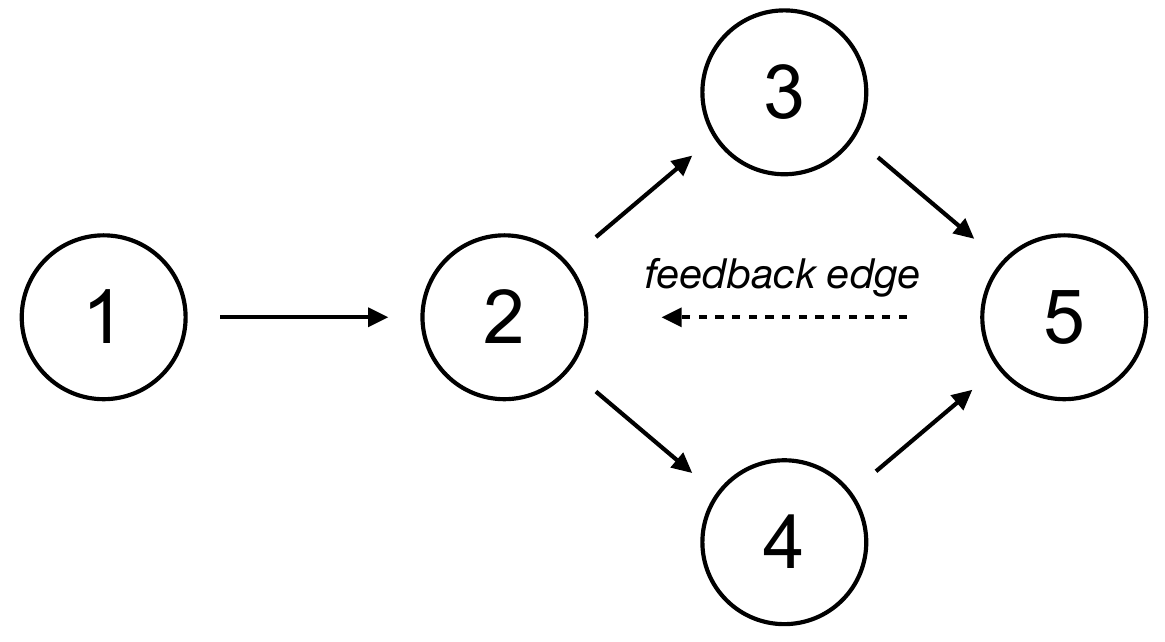}
    \caption{The extended diamond graph used for the diagnostic simulation experiments. The cyclic variant adds a feedback edge from type~5 to type~2}
    \label{fig:MTBP_simulation_setup}
\end{figure}

The diagnostic model is a five-type branching process on the extended diamond graph 
illustrated in \fref{fig:MTBP_simulation_setup}.
This structure contains divergent edges, convergent edges, and directed paths with more than one edge, providing a minimal yet representative example for biological interactions observed in other studies \citep{Beerenwinkel2009MarkovMutations, luo2023joint}. A cyclic variant adds the feedback edge $5 \to 2$, for which only SP and CLT apply, as LTSM requires a DAG. We study three root regimes: supercritical ($\lambda_1 > 0$), critical ($\lambda_1 = 0$), and subcritical ($\lambda_1 < 0$). The specific choice of the parameter values for all scenarios is described in \aref{sec:sim_params}. For each scenario, we run SSA with $M = 10^6$ independent trajectories and compare marginal $\cdf$'s for all five types at time points $t = 1, 2, \ldots, 10$. 

In Figures~\ref{fig:cdf_supercritical_main}, \ref{fig:cdf_supercritical_appendix}, \ref{fig:cdf_critical_appendix}, and \ref{fig:cdf_subcritical_appendix}, we see that both SP and LTSM closely match the SSA reference curves, while CLT shows much higher errors for almost all configurations except Type 1 in the subcritical case. This is because CLT relies strongly on the assumption of large initial population sizes. LTSM shows slightly larger errors at earlier times, also consistent with its asymptotic assumption (\fref{fig:error_over_time}).

\begin{figure}[!t]
    \centering
    \includegraphics[width=0.9\textwidth]{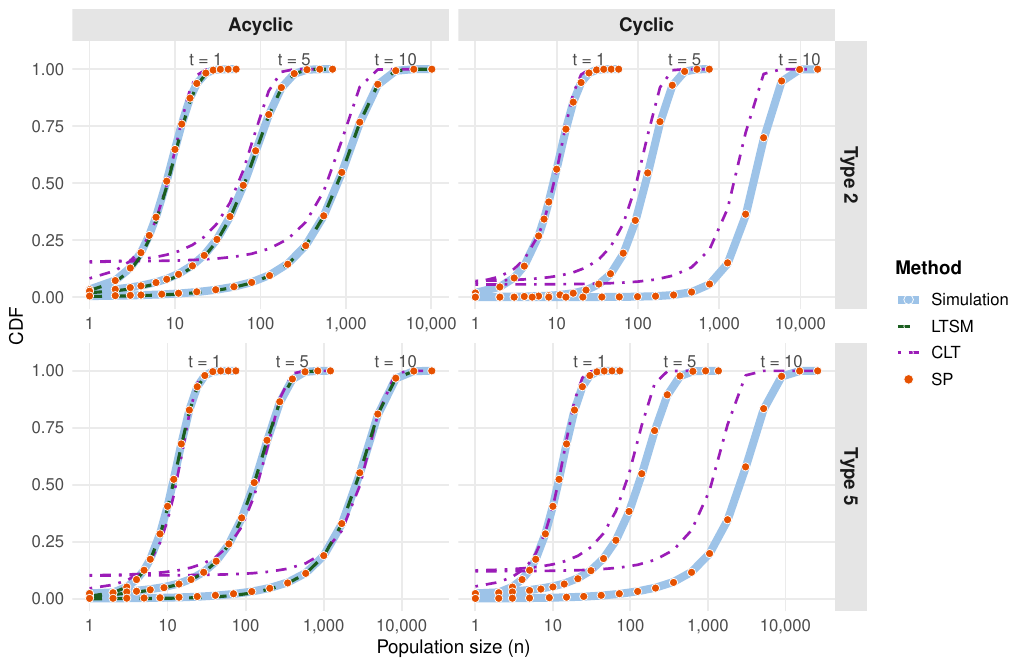}
    \caption{Marginal $\cdf$\ comparison for types~2 (intermediate; top row) and~5 (terminal; bottom row) in the supercritical scenario.
    Left panels: acyclic graph; right panels: cyclic graph counterpart, for which LTSM does not apply.
    The Gillespie reference is drawn as a thick pale blue band, and the approximations as thin marks on top of it: SP (orange markers), LTSM (green dashes), and CLT (purple dash-dots).
    All curves are evaluated on the shared grid of population counts used for the accuracy metric of \aref{sec:sim_metrics}.
    Three time points $t = 1, 5, 10$ are shown, labelled above the corresponding group of curves. The remaining time points are in \fref{fig:cdf_supercritical_appendix}}
    \label{fig:cdf_supercritical_main}
\end{figure}

\begin{figure}[!b]
    \centering
    \includegraphics[width=0.9\textwidth]{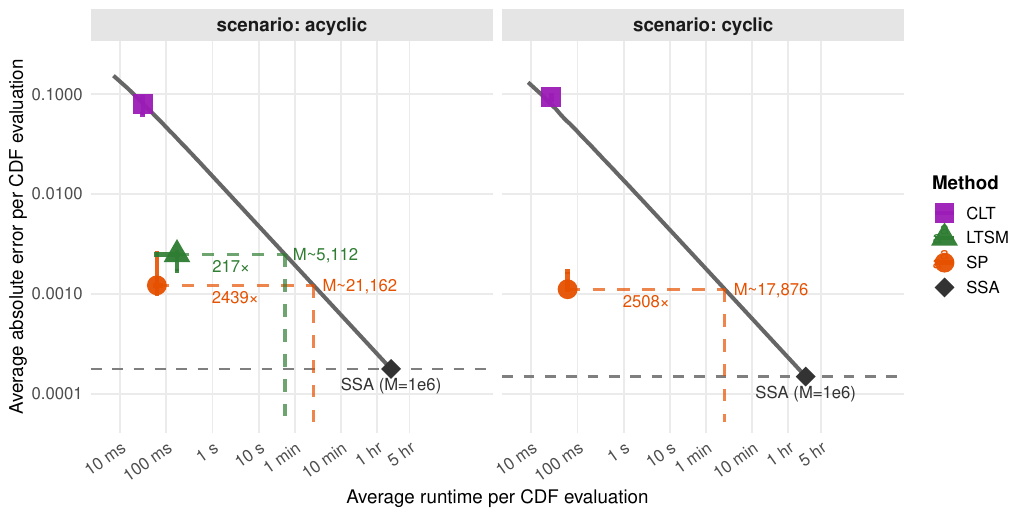}
    \caption{Accuracy--runtime trade-off for the supercritical scenario at $t = 10$.
    Each point gives the median per-type runtime (x-axis) and median per-type absolute $\cdf$\ error (y-axis), with interquartile-range bars across types.
    The solid grey curve is the projected SSA runtime--accuracy trade-off.
    The dashed horizontal line marks the SSA accuracy floor at $M = 10^6$.
    For SP and LTSM, dashed guides project each method to the SSA curve at matching accuracy. Labels report the speedup and the corresponding number of independent SSA trajectories $M$}
    \label{fig:runtime_vs_accuracy}
\end{figure}

\fref{fig:runtime_vs_accuracy} demonstrates the accuracy--runtime trade-off for the supercritical scenario at $t = 10$.
Each point summarises the median absolute $\cdf$\ error and the median wall-clock time per evaluation across all five types (see \aref{sec:sim_metrics} for detailed definitions).
SP and LTSM both achieve accurate $\cdf$\ evaluations in roughly $100$\,ms, which is orders of magnitude faster than SSA at the same accuracy level.
Reaching that accuracy with SSA alone would require a substantially larger number of trajectories and heavy parallelization.
This speed advantage is especially important for likelihood-based statistical inference, where methods such as MCMC require many probability evaluations.
CLT is slightly faster in raw runtime than SP and LTSM, but its accuracy is only comparable to SSA with very few trajectories in the parameter regimes considered here.
In most cases, SP shows superior performance to LTSM. When the target type is initialized at zero, however, LTSM can be more accurate than SP (\fref{fig:ltsm_special_case_type5_t10_mutdiv10}). 

Moreover, we compare the bivariate $\pmf$'s at time $t = 2, 5, 10$ for the type pairs $(Z_3,\, Z_4)$, $(Z_2,\, Z_5)$, and $(Z_3,\, Z_5)$, representing sibling, ancestor-descendant, and direct parent-child dependence, respectively (Figures~\ref{fig:bivariate_23_acyclic}--\ref{fig:bivariate_24_cyclic}). The close match between the SP and LTSM approximations against the SSA reference confirms that both methods accurately capture the joint dependence structure, in addition to the marginals.

\subsection{Large random DAGs}

To assess scalability, we generate 50 independent random DAGs for each combination of $N \in \{10, 20, 50\}$ nodes and target edge intensity $D$ (sparse: 1, medium: 2, dense: 3), where the expected number of edges is $ND$. The graph construction and the parameter ranges, sampled uniformly from biologically motivated intervals, are described in \aref{sec:sim_params_phase2}. Exact Gillespie simulation with $M = 10^4$ trajectories serves as the reference (worst-case $\cdf$\ standard error $0.005$). For each DAG, we evaluate marginal $\cdf$'s at $t=10$ for five structurally representative downstream nodes: deepest sink, deepest node, node with the most ancestors, highest in-degree node, and a randomly selected node from the upper half of graph depths.

\begin{figure}[!b]
    \centering
    \includegraphics[width=\textwidth]{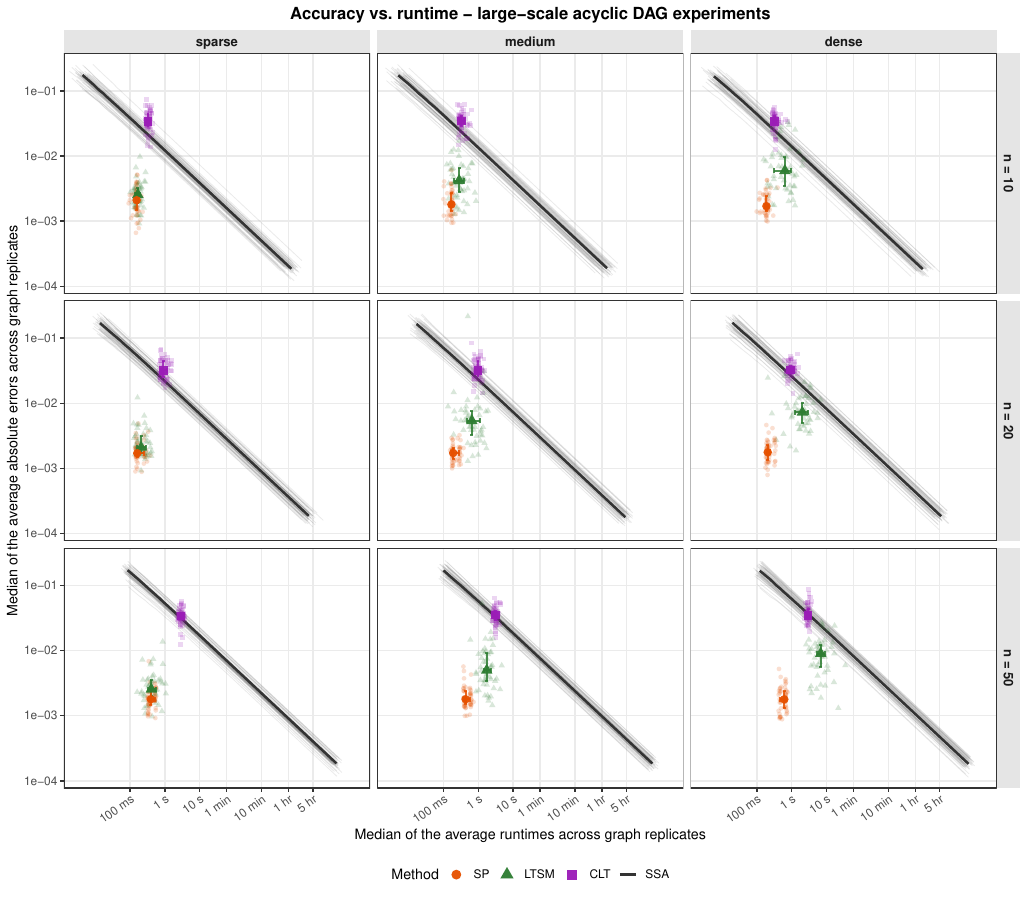}
    \caption{Accuracy versus runtime for the large-scale random DAG experiments.
    Each point represents the median for one replicate DAG and method (SP, LTSM, or CLT).
    Panels are faceted by number of nodes ($N$) and expected mean out-degree ($D$ = 1, 2, 3, corresponding to sparse, medium, dense graphs, respectively)}
    \label{fig:accuracy_vs_runtime_phase2}
\end{figure}

\fref{fig:accuracy_vs_runtime_phase2} shows the accuracy--runtime scatterplots for the large random DAG experiments.
SP consistently achieves the lowest error across all graph sizes and densities, with a median absolute $\cdf$\ error slightly above $10^{-3}$.
Its accuracy is highly stable across replicates, and runtimes scale well, remaining below one second even for $N = 50$.
LTSM performs comparably to SP on sparse graphs, but its accuracy decreases and runtime increases for denser and larger graphs.
This is because LTSM must integrate over all directed paths to a target node, so the number of terms grows with graph density.
For deeper nodes, the large-time and small-mutation-rate assumptions are also more susceptible to violation.
Nevertheless, the median error remains below $10^{-2}$ across all configurations, which is still within a useful range for inference, and the runtime remains competitive with SSA at matching accuracy.
CLT's accuracy is comparable to SSA with only a small number of trajectories and is far less accurate than SP or LTSM across all configurations.

\section{Application: acute myeloid leukemia relapse}
\label{sec:application}

Acute myeloid leukemia (AML) is a blood cancer characterized by the rapid proliferation of abnormal myeloid cells. 
Despite initial treatment responses, relapse is common and is often driven by resistant subclones, which are genetically identical sub-populations of cells that survive therapy \citep{Dagogo-Jack2018TumourTherapies}. 
Understanding the relapse dynamics of these subclones is crucial for improving treatment strategies. 
Here, we apply our saddle-point approximation to analyze the tumor evolution of patient AML-09 from a longitudinal single-cell sequencing study~\citep{Morita2020ClonalGenomics}. 
The patient was sequenced at diagnosis and treated with azacitidine (a type of chemotherapy) and sorafenib (a \gene{FLT3} inhibitor). 
Complete remission was clinically observed around $155$ days after treatment initiation, but no sequencing was performed.
The patient relapsed around $135$ days later and was sequenced a second time.
The results show that a subclone carrying the sorafenib-resistant \gene{FLT3} p.D835Y mutation~\citep{Smith2015FLT3}, which was small at diagnosis, became dominant at relapse.

\begin{figure}[!b]
    \centering
    \includegraphics[width=\textwidth]{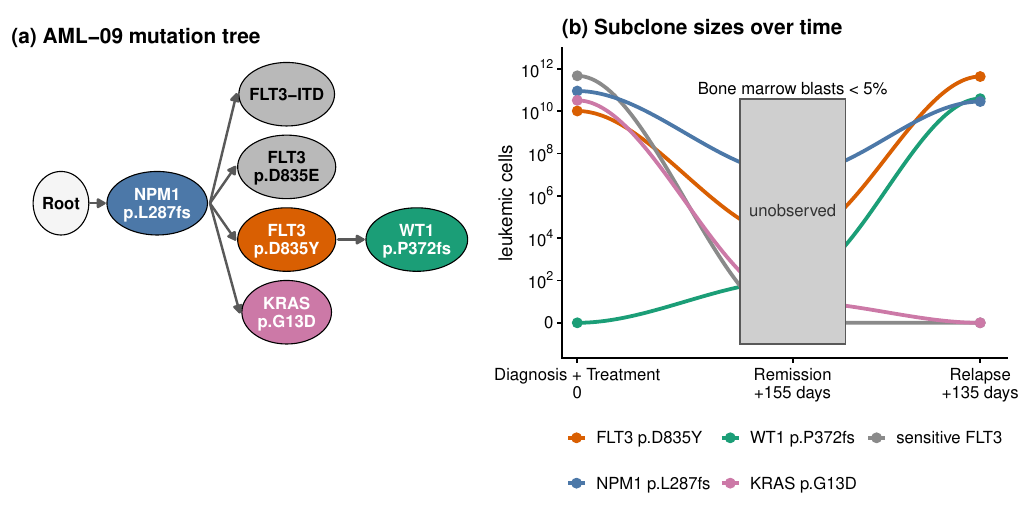}
  \caption{Patient AML-09. (\textbf{a}) Inferred single-cell mutation tree. Sorafenib-sensitive \gene{FLT3} subclones are shown in grey, resistant \gene{FLT3} p.D835Y in orange, \gene{NPM1} in blue, \gene{KRAS} in pink, and \gene{WT1} in green. (\textbf{b}) Schematic longitudinal tumor burden. Points at diagnosis and relapse are derived from clinical and single-cell data. The intervening curves are illustrative because the remission state is unobserved}
    \label{fig:aml09schematic}
\end{figure}

We want to understand how this relapse occurred. One possibility is that more cancer cells survived treatment than the clinical measurements could detect. Another possibility is that treatment gave some groups of cancer cells a growth advantage. Both may have happened. We therefore ask how many cancer cells may have remained at remission and how much faster they would need to grow to reach the numbers seen at relapse.
We map these questions to the following branching process model:
\begin{description}[leftmargin=12pt,labelindent=12pt,itemsep=2pt,parsep=0pt,topsep=3pt]
  \item[Mutation tree.] We use the mutation tree of this patient reconstructed by SCITE~\citep{jahn2016tree}, a stochastic search algorithm for inferring tumor evolutionary histories from noisy single-cell mutation profiles, as the type-transition graph $\mathcal G=(V,E)$ of a six-type branching process (\fref{fig:aml09schematic}a). The six non-root nodes form $V$, and each represents a subclone. The arrows between these nodes form $E$, and each represents the acquisition of an additional mutation. The root lies outside $V$ and provides direct immigration to the subclone carrying the \gene{NPM1} p.L287fs mutation, while the remaining subclones do not receive direct immigration.
  The population size vector $\bz(t)=(Z_v(t))_{v\in V}$ collects the number of cells in each subclone at time $t$.

  \item[Time window.] Since the process is Markovian, we initialize $t=0$ at remission and model only the interval of $t_\Delta=135$ days from remission to relapse. 

  \item[Subclone sizes.] We use clinical measurements of blasts (immature leukemic cells) and single-cell proportions to estimate subclone population sizes at diagnosis and relapse. We denote the estimated relapse size for subclone $v$ by $\hat z_{v,\mathrm{rel}}$. At remission, however, no measurements are available, so we treat the total population size $N_{\mathrm{rem}}$ as an unknown parameter. We use the clinical definition of complete remission (bone marrow blast percentage below 5\%) as an upper bound on $N_{\mathrm{rem}}$. The initial state $\bzs_0$ is then obtained by distributing $N_{\mathrm{rem}}$ across the subclones that are not sorafenib-sensitive, in their relative diagnosis proportions.

  \item[Parameters.] We fix the pre-treatment baseline parameters $\theta_v=\{\nu_v,\alpha_v,\beta_v,\bm\mu_v\}$ for $v\in V$ as in FiTree~\citep{Luo2025Bayesian}, whose tree-structured branching process is a special case of our model. To represent possible treatment effects, we vary only the growth component of this baseline and keep all rates constant from remission to relapse within each scenario.
\end{description}
Further details on model construction and parameter values are provided in \aref{sec:aml09_appendix}.

As a first model, we use a single multiplier $m_{\mathrm{shared}}$ to represent the post-treatment change in the diagnosis net growth rate of every subclone,
\[
  \lambda_{v, \mathrm{rel}} = m_{\mathrm{shared}}\lambda_{v, \mathrm{diag}}.
\]
We then scan over a grid of $(N_{\mathrm{rem}}, m_{\mathrm{shared}})$ values and compute the probability that a subclone $v$ exceeds its relapse size $\hat z_{v,\mathrm{rel}}$ using the marginal saddle-point $\cdf$ in \eref{eq:LR},
\[
  P(Z_v(t_\Delta)\ge \hat z_{v,\mathrm{rel}}
  \mid N_{\mathrm{rem}}, m_{\mathrm{shared}}),
\]
which indicates whether a parameter combination makes reaching at least the relapse count plausible.

Under this shared-effect model, the heatmaps for the \gene{NPM1} p.L287fs and \gene{FLT3} p.D835Y subclones (\fref{fig:aml09sensitivity}a and b) show a clear tradeoff: smaller remission sizes require larger growth increases to reach the relapse burden.
These two quantities are therefore not identifiable from the diagnosis and relapse measurements alone. 
If treatment did not alter the fitness landscape drastically, a larger residual population at remission would be the more plausible explanation. 
The two subclones nevertheless impose different requirements. 
Near the $5\%$ clinical remission ceiling ($\log_{10}N_{\mathrm{rem}} \approx 10$), the \gene{NPM1} p.L287fs subclone can reach its relapse burden with little growth increase.
For the sorafenib-resistant \gene{FLT3} p.D835Y subclone~\citep{Smith2015FLT3}, the transition to high exceedance probability occurs only around $m_{\mathrm{shared}}\approx 50$. 
This suggests stronger selection for the resistant subclone, potentially together with broader changes in the post-chemotherapy environment such as impaired immune surveillance. 
For both subclones, the transition from negligible to near-certain exceedance probability becomes sharp at larger remission burdens, reflecting reduced stochastic variation when more residual cells are present. 
These results therefore emphasize that the clinical remission threshold is too coarse to resolve the relapse-relevant residual disease state, and that more sensitive measurable residual disease assays~\citep{buccisano2012prognostic} could better constrain relapse-risk predictions.

\begin{figure}[!t]
    \centering
    \includegraphics[width=\textwidth]{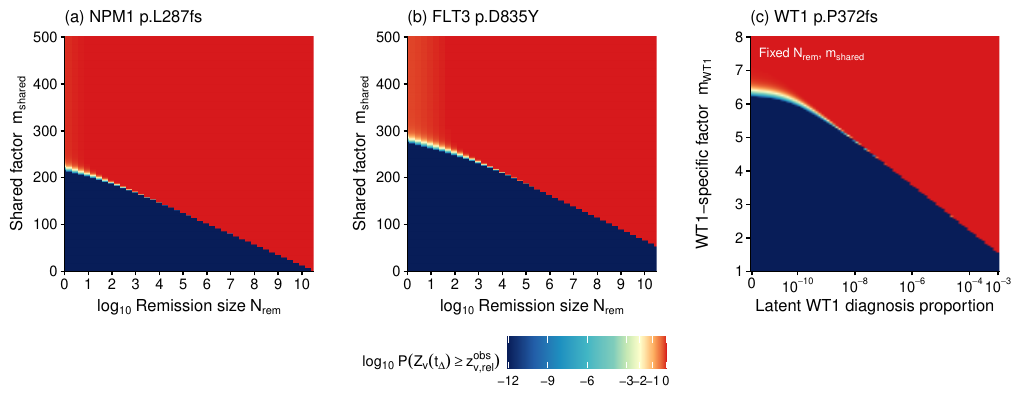}
    \caption{Relapse-count exceedance probabilities for detected subclones. (\textbf{a}) Shared-factor scan for \gene{NPM1} p.L287fs over remission size $N_{\mathrm{rem}}$ and $m_{\mathrm{shared}}$. (\textbf{b}) Shared-factor scan for \gene{FLT3} p.D835Y over remission size $N_{\mathrm{rem}}$ and $m_{\mathrm{shared}}$. (\textbf{c}) \gene{WT1} p.P372fs scan at the 5\% remission ceiling and $m_{\mathrm{shared}}\approx 53$ over latent \gene{WT1} diagnosis proportion and $m_{\mathrm{WT1}}$. Colour gives $\log_{10} P(Z_v(t_\Delta)\ge \hat z_{v,\mathrm{rel}})$, with red indicating high and blue low probability of reaching the relapse count}
    \label{fig:aml09sensitivity}
\end{figure}

The different patterns for \gene{NPM1} p.L287fs and \gene{FLT3} p.D835Y, together with the absence of the \gene{KRAS} p.G13D subclone at relapse, suggest that treatment did not induce a uniform growth change across all subclones. 
Therefore, we examine the newly emerged \gene{WT1} p.P372fs subclone separately (\fref{fig:aml09sensitivity}c). 
Since this subclone was not observed at diagnosis, we treat its diagnosis proportion as latent and bound it above by the detection limit of $0.1\%$~\citep{Morita2020ClonalGenomics}. 
We fix the total remission burden at the $5\%$ clinical upper bound and the shared growth multiplier at $m_{\mathrm{shared}}\approx 50$, at which the parent \gene{FLT3} p.D835Y subclone can reach its relapse burden.
We then introduce an additional \gene{WT1}-specific multiplier $m_{\mathrm{WT1}}$, so that
\[
\lambda_{\mathrm{WT1},\mathrm{rel}}=
m_{\mathrm{WT1}}m_{\mathrm{shared}}\lambda_{\mathrm{WT1},\mathrm{diag}}.
\]
Even when the latent diagnosis proportion is near the detection limit, the relapse burden requires an additional \gene{WT1}-specific growth increase, which becomes larger as the latent diagnosis proportion decreases (\fref{fig:aml09sensitivity}c).
This suggests that the diagnosis fitness landscape may have underestimated the positive epistasis between \gene{FLT3} and \gene{WT1}, or that the \gene{WT1} p.P372fs subclone was already present close to the detection limit at diagnosis but was not sampled.

Together, these analyses illustrate how our approximations support rapid parameter scans. They also provide a computational building block for future likelihood-based inference from longitudinal sequencing data.

\section{Discussion}

In this work, we introduced a saddle-point approximation (SP) and a large-time small-mutation-rate approximation (LTSM) to compute population size distributions for multi-type branching processes on directed graphs. 
These methods extend existing theory to more general graph structures and initial conditions. 
The saddle-point approximation is a fast numerical integrator, for which we derived practical checks for admissibility and established conditions for existence and uniqueness of the saddle point.
For directed acyclic graphs, the large-time small-mutation-rate approximation constructs closed-form approximate Laplace transforms that can be evaluated efficiently by numerical inversion.

In simulations, we benchmarked the accuracy and speed of both methods against the large-number approximation and stochastic simulation across a range of graph structures and initial conditions. 
SP provided the most stable accuracy, whereas LTSM remained competitive on sparse DAGs but became less favorable as the number of directed paths increased.
Both methods substantially outperformed the large-number approximation and provided orders of magnitude speedup over stochastic simulation at matching accuracy. 
The choice between the two methods can be guided by specific graph structures, initial conditions, and parameter regimes of interest. 
SP is generally preferred, particularly when the graph contains cycles.
For zero initial cells, however, the accuracy of SP may deteriorate, or an admissible saddle simply may not exist.
In such cases, LTSM provides a useful alternative for marginal probabilities or joint probabilities involving only a few types in sparse DAGs.

The branching process model here assumes constant rates, independent cell lineages, and exponentially distributed event times.
To generalize to other biologically relevant scenarios, such as treatment effects that change over time, cell-cell interactions, or cell-division times governed by progression through the cell cycle, alternative computational frameworks may need to be considered, such as neural surrogates for nonlinear reaction networks and differentiable stochastic simulation \citep{badolle2025interpretable,rijal2025differentiable}. 
However, the scalability of these methods to large graphs and long time horizons remains a challenge, and the approximations presented here may still provide useful building blocks for such extensions.

Despite these limitations, our methods enable efficient and accurate evaluation of population size probabilities, as shown in the AML example, laying the computational foundation for likelihood-based inference in cancer studies.
The inferred model can be used to make probabilistic forecasts, such as calculating whether a resistant or metastatic population exceeds a specified size at a future time. 
They also provide a computational tool for studying other expanding populations, such as microbial communities or ecological systems, where branching processes can be used to model growth and transitions between types.


\section*{Acknowledgements}

The authors would like to thank Prof. Dr. Koichi Takahashi for kindly providing the clinical data for patient AML-09 \citep{Morita2020ClonalGenomics}.

\section*{Declarations}

\subsection*{Funding}

This work was supported by ETH Zurich [Open ETH project SKINTEGRITY.CH (N.B.)] and the European Union's Horizon 2020 research and innovation program [grant number 951970, OLISSIPO project (N.B.)].

\subsection*{Competing interests}

The authors declare that they have no competing interests.

\subsection*{Data availability}

The single-cell sequencing data for the AML patient analyzed in this work are available from the original publication \citep{Morita2020ClonalGenomics} at NCBI BioProject ID \href{https://www.ncbi.nlm.nih.gov/Traces/study/?acc=PRJNA648656&o=acc_s%3Aa}{PRJNA648656}. For confidentiality, we provide only aggregated or approximated clinical data for this patient.

\subsection*{Code availability}

The code for the methods, simulations, and data analysis is available at \url{https://github.com/cbg-ethz/MTBP_Numerics}.

\subsection*{Author contributions}

Conceptualization: all authors; Methodology: all authors; Software: XL; Data Curation: XL; Formal Analysis: XL; Supervision: JK, NB; Visualization: XL; Writing - Original Draft Preparation: XL; Writing - Review \& Editing: all authors.

\bibliography{references}

@article{antal2011exact,
  title={Exact solution of a two-type branching process: models of tumor progression},
  author={Antal, Tibor and Krapivsky, PL},
  journal={Journal of Statistical Mechanics: Theory and Experiment},
  volume={2011},
  pages={P08018},
  year={2011},
  publisher={IOP Publishing},
  doi       = {10.1088/1742-5468/2011/08/P08018},
}

@book{athreya1972branching,
  title={Branching processes},
  author={Athreya, Krishna B and Ney, Peter E},
  year={1972},
  publisher={Springer},
  address={Berlin},
  doi       = {10.1007/978-3-642-65371-1},
}

@article{avanzini2019cancer,
  title={Cancer recurrence times from a branching process model},
  author={Avanzini, Stefano and Antal, Tibor},
  journal={PLoS Computational Biology},
  volume={15},
  optnumber={11},
  pages={e1007423},
  year={2019},
  publisher={Public Library of Science San Francisco, CA USA},
  doi       = {10.1371/journal.pcbi.1007423},
}

@inproceedings{awasthi2023fast,
  title={Fast computation of branching process transition probabilities via {ADMM}},
  author={Awasthi, Achal and Xu, Jason},
  booktitle={Proceedings of the 26th International Conference on Artificial Intelligence and Statistics},
  series={Proceedings of Machine Learning Research},
  volume={206},
  pages={2327--2347},
  year={2023},
  publisher={PMLR},
  url={https://proceedings.mlr.press/v206/awasthi23a.html}
}

@misc{badolle2025interpretable,
  title={Interpretable Neural Approximation of Stochastic Reaction Dynamics with Guaranteed Reliability},
  author={Badolle, Quentin and Theuer, Arthur and Fang, Zhou and Gupta, Ankit and Khammash, Mustafa},
  year={2025},
  howpublished={arXiv preprint arXiv:2512.06294},
  doi={10.48550/arXiv.2512.06294},
}

@article{barczy2023critical,
  author = {Barczy, M{\'a}ty{\'a}s and Bezd{\'a}ny, D{\'a}niel and Pap, Gyula},
  title = {Asymptotic behaviour of critical decomposable 2-type {Galton--Watson} processes with immigration},
  journal = {Stochastic Processes and their Applications},
  volume = {160},
  pages = {318--350},
  year = {2023},
  doi = {10.1016/j.spa.2023.03.003},
}

@article{barczy2024critical,
  author = {Barczy, M{\'a}ty{\'a}s and Bezd{\'a}ny, D{\'a}niel},
  title = {Asymptotic behavior of some strongly critical decomposable 3-type {Galton--Watson} processes with immigration},
  journal = {Advances in Applied Probability},
  year = {2026},
  pages = {1--41},
  doi = {10.1017/apr.2026.10071},
}

@Article{Beerenwinkel2009MarkovMutations,
  author   = {Beerenwinkel, N. and Sullivant, S.},
  title    = {{Markov models for accumulating mutations}},
  doi       = {10.1093/biomet/asp023},
  optissn     = {00063444},
  optnumber   = {3},
  pages    = {645--661},
  volume   = {96},
  arxivid  = {0709.2646},
  journal  = {Biometrika},
  optmonth    = {9},
  year     = {2009},
}

@book{bertsekas2009convex,
  title={Convex Optimization Theory},
  author={Bertsekas, Dimitri P.},
  year={2009},
  publisher={Athena Scientific},
  address={Belmont, MA}
}

@article{bianconi2013estimation,
  title={An estimation of the number of cells in the human body},
  author={Bianconi, Eva and Piovesan, Allison and Facchin, Federica and Beraudi, Alina and Casadei, Raffaella and Frabetti, Flavia and Vitale, Lorenza and Pelleri, Maria Chiara and Tassani, Simone and Piva, Francesco and others},
  journal={Annals of Human Biology},
  volume={40},
  pages={463--471},
  year={2013},
  publisher={Taylor \& Francis},
  doi       = {10.3109/03014460.2013.807878},
}

@article{bisel1956criteria,
  title={Criteria for the evaluation of response to treatment in acute leukemia},
  author={Bisel, Harry F},
  journal={Blood},
  volume={11},
  optnumber={7},
  pages={676--677},
  year={1956}
}

@article{buccisano2012prognostic,
  title={Prognostic and therapeutic implications of minimal residual disease detection in acute myeloid leukemia},
  author={Buccisano, Francesco and Maurillo, Luca and Del Principe, Maria Ilaria and Del Poeta, Giovanni and Sconocchia, Giuseppe and Lo-Coco, Francesco and Arcese, William and Amadori, Sergio and Venditti, Adriano},
  journal={Blood},
  volume={119},
  optnumber={2},
  pages={332--341},
  year={2012},
  publisher={American Society of Hematology Washington, DC},
  doi       = {10.1182/blood-2011-08-363291},
}

@book{Butler2007Saddlepoint,
  author    = {Ronald W. Butler},
  title     = {Saddlepoint Approximations with Applications},
  publisher = {Cambridge University Press},
  address   = {Cambridge},
  year      = {2007},
  doi       = {10.1017/CBO9780511619083},
}

@article{cheek2018mutation,
  title={Mutation frequencies in a birth--death branching process},
  author={Cheek, David and Antal, Tibor},
  journal={Annals of Applied Probability},
  volume={28},
  optnumber={6},
  pages={3922--3947},
  year={2018},
  publisher={JSTOR},
  doi       = {10.1214/18-AAP1413},
}

@Article{Dagogo-Jack2018TumourTherapies,
  author    = {Dagogo-Jack, Ibiayi and Shaw, Alice T.},
  title     = {{Tumour heterogeneity and resistance to cancer therapies}},
  doi       = {10.1038/nrclinonc.2017.166},
  optissn      = {17594782},
  optnumber    = {2},
  pages     = {81--94},
  volume    = {15},
  journal   = {Nature Reviews Clinical Oncology},
  optmonth     = {2},
  pmid      = {29115304},
  publisher = {Nature Publishing Group},
  year      = {2018},
}

@article{daniels1954saddlepoint,
  title={Saddlepoint Approximations in Statistics},
  author={Daniels, Henry E},
  journal={The Annals of Mathematical Statistics},
  volume={25},
  pages={631--650},
  year={1954},
  publisher={JSTOR},
  doi       = {10.1214/aoms/1177728652},
}

@article{daniels1982birth,
  title={The saddlepoint approximation for a general birth process},
  author={Daniels, H. E.},
  journal={Journal of Applied Probability},
  volume={19},
  optnumber={1},
  pages={20--28},
  year={1982},
  doi       = {10.1017/S0021900200028242},
}

@article{davison2021birthdeath,
  title={Parameter estimation for discretely observed linear birth-and-death processes},
  author={Davison, Anthony C. and Hautphenne, Sophie and Kraus, Andrea},
  journal={Biometrics},
  volume={77},
  optnumber={1},
  pages={186--196},
  year={2021},
  doi       = {10.1111/biom.13282},
}

@article{degunst2021population,
  title={Parameter estimation for multivariate population processes: a saddlepoint approach},
  author={de Gunst, Mathisca and Hautphenne, Sophie and Mandjes, Michel and Sollie, Birgit},
  journal={Stochastic Models},
  volume={37},
  optnumber={1},
  pages={168--196},
  year={2021},
  doi       = {10.1080/15326349.2020.1832895},
}

@article{durrett2010evolution,
  title={Evolution of resistance and progression to disease during clonal expansion of cancer},
  author={Durrett, Richard and Moseley, Stephen},
  journal={Theoretical Population Biology},
  volume={77},
  optnumber={1},
  pages={42--48},
  year={2010},
  publisher={Elsevier},
  doi       = {10.1016/j.tpb.2009.10.008},
}

@article{durrett2010evolutionary,
  title={Evolutionary dynamics of tumor progression with random fitness values},
  author={Durrett, Rick and Foo, Jasmine and Leder, Kevin and Mayberry, John and Michor, Franziska},
  journal={Theoretical Population Biology},
  volume={78},
  optnumber={1},
  pages={54--66},
  year={2010},
  publisher={Elsevier},
  doi       = {10.1016/j.tpb.2010.05.001},
}

@article{durrett2011intratumor,
  title={Intratumor heterogeneity in evolutionary models of tumor progression},
  author={Durrett, Rick and Foo, Jasmine and Leder, Kevin and Mayberry, John and Michor, Franziska},
  journal={Genetics},
  volume={188},
  optnumber={2},
  pages={461--477},
  year={2011},
  publisher={Oxford University Press},
  doi       = {10.1534/genetics.110.125724},
}

@book{durrett2015branching,
  title={Branching process models of cancer},
  author={Durrett, Richard},
  year={2015},
  publisher={Springer},
  address={Cham},
  doi       = {10.1007/978-3-319-16065-8},
}

@article{foo2013cancer,
  title={Cancer as a moving target: understanding the composition and rebound growth kinetics of recurrent tumors},
  author={Foo, Jasmine and Leder, Kevin and Mumenthaler, Shannon M},
  journal={Evolutionary Applications},
  volume={6},
  optnumber={1},
  pages={54--69},
  year={2013},
  publisher={Wiley Online Library},
  doi       = {10.1111/eva.12019},
}

@article{foo2013dynamics,
  title={Dynamics of cancer recurrence},
  author={Foo, Jasmine and Leder, Kevin},
  journal={Annals of Applied Probability},
 volume={23},
pages={1437--1468},
  year={2013},
  doi       = {10.1214/12-AAP876},
}

@article{gillespie1977exact,
  title={Exact stochastic simulation of coupled chemical reactions},
  author={Gillespie, Daniel T},
  journal={The Journal of Physical Chemistry},
  volume={81},
  optnumber={25},
  pages={2340--2361},
  year={1977},
  publisher={ACS Publications},
  doi       = {10.1021/j100540a008},
}

@article{gunnarsson2020understanding,
  title={Understanding the role of phenotypic switching in cancer drug resistance},
  author={Gunnarsson, Einar Bjarki and De, Subhajyoti and Leder, Kevin and Foo, Jasmine},
  journal={Journal of Theoretical Biology},
  volume={490},
  pages={110162},
  year={2020},
  publisher={Elsevier},
  doi       = {10.1016/j.jtbi.2020.110162},
}

@article{gunnarsson2023statistical,
  title={Statistical inference of the rates of cell proliferation and phenotypic switching in cancer},
  author={Gunnarsson, Einar Bjarki and Foo, Jasmine and Leder, Kevin},
  journal={Journal of Theoretical Biology},
  volume={568},
  pages={111497},
  year={2023},
  publisher={Elsevier},
  doi       = {10.1016/j.jtbi.2023.111497},
}

@article{horvath2020numerical,
  title={Numerical inverse Laplace transformation using concentrated matrix exponential distributions},
  author={Horv{\'a}th, G{\'a}bor and Horv{\'a}th, Ill{\'e}s and Almousa, Salah Al-Deen and Telek, Mikl{\'o}s},
  journal={Performance Evaluation},
  volume={137},
  pages={102067},
  year={2020},
  publisher={Elsevier},
  doi       = {10.1016/j.peva.2019.102067},
}

@article{hyrien2010saddlepoint,
  title={Saddlepoint approximations to the moments of multitype age-dependent branching processes, with applications},
  author={Hyrien, Olivier and Chen, Rong and Mayer-Pr{\"o}schel, Margot and Noble, Mark},
  journal={Biometrics},
  volume={66},
  optnumber={2},
  pages={567--577},
  year={2010},
  doi       = {10.1111/j.1541-0420.2009.01281.x},
}

@article{jahn2016tree,
  title={Tree inference for single-cell data},
  author={Jahn, Katharina and Kuipers, Jack and Beerenwinkel, Niko},
  journal={Genome Biology},
  volume={17},
  pages={86},
  year={2016},
  publisher={Springer},
  doi       = {10.1186/s13059-016-0936-x},
}

@book{kolassa2006series,
  title={Series Approximation Methods in Statistics},
  author={Kolassa, John E},
  edition={3},
  year={2006},
  publisher={Springer},
  address={New York},
  doi       = {10.1007/0-387-32227-2},
}

@article{lang2020predicting,
  title={Predicting colorectal cancer risk from adenoma detection via a two-type branching process model},
  optauthor={Lang, Brian M and Kuipers, Jack and Misselwitz, Benjamin and Beerenwinkel, Niko},
author={Lang, Brian M and Kuipers, Jack and others},
  journal={PLoS Computational Biology},
  volume={16},
  pages={e1007552},
  year={2020},
  publisher={Public Library of Science San Francisco, CA USA},
  doi       = {10.1371/journal.pcbi.1007552},
}

@article{leder2024parameter,
  title={Parameter estimation from single patient, single time-point sequencing data of recurrent tumors},
  author={Leder, Kevin and Sun, Ruping and Wang, Zicheng and Zhang, Xuanming},
  journal={Journal of Mathematical Biology},
  volume={89},
  pages={51},
  year={2024},
  publisher={Springer},
  doi       = {10.1007/s00285-024-02149-x},
}

@article{li2023comparison,
  title={A comparison of mutation and amplification-driven resistance mechanisms and their impacts on tumor recurrence},
  author={Li, Aaron and Kibby, Danika and Foo, Jasmine},
  journal={Journal of Mathematical Biology},
  volume={87},
  optnumber={4},
  pages={59},
  year={2023},
  publisher={Springer},
  doi       = {10.1007/s00285-023-01992-8},
}

@article{lugannani1980saddle,
  title={Saddle point approximation for the distribution of the sum of independent random variables},
  author={Lugannani, Robert and Rice, Stephen},
  journal={Advances in Applied Probability},
  volume={12},
  optnumber={2},
  pages={475--490},
  year={1980},
  publisher={Cambridge University Press},
  doi       = {10.2307/1426607},
}

@article{luo2023joint,
  title={Joint inference of exclusivity patterns and recurrent trajectories from tumor mutation trees},
  author={Luo, Xiang Ge and Kuipers, Jack and Beerenwinkel, Niko},
  journal={Nature Communications},
  volume={14},
  pages={3676},
  year={2023},
  publisher={Nature Publishing Group UK London},
  doi       = {10.1038/s41467-023-39400-w},
}

@article{Luo2025Bayesian,
    author = {Luo, Xiang Ge and Kuipers, Jack and Rupp, Kevin and Takahashi, Koichi and Beerenwinkel, Niko},
    title = {Bayesian inference of fitness landscapes via tree-structured branching processes},
    journal = {Bioinformatics},
    volume = {41},
    optnumber = {Supplement_1},
    pages = {i160-i169},
    year = {2025},
  doi       = {10.1093/bioinformatics/btaf193},
}

@article{meyn1993stabilityII,
  title={Stability of {Markovian} processes {II}: {Continuous-time} processes and sampled chains},
  author={Meyn, Sean P. and Tweedie, Richard L.},
  journal={Advances in Applied Probability},
  volume={25},
  number={3},
  pages={487--517},
  year={1993},
  doi={10.2307/1427521},
}

@article{meyn1993stabilityIII,
  title={Stability of {Markovian} processes {III}: {Foster--Lyapunov} criteria for continuous-time processes},
  author={Meyn, Sean P. and Tweedie, Richard L.},
  journal={Advances in Applied Probability},
  volume={25},
  number={3},
  pages={518--548},
  year={1993},
  doi={10.2307/1427522},
}

@article{Morita2020ClonalGenomics,
  title={Clonal evolution of acute myeloid leukemia revealed by high-throughput single-cell genomics},
  author={Morita, Kiyomi and Wang, Feng and Jahn, Katharina and Hu, Tianyuan and Tanaka, Tomoyuki and Sasaki, Yuya and Kuipers, Jack and Loghavi, Sanam and Wang, Sa A. and Yan, Yuanqing and Furudate, Ken and Matthews, Jairo and Little, Latasha and Gumbs, Curtis and Zhang, Jianhua and Song, Xingzhi and Thompson, Erika and Patel, Keyur P. and Bueso-Ramos, Carlos E. and DiNardo, Courtney D. and Ravandi, Farhad and Jabbour, Elias and Andreeff, Michael and Cortes, Jorge and Bhalla, Kapil and Garcia-Manero, Guillermo and Kantarjian, Hagop and Konopleva, Marina and Nakada, Daisuke and Navin, Nicholas and Beerenwinkel, Niko and Futreal, P. Andrew and Takahashi, Koichi},
  journal={Nature Communications},
  volume={11},
  optnumber={1},
  pages={5327},
  year={2020},
  publisher={Nature Publishing Group},
  doi       = {10.1038/s41467-020-19119-8},
}

@article{nguyen2023stochastic,
  title={Stochastic models of stem cells and their descendants under different criticality assumptions},
  author={Nguyen, Nam H. and Kimmel, Marek},
  journal={Stochastic Models},
  volume={39},
  optnumber={1},
  pages={249--264},
  year={2023},
  doi={10.1080/15326349.2022.2093374},
}

@article{nicholson2019competing,
  title={Competing evolutionary paths in growing populations with applications to multidrug resistance},
  author={Nicholson, Michael D and Antal, Tibor},
  journal={PLoS Computational Biology},
  volume={15},
  optnumber={4},
  pages={e1006866},
  year={2019},
  publisher={Public Library of Science San Francisco, CA USA},
  doi       = {10.1371/journal.pcbi.1006866},
}

@article{nicholson2023sequential,
  title={Sequential mutations in exponentially growing populations},
  author={Nicholson, Michael D and Cheek, David and Antal, Tibor},
  journal={PLoS Computational Biology},
  volume={19},
  pages={e1011289},
  year={2023},
  publisher={Public Library of Science San Francisco, CA USA},
  doi       = {10.1371/journal.pcbi.1011289},
}

@book{norris1998markov,
  title={Markov chains},
  author={Norris, James Robert},
  year={1998},
  publisher={Cambridge University Press},
  address={Cambridge},
  doi       = {10.1017/CBO9780511810633},
}

@article{paterson2026fast,
  title={A fast numerical integration scheme for clonal expansion processes on graphs},
  author={Paterson, Chay and Gao, Miaomiao and Hellier, Joshua and Luebeck, Georg and Wedge, David C and Bozic, Ivana},
  journal={PLOS Computational Biology},
  volume={22},
  optnumber={8},
  pages={e1012784},
  year={2026},
  doi       = {10.1371/journal.pcbi.1012784},
  publisher={Public Library of Science San Francisco, CA USA}
}

@book{perko2001differential,
  title={Differential Equations and Dynamical Systems},
  author={Perko, Lawrence},
  volume={7},
  edition={3},
  year={2001},
  publisher={Springer Science \& Business Media},
  address={New York},
  doi       = {10.1007/978-1-4613-0003-8},
}

@article{quinn1989hazard,
  title={Calculating the hazard function and probability of tumor for cancer risk assessment when the parameters are time-dependent},
  author={Quinn, Dennis W.},
  journal={Risk Analysis},
  volume={9},
  optnumber={3},
  pages={407--413},
  year={1989},
  doi       = {10.1111/j.1539-6924.1989.tb01006.x},
}

@article{renshaw2000applying,
  title={Applying the saddlepoint approximation to bivariate stochastic processes},
  author={Renshaw, Eric},
  journal={Mathematical Biosciences},
  volume={168},
  optnumber={1},
  pages={57--75},
  year={2000},
  publisher={Elsevier},
  doi       = {10.1016/S0025-5564(00)00037-7},
}

@article{rijal2025differentiable,
  title={A differentiable Gillespie algorithm for simulating chemical kinetics, parameter estimation, and designing synthetic biological circuits},
  author={Rijal, Krishna and Mehta, Pankaj},
  journal={eLife},
  volume={14},
  pages={RP103877},
  year={2025},
  doi       = {10.7554/eLife.103877.3},
}

@article{Smith2015FLT3,
  title={{FLT3 D835 mutations confer differential resistance to type II FLT3 inhibitors}},
  author={Smith, Catherine C and Lin, Kimberly and Stecula, Adrian and Sali, Andrej and Shah, Neil P},
  journal={Leukemia},
  volume={29},
  optnumber={12},
  pages={2390--2392},
  year={2015},
  publisher={Nature Publishing Group},
  doi       = {10.1038/leu.2015.165},
}

@article{stutz2022computational,
  title={Computational tools for assessing gene therapy under branching process models of mutation},
  author={Stutz, Timothy C. and Sinsheimer, Janet S. and Sehl, Mary E. and Xu, Jason},
  journal={Bulletin of Mathematical Biology},
  volume={84},
  optnumber={1},
  pages={15},
  year={2022},
  doi       = {10.1007/s11538-021-00969-2},
}

@article{wu2025statistical,
  title={A statistical framework for detecting therapy-induced resistance from drug screens},
  author={Wu, Chenyu and Gunnarsson, Einar Bjarki and Foo, Jasmine and Leder, Kevin},
  journal={npj Systems Biology and Applications},
  volume={11},
  optnumber={1},
  pages={88},
  year={2025},
  publisher={Nature Publishing Group UK London},
  doi       = {10.1038/s41540-025-00560-8},
}

@inproceedings{xu2015compressed,
  title={Efficient transition probability computation for continuous-time branching processes via compressed sensing},
  author={Xu, Jason and Minin, Vladimir N.},
  booktitle={Proceedings of the Thirty-First Conference on Uncertainty in Artificial Intelligence},
  pages={952--961},
  year={2015},
  publisher={AUAI Press},
  url={https://pmc.ncbi.nlm.nih.gov/articles/PMC4775097/}
}

@article{xu2015likelihood,
  title={Likelihood-based inference for discretely observed birth--death--shift processes, with applications to evolution of mobile genetic elements},
  author={Xu, Jason and Guttorp, Peter and Kato-Maeda, Midori and Minin, Vladimir N.},
  journal={Biometrics},
  volume={71},
  optnumber={4},
  pages={1009--1021},
  year={2015},
  doi={10.1111/biom.12352},
}

@article{zhang2023waiting,
  title={Waiting times in a branching process model of colorectal cancer initiation},
  author={Zhang, Ruibo and Ukogu, Obinna A and Bozic, Ivana},
  journal={Theoretical Population Biology},
  volume={151},
  pages={44--63},
  year={2023},
  publisher={Elsevier},
  doi       = {10.1016/j.tpb.2023.04.001},
}

@article{zhang2024accumulation,
  title = {Accumulation of Oncogenic Mutations During Progression from Healthy Tissue to Cancer},
  author = {Zhang, Ruichu and Bozic, Ivana},
  journal = {Bulletin of Mathematical Biology},
  volume = {86},
  pages = {142},
  year = {2024},
  doi       = {10.1007/s11538-024-01372-3},
  opturl = {https://optdoi.org/10.1007/s11538-024-01372-3}
}

\newpage

\appendix


\section{Saddle-point approximations}

This section contains the technical results and mathematical proofs for the saddle-point approximations. It is organized as follows. \sref{app:saddle_illustration} provides a one-type illustration of the saddle-point approximation compared to the one based on the central limit theorem. \sref{app:saddle_proofs} contains the proofs of the saddle-point results. \sref{app:saddle_counterexample} gives a three-type counterexample showing that \asref{assum:nondegen} is not sufficient for the existence of a saddle point. \sref{app:saddle_univariate} derives the variational equations for the univariate marginals, and \sref{app:algorithms} provides algorithmic details.

\subsection{One-type illustration}
\label{app:saddle_illustration}

Consider a one-type branching process $Z(t)$ with immigration rate $\nu>0$, birth rate $\alpha>0$, death rate $\beta\ge0$, and no initial cells. Suppose $\lambda=\alpha-\beta\ne0$, and write $\rho=\nu/\alpha$. It is well-known that
\[
Z(t)\sim\mathrm{Negative\mbox{-}Binomial}\left(\rho,p(t)\right),
\qquad
p(t)=\frac{\lambda}{\alpha e^{\lambda t}-\beta},
\]
for every $t>0$ (see the proof of \lemmaref{lemma:1type}). 
At a target $n>0$, the saddle point and the quantities entering \eref{eq:SP} are
\begin{equation}
\label{eq:nb_saddle}
\hat u(n)=\log\frac{n}{n+\rho}-\log\left(1-p(t)\right),
\qquad
\Sigma\left(\hat u(n)\right)=\frac{n\left(n+\rho\right)}{\rho},
\qquad
K\left(t,\hat u(n)\right)=\rho\log\frac{p(t)\left(n+\rho\right)}{\rho}.
\end{equation}
Intuitively, \eref{eq:SP} fits a different Gaussian at every $n$, with mean $n$ and variance $\Sigma\left(\hat u(n)\right)$. 
The central limit theorem instead uses a single Gaussian with the mean $\rho\left(1-p(t)\right)/p(t)$ and the variance $\rho\left(1-p(t)\right)/p(t)^2$ of $Z(t)$, regardless of $n$.
\fref{fig:sp_one_type_illustration} illustrates their differences for the case $\nu=2$, $\alpha=1$, $\beta=1/2$, and $t=8\log2$.

\begin{figure}[H]
    \centering
    \includegraphics[width=0.817\textwidth]{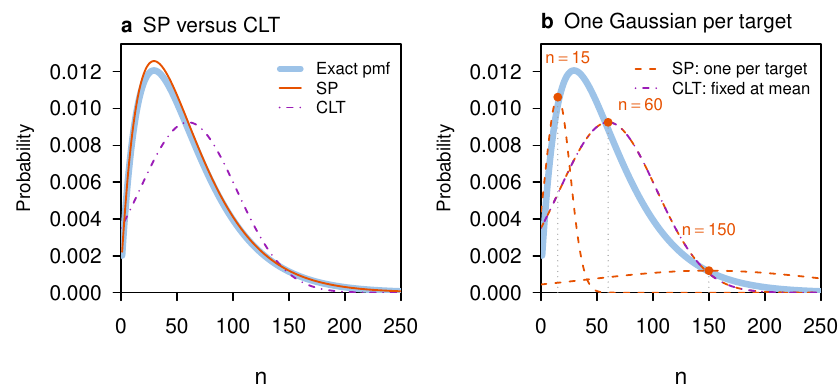}
    \caption{Saddle-point (SP) and central-limit theorem (CLT) approximations of the exact $\pmf$ for a one-type branching process with $\nu=2$, $\alpha=1$, $\beta=1/2$, $Z(0)=0$, and $t=8\log2$.
    (\textbf{a}) The $\pmf$ and the two approximations. (\textbf{b}) The Gaussian that \eref{eq:SP} fits at $n=15$, $60$, and $150$}
    \label{fig:sp_one_type_illustration}
\end{figure}

\subsection{Proofs for saddle-point approximation}
\label{app:saddle_proofs}

In this section, we first establish the connection between the characteristic ODE (\eref{eq:char}), which we can evaluate numerically, and the generating function of the process, namely \eref{eq:exact_solution}.
\citet{quinn1989hazard} and \citet{paterson2026fast} approached this problem through the PDE representation of the generating function, and their applications considered the arguments $q_i\in[0,1]$, where the generating function is automatically finite. 
In our saddle-point approximations, however, the arguments $q_i=e^{u_i}$ may go beyond one whenever the target counts $n_i$ exceed the expected counts. 
In this case, using the PDE representation directly would be circular unless finiteness of the generating function were established independently.
Therefore, we show in \thmref{thm:characteristic_mgf} that, whenever the characteristic remains finite on $[0,t]$, the characteristic representation \eref{eq:exact_solution} equals the true finite generating function. 
Together with \lemmaref{lemma:admissible_mgf_interior}, this justifies the cumulant generating function $K$ and its derivatives on the admissible domain $\mathcal U_t$.
The remaining results then establish uniqueness of the saddle point under \asref{assum:nondegen}, and existence for every positive target under the stronger \asref{assum:all-active}.

\begin{lemma}
\label{lemma:positive_characteristics}
For every $\bq\in[0,\infty)^d$, the solution of
\eref{eq:char} satisfies
\[
\gamma_i(0;\bq)=q_i\ge0,
\qquad
\gamma_i(s;\bq)>0
\]
for every $i\in V$ and every $s>0$ in its maximal interval of existence.
\end{lemma}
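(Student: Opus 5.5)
The plan is to exploit the Riccati structure \eref{eq:riccati}: each component satisfies
\[
\gamma_i'=\alpha_i\gamma_i^2+b_i(\sigma)\gamma_i+\beta_i ,
\]
where $b_i(\sigma)=-(\alpha_i+\beta_i)+\sum_{k:(i,k)\in E}\mu_{ik}(\gamma_k(\sigma)-1)$ is continuous on the maximal interval of existence $[0,T)$. The key observation is that $\beta_i>0$, so the vector field points strictly inward on the boundary face $\{\gamma_i=0\}$. Indeed, by \eref{eq:X},
\[
X_i(\bq)\big|_{q_i=0}=\beta_i>0
\]
regardless of the other coordinates, because every mutation term carries the factor $q_i$.

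First I establish nonnegativity, i.e.\ that $[0,\infty)^d$ is forward invariant. Rather than argue with a first exit time, I would use an integrating factor. Set $a_i(\sigma):=\alpha_i\gamma_i(\sigma)+b_i(\sigma)$, which is continuous on $[0,T)$ because $\gamma$ is. Then $\gamma_i'=a_i\gamma_i+\beta_i$, a linear equation in $\gamma_i$ with a given continuous coefficient. Variation of constants gives
\[
\gamma_i(s)=e^{\int_0^s a_i}\,q_i+\beta_i\int_0^s e^{\int_r^s a_i}\,\rmd r .
\]
This identity is valid on all of $[0,T)$ without any sign assumption, since it only treats $a_i$ as a known function along the actual trajectory.

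Positivity then follows at once. For $s>0$, the first term is at least $0$ because $q_i\ge0$. The second term is strictly positive because $\beta_i>0$, the exponential is positive, and the integration interval has positive length. Hence $\gamma_i(s;\bq)>0$ for every $s\in(0,T)$ and every $i\in V$. The statement $\gamma_i(0;\bq)=q_i\ge0$ is just the initial condition.

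The only step needing care is confirming that the decoupling is legitimate for the coupled system. The coefficient $a_i$ depends on the other components $\gamma_k$, which might a priori be negative, but this does not matter: $a_i$ enters only through exponentials, so its sign is irrelevant. The one thing to verify is that the equation is genuinely affine in $\gamma_i$ once the other components are frozen along the solution, which holds because the only nonlinearity in $\gamma_i$ is $\alpha_i\gamma_i^2=(\alpha_i\gamma_i)\gamma_i$, and that factor has been absorbed into $a_i$. Local existence and uniqueness come from $\mathbf X$ being a polynomial, hence locally Lipschitz. I therefore expect no real obstacle beyond stating this representation cleanly. A fallback argument would take the first time some $\gamma_i$ hits $0$ and observe that its derivative there equals $\beta_i>0$, which is a contradiction.
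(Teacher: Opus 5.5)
Your proof is correct, but it argues differently from the paper. The paper uses a barrier argument in two steps. First, it gets $\gamma_i>0$ on a short interval after $0$, splitting into the cases $q_i>0$ (by continuity) and $q_i=0$ (where $\dot\gamma_i(0)=\beta_i>0$). Second, it rules out a first positive zero $\tau$: the left difference quotient forces $\dot\gamma_i(\tau)\le 0$, while \eref{eq:X} gives $\dot\gamma_i(\tau)=X_i(\bg(\tau;\bq))=\beta_i>0$.

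You instead treat the trajectory as known, write $\dot\gamma_i=a_i\gamma_i+\beta_i$ with the continuous coefficient $a_i=\alpha_i\gamma_i+b_i$, and read positivity directly off the variation-of-constants formula. Both routes rest on the same structural fact: every term of $X_i$ other than $\beta_i$ carries a factor $q_i$. Consequently neither route needs sign information on the other coordinates, a point you correctly flag.

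Your version buys two things. It removes the case split at $s=0$ and the contradiction step. It is also quantitative, giving the explicit bounds $\gamma_i(s)\ge\beta_i\int_0^s e^{\int_r^s a_i}\,\rmd r$ and $\gamma_i(s)\ge e^{\int_0^s a_i}q_i$; the latter shows strict positivity for all $s\ge0$ when $q_i>0$. The paper's argument is slightly more elementary, using only continuity and the boundary value $X_i|_{q_i=0}=\beta_i$, with no need for the affine factorization along the trajectory.

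One small caution about your fallback: when $q_i=0$, the ``first time $\gamma_i$ hits $0$'' is $s=0$ itself. That version would therefore need the paper's short-time step via $\dot\gamma_i(0)=\beta_i>0$ before the first-zero contradiction applies. Your main integrating-factor argument handles this case automatically.
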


\begin{proof}
Fix $\bq\in[0,\infty)^d$ and $i\in V$. If $q_i>0$, continuity gives
$\gamma_i(s;\bq)>0$ for all sufficiently small $s>0$. If $q_i=0$, then
\[
\dot{\gamma}_i(0;\bq)=X_i(\bq)=\beta_i>0,
\]
so again $\gamma_i(s;\bq)>0$ for all sufficiently small $s>0$.

Suppose that $\gamma_i(\cdot;\bq)$ has a first positive zero $\tau$. Then
$\gamma_i(s;\bq)>0$ for $0<s<\tau$ and $\gamma_i(\tau;\bq)=0$, so
\[
\dot{\gamma}_i(\tau;\bq)
=
\lim_{h\downarrow0}
\frac{\gamma_i(\tau;\bq)-\gamma_i(\tau-h;\bq)}{h}
\le 0.
\]
On the other hand, \eref{eq:char} and \eref{eq:X} give

\[
\dot{\gamma}_i(\tau;\bq)
=
X_i\left(\bg(\tau;\bq)\right)
=
\beta_i>0,
\]
a contradiction. Since $i\in V$ was arbitrary, the result follows.
\end{proof}

\begin{theorem}
\label{thm:characteristic_mgf}
Fix $t>0$ and $\bq\in[0,\infty)^d$. If
$\bg(\cdot;\bq)$ is finite on $[0,t]$, then
\[
\E\left[\bq^{\bz(t)}\right]
=
\prod_{i\in V}\gamma_i(t;\bq)^{z_{0,i}}
\exp\left\{
\int_0^t\sum_{i\in V}\nu_i
\bigl(\gamma_i(s;\bq)-1\bigr)\rmd s
\right\}
<\infty.
\]
\end{theorem}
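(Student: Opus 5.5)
We plan a martingale argument via the backward Kolmogorov structure, which avoids assuming finiteness of the generating function in advance. Fix $t>0$ and $\bq\in[0,\infty)^d$ with $\bg(\cdot;\bq)$ finite on $[0,t]$. By \lemmaref{lemma:positive_characteristics}, $\gamma_i(s;\bq)>0$ for $s\in(0,t]$. Define for $0\le s\le t$ the process
\[
M_s:=\prod_{i\in V}\gamma_i(t-s;\bq)^{Z_i(s)}\exp\left\{\int_0^{t-s}Y\bigl(\bg(\sigma;\bq)\bigr)\rmd\sigma\right\},
\]
so that $M_t=\bq^{\bz(t)}$ and $M_0=\bg(t;\bq)^{\bzs_0}\exp(w(t;\bq))$, the deterministic right-hand side of the statement. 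The claim is then $\E[M_t]=M_0$. Finiteness is automatic once this holds, since $M_0<\infty$ by continuity of $\bg$ on $[0,t]$.

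The key computation is that $F(s,\bn):=\bg(t-s)^{\bn}\exp\{\int_0^{t-s}Y\}$ satisfies $(\partial_s+\mathcal Q)F=0$, where $\mathcal Q$ is the generator of the chain. Applying the four transition mechanisms to $\bg^{\bn}$ gives
\[
\mathcal Q\bg^{\bn}=\bg^{\bn}\Big[\sum_i n_i\gamma_i^{-1}X_i(\bg)+Y(\bg)\Big].
\]
Each replication, death and mutation event of a type-$i$ cell contributes $n_i\gamma_i^{-1}$ times the corresponding term in \eref{eq:X}, and immigration contributes $\nu_i(\gamma_i-1)$. This cancels exactly against $\partial_s F=-F[\sum_i n_i\dot\gamma_i/\gamma_i+Y]$ by \eref{eq:char}. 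The only subtlety is $\gamma_i(0)=q_i=0$ at $s=t$, which can be handled by taking limits $s\uparrow t$ or by replacing $\bq$ with $\bq+\varepsilon\mathbf 1$ and letting $\varepsilon\downarrow0$, using continuous dependence of the ODE solution.

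By Dynkin's formula, $M_s$ is then a local martingale. It is positive, hence a supermartingale, which already gives $\E[\bq^{\bz(t)}]\le M_0<\infty$.

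The main obstacle is upgrading this to equality, since the state space is unbounded and $\bq$ may exceed $1$, so $M$ is not obviously bounded. We would localize with $\tau_N=\inf\{s:\sum_iZ_i(s)\ge N\}$. On $[0,t\wedge\tau_N]$ the process $F$ is bounded, because $\bg$ is bounded on $[0,t]$ and $\bz$ is capped, so $\E[M_{t\wedge\tau_N}]=M_0$ exactly. It then remains to show $\E[M_{\tau_N}\mathbf 1\{\tau_N<t\}]\to0$, or some alternative uniform integrability.

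One route is to apply the same supermartingale bound at a slightly larger argument $\bq'>\bq$ (componentwise, with $\bq'_i=\bq_i$ where $\bq_i\le1$ suffices). Finiteness of $\bg(\cdot;\bq')$ on $[0,t]$ follows from openness of the existence domain of ODE solutions. This gives a finite moment $\E[\sup_s M'_s]$-type control, and hence uniform integrability of $\{M_{t\wedge\tau_N}\}$ via a de la Vallée-Poussin argument, since $M^{1+\epsilon}$ is dominated by a multiple of the $\bq'$-martingale.

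Alternatively, one can first prove the identity for $\bq\in[0,1]^d$, where everything is bounded and the localization is trivial. The identity then extends by analyticity: both sides are analytic in $\bq$ on the connected region where $\bg$ exists, and the left side, a power series with nonnegative coefficients, is finite there by the supermartingale bound. We would try the analytic-continuation route first, as it needs only the inequality already established plus Pringsheim-type uniqueness.
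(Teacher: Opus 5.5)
Your argument takes a different route from the paper's. The paper does not use martingales. Write $c_i=\alpha_i+\beta_i+\sum_{j:(i,j)\in E}\mu_{ij}$ and $H_i(\mathbf x)=\beta_i+\alpha_ix_i^2+\sum_{j:(i,j)\in E}\mu_{ij}x_ix_j$, and let $\mathcal T$ be the integrated backward operator $(\mathcal T\mathbf f)_i(s)=e^{-c_is}q_i+\int_0^s e^{-c_i(s-r)}H_i(\mathbf f(r))\rmd r$. The paper truncates the binary genealogy of a single type-$i$ family by generation and identifies $\E_i\bigl[\bq^{\bz^{(i)}(s)}\mathds{1}_{A_n(s)}\bigr]$ with the iterates $\mathcal T^n\mathbf 0$. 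Since $\mathcal T$ is coordinatewise nondecreasing with nonnegative fixed point $\bg$, this gives $\mathbf f^{(n)}\uparrow\mathbf G\le\bg$. Gronwall then gives $\mathbf G=\bg$, because both are bounded fixed points of $\mathcal T$, and immigration is added afterwards by a Poisson computation.

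Your generator identity $(\partial_s+\mathcal Q)F=0$ is correct. Your positive-local-martingale step cleanly gives $\E[\bq^{\bz(t)}]\le M_0<\infty$, handling initial cells and immigration at once, and it plays the role of the paper's bound $\mathbf G\le\bg$. Note that your localization $\tau_N\uparrow\infty$ and the Fatou step silently use nonexplosion, which the paper proves via the jump chain.

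The gap is in the equality step, which you rightly flag as the main obstacle. In both variants the decisive claim is asserted, and neither is automatic.

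\emph{Uniform integrability.} The domination $M_s^{1+\epsilon}\le CM'_s$ on all reachable states forces $\gamma_i(\tau;\bq)^{1+\epsilon}\le\gamma_i(\tau;\bq')$ for every $\tau\in[0,t]$ and $i\in V$. This is because $Z_i(s)$ is unbounded, so no constant can absorb a strict mismatch.
\begin{itemize}
\item You cannot read this off Jensen's inequality $\E[\bq^{Z}]^{1+\epsilon}\le\E[(\bq^{1+\epsilon})^{Z}]$ without presupposing the identity you are proving.
\item The claim is true for $\bq'\ge\bq^{1+\epsilon}$, but the proof is an ODE argument. Apply Jensen to $H_i(\bg)/c_i$, which is a convex combination of the monomials $1,\gamma_i^2,\gamma_i\gamma_j$. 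Combined with Young's inequality, this shows $\bg^{1+\epsilon}$ is a subsolution of \eref{eq:char}.
\item A Kamke comparison for the cooperative field then gives the domination; the field is cooperative because $\partial X_i/\partial q_j=\mu_{ij}q_i\ge0$ on the orthant.
\end{itemize}

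\emph{Analytic continuation.} The identity theorem needs the open set $\{\bq\in(0,\infty)^d:\bg(\cdot;\bq)\text{ exists on }[0,t]\}$ to be connected. You state this without proof, and it rests on the same comparison principle, which makes the set a down-set of the orthant. Pringsheim's theorem plays no role here. What you need is the identity theorem, plus holomorphy of the nonnegative-coefficient series near any point where it is finite at a strictly larger argument.

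A shorter repair needs no comparison principle: combine your inequality with the paper's endgame.
\begin{enumerate}
\item Apply your supermartingale bound to a single type-$i$ family without immigration. This gives $G_i(\tau):=\E_i\bigl[\bq^{\bz^{(i)}(\tau)}\bigr]\le\gamma_i(\tau;\bq)<\infty$ for all $\tau\le t$.
\item A first-jump decomposition shows $\mathbf G=\mathcal T\mathbf G$; Tonelli justifies it because all terms are nonnegative.
\item Gronwall on the bounded set containing $\mathbf G$ and $\bg$ gives $\mathbf G=\bg$.
\item The branching property and independent Poisson immigration then yield the stated formula.
\end{enumerate}
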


\begin{proof}
Let $\bz^{(i)}$ be the process started from one
type-$i$ particle without immigration, and define
\[
c_i:=\alpha_i+\beta_i+\sum_{j:(i,j)\in E}\mu_{ij},
\qquad
H_i(\mathbf x):=\beta_i+\alpha_i x_i^2
+\sum_{j:(i,j)\in E}\mu_{ij}x_ix_j.
\]
For a nonnegative function $\mathbf f$ on $[0,t]$, set
\[
\left(\mathcal T\mathbf f\right)_i(s)
=
e^{-c_is}q_i+
\int_0^s e^{-c_i\left(s-r\right)}
H_i\left(\mathbf f(r)\right)\rmd r.
\]
The characteristic equation \ref{eq:char} is equivalent to
\[
\bg(\cdot;\bq)=\mathcal T\bg(\cdot;\bq).
\]

We first show that $\bz^{(i)}$ is nonexplosive. Let $\mathbf Y_k$ be its jump chain and $S_k$ its holding times. On the event of infinitely many jumps,
\[
\left\lVert\mathbf Y_{k-1}\right\rVert_1\le k,
\qquad
r\left(\mathbf Y_{k-1}\right)
=
\sum_{j\in V}c_jY_{k-1,j}
\le ck,
\]
where $c:=\max_{j\in V}c_j$. By the jump-chain construction in Section~2.6 of \citet{norris1998markov},
\[
S_k
=
\frac{T_k}{r\left(\mathbf Y_{k-1}\right)}
\ge
\frac{T_k}{ck},
\]
where the $T_k$ are independent $\operatorname{Exp}(1)$ variables. Since $T_k/(ck)\sim\operatorname{Exp}(ck)$ and $\sum_{k=1}^{\infty}\frac1{ck}=\infty$,
Theorem~2.3.2(ii) of \citet{norris1998markov} gives
\[
\sum_{k=1}^{\infty}\frac{T_k}{ck}=\infty
\quad\Longrightarrow\quad
\sum_{k=1}^{\infty}S_k=\infty
\qquad
\text{almost surely on the event of infinitely many jumps}.
\]
Hence, infinitely many jumps cannot occur in finite time, so $\bz^{(i)}$ is nonexplosive.

Next, define
\[
\mathbf f^{(0)}:=\mathbf0,
\qquad
\mathbf f^{(n+1)}:=\mathcal T\mathbf f^{(n)}.
\]
For the type-$i$ process, there are three possible transitions: death ($i \rightarrow \emptyset$) at rate $\beta_i$, replication ($i \rightarrow i+i$) at rate $\alpha_i$, and mutation ($i \rightarrow i+j$) at rate $\mu_{ij}$ for each $(i,j)\in E$. 
Consider the binary genealogy of this process, where a particle either dies or replaces itself with two new particles at each transition.
By the memoryless property of exponential holding times, this construction has the same count process as the original model.
Let $\mathcal V(s)$ be the set of all particles born by time $s$ in this genealogy, including those that have died, and let $|v|$ denote the generation of particle $v$, with the initial particle in generation $0$. Define
\[
A_n(s):=
\left\{
\max_{v\in\mathcal V(s)}|v|<n
\right\},
\qquad
A_0(s):=\varnothing.
\]
We prove by induction that
\[
f_i^{(n)}(s)
=
\E_i\left[
\bq^{\bz^{(i)}(s)}
\mathds{1}_{A_n(s)}
\right]
\]
for all $n\ge0$.
For $n=0$,
\[
f_i^{(0)}(s)=0
=
\E_i\left[
\bq^{\bz^{(i)}(s)}
\mathds{1}_{A_0(s)}
\right].
\]
Assume that, for every $k\in V$ and $0\le r\le s$,
\[
f_k^{(n)}(r)
=
\E_k\left[
\bq^{\bz^{(k)}(r)}
\mathds{1}_{A_n(r)}
\right].
\]
Condition on the first transition of the initial type-$i$ particle. If no transition occurs before $s$, the contribution is $e^{-c_is}q_i$. If the first transition occurs at time $s-r$, then after death the contribution is $1$, while after replication or mutation the descendant families evolve independently for time $r$. Moreover, $A_{n+1}(s)$ holds exactly when each descendant family satisfies the corresponding event $A_n(r)$. Hence, by the induction hypothesis,
\[
\begin{aligned}
\E_i\left[
\bq^{\bz^{(i)}(s)}
\mathds{1}_{A_{n+1}(s)}
\right]
&=
e^{-c_is}q_i +
\int_0^s e^{-c_i(s-r)}
\left[
\beta_i
+\alpha_i\left(f_i^{(n)}(r)\right)^2
+\sum_{j:(i,j)\in E}
\mu_{ij}f_i^{(n)}(r)f_j^{(n)}(r)
\right]\rmd r\\
&=
f_i^{(n+1)}(s),
\end{aligned}
\]
and the claim follows.

On $A_n(s)$, the population is finite and bounded by $2^n$, so the truncated expectation is finite. Nonexplosion of the process implies that
\[
\max_{v\in\mathcal V(s)}|v|<\infty
\qquad \Longrightarrow \qquad
\mathds{1}_{A_n(s)}\uparrow1
\qquad\text{almost surely}.
\]
Since $\mathcal T$ is coordinatewise nondecreasing and
$\bg(\cdot;\bq)=\mathcal T\bg(\cdot;\bq)$, by
\lemmaref{lemma:positive_characteristics},
\[
\mathbf0\le\mathbf f^{(n)}
\le\mathbf f^{(n+1)}
\le\bg(\cdot;\bq).
\]
Hence, by monotone convergence,
\[
G_i(s;\bq)
:=
\E_i\left[\bq^{\bz^{(i)}(s)}\right]
=
\lim_{n\to\infty} \E_i\left[\bq^{\bz^{(i)}(s)} \mathds{1}_{A_n(s)}\right]
=
\lim_{n\to\infty}f_i^{(n)}(s)
\le\gamma_i(s;\bq)<\infty.
\]
Since $\mathbf f^{(n+1)}=\mathcal T\mathbf f^{(n)}$ and $\mathbf f^{(n)}\uparrow\mathbf G$, monotone convergence gives
\[
\mathbf G(s;\bq)=\mathcal T\mathbf G(s;\bq).
\]
Both $\mathbf G(\cdot;\bq)$ and $\bg(\cdot;\bq)$ lie in a bounded subset of $[0,\infty)^d$, on which $\mathbf H=\left(H_i\right)_{i\in V}$ is Lipschitz. Therefore, for some $L<\infty$,
\[
\left\lVert\mathbf G(s;\bq)-\bg(s;\bq)\right\rVert_\infty
\le
L\int_0^s
\left\lVert\mathbf G(r;\bq)-\bg(r;\bq)\right\rVert_\infty\rmd r.
\]
Gronwall's inequality yields
\begin{equation}
\label{eq:single_family_transform}
G_i(s;\bq)=\gamma_i(s;\bq),
\qquad i\in V,\quad 0\le s\le t.
\end{equation}
The independent families descending from the initial particles therefore contribute
\[
\prod_{i\in V}\gamma_i(t;\bq)^{z_{0,i}}.
\]
Represent the immigration mechanisms by independent Poisson processes, independent of all particle-founded families. If $\nu_i=0$, the type-$i$ immigration contribution is $1$. Suppose therefore that $\nu_i>0$.
Let $I_i(t)\sim\operatorname{Poisson}(\nu_it)$ and let $R_{i,1},\ldots,R_{i,I_i(t)}$ be the arrival times. By Theorem~2.4.6 of \citet{norris1998markov}, conditional on $I_i(t)=m$, these are the order statistics of $m$ independent uniform variables on $[0,t]$. Thus,
\[
\begin{aligned}
\E\left[
\prod_{k=1}^{I_i(t)}
\gamma_i\left(t-R_{i,k};\bq\right)
\right]
&=
e^{-\nu_it}
\sum_{m=0}^{\infty}
\frac{\left(\nu_it\right)^m}{m!}
\left[
\frac1t\int_0^t\gamma_i(r;\bq)\rmd r
\right]^m\\
&=
\exp\left\{
\nu_i\int_0^t
\left(\gamma_i(r;\bq)-1\right)\rmd r
\right\}.
\end{aligned}
\]
Multiplying the independent contributions from all initial particles and immigrant families gives
\[
\E\left[\bq^{\bz(t)}\right]
=
\prod_{i\in V}\gamma_i(t;\bq)^{z_{0,i}}
\exp\left\{
\int_0^t\sum_{i\in V}\nu_i
\left(\gamma_i(s;\bq)-1\right)\rmd s
\right\}.
\]
Since $\bg(\cdot;\bq)$ is continuous and finite on $[0,t]$, the right-hand side is finite.
\end{proof}

\begin{lemma}
\label{lemma:admissible_mgf_interior}
Fix $t>0$. The set $\mathcal U_t$ is open and satisfies
\[
\mathcal U_t\subseteq\operatorname{int}\mathcal M_t.
\]
\end{lemma}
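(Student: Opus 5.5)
Openness of $\mathcal U_t$ comes from continuous dependence of ODE solutions on initial data. The inclusion then follows from \thmref{thm:characteristic_mgf} applied on a small open neighbourhood.

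\textbf{Step 1: $\mathcal U_t$ is open.} The vector field $\mathbf X$ in \eref{eq:X} is polynomial, hence locally Lipschitz on $\R^d$. Fix $\bu\in\mathcal U_t$. The trajectory $\bg_{\bu}$ exists on $[0,t]$, so its image $\{\bg_{\bu}(\sigma):0\le\sigma\le t\}$ is compact.

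By the standard continuous-dependence theorem for ODEs with locally Lipschitz fields (e.g.\ \citet{perko2001differential}), the maximal existence time is lower semicontinuous in the initial condition. Moreover, solutions depend continuously on the initial point uniformly on compact time intervals inside the existence interval. Since $\bu\mapsto e^{\bu}$ is continuous, there is a neighbourhood $B$ of $\bu$ such that for every $\bu'\in B$ the solution $\bg_{\bu'}$ exists on $[0,t]$. In fact it stays within distance $1$ of the compact image of $\bg_{\bu}$. Hence $B\subseteq\mathcal U_t$, and $\mathcal U_t$ is open.

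\textbf{Step 2: $\mathcal U_t\subseteq\mathcal M_t$.} For $\bu'\in\mathcal U_t$, set $\bq=e^{\bu'}\in(0,\infty)^d$. The characteristic $\bg(\cdot;\bq)$ is finite on $[0,t]$ by definition. \thmref{thm:characteristic_mgf} then gives
\[
M(t,\bu')=\E\left[\bq^{\bz(t)}\right]<\infty,
\]
so $\bu'\in\mathcal M_t$.

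\textbf{Step 3: interior.} Combining the two steps, each $\bu\in\mathcal U_t$ has an open neighbourhood $B\subseteq\mathcal U_t\subseteq\mathcal M_t$. Therefore $\bu\in\operatorname{int}\mathcal M_t$.

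If we also want to record smoothness of $K(t,\cdot)$ on $\mathcal U_t$, the standard fact that a moment generating function is real-analytic on the interior of its finiteness domain gives this immediately.

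\textbf{Main obstacle.} The only delicate point is Step 1: stating the semicontinuity of the blow-up time carefully. One takes a compact tube of radius $1$ around the trajectory, with Lipschitz constant $L$ for $\mathbf X$ on that tube. Gronwall's inequality then gives
\[
\lVert\bg_{\bu'}(\sigma)-\bg_{\bu}(\sigma)\rVert\le \lVert e^{\bu'}-e^{\bu}\rVert e^{L\sigma}
\]
as long as $\bg_{\bu'}$ stays in the tube. So if $\lVert e^{\bu'}-e^{\bu}\rVert<e^{-Lt}$, the solution cannot leave the tube before time $t$ and therefore exists on all of $[0,t]$.
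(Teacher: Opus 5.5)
Your proof is correct and follows essentially the same route as the paper. Openness comes from continuous dependence on initial data (the paper cites Perko's theorem) together with continuity of $\bu\mapsto e^{\bu}$; \thmref{thm:characteristic_mgf} then gives $\mathcal U_t\subseteq\mathcal M_t$, and openness upgrades this to $\mathcal U_t\subseteq\operatorname{int}\mathcal M_t$. Your Gronwall tube argument simply spells out the continuous-dependence step that the paper invokes by citation.
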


\begin{proof}
Fix $\bu_0\in\mathcal U_t$ and set $\bq_0=e^{\bu_0}$. Since the vector
field $X$ in \eref{eq:X} is polynomial, and hence $C^1$ and locally
Lipschitz, Theorem~4 in Section 2.4 of
\citet{perko2001differential} gives $\delta>0$ such that
$\bg(\cdot;\bq)$ exists on $[0,t]$ whenever
\[
\lVert\bq-\bq_0\rVert<\delta.
\]
By continuity of the map $\bv\mapsto e^{\bv}$, there exists an open
neighborhood $B$ of $\bu_0$ such that
\[
\left\lVert e^{\bv}-e^{\bu_0}\right\rVert<\delta
\]
for every $\bv\in B$. Hence, $B\subseteq\mathcal U_t$, and $\mathcal U_t$
is open. By \thmref{thm:characteristic_mgf},
$\mathcal U_t\subseteq\mathcal M_t$, and hence
$\mathcal U_t\subseteq\operatorname{int}\mathcal M_t$.
\end{proof}

\begin{lemma}
\label{lemma:support_saddle}
Fix $t>0$. Under \asref{assum:nondegen},
\[
P\left(\bz(t)=\mathbf 0\right)>0
\qquad \text{and}\qquad
P\left(\bz(t)=L\be_i\right)>0,
\]
for every $i\in V$ and every integer $L\ge 1$.
\end{lemma}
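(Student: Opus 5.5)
The plan is to exhibit, for each target state, an explicit finite sequence of transitions of the continuous-time Markov chain that leads from $\bzs_0$ to that state by time $t$. Such a sequence has positive probability because every rate involved is strictly positive, and the chain is nonexplosive by the argument in the proof of \thmref{thm:characteristic_mgf}. Concretely, I will use the standard fact for finite-rate, nonexplosive chains: if $x_0\to x_1\to\cdots\to x_m$ is a path of allowed jumps with positive rates $q(x_{k-1},x_k)>0$, then the probability that the jump chain follows this path is $\prod_k q(x_{k-1},x_k)/q(x_{k-1})>0$. Conditional on that path, the holding times are independent exponentials. The event that the first $m$ holding times sum to less than $t$ and the $(m+1)$-st holding time at $x_m$ exceeds the remaining time therefore has positive probability. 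Note that the total exit rate $q(x_m)$ is finite; it can be zero only at $\mathbf 0$ with no immigration, which is harmless. Hence $P(\bz(t)=x_m)>0$.

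For $\bz(t)=\mathbf 0$, take the path that kills all initial cells one at a time by death events at rate $\beta_j n_j>0$. Immigration never has to occur, and staying at $\mathbf 0$ for the remaining time has probability $e^{-(\sum_j\nu_j) r}>0$. This part does not even need \asref{assum:nondegen}.

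For $L\be_i$, first obtain a single type-$i$ cell, using \asref{assum:nondegen}. If $i\in\mathcal A$ then either $z_{0,i}>0$, and we keep one type-$i$ cell, or $\nu_i>0$, and we immigrate one. Otherwise there is a directed path $j=v_0\to v_1\to\cdots\to v_k=i$ with $j\in\mathcal A$. We then get one type-$j$ cell (kept from the initial population or immigrated) and apply successive mutations $v_{\ell-1}\to v_\ell$, each at rate $\mu_{v_{\ell-1}v_\ell}n_{v_{\ell-1}}>0$. After each mutation the parent cell can be removed by a death event, or left for later cleanup.

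Next, we kill every other cell, including all remaining initial cells of every type and all intermediate cells, via death events. Then we replicate the type-$i$ cell $L-1$ times at rate $\alpha_i n_i>0$ to reach exactly $L\be_i$. Every jump in this sequence has positive rate, so the general fact gives $P(\bz(t)=L\be_i)>0$.

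The only delicate step is making the ``finite path implies positive probability at fixed time $t$'' argument rigorous, in particular ensuring that the chain is at the target state exactly at time $t$ rather than merely visiting it. I handle this via the holding-time event above, using finiteness of exit rates and the jump-chain/holding-time construction of \citet{norris1998markov} already invoked in \thmref{thm:characteristic_mgf}. Ordering the events is otherwise routine.
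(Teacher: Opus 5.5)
Your proposal is correct and follows essentially the same route as the paper. In both, you build an explicit finite sequence of positive-rate jumps: for $\mathbf 0$, kill every initial cell with no immigration; for $L\be_i$, seed one active type-$j$ cell, mutate along a directed path to $i$, kill all other cells, and replicate $L-1$ times. This is followed by no jump until time $t$.

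The only difference is cosmetic. You certify positivity at the fixed time $t$ through the jump chain and independent holding times, whereas the paper uses competing exponentials on short intervals together with the Markov property. As a minor remark, nonexplosion is not actually needed, since on your event only $m$ jumps occur before $t$.
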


\begin{proof}
At a fixed finite state, suppose the available jumps have rates $r_1, \dots, r_k>0$ with total rate $R=\sum_{j=1}^k r_j$. 
By competing exponentials, the probability that jump $j$ occurs next within an interval of length $\Delta>0$ is 
\[
\frac{r_j}{R}\left(1-e^{-R\Delta}\right)>0,
\]
and the probability that no jumps occur is $e^{-R\Delta}>0$.
Together with the Markov property, we can iteratively construct a finite sequence of jumps with positive rates that occurs before time $t$ with positive probability.

For $\bz(t)=\mathbf0$, we can assign each initial cell to die before it replicates or mutates, and require no immigration on $[0,t]$.
Fix $i\in V$ and $L\ge1$. By \asref{assum:nondegen}, choose $j\in\mathcal A$ and a directed path $j=v_0\to v_1\to\cdots\to v_r=i$. Retain one initial type-$j$ cell if $z_{0,j}>0$, and otherwise admit one type-$j$ immigrant. Require every other initial cell to die before producing offspring and permit no other immigration. Along the selected lineage, perform one mutation $v_{\ell-1}\to v_\ell$ for each $\ell=1,\dots,r$, killing the parent after each mutation. The remaining type-$i$ cell then undergoes $L-1$ replications, followed by no further jumps. This finite sequence has positive probability and ends at $\bz(t)=L\be_i$.
\end{proof}

\begin{proposition}
\label{prop:saddle_uniqueness}
Fix $t>0$. Under \asref{assum:nondegen}, $\Sigma(\bu)$ is positive definite for every $\bu\in\mathcal U_t$, and $\mathbf m:\mathcal U_t\to\R^d$ is injective.
\end{proposition}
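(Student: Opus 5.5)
By \lemmaref{lemma:admissible_mgf_interior}, every $\bu\in\mathcal U_t$ lies in $\operatorname{int}\mathcal M_t$. There $K(t,\cdot)$ is a genuine cumulant generating function, smooth, with derivatives computed by differentiating under the expectation. The plan is to use the standard exponential-tilting identity. For $\bu\in\operatorname{int}\mathcal M_t$, define the tilted law
\[
P_{\bu}(\bz(t)=\bn):=e^{\bu^\top\bn-K(t,\bu)}\,p(t,\bn).
\]
Then $\mathbf m(\bu)=\E_{\bu}[\bz(t)]$ and $\Sigma(\bu)=\operatorname{Cov}_{\bu}(\bz(t))$. Hence $\Sigma(\bu)$ is positive semidefinite, and it is positive definite unless some nonzero $\bv\in\R^d$ makes $\bv^\top\bz(t)$ almost surely constant under $P_{\bu}$. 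The tilted law has exactly the same support as $p(t,\cdot)$, because the tilting factor is strictly positive and finite. So the question reduces to a statement about the support of $\bz(t)$ under the original measure.

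For this support statement we invoke \lemmaref{lemma:support_saddle}. Under \asref{assum:nondegen}, the support contains $\mathbf 0$ and $\be_i$ for every $i\in V$. Suppose $\bv^\top\bz(t)=c$ almost surely under $P_{\bu}$. Evaluating at $\mathbf 0$ gives $c=0$. Evaluating at $\be_i$ then gives $v_i=0$ for each $i$, so $\bv=\mathbf 0$. Therefore $\bv^\top\Sigma(\bu)\bv=\operatorname{Var}_{\bu}(\bv^\top\bz(t))>0$ for every $\bv\neq\mathbf 0$, which is positive definiteness. The variance is finite because all moments exist under the tilt on the open set $\operatorname{int}\mathcal M_t$.

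For injectivity, the obstacle is that $\mathcal U_t$ need not obviously be convex, so the usual argument on a segment must be placed inside a convex set. The fix is to work on $\operatorname{int}\mathcal M_t$ rather than $\mathcal U_t$.

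First, $\mathcal M_t$ is convex by Hölder's inequality, so its interior is convex. The argument above shows that $\Sigma$ is positive definite on all of $\operatorname{int}\mathcal M_t$, not only on $\mathcal U_t$, since it used only the tilt and \lemmaref{lemma:support_saddle}. Hence $K(t,\cdot)$ is strictly convex on $\operatorname{int}\mathcal M_t$.

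Now take $\bu\neq\bu'$ in $\mathcal U_t\subseteq\operatorname{int}\mathcal M_t$ and set $\bv=\bu'-\bu$. Along the segment between them, which stays in $\operatorname{int}\mathcal M_t$, we have
\[
(\mathbf m(\bu')-\mathbf m(\bu))^\top\bv=\int_0^1\bv^\top\Sigma(\bu+s\bv)\bv\,\rmd s>0.
\]
Thus $\mathbf m(\bu')\neq\mathbf m(\bu)$.

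This also justifies the later claim that the saddle minimizes the strictly convex function $F_{\bn}$. The only delicate point is the identification of the Hessian of $\log M$ with the tilted covariance on the interior, which is standard for moment generating functions. One should also note that $\mathcal U_t$ need not coincide with $\operatorname{int}\mathcal M_t$; this does not matter, since $\mathbf m$ is defined by $K$ on the larger set and agrees with the characteristic formula on $\mathcal U_t$ by \thmref{thm:characteristic_mgf}.
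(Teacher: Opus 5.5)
Your proposal is correct and follows the paper's proof essentially step for step. Both identify $\Sigma$ with the tilted covariance, which has the same support as the original law. Both use the atoms at $\mathbf 0$ and $\be_i$ from \lemmaref{lemma:support_saddle} to rule out degenerate directions, get convexity of $\mathcal M_t$ from H\"older, and deduce injectivity from the integrated Hessian along a segment in $\operatorname{int}\mathcal M_t$.
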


\begin{proof}
By \lemmaref{lemma:admissible_mgf_interior}, $\mathcal U_t\subseteq\operatorname{int}\mathcal M_t$.
Let $\bu,\bv\in\mathcal M_t$ and $\theta\in(0,1)$. By H\"older's inequality,
\[
\begin{aligned}
M\left(t,\theta\bu+(1-\theta)\bv\right)
&=\E\left[
\left(e^{\bu^\top\bz(t)}\right)^\theta
\left(e^{\bv^\top\bz(t)}\right)^{1-\theta}
\right]\\
&\le
\E\left[e^{\bu^\top\bz(t)}\right]^\theta
\E\left[e^{\bv^\top\bz(t)}\right]^{1-\theta}\\
&=M(t,\bu)^\theta M(t,\bv)^{1-\theta}<\infty.
\end{aligned}
\]
Therefore, $\mathcal M_t$ is convex, and hence $\operatorname{int}\mathcal M_t$ is convex.

It is standard that $K(t,\cdot)$ is twice continuously differentiable on $\operatorname{int}\mathcal M_t$ and that
\[
\nabla^2K(t,\bw)=\operatorname{Cov}_{\bw}\left(\bz(t)\right), 
\qquad \bw \in \operatorname{int}\mathcal M_t,
\]
where the exponential tilting law is defined by
\[
P_{\bw}\left(\bz(t)=\mathbf x\right)
=\frac{e^{\bw^\top\mathbf x}}{M(t,\bw)}
P\left(\bz(t)=\mathbf x\right)
\]
and has the same support as the original law.
By \lemmaref{lemma:support_saddle}, for every $\ba\ne\mathbf0$ one may choose $i$ with $a_i\ne0$, so $\ba^\top\bz(t)$ takes the distinct values $0$ and $a_i$ with positive tilted probability. 
Hence,
\[
\ba^\top\nabla^2K(t,\bw)\ba
=\operatorname{Var}_{\bw}\left(\ba^\top\bz(t)\right)>0,
\]
which shows that $\nabla^2K(t,\bw)$ is positive definite for all $\bw\in\operatorname{int}\mathcal M_t$. 
In particular, $\Sigma(\bu)=\nabla^2K(t,\bu)$ is positive definite for all $\bu\in\mathcal U_t$.

For distinct $\bu,\bv\in\mathcal U_t$, their segment lies in $\operatorname{int}\mathcal M_t$, and therefore
\[
\begin{aligned}
(\bu-\bv)^\top\left(\mathbf m(\bu)-\mathbf m(\bv)\right)
&=\int_0^1(\bu-\bv)^\top
\nabla^2K\left(t,\bv+s(\bu-\bv)\right)
(\bu-\bv)\rmd s>0.
\end{aligned}
\]
Thus, $\mathbf m$ is strictly monotone on $\mathcal U_t$, and hence injective.
\end{proof}

\begin{lemma}
\label{lemma:K_boundary_divergence}
Fix $t>0$ and suppose that \asref{assum:all-active} holds. If
$\bu_*\in\partial\mathcal U_t$, then
\[
K(t,\bu)\longrightarrow\infty
\qquad\text{as }\bu\to\bu_*\text{ within }\mathcal U_t.
\]
\end{lemma}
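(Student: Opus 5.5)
Since $\bu_*\in\partial\mathcal U_t$ and $\mathcal U_t$ is open (\lemmaref{lemma:admissible_mgf_interior}), $\bu_*\notin\mathcal U_t$. Hence the characteristic $\bg_{\bu_*}$ blows up at some time $T_*\le t$. The first step is to show that the blowup is upward and at a specific coordinate. By \lemmaref{lemma:positive_characteristics} all coordinates stay positive, so there is an index $i$ with $\gamma_{i,\bu_*}(\sigma)\to+\infty$ as $\sigma\uparrow T_*$. Indeed, a positive solution of a polynomial ODE that fails to extend must be unbounded. Moreover, an unbounded coordinate must tend to $+\infty$, since each coordinate solves the Riccati equation \eref{eq:riccati} with $\alpha_i>0$.

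The second step is monotonicity in the initial condition. The vector field $X$ in \eref{eq:X} is cooperative on $(0,\infty)^d$: the off-diagonal Jacobian entries $\mu_{jk}q_j$ in \eref{eq:jacobian} are nonnegative there. Kamke's comparison theorem then shows that $\bq\mapsto\bg(\sigma;\bq)$ is coordinatewise nondecreasing. Alternatively, I would avoid ODE comparison and use \thmref{thm:characteristic_mgf} directly, through the representation $\gamma_i(\sigma;\bq)=\E_i[\bq^{\bz^{(i)}(\sigma)}]$, which is monotone in $\bq$.

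The third step lower-bounds $K$ by a single term, using \asref{assum:all-active}. Every type $i$ has $z_{0,i}>0$ or $\nu_i>0$, and every term in \eref{eq:K_v2} is controlled from below. Specifically, $\log\gamma_{j,\bu}(t)$ is bounded below near $\bu_*$ because $\gamma_{j}\ge \E_j[\bq^{\bz}]\ge P_j(\bz^{(j)}(t)=\mathbf 0)>0$ uniformly. Likewise $w_\bu(t)\ge-\sum_j\nu_j t$. It therefore suffices to show that, for the blowing-up index $i$, $\gamma_{i,\bu}(t)\to\infty$ if $z_{0,i}>0$, or $\int_0^t\gamma_{i,\bu}(s)\,\rmd s\to\infty$ if $\nu_i>0$.

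Rather than work with the ODE, I would argue probabilistically, which is cleaner. By Fatou's lemma applied to $M(t,\bu)=\E[e^{\bu^\top\bz(t)}]$, we get $\liminf_{\bu\to\bu_*}M(t,\bu)\ge M(t,\bu_*)$. It then suffices to show $M(t,\bu_*)=\infty$, that is, $\bu_*\notin\mathcal M_t$. Using the single-particle families together with the Poisson immigration families, $M(t,\bu_*)$ is a product whose factors include $\E_i[\bq_*^{\bz^{(i)}(t)}]^{z_{0,i}}$ and $\exp\{\nu_i\int_0^t(G_i(s;\bq_*)-1)\rmd s\}$. Here $G_i$ is the minimal fixed point obtained by the monotone iteration in the proof of \thmref{thm:characteristic_mgf}, and that iteration is valid whether or not $\bg$ stays finite.

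The main obstacle is showing that $G_i(s;\bq_*)=+\infty$ for some type $i$ and a range of $s$ that makes these factors infinite. The difficulty is that finiteness of $G$ would, by the uniqueness/Gronwall step of \thmref{thm:characteristic_mgf} (now applied on $[0,t]$), force $G=\bg_{\bu_*}$ finite on $[0,t]$, contradicting blowup. I would make this precise as follows. Suppose $G_j(\cdot;\bq_*)$ is finite on $[0,t]$ for every $j$. Because it satisfies the integral equation $G=\mathcal T G$, it is continuous and bounded there, so it solves \eref{eq:char} and coincides with $\bg_{\bu_*}$ on $[0,T_*)$, which contradicts blowup. Hence some $G_i(s;\bq_*)=\infty$ for $s$ near $T_*\le t$. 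By monotonicity in time past blowup (descendant families persist with positive probability), this extends to $s\in[T_*,t]$, giving $G_i(t;\bq_*)=\infty$ and $\int_0^t G_i=\infty$. Under \asref{assum:all-active} type $i$ is active, so $M(t,\bu_*)=\infty$, and Fatou's lemma yields $K(t,\bu)\to\infty$.
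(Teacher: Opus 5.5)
Your route differs from the paper's. You use Fatou's lemma to reduce the claim to $M(t,\bu_*)=\infty$, and then analyze the minimal fixed point $G$ of $\mathcal T$ at $\bq_*=e^{\bu_*}$. The paper instead works with $\bg_{\bu}$ for admissible $\bu$ near $\bu_*$. The reduction, the extended-valued factorization of $M(t,\bu_*)$, and the contradiction showing that some $G_j$ is infinite are all fine.

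\textbf{The gap is in the immigration case.} By \thmref{thm:characteristic_mgf} applied at each time $s<T_*$, $G_j(s;\bq_*)=\gamma_{j,\bu_*}(s)<\infty$ for all $j$. So $G_i$ can be infinite only on $[T_*,t]$. Nothing prevents $T_*=t$, which is the typical boundary situation. Then this set is $\{t\}$, a null set, and $\int_0^tG_i(s;\bq_*)\,\rmd s=\int_0^t\gamma_{i,\bu_*}(s)\,\rmd s$.

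A blowing-up coordinate need not have a non-integrable singularity. Take a chain $1\to2$ whose leaf has its Riccati pole at $t$, so $\gamma_{2,\bu_*}(s)\sim1/(\alpha_2(t-s))$. For small $q_1$, the parent blows up only through the drift $\mu_{12}\gamma_1\gamma_2$ and behaves like $(t-s)^{-\mu_{12}/\alpha_2}$. This is unbounded but integrable when $\mu_{12}<\alpha_2$, which is the small-mutation regime. If your blowing-up index is $i=1$ with $z_{0,1}=0<\nu_1$, the factor $\exp\{\nu_1\int_0^t(G_1-1)\,\rmd s\}$ is finite. Your claim $\int_0^tG_i=\infty$ then fails, even though the lemma holds because of type~$2$.

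\textbf{The missing idea is the paper's choice of index.} The function $h:=1+\sum_j\gamma_{j,\bu_*}$ satisfies $\dot h\le Ch^2$ and is unbounded as $s\uparrow T_*$. Hence $h(s)\ge1/(C(T_*-s))$, so some $i$ satisfies $\int_0^{T_*}\gamma_{i,\bu_*}(s)\,\rmd s=\infty$. For this $i$ your argument closes:
\begin{itemize}
\item The integral equation $G=\mathcal TG$, with $c_i$ as in the proof of \thmref{thm:characteristic_mgf}, gives $G_i(t;\bq_*)\ge e^{-c_it}\alpha_i\int_0^{T_*}\gamma_{i,\bu_*}(r)^2\,\rmd r=\infty$ by Cauchy--Schwarz on the finite interval.
\item Also $\int_0^tG_i(s;\bq_*)\,\rmd s\ge\int_0^{T_*}\gamma_{i,\bu_*}(s)\,\rmd s=\infty$.
\item Whichever of $z_{0,i}>0$ or $\nu_i>0$ holds, $M(t,\bu_*)=\infty$. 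The remaining factors are bounded below by $P_j(\bz^{(j)}(t)=\mathbf 0)>0$ and $e^{-t\sum_j\nu_j}$.
\end{itemize}

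Once repaired, your version has real advantages. Fatou's lemma replaces the paper's continuous-dependence, liminf, and level-crossing estimates for nearby $\bu$. The argument also never uses that $\bu_*$ lies on the boundary. It therefore shows $M(t,\bu)=\infty$ for every $\bu\notin\mathcal U_t$, so $\mathcal M_t=\mathcal U_t$ under \asref{assum:all-active}.
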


\begin{proof}
Let $\bg_*:=\bg_{\bu_*}$ and $\gamma_{i,*}:=\gamma_{i,\bu_*}$ for $i\in V$.
By \lemmaref{lemma:admissible_mgf_interior}, $\mathcal U_t$ is open, so $\bu_*\notin\mathcal U_t$ and the right maximal time $T$ of $\bg_*$ satisfies $T\le t$. 
By Perko \cite[Theorems~3--4 in Section~2.4]{perko2001differential}, there exists a sequence $s_n\uparrow T$ such that
\[
\left\lVert\bg_*(s_n)\right\rVert>n,
\]
and
\[
\bg_{\bu}\longrightarrow\bg_*
\quad\text{uniformly on every }[0,S],\quad S<T,
\]
as $\bu\to\bu_*$.
By \lemmaref{lemma:positive_characteristics}, all these characteristics are positive before their maximal times.

Set $H:=1+\sum_i\gamma_{i,*}$. Positivity and the quadratic form of
$\mathbf X$ give, for some $C>0$,
\[
\dot H(s)\le CH(s)^2.
\]
Consequently, for $s<s_n<T$,
\[
\frac1{H(s)}
\le\frac1{H(s_n)}+C(s_n-s).
\]
Letting $n\to\infty$ yields
\[
H(s)\ge\frac1{C(T-s)},
\qquad 0\le s<T.
\]
Since $T<\infty$, some $i\in V$ therefore satisfies
\[
\int_0^T\gamma_{i,*}(s)\rmd s=\infty.
\]
In particular, $\gamma_{i,*}$ is unbounded. For every $S<T$,
\[
\liminf_{\substack{\bu\to\bu_*\\ \bu\in\mathcal U_t}}
\int_0^t\gamma_{i,\bu}(s)\rmd s
\ge
\int_0^S\gamma_{i,*}(s)\rmd s.
\]
Letting $S\uparrow T$ gives
\[
\int_0^t\gamma_{i,\bu}(s)\rmd s\longrightarrow\infty
\qquad\text{as }\bu\to\bu_*\text{ within }\mathcal U_t.
\]

Next, we show that the terminal value $\gamma_{i,\bu}(t)$ also diverges.
By \eref{eq:riccati} and positivity of all coordinates, the previously selected coordinate satisfies
\[
\dot\gamma_{i,\bu}(s) \ge p_i(\gamma_{i,\bu}(s)),
\qquad
p_i(x):=
\alpha_i x^2
-\left(\alpha_i+\beta_i+\sum_{k:(i,k)\in E}\mu_{ik}\right)x
+\beta_i.
\]
Choose $L_0>0$ such that
\[
p_i(L)>0
\qquad\text{for all }L\ge L_0.
\]
For $L\ge L_0$, choose $s_L<T$ with $\gamma_{i,*}(s_L)>L$.
By continuous dependence \cite[Theorem~4 in Section~2.4]{perko2001differential}, we have $\gamma_{i,\bu}(s_L)>L$ near $\bu_*$. A first subsequent crossing of the level $L$ would satisfy
\[
\dot\gamma_{i,\bu}(\tau)\le0,
\qquad
\dot\gamma_{i,\bu}(\tau)\ge p_i(L)>0,
\]
a contradiction. Hence, $\gamma_{i,\bu}(t)>L$ near $\bu_*$, and
\[
\gamma_{i,\bu}(t)\longrightarrow\infty
\qquad\text{as }\bu\to\bu_*\text{ within }\mathcal U_t.
\]

Finally, \thmref{thm:characteristic_mgf} and
\lemmaref{lemma:positive_characteristics} give, for $\bu\in\mathcal U_t$,
\[
K(t,\bu)=\bzs_0^\top\log\bg_{\bu}(t)+w_{\bu}(t).
\]
For
\[
d_j:=\alpha_j+\beta_j+\sum_{k:(j,k)\in E}\mu_{jk},
\]
\eref{eq:riccati} gives
\[
\dot\gamma_{j,\bu}(s)\ge\beta_j-d_j\gamma_{j,\bu}(s),
\qquad
\gamma_{j,\bu}(t)\ge\frac{\beta_j}{d_j}(1-e^{-d_jt})>0.
\]
Thus, all terminal logarithms are uniformly bounded below, while
$w_{\bu}(t)\ge-t\sum_j\nu_j$. If $\nu_i>0$, then
\[
w_{\bu}(t)
\ge
\nu_i\int_0^t\gamma_{i,\bu}(s)\rmd s
-t\sum_j\nu_j
\longrightarrow\infty.
\]
If $\nu_i=0$, then $z_{0,i}>0$ by \asref{assum:all-active}, and the terminal
divergence gives $z_{0,i}\log\gamma_{i,\bu}(t)\to\infty$. Thus, in either case,
$K(t,\bu)\to\infty$.
\end{proof}

\begin{theorem}
\label{thm:saddle_point_existence}
Fix $t>0$. Under \asref{assum:all-active},
\[
\mathbf m(\mathcal U_t)=(0,\infty)^d.
\]
\end{theorem}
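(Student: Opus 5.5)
We prove the two inclusions separately.

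\emph{Inclusion $\mathbf m(\mathcal U_t)\subseteq(0,\infty)^d$.} For $\bu\in\mathcal U_t$, $\mathbf m(\bu)=\E_{\bu}[\bz(t)]$ is the mean under the exponentially tilted law, which has the same support as the original law. By \lemmaref{lemma:support_saddle}, $P(\bz(t)=L\be_i)>0$ for each $i$, so $Z_i(t)>0$ with positive tilted probability. Since $Z_i(t)\ge0$, this gives $m_i(\bu)>0$.

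\emph{Inclusion $(0,\infty)^d\subseteq\mathbf m(\mathcal U_t)$.} This is the substantive direction, and we argue variationally. Fix $\bn\in(0,\infty)^d$ and consider the strictly convex function $F_{\bn}(\bu)=K(t,\bu)-\bn^\top\bu$ on the open convex set $\mathcal U_t$. It is smooth and strictly convex by \lemmaref{lemma:admissible_mgf_interior} and \propref{prop:saddle_uniqueness}. Convexity of $\mathcal U_t$ itself needs a word: one can show $\mathcal U_t=\operatorname{int}\mathcal M_t$, or it suffices to argue on the convex set $\operatorname{int}\mathcal M_t$ and use boundary divergence.

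The plan is to show that $F_{\bn}$ attains a minimum in $\mathcal U_t$. At an interior minimizer $\nabla F_{\bn}=\mathbf m-\bn=0$, which is exactly the saddle equation. To get a minimizer, take a minimizing sequence $\bu^{(k)}$ and rule out two escape routes.

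\textbf{(a) Escape to infinity.} We show $F_{\bn}\to\infty$ as $\lVert\bu\rVert\to\infty$ within $\mathcal U_t$. By \lemmaref{lemma:support_saddle}, $M(t,\bu)\ge P(\bz(t)=\mathbf0)>0$, and for each $i$ and $L\ge1$, $M(t,\bu)\ge P(\bz(t)=L\be_i)e^{Lu_i}$. Hence
\[
K(t,\bu)\ge\max\Bigl\{\log p_0,\;\max_i\bigl(Lu_i+\log p_{i,L}\bigr)\Bigr\}.
\]
Write $\bu$ as the sum of its positive part $\bu^+$ and negative part $\bu^-$.
\begin{itemize}
\item Along directions with some $u_i\to+\infty$, choose an integer $L>\lVert\bn\rVert_1\cdot$const to obtain $K$ growing faster than $\bn^\top\bu^+$. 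Because the constants $p_{i,L}$ depend on $L$, one must fix $L$ first and then send $u_i\to\infty$.
\item Along directions with negative coordinates, $-\bn^\top\bu^-=\sum_i n_i|u_i^-|\to\infty$, using $n_i>0$. Meanwhile $K$ is bounded below by $\log p_0$.
\end{itemize}
Combining the two cases handles mixed directions: for a unit direction $\mathbf e$, $K(t,r\mathbf e)-r\bn^\top\mathbf e\to\infty$ as $r\to\infty$. Here we use $K\ge L\max_i u_i+\text{const}$ when $\max_i e_i>0$ with $L$ large relative to $\bn$, and $-\bn^\top\mathbf e>0$ otherwise. A standard convexity argument then upgrades this directional coercivity to coercivity on $\mathcal U_t$. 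Note that $\mathcal U_t$ contains all of $(-\infty,0]^d$, since characteristics starting in $[0,1]^d$ stay bounded.

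\textbf{(b) Escape to the boundary.} By \lemmaref{lemma:K_boundary_divergence}, which uses \asref{assum:all-active}, $K\to\infty$ at every finite boundary point of $\mathcal U_t$. Since $\bn^\top\bu$ is bounded there, $F_{\bn}\to\infty$ as well.

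Therefore the sublevel set $\{F_{\bn}\le F_{\bn}(\mathbf0)\}$ is closed in $\R^d$, contained in $\mathcal U_t$, and bounded, hence compact. $F_{\bn}$ attains its minimum at some $\hat\bu\in\mathcal U_t$, and $\mathbf m(\hat\bu)=\bn$.

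\emph{Main obstacle.} The delicate step is (a), specifically making the coercivity bound uniform over mixed-sign directions within the possibly nonconvex set $\mathcal U_t$. I would handle it by working with the lower bound $K(t,\bu)\ge\log\sum_{i,L}p_{i,L}e^{Lu_i}$ directly: for any sequence with $\lVert\bu\rVert\to\infty$, either $\max_i u_i\ge c\lVert\bu\rVert$ for some $c>0$, or the negative part dominates, and a choice of $L>2\lVert\bn\rVert_1/c$ covers the first case.
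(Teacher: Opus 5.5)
Your proposal is correct and follows the paper's proof. It has the same three ingredients: positivity of the tilted means via \lemmaref{lemma:support_saddle}; coercivity of $F_{\bn}$ from $K(t,\bu)\ge\log P(\bz(t)=\mathbf0)$ and $K(t,\bu)\ge Lu_i+\log P(\bz(t)=L\be_i)$; and \lemmaref{lemma:K_boundary_divergence} to place the minimizer in the open set $\mathcal U_t$ (the paper phrases this last step via a lower semicontinuous $+\infty$-extension and a Weierstrass theorem, where you use a compact sublevel set). The directional/convexity detour and the convexity of $\mathcal U_t$ are not needed: fixing a single integer $M>\sum_i n_i$ gives $F_{\bn}(\bu)\ge c_M+(M-\sum_i n_i)u^++\sum_i n_iu_i^-$ with $c_M$ finite and $u^+=\max\{0,u_1,\dots,u_d\}$, which is your ``Main obstacle'' fallback in its cleanest form.
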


\begin{proof}
Assumption~\ref{assum:all-active} implies \asref{assum:nondegen}. By \lemmaref{lemma:support_saddle},
\[
P\left(\bz(t)=\be_i\right)>0,
\qquad i\in V.
\]
By \lemmaref{lemma:admissible_mgf_interior}, $\mathcal U_t\subseteq\operatorname{int}\mathcal M_t$. By differentiability of $K(t,\cdot)$ on $\operatorname{int}\mathcal M_t$, for $\bu\in\mathcal U_t$,
\begin{equation}
\label{eq:tilted_mean}
m_i(\bu)
=
\frac{\E\left[Z_i(t)e^{\bu^\top\bz(t)}\right]}
{M(t,\bu)}.
\end{equation}
Moreover,
\[
\E\left[Z_i(t)e^{\bu^\top\bz(t)}\right]
\ge
e^{u_i}P\left(\bz(t)=\be_i\right)>0,
\]
while $M(t,\bu)>0$. 
Hence $m_i(\bu)>0$ for every $i\in V$, and therefore
\[
\mathbf m(\mathcal U_t)\subseteq(0,\infty)^d.
\]

To prove $(0,\infty)^d\subseteq\mathbf m(\mathcal U_t)$, fix $\bn\in(0,\infty)^d$ and define
\[
F_{\bn}(\bu):=K(t,\bu)-\bn^\top\bu,
\qquad \bu\in\mathcal U_t.
\]
Choose an integer $M>\sum_i n_i$ and set
\[
c_M:=\min\left\{\log P(\bz(t)=\mathbf0),
\log P(\bz(t)=M\be_i):i\in V\right\}.
\]
By \lemmaref{lemma:support_saddle}, $c_M$ is finite. Writing
\[
u^+:=\max\{0,u_1,\ldots,u_d\},
\qquad u_i^-:=\max\{-u_i,0\},
\]
we first obtain a lower bound for $K$. If $u^+=0$, the contribution from $\bz(t)=\mathbf0$ gives $K(t,\bu)\ge c_M$. 
If $u^+>0$, choose $j$ such that $u_j=u^+$. 
The contribution from $\bz(t)=M\be_j$ then gives
\[
K(t,\bu)\ge c_M+Mu^+.
\]
Thus, the latter inequality holds in both cases. Moreover,
\[
-\bn^\top\bu
\ge \sum_i n_i u_i^--\left(\sum_i n_i\right)u^+.
\]
Consequently,
\begin{equation}
\label{eq:coercive}
F_{\bn}(\bu)
\ge c_M+\left(M-\sum_i n_i\right)u^+
+\sum_i n_i u_i^-
\longrightarrow\infty
\qquad\text{as }\|\bu\|\to\infty.
\end{equation}

Extend $F_{\bn}$ to $\mathbb R^d$ by setting
\[
\widetilde F_{\bn}(\bu):=
\begin{cases}
F_{\bn}(\bu), & \bu\in\mathcal U_t,\\
+\infty, & \bu\notin\mathcal U_t.
\end{cases}
\]
The constant characteristic gives $\mathbf0\in\mathcal U_t$ and $\widetilde F_{\bn}(\mathbf0)=0$, so $\widetilde F_{\bn}$ is proper.
It is continuous on $\mathcal U_t$. At every $\bu_*\in\partial\mathcal U_t$, by \lemmaref{lemma:K_boundary_divergence} and the continuity of $\bn^\top\bu$,
\[
F_{\bn}(\bu)\longrightarrow\infty
\qquad\text{as }\bu\to\bu_*\text{ within }\mathcal U_t.
\]
Outside $\overline{\mathcal U_t}$, the function $\widetilde F_{\bn}$ is locally equal to $+\infty$. 
Hence, $\widetilde F_{\bn}$ is lower semicontinuous, and therefore closed. 
By \eref{eq:coercive}, $\widetilde F_{\bn}$ is coercive, and therefore attains its minimum on $\mathbb R^d$ by Proposition~3.2.1 of \citet{bertsekas2009convex}.
Since the minimum is finite and $\widetilde F_{\bn}=+\infty$ outside $\mathcal U_t$, every minimizer belongs to $\mathcal U_t$. 
Let $\bu_*$ be one such minimizer. 
Since $\mathcal U_t$ is open, the first-order condition gives
\[
\mathbf m(\bu_*)-\bn
=\nabla F_{\bn}(\bu_*)=\mathbf0.
\]
Thus, $(0,\infty)^d\subseteq\mathbf m(\mathcal U_t)$, completing the proof.
\end{proof}

\begin{proposition}
\label{prop:admissibility-caps}
Fix $t>0$ and let $\bu\in\mathcal U_t$. Then, for each $i\in V$,
\[
y_i(\sigma)>0,
\qquad 0\le\sigma\le t,
\]
and
\[
u_i<\log q_{i,\max}(t;\bu),
\qquad
q_{i,\max}(t;\bu)
:=\frac{\varphi_i(t)}{\alpha_i\varrho_i(t)}>0.
\]
\end{proposition}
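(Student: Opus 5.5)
We combine the Riccati linearization \eref{eq:riccati_transformation}--\eref{eq:linear_solution} with the positivity of characteristics from \lemmaref{lemma:positive_characteristics}.

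\textbf{Step 1: $y_i>0$ on $[0,t]$.} Fix $\bu\in\mathcal U_t$, so $\bg_{\bu}$ exists and is finite and continuous on $[0,t]$. Hence $b_i(\sigma)$ in \eref{eq:riccati} is continuous on $[0,t]$. Define
\[
y_i(\sigma):=\exp\left(-\alpha_i\int_0^\sigma\gamma_{i,\bu}(r)\rmd r\right).
\]
This function solves $y_i'=-\alpha_i\gamma_{i,\bu}y_i$ with $y_i(0)=1$, and it is strictly positive on $[0,t]$ because the integral is finite. Differentiating again and substituting the Riccati equation \eref{eq:riccati} gives \eref{eq:linear} with initial data $y_i(0)=1$, $y_i'(0)=-\alpha_iq_i$. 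The linear ODE \eref{eq:linear} has continuous coefficients on $[0,t]$, so its solution is unique. Therefore this $y_i$ coincides with $\varphi_i-\alpha_iq_i\varrho_i$ from \eref{eq:linear_solution}, and $y_i(\sigma)>0$ for all $0\le\sigma\le t$.

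\textbf{Step 2: $\varrho_i(t)>0$.} To turn $y_i(t)>0$ into the cap, we need the sign of $\varrho_i(t)$. Note that $\varrho_i(0)=0$ and $\varrho_i'(0)=1$, so $\varrho_i>0$ just after $0$. We argue that $\varrho_i$ cannot return to zero before $t$, using the positive solution $y_i$ as a Sturm comparison function. The Wronskian-type quantity $W=y_i\varrho_i'-y_i'\varrho_i$ satisfies $W'=b_iW$ and $W(0)=1$, so
\[
W(\sigma)=\exp\left(\int_0^\sigma b_i\right)>0.
\]
Consequently
\[
\left(\frac{\varrho_i}{y_i}\right)'=\frac{W}{y_i^2}>0,
\]
so $\varrho_i/y_i$ is strictly increasing from $0$. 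Since $y_i>0$, this gives $\varrho_i(\sigma)>0$ for all $\sigma\in(0,t]$. This step is where positivity of $y_i$ on the whole interval is essential, and it is the step I would expect to need care. The Wronskian argument avoids any oscillation theory, because $b_i$ is merely continuous and has no sign assumption.

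\textbf{Step 3: the cap.} From $y_i(t)=\varphi_i(t)-\alpha_iq_i\varrho_i(t)>0$ and $\varrho_i(t)>0$, we get
\[
q_i<\frac{\varphi_i(t)}{\alpha_i\varrho_i(t)}=q_{i,\max}(t;\bu).
\]
Since $q_i=e^{u_i}>0$, it follows that $q_{i,\max}>0$, and taking logarithms gives $u_i<\log q_{i,\max}(t;\bu)$. For a leaf, $b_i$ is constant, so $\varphi_i$ and $\varrho_i$ are explicit exponentials, recovering the stated closed form as a consistency check.
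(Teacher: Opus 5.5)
Your proposal is correct and follows the paper's proof essentially step for step: the explicit formula $y_i(\sigma)=\exp\left(-\alpha_i\int_0^\sigma\gamma_{i,\mathbf u}(r)\,\mathrm{d}r\right)>0$, then the Wronskian $W_i=\exp\left(\int_0^\sigma b_i\right)>0$ giving $(\varrho_i/y_i)'>0$ and hence $\varrho_i(t)>0$, and finally rearranging $y_i(t)>0$ into the cap and taking logarithms. The only difference is that you justify $y_i=\varphi_i-\alpha_iq_i\varrho_i$ explicitly by uniqueness for the linear ODE, which the paper leaves implicit.
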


\begin{proof}
Solving \eref{eq:riccati_transformation} gives
\[
y_i(\sigma)=
\exp\left(-\alpha_i\int_0^\sigma\gamma_{i,\bu}(r)\rmd r\right)
>0,
\qquad 0\le\sigma\le t.
\]
Both $y_i$ and $\varrho_i$ solve \eref{eq:linear}. Their Wronskian
\[
W_i:=y_i\varrho_i'-y_i'\varrho_i
\]
satisfies $W_i'=b_iW_i$ and $W_i(0)=1$, so
$W_i(\sigma)=\exp\{\int_0^\sigma b_i(r)\rmd r\}>0$. Hence,
\[
\left(\frac{\varrho_i}{y_i}\right)'
=\frac{W_i}{y_i^2}>0,
\qquad
\frac{\varrho_i(0)}{y_i(0)}=0,
\]
so $\varrho_i(\sigma)>0$ for $0<\sigma\le t$. At $\sigma=t$,
\eref{eq:linear_solution} and $y_i(t)>0$ give
\[
e^{u_i}<\frac{\varphi_i(t)}{\alpha_i\varrho_i(t)}
=q_{i,\max}(t;\bu).
\]
The right-hand side is positive, and taking logarithms proves the claim.
\end{proof}

\begin{proposition}
\label{prop:univariate-inheritance}
Fix $t>0$ and $j\in V$. Let the cumulant generating function and the admissible set restricted to the $j$-coordinate be 
\[
    k_j(u):= K \left(t,u \be_j\right),\qquad 
\mathcal U_{t,j}:=\{u\in\R:\ u \be_j\in\mathcal U_t\}.
\]
\begin{enumerate}[label=(\roman*)]

    \item \textbf{Existence and uniqueness:} Under \asref{assum:nondegen},
    \[
        k'_j:\mathcal U_{t,j}\longrightarrow(0,\infty)
    \]
    is injective. Under \asref{assum:all-active},
    $k'_j(\mathcal U_{t,j})=(0,\infty)$.

    \item \textbf{Admissible bounds:} Every $u\in\mathcal U_{t,j}$ satisfies
    \[
        u<\log q_{j,\max}(t;u\be_j)
        \qquad\text{and}\qquad
        0<\log q_{i,\max}(t;u\be_j)\quad\text{for all }i\neq j.
    \]
    
\end{enumerate}
\end{proposition}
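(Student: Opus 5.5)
The plan is to reduce everything to the multivariate results by restricting to the line $\{u\be_j:u\in\mathbb R\}$. First, $\mathcal U_{t,j}$ is the preimage of the open set $\mathcal U_t$ (\lemmaref{lemma:admissible_mgf_interior}) under the continuous map $u\mapsto u\be_j$. It is therefore open in $\mathbb R$, and it contains $0$ because the characteristic started at $\mathbf 1$ is constant.

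On $\mathcal U_{t,j}$ the chain rule gives
\[
k_j'(u)=m_j(u\be_j),\qquad k_j''(u)=\be_j^\top\Sigma(u\be_j)\be_j .
\]
Under \asref{assum:nondegen}, \propref{prop:saddle_uniqueness} makes $k_j''>0$. We also need convexity of the domain: it is convex because, by H\"older as in \propref{prop:saddle_uniqueness}, $\{u:u\be_j\in\operatorname{int}\mathcal M_t\}$ is an interval containing $\mathcal U_{t,j}$, and $k_j$ is strictly convex there. Hence $k_j'$ is strictly increasing on it, and so injective on $\mathcal U_{t,j}$ even if $\mathcal U_{t,j}$ were disconnected.

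Positivity of $k_j'$ follows from the tilted-mean formula \eref{eq:tilted_mean}. It only requires $P(Z_j(t)\ge1)>0$, which \lemmaref{lemma:support_saddle} supplies under \asref{assum:nondegen}, so the codomain $(0,\infty)$ is justified.

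Surjectivity under \asref{assum:all-active} is the main step. I will repeat the coercivity argument of \thmref{thm:saddle_point_existence} in one dimension. Set $f(u)=k_j(u)-nu$ on $\mathcal U_{t,j}$, extended by $+\infty$ outside. Choose an integer $M>n$. Then the events $\{\bz(t)=\mathbf 0\}$ and $\{\bz(t)=M\be_j\}$ give
\[
f(u)\ge c_M+(M-n)u^++nu^-,
\]
so $f$ is coercive. At a finite boundary point $u_*$ of $\mathcal U_{t,j}$, the point $u_*\be_j$ lies in $\partial\mathcal U_t$. Approaching it along the line stays inside $\mathcal U_t$, so \lemmaref{lemma:K_boundary_divergence} gives $k_j\to\infty$ there. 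Hence the extension is lower semicontinuous and attains its minimum in the open set $\mathcal U_{t,j}$, where $k_j'(u)=n$. Combined with positivity, this gives $k_j'(\mathcal U_{t,j})=(0,\infty)$. The one point to check is that the boundary limit in \lemmaref{lemma:K_boundary_divergence} is taken within $\mathcal U_t$ along arbitrary sequences, so restriction to the line is legitimate. I expect this to be routine.

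For (ii), the first inequality is \propref{prop:admissibility-caps} applied at $u\be_j\in\mathcal U_t$, coordinate $j$. For $i\ne j$, the same proposition at coordinate $i$ gives
\[
0=u_i<\log q_{i,\max}(t;u\be_j),
\]
since the $i$-th coordinate of $u\be_j$ is zero. This completes the plan.
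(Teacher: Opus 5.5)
Your proposal is correct and follows the paper's proof essentially step for step. Injectivity comes from restricting the strict monotonicity in the proof of \propref{prop:saddle_uniqueness} to the $j$-axis, positivity from the tilted-mean formula with \lemmaref{lemma:support_saddle}, surjectivity from the one-dimensional coercivity and boundary-divergence argument of \thmref{thm:saddle_point_existence} and \lemmaref{lemma:K_boundary_divergence}, and part (ii) from \propref{prop:admissibility-caps} at $u\be_j$; you merely make explicit details the paper leaves implicit, namely the openness of $\mathcal U_{t,j}$, the coercive lower bound, and the legitimacy of taking the boundary limit along the line.
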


\begin{proof}
For part~(i), admissibility and differentiation along the $j$-axis yield
\[
k'_j(u)
=\frac{\E\left[Z_j(t)e^{uZ_j(t)}\right]}
{M(t,u\be_j)}>0,
\qquad u\in\mathcal U_{t,j},
\]
where strict positivity follows from
\lemmaref{lemma:support_saddle}. The strict monotonicity established in the
proof of \propref{prop:saddle_uniqueness} specializes on the $j$-axis to
\[
(u-v)\bigl(k'_j(u)-k'_j(v)\bigr)>0,
\qquad u\neq v,\quad u,v\in\mathcal U_{t,j},
\]
so $k'_j$ is injective. Now suppose that \asref{assum:all-active} holds. If
$u_*\in\partial\mathcal U_{t,j}$ is finite, then
$u_*\be_j\in\partial\mathcal U_t$. For any $r>0$, the boundary-divergence and
coercivity argument in the proof of \thmref{thm:saddle_point_existence},
applied to $k_j(u)-ru$, ensures that this objective attains its minimum at an
interior point $u_r\in\mathcal U_{t,j}$. The first-order condition is
\[
0=k'_j(u_r)-r.
\]
Thus, $k'_j(\mathcal U_{t,j})=(0,\infty)$.

For part~(ii), apply \propref{prop:admissibility-caps} with $\bu=u\be_j$:
\[
(u\be_j)_i<\log q_{i,\max}(t;u\be_j),
\qquad i\in V.
\]
The left-hand side equals $u$ for $i=j$ and $0$ for $i\neq j$, which proves
both bounds.
\end{proof}

\subsection{A counterexample when not all types are active}
\label{app:saddle_counterexample}

Consider the chain $1\to2\to3$ at $t=1$, with $\bzs_0=\mathbf0$, $\bm\nu=(1,0,0)$, $\alpha_i=\beta_i=1$, $\mu_{23}=1/4$, and $\mu_{12}=0.09<(4e)^{-1}$. 
Here, $\mathcal A=\{1\}\ne V$, so \asref{assum:nondegen} holds, while \asref{assum:all-active} fails. 
Let $\bu_\theta=(0,0,\log\theta)$ for $1\le\theta<2$. 
Then, $\bu_\theta\in\mathcal U_1$, every $\bu\in\mathcal U_1$ satisfies $u_3<\log2$, and $\mathbf m(\bu_\theta)$ converges as $\theta\uparrow2$ to a finite positive vector $\mathbf m_*$.
By the strict monotonicity in \propref{prop:saddle_uniqueness}, $\mathbf m_*+c\be_3\notin\mathbf m(\mathcal U_1)$ for every $c>0$, so \asref{assum:nondegen} alone does not guarantee a joint saddle for every positive target.
\fref{fig:saddle_counterexample} illustrates that the tilted mean $m_3(\bu_\theta)$ approaches $(\mathbf m_*)_3\approx0.04$ as $\theta\uparrow2$, so $n_3=1$ has no marginal saddle even though $P(Z_3(1)=1)>0$.

\begin{figure}[H]
    \centering
    \includegraphics[width=0.817\textwidth]{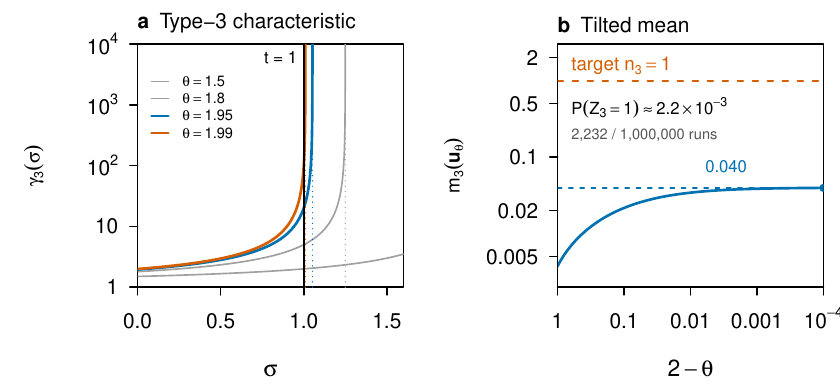}
    \caption{Three-type counterexample.
    (\textbf{a}) The computed type-$3$ characteristic $\gamma_3$: its pole at
    $s=1/(\theta-1)$ (dotted) decreases to $t=1$ (solid) as $\theta\uparrow2$.
    (\textbf{b}) The tilted mean $m_3(\bu_\theta)$ approaches
    $(\mathbf m_*)_3\approx0.04$. The horizontal line marks the marginal
    target $n_3=1$, observed in $2{,}232$ of $10^6$ exact-SSA simulations
    ($95\%$ Wilson interval $[2.14,2.33]\times10^{-3}$)}
    \label{fig:saddle_counterexample}
\end{figure}

\subsection{Univariate gradients and Hessians}
\label{app:saddle_univariate}

With
\[
\bg^{(j)}_u(\sigma):=\bg_{u\be_j}(\sigma),\qquad
\gamma^{(j)}_{i,u}(\sigma):=\gamma_{i,u\be_j}(\sigma),\qquad
w^{(j)}_u(\sigma):=w_{u\be_j}(\sigma),
\]
let
\begin{equation}
\mathbf s(\sigma):=\frac{\partial \bg^{(j)}_u(\sigma)}{\partial u},\qquad
\boldsymbol\tau(\sigma):=\frac{\partial^2 \bg^{(j)}_u(\sigma)}{\partial u^2},\qquad
\xi(\sigma):=\frac{\partial w^{(j)}_u(\sigma)}{\partial u},\qquad
\Xi(\sigma):=\frac{\partial^2 w^{(j)}_u(\sigma)}{\partial u^2},
\end{equation}
which satisfy
\begin{align}
\dot{\mathbf s}&=J(\bg^{(j)}_u) \mathbf s, & \mathbf s(0)&=e^u \be_j, \label{eq:1d_s}\\
\dot{\xi}&=\bm \nu^\top \mathbf s, & \xi(0)&=0, \label{eq:1d_xi}\\
\dot{\boldsymbol\tau}&=J(\bg^{(j)}_u) \boldsymbol\tau+\mathbf h(\bg^{(j)}_u,\mathbf s),
& \boldsymbol\tau(0)&=e^u \be_j, \label{eq:1d_tau}\\
\dot{\Xi}&=\bm \nu^\top \boldsymbol\tau, & \Xi(0)&=0, \label{eq:1d_Xi}
\end{align}
with $J$ as in \eref{eq:jacobian} and
\begin{equation}
\label{eq:1d_h}
h_a(\bg^{(j)}_u,\mathbf s)=\mathbf s^\top H_a(\bg^{(j)}_u) \mathbf s
=2\alpha_a s_a^2 + 2 \sum_{k:(a,k)\in E}\mu_{ak} s_a s_k.
\end{equation}
At time $t$,
\begin{align}
k_j'(u)&=\sum_{a\in V}\frac{z_{0,a}}{\gamma^{(j)}_{a,u}(t)} s_a + \xi, \label{eq:1d_kp}\\
k_j''(u)&=\sum_{a\in V} z_{0,a} \left( \frac{\tau_a}{\gamma^{(j)}_{a,u}(t)} - \frac{s_a^2}{\gamma^{(j)}_{a,u}(t)^2} \right)+\Xi. \label{eq:1d_kpp}
\end{align}
\subsection{Algorithms}
\label{app:algorithms}

\begin{algorithm}[H]
  \caption{\textsc{SaddlePointMTBP}$(t,\bn,\theta,\bzs_0,\text{tol},s_{\min},\epsilon,c_1,\text{max\_iter})$}
  \label{algo:saddlepoint_mtbp}

  \SetNlSkip{0.1em}
  \SetAlgoLined
  \DontPrintSemicolon
  \SetNoFillComment
  \footnotesize

  $\bu \leftarrow \mathbf 0$ \;

  $(K,\mathbf m,\Sigma,\bq_{\max},y_{\min},\_)
  \leftarrow \textsc{FF\_Eval}(t,\bu,\theta,\bzs_0)$ \;

  \For{$k = 1, \dots, \text{max\_iter}$}{

    $\mathbf r \leftarrow \bn - \mathbf m$ \;

    \If{$\|\mathbf r\|_2 \le \text{tol}\left(1+\|\bn\|_2\right)$}{
        $\log p^\star \leftarrow K-\bn^\top\bu-\tfrac12\log\det(2\pi\Sigma)$ \;
        \textbf{Return} $\exp(\log p^\star)$
    }

    Solve $\Sigma\delta=\mathbf r$ for the Newton direction \;

    Compute $\tilde{s}$ from \eref{eq:s} and set
    $s \leftarrow \min\left\{1,\epsilon\tilde{s}\right\}$,
    with $\tilde{s}=+\infty$ if $\delta_i\le0$ for all $i\in V$ \;

    \While{$s > s_{\min}$}{
        $\bu_{try}\leftarrow\bu+s\delta$ \;
        $(K',\mathbf m',\Sigma',\bq'_{\max},y'_{\min},\text{adm})
        \leftarrow\textsc{FF\_Eval}(t,\bu_{try},\theta,\bzs_0)$ \;
        \lIf{$\text{adm}=1$ and
        $F_{\bn}(\bu_{try})\le F_{\bn}(\bu)-c_1s\,\mathbf r^\top\delta$}{\textbf{break}}
        $s \leftarrow s/2$ \;
    }

    \lIf{$s \le s_{\min}$}{\textbf{Return} boundary diagnostic or numerical failure}

    $\bu \leftarrow \bu_{try}$ \;

    $(K,\mathbf m,\Sigma,\bq_{\max},y_{\min}) \leftarrow(K',\mathbf m',\Sigma',\bq'_{\max},y'_{\min})$ \;
  }

  \textbf{Return} failure
\end{algorithm}

\begin{algorithm}[H]
  \caption{\textsc{FF\_Eval}$(t,\bu,\theta,\bzs_0)$}
  \label{algo:ff_eval}
  
  \SetNlSkip{0.1em}
  \SetAlgoLined
  \DontPrintSemicolon
  \SetNoFillComment
  \footnotesize

  Initialize $\bq=e^{\bu}$ and
  $(\bg,w,S,\bm\xi,T,\Xi,\{Q_i\}_{i\in V})$ using the initial conditions in
  \eref{eq:w}, \eref{eq:char}, \eref{eq:Svar}--\eref{eq:Xivar}, and
  \eref{eq:admissible_ODE} \;

  $\sigma \leftarrow 0$, $y_{\min}\leftarrow1$ \;
  \While{$\sigma < t$}{
    Advance the coupled system and $\sigma$ one step using Heun's method \;

    \If{the numerical state is non-finite}{
        \textbf{Return} $(\_,\_,\_,\_,y_{\min},0)$
    }

    Update $y_{\min}$ using
    $y_i=(Q_i)_{11}-\alpha_iq_i(Q_i)_{12}$, $i\in V$ \;

    \If{$y_{\min}\le0$}{
        \textbf{Return} $(\_,\_,\_,\_,y_{\min},0)$
    }
  }

  Compute $K$, $\mathbf m$, $\Sigma$, and $\bq_{\max}$ from
  \eref{eq:K_v2}, \eref{eq:mean}, \eref{eq:Sigma}, and
  \eref{eq:coordinate_caps} \;

  \textbf{Return} $(K,\mathbf m,\Sigma,\bq_{\max},y_{\min},1)$
\end{algorithm}

\section{Large-time small-mutation-rate approximations}
\label{app:ltsm}

In this section, we provide the technical results and mathematical proofs for the large-time and small-mutation-rate approximations. We begin with the large-time behavior of a linear system along a path in \sref{app:convergence}. The associated marginal Laplace transform approximation is then derived in \sref{app:lt_path_decomp}. The bounds needed in this derivation are proved in \sref{app:bounded_mgf}, and the closed-form formulas are calculated in \sref{app:lt_exact}. The resulting approximate Laplace recursions are given in \sref{app:lt_closed} and extended to general DAGs in \sref{app:lt_general}. Finally, \sref{app:ILT-rectangular} derives the inverse Laplace transform formulas for marginal and joint probabilities.

\paragraph{Fixed-path notation}
In Sections~\ref{app:convergence}--\ref{app:lt_closed}, we fix a simple directed path $\pi=(v_0\to v_1\to\cdots\to v_L)$ of length $L$ in $\mathcal G$ and consider the associated linear system defined in \sref{sec:LT_linear}. We write this system as $\bz(t)=(Z_0(t),\dots,Z_L(t))$. For $0\le\ell\le L$ and $a\in\{0,1\}$, we abbreviate
\[
\begin{aligned}
Z_\ell(t)&:=Z_{v_\ell}(t),
& Z_\ell^a(t)&:=Z_{v_\ell}^a(t),
& \tilde Z_\ell&:=\tilde Z_{v_\ell},\\
(\nu_\ell,\alpha_\ell,\beta_\ell,\lambda_\ell)
&:=(\nu_{v_\ell},\alpha_{v_\ell},\beta_{v_\ell},\lambda_{v_\ell}),
& (\delta_\ell,r_\ell,w_\ell(t))
&:=(\delta^\pi_{v_\ell},r^\pi_{v_\ell},w^\pi_{v_\ell}(t)).
\end{aligned}
\]
For $1\le\ell\le L$, we also write $\mu_\ell:=\mu_{v_{\ell-1}v_\ell}$. Thus, population and parameter subscripts identify positions along the fixed path, while $v_\ell$ denotes the corresponding graph vertex. We retain vertex labels in graph-topological notation and exported edge maps.

We call the system a $Z^a$-rooted path of length $L$ if $Z_0(t)=Z_0^a(t)$ for all $t\ge0$. We consider $Z^0$- and $Z^1$-rooted paths separately. 
For a $Z^0$-rooted path with $\nu_0=0$, all coordinates are identically zero, their scaled limits equal zero, and their Laplace transforms equal one.
Thus, we assume $\nu_0>0$ in the subsequent $Z^0$-rooted arguments.
The results then extend to a root given by a sum of independent $Z_0^0$ and $Z_0^1$ processes.
By definition, the corresponding descendant systems are independent and additive, so their Laplace transforms multiply.
In \sref{app:lt_general}, we return to vertex-indexed notation.

\subsection{Convergence along a path}
\label{app:convergence}

We show in \propref{prop:Z1_path_limit} and \propref{prop:Z0_path_limit} that for all $0 \le \ell \le L$, there exists a nonnegative random variable $\tilde Z_\ell$ such that
\begin{equation}
\label{eq:large_time_limit}
    w_\ell(t)Z_\ell(t)\overset{d}{\longrightarrow}\tilde Z_\ell \qquad \text{as } t\to\infty.
\end{equation}

The results are organized as follows.
\lemmaref{lemma:large_time} gives a general convergence result for a one-type branching process seeded by a non-homogeneous Poisson process.
It extends Lemma~1 of \citet{nicholson2023sequential} by allowing $x=0$ and, when $\lambda>0$, $\delta\le0$.
This lemma is used to prove \eref{eq:large_time_limit} for the $Z^1$-rooted path in \propref{prop:Z1_path_limit}.

The $Z^0$-rooted case requires separate arguments before and after the first supercritical type.
\lemmaref{lemma:1type} gives the large-time limiting distributions for a $Z^0$ process, recovering Lemma~1 of \citet{Luo2025Bayesian}.
\lemmaref{lemma:first_supercritical_type} shows that the first supercritical type along a $Z^0$-rooted path has a finite nonnegative almost-sure limit after exponential scaling.
\lemmaref{lemma:subcritical_path_Z0} proves a stationary distribution along a strictly subcritical path.
\lemmaref{lemma:non_supercritical_path} uses a scalar Riccati equation to establish the endpoint limit for any non-supercritical path containing at least one critical type.
Finally, \propref{prop:Z0_path_limit} combines these results with \lemmaref{lemma:large_time} to cover every $Z^0$-rooted path.

\begin{lemma}
\label{lemma:large_time}
Let $\{Z^{1,m}\}_{m\in\mathbb N}$ be i.i.d.\ copies of the one-type branching process started from a single cell and without immigration, with net growth rate $\lambda=\alpha-\beta$. Let $\{T_m\}_{m\in\mathbb N}$ be the points of a Poisson process on $[0,\infty)$ with nonnegative intensity $f(\cdot)$, and set
\[
Z(t) = \sum_{m:T_m\le t} Z^{1,m}(t-T_m).
\]
Assume that $f$ is nonnegative and c\`adl\`ag, and that
\[
\lim_{t\to\infty} t^{-(r-1)} e^{-\delta t} f(t) = x,
\]
where $x \ge 0$, $r \ge 1$, and either (i) $\delta>0$, or (ii) $\delta\le 0$ and $\lambda>0$. 
Then, as $t\to\infty$,
\[
\begin{dcases}
t^{-(r-1)} e^{-\delta t} Z(t) \longrightarrow \dfrac{x}{\delta-\lambda}, & \text{if } \delta>\lambda,\\[1ex]
t^{-r} e^{-\delta t} Z(t) \longrightarrow \dfrac{x}{r}, & \text{if } \delta=\lambda,\\[1ex]
e^{-\lambda t} Z(t) \longrightarrow \tilde Z, & \text{if } \delta<\lambda,
\end{dcases}
\quad \text{almost surely,}
\]
where $\tilde Z$ is a nonnegative random variable with
\[
\E[\tilde Z] = \int_0^\infty e^{-\lambda s} f(s)\rmd s < \infty.
\]
\end{lemma}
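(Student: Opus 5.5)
The plan is to compare $Z(t)$ with its conditional mean given the Poisson seeding, and to handle the three regimes by the relative size of $\delta$ and $\lambda$. First compute the mean:
\[
\E[Z(t)]=\int_0^t f(s)\,\E[Z^1(t-s)]\rmd s=\int_0^t f(s)e^{\lambda(t-s)}\rmd s=:m(t).
\]
By the hypothesis $f(s)\sim x s^{r-1}e^{\delta s}$, elementary asymptotics of this convolution give:
\begin{itemize}
\item if $\delta>\lambda$, then $m(t)\sim x t^{r-1}e^{\delta t}/(\delta-\lambda)$;
\item if $\delta=\lambda$, then $m(t)\sim x t^{r}e^{\delta t}/r$;
\item if $\delta<\lambda$, then $e^{-\lambda t}m(t)\to\int_0^\infty e^{-\lambda s}f(s)\rmd s<\infty$, which is finite because $\delta<\lambda$.
\end{itemize}
When $x=0$, the first two limits are $0$, and a domination argument (bounding $f$ by $\varepsilon s^{r-1}e^{\delta s}$ for large $s$) gives $o(\cdot)$ instead.

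For the case $\delta<\lambda$, use the martingale structure. Define
\[
M(t):=e^{-\lambda t}Z(t)+\int_t^\infty e^{-\lambda s}f(s)\rmd s \quad\text{in spirit,}
\]
or more concretely use $W(t)=\sum_{m:T_m\le t}e^{-\lambda T_m}\,e^{-\lambda(t-T_m)}Z^{1,m}(t-T_m)$. Each summand converges almost surely by the classical martingale limit of a one-type process, $e^{-\lambda u}Z^{1,m}(u)\to W_m$. Here $\lambda>0$ holds automatically, since either $\lambda>\delta>0$ or $\lambda>0$ is assumed in case (ii). So $e^{-\lambda t}Z(t)$ is a submartingale-type process in $t$ with bounded mean, and its $L^1$ limit is $\tilde Z=\sum_m e^{-\lambda T_m}W_m$, with the stated expectation. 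The cleanest route is to check that $e^{-\lambda t}Z(t)$ plus the compensator of future immigration is a nonnegative martingale, which then converges almost surely.

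For the cases $\delta\ge\lambda$, prove a law of large numbers. Compute the conditional variance given $\{T_m\}$:
\[
\operatorname{Var}Z(t)=\int_0^t f(s)\,\E[Z^1(t-s)^2]\rmd s\lesssim\int_0^t f(s)e^{2\lambda^+(t-s)}(1+t-s)\rmd s.
\]
This is of lower order than $m(t)^2$ in both regimes. Chebyshev then gives convergence in probability along $t$, and a Borel--Cantelli argument along a geometric or integer grid $t_k$ gives almost-sure convergence on the grid. To fill in between grid points, compare $Z(t)$ for $t\in[t_k,t_{k+1}]$ using monotonicity-type bounds: the immigration count is monotone, and clone sizes are controlled by $\sup_{u\le1}$ of birth–death growth, which is uniformly integrable. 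Also replace $\int f$ by the actual Poisson count, whose fluctuations are lower order since $\int f\to\infty$ or is negligible.

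The main obstacle is the almost-sure statement in case (ii) with $\delta\le 0$ and $\lambda\ge\delta$, where $f$ may decay or stay bounded. There the mean $m(t)$ does not blow up exponentially, so relative variance bounds are weak. I would handle $\delta<\lambda$ (which covers $\delta\le0<\lambda$) entirely via the martingale route, which needs no growth of $m$. The Chebyshev route is then needed only when $\delta\ge\lambda$, which forces $\delta>0$, so $m(t)\to\infty$ exponentially and the variance ratio decays summably along a fine grid.
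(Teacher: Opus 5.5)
Your argument for $\delta<\lambda$ is essentially the paper's, but the case $\delta\ge\lambda$ has a genuine gap: passing from grid times to all $t$. Chebyshev and Borel--Cantelli give the limit only along $t_k$, and the interpolation you sketch cannot recover the exact constant. On a window of fixed length, $w(t)/w(t_k)$ does not tend to $1$. You would therefore need $Z(t)/Z(t_k)$ to follow the deterministic growth uniformly over the window, which is a law of large numbers inside the window, not a monotonicity bound. Three specific problems:
\begin{itemize}
\item Dominating clones by their running maxima over $[0,1]$ gives at best upper and lower bounds that are off by constant factors.
\item Uniform integrability carries no almost-sure information.
\item For $\delta>\lambda$, growth inside a window comes from fresh immigration, not from clone growth.
\end{itemize}

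The missing idea is that the centered process $M(t)=e^{-\lambda t}Z(t)-\int_0^t e^{-\lambda s}f(s)\,\rmd s$ is a martingale for all $\delta$ and $\lambda$, not only when $\delta<\lambda$. You already use it, up to an additive constant, in that case. Since its compensator is deterministic with the asymptotics you computed, it suffices to show that $M$ is negligible uniformly on windows. Doob's $L^2$ maximal inequality gives this directly:
\[
P\Bigl(\sup_{u\le n+1}|M(u)|>\varepsilon g(n)\Bigr)\le\frac{\E[M(n+1)^2]}{\varepsilon^2 g(n)^2},
\qquad
g(n)=\begin{cases} n^{r-1}e^{(\delta-\lambda)n}, & \delta>\lambda,\\ n^{r}, & \delta=\lambda.\end{cases}
\]
The right-hand side is exponentially small for $\delta>\lambda$ and $O(n^{-2r})$ for $\delta=\lambda$, so Borel--Cantelli concludes. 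This is the Doob--Borel--Cantelli argument of Nicholson et al.\ that the paper invokes, supplemented by the bound $\E[Z^1(a)^2]\le Ce^{\lambda a}$ when $\lambda<0$.

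There are also three smaller repairs.
\begin{itemize}
\item For $\delta<\lambda$, almost-sure convergence of a nonnegative martingale gives only $\E[\tilde Z]\le\int_0^\infty e^{-\lambda s}f(s)\,\rmd s$, by Fatou. Calling $\tilde Z$ an $L^1$ limit needs uniform integrability. The paper gets this from $L^2$-boundedness, $\E[M(t)^2]\le C\int_0^\infty e^{-2\lambda s}f(s)\,\rmd s<\infty$, using $\lambda>0$.
\item In your variance bounds, use the sharp estimate $\E[Z^1(a)^2]\le Ce^{2\lambda a}$ for $\lambda>0$. The extra factor $1+t-s$ you carry leaves only $O(n^{-1})$ when $\delta=\lambda$ and $r=1$, which is not summable. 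This affects your own grid argument as well.
\item Compare fluctuations with $g$ (equivalently $w(t)^{-1}$), not with $m(t)$. When $x=0$ the mean can be negligible or zero, and the relative variance need not decay.
\end{itemize}
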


\begin{proof}
Let $M(t):=e^{-\lambda t}Z(t)-\int_0^t e^{-\lambda s}f(s)\rmd s$ be the centered martingale in the proof of Lemma~1 of \citet{nicholson2023sequential}.
That proof extends from $x>0$ to $x\ge0$, and we do not require strict positivity of $\tilde Z$.

For case \textup{(i)} with $\lambda\ge0$, we apply the lemma with $r$ replaced by $r-1$.
If $\delta>0$ and $\lambda<0$, only $\delta>\lambda$ can occur.
For the same martingale and any fixed $m$, the second moment bound of $Z^1$ implies
\[
\E[Z^{1,m}(a)^2]\le C e^{\lambda a}
\qquad \implies \qquad
\E[M(t)^2]\le C'(1+t)^{r-1}e^{(\delta-2\lambda)t},
\]
for some constants $C,C'>0$.
Their Doob--Borel--Cantelli argument then proves the first limit.

For case \textup{(ii)}, $\delta\le0<\lambda$, and hence $\delta<\lambda$.
The martingale-convergence part of their $\delta<\lambda$ proof applies because $M$ is bounded in $L^2$ and $\int_0^\infty e^{-\lambda s}f(s)\rmd s<\infty$.
Thus, $M$ converges almost surely and in $L^2$. Rearranging its definition proves the third limit and the stated expectation.
The limit is nonnegative because $e^{-\lambda t}Z(t)\ge0$.
\end{proof}

We parameterize Gamma distributions by shape and rate, and negative-binomial distributions by shape and success probability.

\begin{lemma}
\label{lemma:1type}
Let $\left(X(t)\right)_{t\ge0}$ be a one-type branching process with immigration rate $\nu>0$, birth rate $\alpha>0$, death rate $\beta>0$, $X(0)=0$, $\lambda=\alpha-\beta$, and $\rho=\nu/\alpha$. Then, as $t\to\infty$,
\begin{enumerate}[label=(\roman*)]
\item if $\lambda<0$, then
\[
X(t)\overset{d}{\longrightarrow}Y,\qquad
Y\sim\mathrm{Negative\mbox{-}Binomial}\left(\rho,-\frac{\lambda}{\beta}\right).
\]
\item if $\lambda=0$, then
\[
t^{-1}X(t)\overset{d}{\longrightarrow}Y,\qquad
Y\sim\mathrm{Gamma}\left(\rho,\frac{1}{\alpha}\right).
\]
\item if $\lambda>0$, then
\[
e^{-\lambda t}X(t)\overset{a.s.}{\longrightarrow}Y,
\qquad
Y\sim\mathrm{Gamma}\left(\rho,\frac{\lambda}{\alpha}\right).
\]
\end{enumerate}
\end{lemma}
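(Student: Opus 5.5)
The plan is to compute the distribution of $X(t)$ exactly for each finite $t$ and then pass to the limit through Laplace transforms.

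\textbf{Exact finite-time law.} For $d=1$, \thmref{thm:characteristic_mgf} gives $\Psi(t,q)=\exp\{\nu\int_0^t(\gamma(s;q)-1)\,\rmd s\}$. The characteristic solves the scalar Riccati equation $\dot\gamma=\alpha\gamma^2-(\alpha+\beta)\gamma+\beta=(\alpha\gamma-\beta)(\gamma-1)$ with $\gamma(0)=q$. This equation has the explicit linear-fractional solution, which is the generating function of $Z^1$ from \eref{eq:L_1}. Using the substitution \eref{eq:riccati_transformation}, namely $\gamma=-y'/(\alpha y)$, we get
\[
\nu\int_0^t(\gamma-1)\,\rmd s=-\rho\log y(t)-\nu t,
\]
so the integral is elementary. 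Carrying out the computation yields
\[
\Psi(t,q)=\left[\frac{p(t)}{1-(1-p(t))q}\right]^{\rho},
\]
with $p(t)$ as in \eref{eq:L_0}. Hence $X(t)\sim\mathrm{Negative\mbox{-}Binomial}(\rho,p(t))$ for every $t$. I would verify this identity directly: differentiate the candidate $\log\Psi$ in $t$ and check that it equals $\nu(\gamma(t;q)-1)$, which avoids solving the Riccati equation in full.

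\textbf{Limits via Laplace transforms.} Set $\mathcal L_t(s)=\E[e^{-s c(t)X(t)}]=[p/(1-(1-p)e^{-sc(t)})]^\rho$ and treat the three cases separately.

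\emph{(i) $\lambda<0$.} Take $c\equiv1$. Then $p(t)=\lambda/(\alpha e^{\lambda t}-\beta)\to-\lambda/\beta$. The transform converges pointwise to the negative-binomial transform, and the continuity theorem gives convergence in distribution.

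\emph{(ii) $\lambda=0$.} Take $c=1/t$, so $p=1/(1+\alpha t)$. Expanding $1-(1-p)e^{-s/t}\approx p+s/t$, the ratio tends to $1/(1+\alpha s)$. The transform therefore tends to $(1+\alpha s)^{-\rho}$, which is the $\mathrm{Gamma}(\rho,1/\alpha)$ transform.

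\emph{(iii) $\lambda>0$.} Take $c=e^{-\lambda t}$, so $p\sim(\lambda/\alpha)e^{-\lambda t}$. The same expansion gives the limit $(1+s\alpha/\lambda)^{-\rho}$, i.e.\ $\mathrm{Gamma}(\rho,\lambda/\alpha)$ in distribution.

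\textbf{Upgrading (iii) to almost-sure convergence.} This is the main obstacle. I would apply \lemmaref{lemma:large_time}, case (i)/(ii), with constant intensity $f\equiv\nu$, i.e.\ $\delta=0$, $r=1$, $x=\nu$. Since $\delta=0<\lambda$, case (ii) of that lemma gives $e^{-\lambda t}X(t)\to\tilde Z$ almost surely. Almost-sure convergence implies convergence in distribution, so the law of $\tilde Z$ must coincide with the distributional limit already identified, namely $\mathrm{Gamma}(\rho,\lambda/\alpha)$. As a sanity check, the lemma gives $\E[\tilde Z]=\nu/\lambda$, which matches the Gamma mean $\rho\alpha/\lambda$.
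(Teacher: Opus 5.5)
Your proposal is correct and follows essentially the same route as the paper: the exact negative-binomial law with parameters $(\rho,p(t))$ from the characteristic solution, pointwise convergence of Laplace transforms in the three regimes, and an almost-sure upgrade in case (iii) whose limit law is pinned down by uniqueness of distributional limits. The differences are cosmetic. You reach the finite-time law via \thmref{thm:characteristic_mgf} and the Riccati linearization instead of the forward PDE, and for the almost-sure step you invoke case (ii) of \lemmaref{lemma:large_time}, whose centered martingale $e^{-\lambda t}X(t)-\frac{\nu}{\lambda}(1-e^{-\lambda t})$ is the paper's nonnegative martingale $e^{-\lambda t}(X(t)+\nu/\lambda)$ shifted by the constant $\nu/\lambda$; its nonnegativity lets the paper skip the $L^2$ bound.
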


\begin{proof}
Let $G(s,t)=\mathbb E\left[s^{X(t)}\right]$. The Kolmogorov forward equation for $X$ is
\[
G_t=\left[\alpha s(s-1)+\beta(1-s)\right]G_s+\nu(s-1)G,
\]
with $G(s,0)=1$. Solving this equation by characteristics yields
\[
G(s,t)=
\begin{dcases}
\left[1-\alpha t(s-1)\right]^{-\rho}, & \lambda=0,\\
\left[\frac{\lambda}{\alpha s\left(1-\exp\left(\lambda t\right)\right)+\alpha\exp\left(\lambda t\right)-\beta}\right]^{\rho}, & \lambda\ne0.
\end{dcases}
\]
Equivalently,
\[
G(s,t)=\left[\frac{p(t)}{1-\left(1-p(t)\right)s}\right]^{\rho},
\]
where
\[
p(t)=
\begin{dcases}
\frac{1}{1+\alpha t}, & \lambda=0,\\
\frac{\lambda}{\alpha\exp\left(\lambda t\right)-\beta}, & \lambda\ne0.
\end{dcases}
\]
Hence, $X(t)\sim\mathrm{Negative\mbox{-}Binomial}\left(\rho,p(t)\right)$.
If $\lambda<0$, then $p(t)\to-\frac{\lambda}{\beta}$. Thus, the probability generating function above converges to that of $\mathrm{Negative\mbox{-}Binomial}\left(\rho,-\lambda/\beta\right)$, and hence
\[
X(t)\overset{d}{\longrightarrow}
\mathrm{Negative\mbox{-}Binomial}\left(\rho,-\frac{\lambda}{\beta}\right).
\]
Next, if $X_p\sim\mathrm{Negative\mbox{-}Binomial}\left(\rho,p\right)$, then for $s\ge0$,
\[
\mathbb E\left[\exp\left(-spX_p\right)\right] = 
\left[\frac{p}{1-\left(1-p\right)\exp\left(-sp\right)}\right]^{\rho}
\longrightarrow
\left(1+s\right)^{-\rho}
\]
as $p\downarrow0$. By the continuity theorem for Laplace transforms,
\begin{equation}
\label{eq:negative_binomial_gamma_limit}
pX_p\overset{d}{\longrightarrow}Z \quad\text{as }p\downarrow0,
\qquad Z\sim\mathrm{Gamma}\left(\rho,1\right).
\end{equation}
For $\lambda=0$, $p(t)\to0$ and $p(t)t\to\frac{1}{\alpha}$, so \eref{eq:negative_binomial_gamma_limit} and Slutsky's theorem imply
\[
\frac{X(t)}{t} = 
\frac{p(t)X(t)}{p(t)t}
\overset{d}{\longrightarrow}
\alpha Z
\sim\mathrm{Gamma}\left(\rho,\frac{1}{\alpha}\right).
\]
Finally, suppose $\lambda>0$. Since $p(t)\to0$ and $p(t)\exp\left(\lambda t\right)\to\frac{\lambda}{\alpha}$, \eref{eq:negative_binomial_gamma_limit} and Slutsky's theorem imply
\[
\exp\left(-\lambda t\right)X(t) = 
\frac{p(t)X(t)}{p(t)\exp\left(\lambda t\right)}
\overset{d}{\longrightarrow}
\frac{\alpha}{\lambda}Z
\sim\mathrm{Gamma}\left(\rho,\frac{\lambda}{\alpha}\right).
\]
To upgrade this convergence to almost-sure convergence, let $\mathcal L$ be the generator of $X$ and set $h(k)=k+\nu/\lambda$. Then
\[
\mathcal L h(k)=\nu+\lambda k=\lambda h(k),
\]
so Dynkin's formula shows that
\[
M_t=\exp\left(-\lambda t\right)\left(X(t)+\frac{\nu}{\lambda}\right)
\]
is a nonnegative martingale, where the required integrability follows from the finite first moments of $X$. Hence, it converges almost surely. Since $\exp\left(-\lambda t\right)\nu/\lambda\to0$, $\exp\left(-\lambda t\right)X(t)$ converges almost surely to the same limit, which we denote by $Y$. The same sequence converges in distribution to $(\alpha/\lambda)Z$, so uniqueness of distributional limits shows that $Y\sim\mathrm{Gamma}\left(\rho,\lambda/\alpha\right)$.
\end{proof}

\begin{lemma}
\label{lemma:first_supercritical_type}
Let $\bz(t)$ be a $Z^0$-rooted path of length $L$. Assume that, for some $1\le\ell\le L$, $v_\ell$ is the first supercritical type on the path:
\[
\lambda_{\ell}:=\alpha_{\ell}-\beta_{\ell}>0,
\qquad
\lambda_{k}\le 0 \text{ for all }k<\ell.
\]
Then, there exists a finite nonnegative random variable $\tilde Z_\ell$ such that
\[
e^{-\lambda_\ell t}Z_\ell(t)\overset{a.s.}{\longrightarrow}\tilde Z_\ell.
\]
\end{lemma}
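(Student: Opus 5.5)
The plan is to prove the claim by a martingale argument for type $v_\ell$, conditional on its parent's trajectory. Write $Z_\ell(t)=\sum_{m:T_m\le t}Z_\ell^{1,m}(t-T_m)$, where the arrivals form a Cox process with intensity $f(u)=\mu_\ell Z_{\ell-1}(u)$. Define
\[
M(t):=e^{-\lambda_\ell t}Z_\ell(t)-\mu_\ell\int_0^t e^{-\lambda_\ell s}Z_{\ell-1}(s)\,\rmd s .
\]
Since the jump rates of $Z_\ell$ are linear, this is a martingale with respect to the joint filtration of the whole path. Both integrands are nonnegative, so if $M(t)$ and the integral each converge almost surely to finite limits, then $e^{-\lambda_\ell t}Z_\ell(t)$ converges almost surely to a finite nonnegative $\tilde Z_\ell$, and the lemma follows.

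The first step is to control the upstream non-supercritical part $(Z_0,\dots,Z_{\ell-1})$, where every $\lambda_k\le0$. I would show that there exist $C<\infty$ and an integer $r\ge0$ with
\[
\E[Z_{\ell-1}(s)]\le C(1+s)^{r} \quad\text{and}\quad \E[Z_{\ell-1}(s)^2]\le C(1+s)^{2r+1}.
\]
These follow from the linear mean ODEs along the chain. The root is $Z^0_0$ with immigration, whose mean is bounded if $\lambda_0<0$ and linear if $\lambda_0=0$. Each subsequent non-supercritical type convolves its parent's mean with $e^{\lambda_k u}$, which raises the polynomial degree by at most one. The second moments are handled the same way using the standard covariance ODEs, which are linear in first and second moments. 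Fubini then gives
\[
\E\int_0^\infty e^{-\lambda_\ell s}Z_{\ell-1}(s)\,\rmd s<\infty,
\]
because $\lambda_\ell>0$ beats any polynomial. So the integral term converges almost surely to a finite limit.

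The second step, which I expect to be the main obstacle, is to show that $M$ is bounded in $L^2$. Its predictable quadratic variation is
\[
\int_0^t e^{-2\lambda_\ell s}\bigl[(\alpha_\ell+\beta_\ell)Z_\ell(s)+\mu_\ell Z_{\ell-1}(s)\bigr]\rmd s .
\]
Emigration mutations out of $v_\ell$ do not change $Z_\ell$, so they do not contribute. Its expectation requires a bound on $\E[Z_\ell(s)]$. From the mean ODE, $\E[Z_\ell(s)]\le C'(1+s)^{r}e^{\lambda_\ell s}$, since the parent mean is polynomial. The expected quadratic variation is then bounded by $\int_0^\infty (1+s)^r e^{-\lambda_\ell s}\,\rmd s<\infty$. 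By the $L^2$ martingale convergence theorem, $M(t)$ converges almost surely and in $L^2$. This is exactly the route used in the $\delta<\lambda$ branch of \lemmaref{lemma:large_time}. The difference is that the intensity here is random, so the argument must work with expectations over the parent rather than conditionally on a deterministic $f$. An alternative I would also try is to condition on the parent path and apply \lemmaref{lemma:large_time}, case (ii), with $\delta\le0<\lambda_\ell$. That requires showing that $t^{-(r-1)}Z_{\ell-1}(t)$ converges almost surely, which is not available for a non-supercritical parent. For this reason I prefer the unconditional martingale route, which only needs the integrability established in the first step.

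Finally, combining the two almost-sure limits gives
\[
\tilde Z_\ell=\lim_{t\to\infty}M(t)+\mu_\ell\int_0^\infty e^{-\lambda_\ell s}Z_{\ell-1}(s)\,\rmd s .
\]
This is finite almost surely, and it is nonnegative because it is a limit of nonnegative quantities. One technical check remains: the local martingale property of $M$ must upgrade to a true martingale. This follows from the finite second moments of the linear system on bounded intervals, since the process is nonexplosive as in \thmref{thm:characteristic_mgf}.
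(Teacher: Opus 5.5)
Your proof is correct and is essentially the paper's argument: the same decomposition $e^{-\lambda_\ell t}Z_\ell(t)=M(t)+A(t)$ with $A(t)=\mu_\ell\int_0^t e^{-\lambda_\ell u}Z_{\ell-1}(u)\,\rmd u$, the same predictable quadratic variation, and the same polynomial bound on the parent mean, which makes $A(\infty)$ integrable (the paper gets that bound from \propref{prop:uniform_mgf} and Jensen rather than from the mean ODEs). The one difference is that the paper conditions on a fixed parent trajectory with $I=A(\infty)<\infty$ and needs only $\E[\langle M\rangle_\infty\mid f]\le\bigl((\alpha_\ell+\beta_\ell)/\lambda_\ell+1\bigr)I$; so, contrary to your remark, conditioning works without any limit of $f$, whereas your unconditional version also needs the mean bound for $Z_\ell$, and your second-moment bounds are unnecessary because $\E[\langle M\rangle_\infty]<\infty$ already makes the local martingale an $L^2$-bounded martingale.
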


\begin{proof}
Let $m_{\ell-1}(t)=\E[Z_{\ell-1}(t)]$.
Since all types before $v_\ell$ are non-supercritical and the root is $Z_0^0$, Jensen's inequality and \propref{prop:uniform_mgf} imply that there are constants $C<\infty$ and an integer $d\ge0$ such that
\[
m_{\ell-1}(t)\le C(1+t^d),\qquad t\ge0.
\]
Set $f(u)=\mu_\ell Z_{\ell-1}(u)$. By Tonelli's theorem and the polynomial bound above,
\[
\E\left[\int_0^\infty e^{-\lambda_\ell u}f(u)\rmd u\right]
=\mu_\ell\int_0^\infty e^{-\lambda_\ell u}m_{\ell-1}(u)\rmd u
\le C \mu_\ell\int_0^\infty e^{-\lambda_\ell u}(1+u^d)\rmd u
<\infty,
\]
so $I:=\int_0^\infty e^{-\lambda_\ell u}f(u)\rmd u<\infty$ almost surely. Fix a parent trajectory for which $I<\infty$, and write $\E[\,\cdot\mid f]$ for expectations with this trajectory fixed. Define
\[
Y(t)=e^{-\lambda_\ell t}Z_\ell(t),
\qquad
A(t)=\int_0^t e^{-\lambda_\ell u}f(u)\rmd u.
\]
Given this trajectory, the child population jumps up at rate $\alpha_\ell Z_\ell(u)+f(u)$ and down at rate $\beta_\ell Z_\ell(u)$. Therefore, $M(t)=Y(t)-A(t)$ is a martingale with predictable quadratic variation
\[
\langle M\rangle_t
=\int_0^t e^{-2\lambda_\ell u}
\left((\alpha_\ell+\beta_\ell)Z_\ell(u)+f(u)\right)\rmd u.
\]
Taking conditional expectations and using $\E[Z_\ell(u)\mid f]=e^{\lambda_\ell u}A(u)$ yields
\[
\E[\langle M\rangle_\infty\mid f]\le
\frac{\alpha_\ell+\beta_\ell}{\lambda_\ell}I+I<\infty.
\]
Thus, $M(t)$ is $L^2$-bounded for almost every fixed parent trajectory and converges for almost every child realization. Since $A(t)\to I$, $Y(t)=M(t)+A(t)$ also converges to a finite limit, which is nonnegative because $Y(t)\ge0$. Averaging over the parent trajectories proves the unconditional almost-sure convergence.
\end{proof}

\begin{lemma}
\label{lemma:subcritical_path_Z0}
Let $\bz(t)$ be a $Z^0$-rooted path of length $L$, and assume $\lambda_j<0$ for all $0\le j\le L$. For each $0\le\ell\le L$, set $\mathbf Z^{(\ell)}(t)=\left(Z_0(t),\dots,Z_\ell(t)\right)$. Then, $\mathbf Z^{(\ell)}(t)$ is ergodic with a unique stationary distribution $\omega_\ell$, and
\[
Z_\ell(t)\overset{d}{\longrightarrow}\tilde Z_\ell,
\]
where $\tilde Z_\ell$ is a nonnegative random variable with the $v_\ell$-coordinate marginal distribution of $\omega_\ell$.

Moreover, for $0\le j\le\ell$,
\[
\bar z_j:=\E_{\omega_\ell}[Z_j] \in(0,\infty),
\]
and
\begin{equation}
\label{eq:subcritical_coordinate_average}
\frac1t\int_0^t Z_j(u)\rmd u\longrightarrow\bar z_j
\qquad\text{a.s.}
\end{equation}
\end{lemma}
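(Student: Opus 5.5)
The plan is to treat $\mathbf Z^{(\ell)}(t)$ as an irreducible continuous-time Markov chain on a suitable subset of $\mathbb N^{\ell+1}$. We prove positive recurrence with a linear Foster--Lyapunov function. Convergence to stationarity and the ergodic averages then follow from the standard theory \citep{meyn1993stabilityII,meyn1993stabilityIII}.

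\textbf{Irreducibility and non-explosion.} Since the root is $Z_0^0$ with $\nu_0>0$ and zero initial cells, the reachable set is all of $\mathbb N^{\ell+1}$. From $\mathbf 0$, immigration, mutations along the path and replications reach any state. Conversely, deaths bring any state back to $\mathbf 0$. This is the same positive-probability path construction as in \lemmaref{lemma:support_saddle}. Non-explosion follows exactly as in the proof of \thmref{thm:characteristic_mgf}, since the total jump rate is at most linear in $\|\mathbf Z\|_1$ plus $\nu_0$.

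\textbf{Drift condition.} Take $V(\mathbf z)=1+\sum_{j=0}^\ell c_j z_j$ with weights $c_j>0$ to be chosen. The generator gives
\[
\mathcal L V(\mathbf z)=c_0\nu_0+\sum_{j=0}^{\ell}z_j\bigl(c_j\lambda_j+c_{j+1}\mu_{j+1}\bigr),
\]
where the term $c_{\ell+1}\mu_{\ell+1}$ is absent for $j=\ell$. Because every $\lambda_j<0$, we choose the weights backwards: set $c_\ell=1$, then pick $c_j$ large enough that $c_j\lambda_j+c_{j+1}\mu_{j+1}\le -\kappa c_j$ for some $\kappa>0$. This yields
\[
\mathcal L V\le -\kappa V+(\kappa+c_0\nu_0).
\]
By Theorem~4.2 (or 6.1) of \citet{meyn1993stabilityIII}, together with irreducibility and the fact that finite sets are petite for a CTMC, the chain is positive recurrent. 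It therefore has a unique stationary law $\omega_\ell$ with $\E_{\omega_\ell}[V]<\infty$, and it is ergodic in total variation, so $\mathbf Z^{(\ell)}(t)\to\omega_\ell$ in law. Projecting onto the last coordinate gives the convergence of $Z_\ell(t)$ to $\tilde Z_\ell$.

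\textbf{Stationary means.} Finiteness of $\bar z_j$ comes from $\E_{\omega_\ell}[V]<\infty$, which is the standard consequence of the $f$-norm drift bound with $f=V$. Positivity holds because $\omega_\ell$ charges every state, including $\be_j$. To get exact values, integrate $\mathcal L(z_j)$ against $\omega_\ell$; this is justified by the linear growth and the integrability just established. The result is the linear recursion $0=\nu_0+\lambda_0\bar z_0$ and $0=\mu_j\bar z_{j-1}+\lambda_j\bar z_j$, so $\bar z_j=\nu_0\prod_{k=1}^{j}\mu_k/\prod_{k=0}^{j}|\lambda_k|$. This is not needed for the statement but confirms the value lies in $(0,\infty)$.

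\textbf{Ergodic averages.} \eref{eq:subcritical_coordinate_average} follows from the ergodic theorem for positive recurrent continuous-time chains (e.g.\ Theorem~3.8.1 of \citet{norris1998markov}). Norris states it for bounded functions. We extend it to $f(\mathbf z)=z_j\in L^1(\omega_\ell)$ by truncation, $f\wedge N$, and monotone convergence of the lim inf. The lim sup is handled via regenerative cycles at $\mathbf 0$: the strong law of large numbers applied to i.i.d.\ cycle integrals gives the full $L^1$ ergodic theorem directly.

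\textbf{Main obstacle.} The step requiring the most care is verifying the hypotheses of the Meyn--Tweedie criterion and upgrading from bounded to unbounded test functions. I would handle both uniformly with the regeneration argument at $\mathbf 0$. The expected cycle length and the cycle integral of $V$ are finite because of the drift bound, via Dynkin's formula applied to $V$ stopped at the return time.
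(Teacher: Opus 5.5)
Your proposal is correct and follows essentially the same route as the paper: nonexplosion via the jump-chain bound, irreducibility, the linear Lyapunov function $V(\mathbf z)=1+\sum_k c_kz_k$ with weights chosen backwards from $c_\ell$, and Theorem~4.2 of \citet{meyn1993stabilityIII}. As the paper does, you should first rewrite your bound in the petite-set form $\mathcal L V\le-\tfrac{\kappa}{2}V+b\,\mathds 1_K$ on a finite sublevel set $K$, after which finiteness and positivity of $\bar z_j$ follow from $\E_{\omega_\ell}[V]<\infty$ and full support of $\omega_\ell$. The one real difference is the time-average step, where the paper cites Theorems~3.2(ii) and 8.1(i) of \citet{meyn1993stabilityII}, while you use a regenerative strong law at $\mathbf 0$; this is valid and more self-contained here because the chain starts at $\mathbf 0$, and your explicit stationary means agree with \lemmaref{lemma:Z0_path_stationary_mean}.
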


\begin{proof}
Fix $0\le\ell\le L$.
On the event of infinitely many jumps, if $\mathbf Y_{k-1}$ is the state before jump $k$, then $\|\mathbf Y_{k-1}\|_1\le k$ and the total jump rate satisfies
\[
q(\mathbf Y_{k-1})\le\nu_0+ak
\]
for some $a<\infty$.
Thus, the $k$th holding time satisfies $S_k\ge T_k/(\nu_0+ak)$ for independent $T_k\sim\operatorname{Exp}(1)$.
The jump-chain argument used in the proof of \thmref{thm:characteristic_mgf} shows that $\sum_k S_k=\infty$ almost surely, so $\mathbf Z^{(\ell)}$ is nonexplosive.

The chain is also irreducible: deaths can take any state to the origin, while root immigration, births, and successive mutations, followed by deaths of surplus cells, can reach any state from the origin.

Let $V(\mathbf z)=1+\sum_{k=0}^\ell c_k z_k$.
Since every $\lambda_j<0$, choose $c_\ell>0$ and then recursively choose
\[
c_j>\frac{c_{j+1}\mu_{j+1}}{-\lambda_j},
\qquad j=\ell-1,\ldots,0.
\]
The generator of $\mathbf Z^{(\ell)}$ satisfies
\[
\mathcal L V(\mathbf z)
=c_0\nu_0+
\sum_{j=0}^{\ell-1}\left(c_j\lambda_j+c_{j+1}\mu_{j+1}\right)z_j
+ c_\ell\lambda_\ell z_\ell
\le C-\gamma V(\mathbf z)
\]
for some $C, \gamma>0$. Since every $c_k>0$, the sublevel set
\[
K=\left\{\mathbf z:V(\mathbf z)\le\frac{2C}{\gamma}\right\}
\]
is finite and, by irreducibility on the countable discrete state space, petite in the sense of \citet{meyn1993stabilityIII}.
Moreover,
\[
\mathcal L V(\mathbf z)\le-\frac{\gamma}{2}V(\mathbf z)+C\mathds{1}_K(\mathbf z).
\]
By Theorem~4.2 of \citet{meyn1993stabilityIII}, $\mathbf Z^{(\ell)}$ is positive Harris recurrent and $\E_{\omega_\ell}[V]<\infty$ for the invariant distribution $\omega_\ell$.
By Theorems~3.5.2 and 3.6.2 of \citet{norris1998markov}, $\omega_\ell$ is unique and
\[
\mathbf Z^{(\ell)}(t)\overset{d}{\longrightarrow}\omega_\ell.
\]
By the continuous mapping theorem,
\[
Z_\ell(t)\overset{d}{\longrightarrow}\tilde Z_\ell,
\]
with $\tilde Z_\ell$ having the $v_\ell$-coordinate marginal distribution of $\omega_\ell$.

Since every function on the discrete state space is continuous, the chain is a $T$-process \citep[Section~7]{meyn1993stabilityIII}.
Since $V(\mathbf z)=1+\sum_{k=0}^\ell c_kz_k\ge c_jz_j$ and $c_j>0$,
\[
\bar z_j=\E_{\omega_\ell}[Z_j]
\le\frac{1}{c_j}\E_{\omega_\ell}[V]<\infty,
\]
so every coordinate is $\omega_\ell$-integrable.
Irreducibility implies that $\omega_\ell$ assigns positive mass to every state, so $\bar z_j>0$.
Theorems~3.2(ii) and 8.1(i) of \citet{meyn1993stabilityII} now yield \eref{eq:subcritical_coordinate_average}.
\end{proof}

The two- and three-type all-critical cases of the following lemma agree with the corresponding endpoint marginal limits of \citet{barczy2023critical} and \citet{barczy2024critical}, respectively, after matching the parameters of the discrete-time skeleton.

\begin{lemma}
\label{lemma:non_supercritical_path}
Let $\bz(t)$ be a $Z^0$-rooted path of length $L$.
Fix $0\le\ell\le L$.
Assume $\lambda_j\le0$ for all $0\le j\le\ell$ and $\lambda_j=0$ for at least one $0\le j\le\ell$.
Define the index of the first critical type among $v_0,\dots,v_\ell$ by
\[
\kappa=\min\{0\le j\le\ell:\lambda_j=0\}.
\]
Let
\[
\eta_\kappa=
\begin{dcases}
\nu_0, & \kappa=0,\\
\mu_\kappa\bar z_{\kappa-1}, & \kappa>0,
\end{dcases}
\qquad
C_\ell=
\prod_{j=\kappa+1}^{\ell}
\begin{dcases}
\mu_j, & \lambda_j=0,\\
\mu_j/(-\lambda_j), & \lambda_j<0,
\end{dcases}
\]
where $\bar z_{\kappa-1}$ is the expected stationary population of $v_{\kappa-1}$ in \lemmaref{lemma:subcritical_path_Z0}.
Then,
\[
t^{-(r_\ell-1)}Z_\ell(t)\overset{d}{\longrightarrow}Y_\ell,
\]
where $Y_\ell$ is nonnegative and has Laplace transform
\begin{equation}
\label{eq:non_supercritical_endpoint_LT}
\E[e^{-sY_\ell}]
=
\exp\left\{-\eta_\kappa\int_0^1 y_s(x)\rmd x\right\},
\qquad s\ge0,
\end{equation}
with $y_s$ solving the scalar Riccati equation
\begin{equation}
\label{eq:non_supercritical_Riccati}
\begin{dcases}
y_s'(x)=-\alpha_\kappa y_s(x)^2,
\quad y_s(0)=sC_\ell,
& r_\ell=2,\\[1ex]
y_s'(x)=\dfrac{sC_\ell x^{r_\ell-3}}{(r_\ell-3)!}
-\alpha_\kappa y_s(x)^2,
\quad y_s(0)=0,
& r_\ell\ge3.
\end{dcases}
\end{equation}
\end{lemma}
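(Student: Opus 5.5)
The plan is to compute the Laplace transform of $t^{-(r_\ell-1)}Z_\ell(t)$ exactly through the characteristic representation (\thmref{thm:characteristic_mgf}), take its limit, and conclude with the continuity theorem for Laplace transforms. Fix $s\ge0$, set $m:=r_\ell-1$ (the number of critical types among $v_0,\dots,v_\ell$), and take $\bq$ with $q_\ell=e^{-s t^{-m}}$ and all other coordinates equal to $1$. Only the path is relevant, so coordinates beyond $v_\ell$ stay at $1$. Write $g_j(\sigma):=1-\gamma_j(\sigma;\bq)\in[0,1)$. By \eref{eq:X}, each $g_j$ satisfies
\[
g_j'=\lambda_j g_j-\alpha_j g_j^2+\mu_{j+1}(1-g_j)g_{j+1},\qquad g_{\ell+1}\equiv0,
\]
with $g_j(0)=0$ for $j<\ell$ and $g_\ell(0)=1-e^{-st^{-m}}\sim st^{-m}$.

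\textbf{Reduction to type $\kappa$.} Condition on the prefix trajectory $(Z_0,\dots,Z_{\kappa-1})$ when $\kappa>0$. Type $v_\kappa$ then receives Poisson immigration at rate $\mu_\kappa Z_{\kappa-1}(u)$. Each immigrant founds an independent family on $\pi_{\kappa:}$, whose one-cell transform is $1-g_\kappa(t-u)$ by \eqref{eq:single_family_transform}. The Poisson computation from the proof of \thmref{thm:characteristic_mgf} therefore gives
\[
\E\left[e^{-st^{-m}Z_\ell(t)}\,\middle|\,\text{prefix}\right]=\exp\left\{-\int_0^t\mu_\kappa Z_{\kappa-1}(u)\,g_\kappa(t-u)\,\rmd u\right\}.
\]
When $\kappa=0$, the exponent is instead $-\nu_0\int_0^t g_0(\sigma)\rmd\sigma$ with no conditioning.

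\textbf{Scaling step (the core).} The key claim is that $t\,g_\kappa(tx)\to y_s(x)$ uniformly on $[0,1]$, where $y_s$ solves \eref{eq:non_supercritical_Riccati}. I would prove this by a downstream-to-upstream induction over $j=\ell,\ell-1,\dots,\kappa+1$. The induction hypothesis is that $t^{a_j}g_j(tx)$ converges to an explicit monomial $c_j x^{b_j}$, uniformly on $[\varepsilon,1]$ for every $\varepsilon>0$, and stays bounded on $[0,1]$. The two kinds of type behave differently:
\begin{itemize}
\item A subcritical type is fast, relaxing on an $O(1)$ time scale. Variation of constants, $g_j(\sigma)=\int_0^\sigma e^{\lambda_j(\sigma-r)}\mu_{j+1}g_{j+1}(r)\rmd r+(\text{smaller terms})$, shows that it enters quasi-steady state with $g_j\approx \mu_{j+1}g_{j+1}/(-\lambda_j)$.
\item A critical type integrates its input, $g_j(tx)\approx t\int_0^x\mu_{j+1}g_{j+1}(tx')\rmd x'$. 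This adds one power of $t$ and one power of $x$, producing the factorial $(r_\ell-3)!$.
\end{itemize}
Throughout, the quadratic terms and the $(1-g_j)$ factors are lower order because every $g_j=O(t^{-1})$ on $[0,t]$.

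At $j=\kappa$, rescaling $y(x):=t\,g_\kappa(tx)$ gives
\[
y'(x)=-\alpha_\kappa y^2+t^2\mu_{\kappa+1}(1-g_\kappa)\,g_{\kappa+1}(tx).
\]
This forcing converges to $sC_\ell x^{r_\ell-3}/(r_\ell-3)!$ when $r_\ell\ge3$. When $r_\ell=2$, all types after $\kappa$ are subcritical and the forcing collapses into the initial value $y(0)=sC_\ell$. A Gronwall comparison for this Riccati equation, which stays locally Lipschitz because $y\ge0$ remains bounded, then yields the uniform convergence. I expect the main obstacle to be exactly this step: the boundary layers of width $O(1)$ near $\sigma=0$ created by the subcritical types must be shown to contribute vanishing error after rescaling. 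I would handle them by splitting at $\sigma=\sqrt t$ and using the bound $0\le g_j\le$ (linearized solution), which follows from a comparison principle for the cooperative system.

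\textbf{Conclusion.} For $\kappa=0$, the exponent equals $\nu_0\int_0^1 t\,g_0(tx)\rmd x\to\nu_0\int_0^1y_s$. For $\kappa>0$, substitute $u=t(1-x)$, so that the conditional exponent becomes $\mu_\kappa\int_0^1 Z_{\kappa-1}(t(1-x))\,t\,g_\kappa(tx)\,\rmd x$. Since $t\,g_\kappa(t\cdot)\to y_s$ uniformly on $[0,1]$, the ergodic average \eref{eq:subcritical_coordinate_average}, applied on subintervals of $[0,1]$ (a Riemann–Stieltjes argument), makes this exponent converge almost surely to $\mu_\kappa\bar z_{\kappa-1}\int_0^1 y_s$. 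Bounded convergence, since the conditional transform lies in $[0,1]$, then gives the unconditional limit $\exp\{-\eta_\kappa\int_0^1y_s\}$. Finally, $y_s\to0$ uniformly as $s\downarrow0$, so the limit transform is continuous at $0$ with value $1$. The continuity theorem then identifies the distributional limit $Y_\ell$ as nonnegative with Laplace transform \eref{eq:non_supercritical_endpoint_LT}.
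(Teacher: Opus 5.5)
Your route differs genuinely from the paper's, and it can be made to work. The paper never analyses the coupled characteristic system along $v_\kappa,\dots,v_\ell$. Instead, it first makes a forward, pathwise reduction. By induction over $j=\kappa,\dots,\ell$, using the semimartingale decomposition of $Z_j$, the martingale isometry and the moment bounds of \corref{cor:all_moments_bounded}, it shows that $\E|Z_j(t)-C_jA_{r_j-1}Z_\kappa(t)|=o(t^{r_j-1})$, where $A_n$ is an $(n-1)$-fold iterated time integral. At the population level, each subcritical type is replaced by the multiple $\mu_j/(-\lambda_j)$ of its parent, and each critical type by a time integral. Only then does it pass to transforms. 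Immigrants into $v_\kappa$ found single-type critical clones, and the clone functional $A_{r_\ell-1}Z^1$ obeys one scalar backward Riccati equation with explicit forcing, or explicit initial value $1-e^{-sC_\ell/t}$. A single Gronwall estimate then gives uniform convergence on $[0,1]$, and Slutsky transfers the limit back to $Z_\ell$. The ergodic-average step is the same in both arguments. Your version works with the exact finite-$t$ transform and needs neither the moment machinery nor Slutsky. The price is a multi-scale analysis of the nonlinear cooperative system, which the paper's reduction replaces by linear $L^1$ estimates.

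Three points in your sketch need repair.

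\textbf{Uniform convergence at $x=0$.} For $r_\ell=2$ and $\ell>\kappa$, the uniform convergence on $[0,1]$ you assert is false. Since $g_\kappa(0)=0$, you have $t\,g_\kappa(0)=0\neq sC_\ell=y_s(0)$. What you actually get is uniform convergence on $[\varepsilon,1]$ for each $\varepsilon>0$, together with a uniform bound $\bar y$. This is enough: in the conclusion, the window $x\in[0,\varepsilon]$ contributes at most $\bar y\,\mu_\kappa t^{-1}\int_{(1-\varepsilon)t}^{t}Z_{\kappa-1}(u)\rmd u\to\bar y\,\eta_\kappa\varepsilon$.

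\textbf{The terminal subcritical run.} Consider the subcritical types after the last critical type. There the boundary layer is not a vanishing error; it carries the entire leading term. Your rescaled-profile hypothesis on $[\varepsilon,1]$ only records the zero profile there. The quasi-steady-state approximation is valid only for the intermediate subcritical types, whose input varies slowly. On the terminal run you must track masses in unscaled time, for example by integrating the ODEs over $[0,t]$. If $v_\ell$ is subcritical, this gives $t^m\int_0^t g_\ell\to s/(-\lambda_\ell)$, and $\int_0^t g_j=\frac{\mu_{j+1}}{-\lambda_j}\int_0^t g_{j+1}\,(1+o(1))$ along the run. This mass becomes the effective initial value of the last critical type; for $r_\ell=2$ it is the value $sC_\ell$.

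\textbf{The quadratic terms.} The bound $g_j=O(t^{-1})$ does not justify dropping $\alpha_j g_j^2$ at a critical $j>\kappa$. Over a span of length $t$ that term contributes $O(t^{-1})$, which is not $o(g_j)$. You need $g_j=O(t^{-2})$ for $j>\kappa$, and your linearized supersolution does supply this.
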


\begin{proof}
For $\kappa\le j\le\ell$, we have by definition (\eref{eq:r})
\[
r_j-1=\#\{\kappa\le a\le j:\lambda_a=0\}.
\]
Define
\[
C_j=\prod_{a=\kappa+1}^{j}
\begin{dcases}
\mu_a, & \lambda_a=0,\\
\mu_a/(-\lambda_a), & \lambda_a<0.
\end{dcases}
\]
For a locally integrable function $x$, define
\[
A_1x(t)=x(t),
\qquad
A_nx(t)=\frac{1}{(n-2)!}\int_0^t(t-u)^{n-2}x(u)\rmd u,
\quad n\ge2.
\]
Write
\[
Z_j(t)=Z_j(0)+\int_0^tB_j(u)\rmd u+M_j(t),
\]
where $B_0=\nu_0+\lambda_0Z_0$ and $B_j=\mu_jZ_{j-1}+\lambda_jZ_j$ for $j\ge1$. 
$M_j$ is a square-integrable martingale with $M_j(0)=0$.
The predictable quadratic-variation density of $M_j$ is
\[
\frac{\rmd\langle M_j\rangle_t}{\rmd t}
=
\begin{dcases}
(\alpha_0+\beta_0)Z_0(t)+\nu_0, & j=0,\\
(\alpha_j+\beta_j)Z_j(t)+\mu_jZ_{j-1}(t), & j\ge1.
\end{dcases}
\]
Since $w_j(t)=t^{-(r_j-1)}$ for $j\ge\kappa$, by \corref{cor:all_moments_bounded} and $r_{j-1} \le r_j$ for $j \ge 1$,
\begin{equation}
\label{eq:martingale_variation_bound}
\E[Z_j(t)]=\mathcal O((1+t)^{r_j-1})
\qquad \implies \qquad
\E\left[\frac{\rmd\langle M_j\rangle_t}{\rmd t}\right]
=\mathcal O((1+t)^{r_j-1}).
\end{equation}

Next, set
\[
R_j(t)=Z_j(t)-C_jA_{r_j-1}Z_\kappa(t).
\]
We prove by induction that as $t\to\infty$,
\begin{equation}
\label{eq:non_supercritical_endpoint_reduction}
\frac{\E[|R_j(t)|]}{t^{r_j-1}}\longrightarrow0,
\qquad \kappa\le j\le\ell.
\end{equation}
At $j=\kappa$, we have $C_\kappa=1$, $r_\kappa=2$, and $A_1Z_\kappa(t)=Z_\kappa(t)$, so $R_\kappa(t)=0$. 

Fix $j>\kappa$ and assume that \eref{eq:non_supercritical_endpoint_reduction} holds for every $\kappa\le i<j$.
If $\lambda_j=0$, then $r_j=r_{j-1}+1\ge3$. Since $C_j=\mu_jC_{j-1}$, we have
\[
\int_0^tA_{r_j-2}Z_\kappa(u)\rmd u=A_{r_j-1}Z_\kappa(t)
\qquad \implies \qquad
R_j(t)=\mu_j\int_0^tR_{j-1}(u)\rmd u+M_j(t).
\]
For any $\epsilon>0$, the inductive hypothesis allows us to choose $T$ such that
\[
\E[|R_{j-1}(u)|] \le \epsilon u^{r_j - 2}, \qquad u \ge T.
\]
For $t\ge T$, by triangle inequality and Tonelli's theorem, we have
\[
\frac1{t^{r_j-1}} \E\left[ \left| \int_0^t R_{j-1}(u) \rmd u \right| \right]
\le \frac1{t^{r_j-1}} \int_0^T \E[|R_{j-1}(u)|] \rmd u
+ \frac1{t^{r_j-1}} \int_T^t \E[|R_{j-1}(u)|] \rmd u.
\]
The first term on the right-hand side tends to zero, while the second is at most $\epsilon/(r_j-1)$.
Since $\epsilon$ is arbitrary, we conclude that
\[
\frac{1}{t^{r_j-1}} \E\left[ \left| \int_0^t R_{j-1}(u) \rmd u \right| \right] \longrightarrow 0.
\]
By the martingale isometry and \eref{eq:martingale_variation_bound},
\[
\|M_j(t)\|_2^2 = \E[\langle M_j\rangle_t] = \int_0^t \E\left[\frac{\rmd\langle M_j\rangle_u}{\rmd u}\right] \rmd u = \mathcal O(t^{r_j}).
\]
By Cauchy--Schwarz inequality, we have
\[
\frac{\E[|M_j(t)|]}{t^{r_j-1}}
\le \frac{\|M_j(t)\|_2}{t^{r_j-1}} = \mathcal O(t^{(2 - r_j)/2}) \longrightarrow 0,
\]
which proves \eref{eq:non_supercritical_endpoint_reduction} at $j$.

If $\lambda_j<0$, set $a_j=-\lambda_j$, $q_j=\mu_j/a_j$, and $D_j=Z_j-q_jZ_{j-1}$.
Since $D_j(0)=0$, variation of constants yields
\[
D_j(t)=-q_j\int_0^te^{-a_j(t-u)}B_{j-1}(u)\rmd u 
+\int_0^te^{-a_j(t-u)}\rmd M_j(u)
-q_j\int_0^te^{-a_j(t-u)}\rmd M_{j-1}(u).
\]
Here $r_j=r_{j-1}$. 
If $\lambda_{j-1}=0$ and $j-1=0$, then $B_{j-1}=\nu_0$ and $r_j=2$.
If $\lambda_{j-1}=0$ and $j-1\ge1$, then $B_{j-1}=\mu_{j-1}Z_{j-2}$, and $r_{j-2}=r_j-1$.
Hence, using \corref{cor:all_moments_bounded} in the second case,
\begin{equation}
\label{eq:critical_predecessor_immigration_bound}
\E[|B_{j-1}(t)|]=\mathcal O((1+t)^{r_j-2}),
\qquad \lambda_{j-1}=0.
\end{equation}
If $\lambda_{j-1}<0$, set $a_{j-1}=-\lambda_{j-1}$ and $q_{j-1}=\mu_{j-1}/a_{j-1}$.
In this case, $r_{j-2}=r_{j-1}=r_j$ and $C_{j-1}=q_{j-1}C_{j-2}$, so
\[
B_{j-1}
=-a_{j-1}\bigl(R_{j-1}-q_{j-1}R_{j-2}\bigr).
\]
The bound in \eref{eq:critical_predecessor_immigration_bound} handles $\lambda_{j-1}=0$, while the inductive hypothesis at $j-1$ and $j-2$ handles $\lambda_{j-1}<0$.
Thus, in either case,
\[
\frac{\E[|B_{j-1}(t)|]}{t^{r_j-1}}\longrightarrow0.
\]
For any $\epsilon>0$, we choose $T$ such that
\[
\E[|B_{j-1}(u)|]\le\epsilon u^{r_j-1},
\qquad u\ge T.
\]
For $t\ge T$, the triangle inequality and Tonelli's theorem yield
\[
\frac1{t^{r_j-1}}\E\left[
\left|\int_0^te^{-a_j(t-u)}B_{j-1}(u)\rmd u\right|
\right]
\le
\frac{e^{-a_jt}}{t^{r_j-1}}
\int_0^T e^{a_ju}\E[|B_{j-1}(u)|]\rmd u
+\frac{\epsilon}{a_j}.
\]
Taking $t\to\infty$ and then letting $\epsilon\downarrow0$ proves that this term converges to zero.
For $k\in\{j-1,j\}$, the martingale isometry and \eref{eq:martingale_variation_bound} imply
\[
\frac1{t^{r_j-1}}\left\|
\int_0^te^{-a_j(t-u)}\rmd M_k(u)
\right\|_2 
=\mathcal O(t^{-(r_j-1)/2})\longrightarrow 0.
\]
Thus,
\[
\frac{\E[|D_j(t)|]}{t^{r_j-1}}\longrightarrow 0.
\]
The identity $R_j=q_jR_{j-1}+D_j$ completes the induction.

The first critical population has immigration intensity
\[
b(u)=
\begin{dcases}
\nu_0, & \kappa=0,\\
\mu_\kappa Z_{\kappa-1}(u), & \kappa>0.
\end{dcases}
\]
For every continuous $g$ on $[0,1]$,
\begin{equation}
\label{eq:first_critical_weighted_average}
\frac1t\int_0^tb(u)g(u/t)\rmd u
\longrightarrow
\eta_\kappa\int_0^1g(x)\rmd x
\qquad\text{a.s.}
\end{equation}
This is immediate when $\kappa=0$.
When $\kappa>0$, by \lemmaref{lemma:subcritical_path_Z0},
\[
\frac1{t}\int_0^tb(u)\rmd u
=\frac{\mu_\kappa}{t}\int_0^tZ_{\kappa-1}(u)\rmd u
\longrightarrow\mu_\kappa\bar z_{\kappa-1}=\eta_\kappa
\qquad\text{a.s.}
\]
Let $h$ be a step function on $[0,1]$ with $h(x)=h_i$ on $(x_{i-1},x_i]$, where $0=x_0<x_1<\dots<x_m=1$. Then
\[
\frac1t\int_0^tb(u)h(u/t)\rmd u
=\sum_{i=1}^m h_i\frac1t\int_{x_{i-1}t}^{x_it}b(u)\rmd u
\longrightarrow\sum_{i=1}^m h_i\eta_\kappa(x_i-x_{i-1})
=\eta_\kappa\int_0^1h(x)\rmd x
\qquad\text{a.s.}
\]
Choosing $h$ to approximate $g$ uniformly on $[0,1]$ gives \eref{eq:first_critical_weighted_average}.

Conditional on $b$, the immigrants initiate independent critical clones. Let $Z^1$ denote one clone. A clone of age $a$ contributes
\[
A_{r_\ell-1}Z^1(a)=
\begin{dcases}
Z^1(a), & r_\ell=2,\\[1ex]
\dfrac{1}{(r_\ell-3)!}\displaystyle\int_0^a
(a-v)^{r_\ell-3}Z^1(v)\rmd v, & r_\ell\ge3,
\end{dcases}
\]
to $A_{r_\ell-1}Z_\kappa(t)$.
For $s \ge 0$, the Poisson construction applied to these independent clone contributions gives
\begin{equation}
\label{eq:non_supercritical_conditional_LT}
\E\left[
\exp\left\{-\frac{sC_\ell}{t^{r_\ell-1}}A_{r_\ell-1}Z_\kappa(t)\right\}
\,\middle|\,b
\right]
=
\exp\left\{-\int_0^tb(u)h_t(t-u)\rmd u\right\},
\end{equation}
where
\[
h_t(a)=1-\E\left[
\exp\left\{
-\frac{sC_\ell}{t^{r_\ell-1}}A_{r_\ell-1}Z^1(a)
\right\}
\right].
\]

By a first-event decomposition of the critical birth--death clone and the branching property that descendant lineages evolve independently according to the same law, $h_t$ satisfies the backward equation
\[
\begin{dcases}
h_t'(a)=-\alpha_\kappa h_t(a)^2,\quad h_t(0)=1-e^{-sC_\ell/t}, & r_\ell=2,\\[1ex]
h_t'(a)=\frac{sC_\ell a^{r_\ell-3}}{(r_\ell-3)!t^{r_\ell-1}}\left(1-h_t(a)\right)-\alpha_\kappa h_t(a)^2,\quad h_t(0)=0, & r_\ell\ge3.
\end{dcases}
\]
Define $y_{s,t}(x)=th_t(tx)$ for $0\le x\le1$.
Then, $y_{s,t}$ solves
\begin{equation}
\label{eq:non_supercritical_Riccati_finite}
\begin{dcases}
y_{s,t}'=-\alpha_\kappa y_{s,t}^2,\quad
y_{s,t}(0)=t(1-e^{-sC_\ell/t}), & r_\ell=2,\\[1ex]
y_{s,t}'=\dfrac{sC_\ell x^{r_\ell-3}}{(r_\ell-3)!}
\left(1-\dfrac{y_{s,t}}{t}\right)-\alpha_\kappa y_{s,t}^2,\quad
y_{s,t}(0)=0, & r_\ell\ge3.
\end{dcases}
\end{equation}
For $r_\ell=2$, $y_{s,t}$ is non-increasing and
\[
0 \le y_{s,t}(x) \le y_{s,t}(0) = t(1-e^{-sC_\ell/t}) \le sC_\ell.
\]
For $r_\ell\ge3$, $0\le h_t\le1$ and the differential equation gives
\[
0\le y_{s,t}(x)\le\frac{sC_\ell x^{r_\ell-2}}{(r_\ell-2)!}\le sC_\ell.
\]
The same bounds hold for $y_s$, so the finite and limiting solutions remain in the common rectangle $0\le x\le1$ and $0\le y\le sC_\ell$. 
Let $F_{s,t}$ and $F_s$ denote the right-hand sides in \eref{eq:non_supercritical_Riccati_finite} and \eref{eq:non_supercritical_Riccati}, respectively.
For $t\ge1$, $0\le x\le1$, and $y,z\in[0,sC_\ell]$,
\[
\max\left\{
|F_{s,t}(x,y)-F_{s,t}(x,z)|,
|F_s(x,y)-F_s(x,z)|
\right\}
\le L_s|y-z|,
\]
where $L_s=(1+2\alpha_\kappa)sC_\ell$ is independent of $t$.
The initial values converge, and $F_{s,t}\to F_s$ uniformly on the common rectangle.
Therefore, by Gronwall's inequality,
\[
\sup_{0\le x\le1}|y_{s,t}(x)-y_s(x)|\longrightarrow0.
\]
Together with \eref{eq:first_critical_weighted_average}, the exponent in \eref{eq:non_supercritical_conditional_LT} converges:
\[
\frac1t\int_0^tb(u)y_{s,t}(1-u/t)\rmd u
\longrightarrow
\eta_\kappa\int_0^1y_s(x)\rmd x
\qquad\text{a.s.}
\]
Dominated convergence then yields \eref{eq:non_supercritical_endpoint_LT} for the unconditional transforms. 
Since $0\le y_s(x)\le sC_\ell$, the limiting transform is continuous at $s=0$. 
By the continuity theorem for Laplace transforms, 
\[
t^{-(r_\ell-1)}C_\ell A_{r_\ell-1}Z_\kappa(t)\overset{d}{\longrightarrow}Y_\ell,
\]
where $Y_\ell$ is nonnegative and has Laplace transform \eref{eq:non_supercritical_endpoint_LT}.
By \eref{eq:non_supercritical_endpoint_reduction} at $j=\ell$ and Slutsky's theorem, we conclude that
\[
t^{-(r_\ell-1)}Z_\ell(t)\overset{d}{\longrightarrow}Y_\ell.
\]
\end{proof}

\begin{corollary}
\label{cor:non_supercritical_gamma_limit}
If $r_\ell=2$, then
\[
t^{-1}Z_\ell(t)\overset{d}{\longrightarrow}C_\ell G_\kappa,
\qquad
G_\kappa\sim
\mathrm{Gamma}\left(\frac{\eta_\kappa}{\alpha_\kappa},
\frac{1}{\alpha_\kappa}\right).
\]
\end{corollary}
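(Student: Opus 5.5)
Since $r_\ell=2$, \lemmaref{lemma:non_supercritical_path} already gives $t^{-1}Z_\ell(t)\overset{d}{\longrightarrow}Y_\ell$. Its Laplace transform is given by \eref{eq:non_supercritical_endpoint_LT}, where $y_s$ solves the autonomous Riccati equation $y_s'=-\alpha_\kappa y_s^2$ with $y_s(0)=sC_\ell$. The plan is therefore purely computational: solve this scalar ODE explicitly, integrate over $[0,1]$, and recognize the resulting transform as that of a scaled Gamma variable. Uniqueness of Laplace transforms then identifies the limit.

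\textbf{Step 1: solve the Riccati equation.} Separation of variables gives
\[
y_s(x)=\frac{sC_\ell}{1+\alpha_\kappa sC_\ell x},\qquad 0\le x\le 1.
\]
This solution is well defined and nonnegative for $s\ge0$, since the denominator is at least one. It is unique by local Lipschitz continuity.

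\textbf{Step 2: integrate.} A direct computation gives
\[
\int_0^1 y_s(x)\rmd x=\frac{1}{\alpha_\kappa}\log\left(1+\alpha_\kappa C_\ell s\right).
\]
Hence
\[
\E[e^{-sY_\ell}]=\left(1+\alpha_\kappa C_\ell s\right)^{-\eta_\kappa/\alpha_\kappa}.
\]

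\textbf{Step 3: identify the distribution.} With the shape--rate parametrization fixed before \lemmaref{lemma:1type}, $G\sim\mathrm{Gamma}(\rho,\theta)$ has Laplace transform $(1+s/\theta)^{-\rho}$. Thus $C_\ell G_\kappa$, with $G_\kappa\sim\mathrm{Gamma}(\eta_\kappa/\alpha_\kappa,1/\alpha_\kappa)$, has transform $(1+\alpha_\kappa C_\ell s)^{-\eta_\kappa/\alpha_\kappa}$, which matches Step 2. Laplace transforms on $s\ge0$ determine the law of a nonnegative random variable, so $Y_\ell\overset{d}{=}C_\ell G_\kappa$.

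There is no real obstacle here. The only points needing care are the Gamma parametrization, so that the rate is $1/\alpha_\kappa$ rather than $\alpha_\kappa$, and checking that $\eta_\kappa>0$ so the shape is positive. Positivity holds because $\nu_0>0$ is assumed for $Z^0$-rooted paths, and $\bar z_{\kappa-1}>0$ by \lemmaref{lemma:subcritical_path_Z0}. As a sanity check, $\kappa=\ell=0$ gives $C_0=1$ and $\eta_0=\nu_0$, which recovers the critical case of \lemmaref{lemma:1type}.
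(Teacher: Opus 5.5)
Your proposal is correct and follows the paper's proof. Like the paper, you solve the $r_\ell=2$ Riccati equation, obtain $\int_0^1 y_s(x)\rmd x=\alpha_\kappa^{-1}\log(1+\alpha_\kappa C_\ell s)$, identify $(1+\alpha_\kappa C_\ell s)^{-\eta_\kappa/\alpha_\kappa}$ as the Laplace transform of $C_\ell G_\kappa$, and conclude via \lemmaref{lemma:non_supercritical_path}; your checks on the shape--rate convention and on $\eta_\kappa>0$ are consistent with the paper, which also notes agreement with case (ii) of \lemmaref{lemma:1type}.
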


\begin{proof}
When $r_\ell=2$, solving \eref{eq:non_supercritical_Riccati} yields
\[
\int_0^1y_s(x)\rmd x
=\frac1{\alpha_\kappa}\log(1+\alpha_\kappa C_\ell s).
\]
Substitution into \eref{eq:non_supercritical_endpoint_LT} identifies the law of $Y_\ell$ as that of $C_\ell G_\kappa$. The conclusion follows from \lemmaref{lemma:non_supercritical_path}.
The Gamma form of $G_\kappa$ coincides with case \textup{(ii)} of \lemmaref{lemma:1type}, with the effective immigration rate $\eta_\kappa$ in place of $\nu$.
\end{proof}

\begin{proposition}
\label{prop:Z1_path_limit}
Let $\bz(t)$ be a $Z^1$-rooted path of length $L$. Then, for each $0\le\ell\le L$, there exists a nonnegative random variable $\tilde Z_\ell$ such that
\[
w_\ell(t)Z_\ell(t)\overset{a.s.}{\longrightarrow}\tilde Z_\ell,
\]
which implies $w_\ell(t)Z_\ell(t)\overset{d}{\longrightarrow}\tilde Z_\ell$.
\end{proposition}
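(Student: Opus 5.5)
Induction on $\ell$ along the path; at each step \lemmaref{lemma:large_time} is applied conditionally on the parent trajectory, with intensity $f(u)=\mu_\ell Z_{\ell-1}(u)$.

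\emph{Base case $\ell=0$.} Here $Z_0=Z_0^1$ is a single-cell birth--death process without immigration, and we split by the sign of $\lambda_0$.
If $\lambda_0>0$, then $\delta_0=\lambda_0$ and $r_0=1$. The process $e^{-\lambda_0 t}Z_0(t)$ is a nonnegative martingale, since $\mathcal L h=\lambda_0 h$ for $h(k)=k$. It therefore converges almost surely to some $\tilde Z_0\ge0$.
If $\lambda_0\le0$, then $\delta_0=0$ and $w_0(t)=t^{-(r_0-1)}$.
\begin{itemize}
\item For $\lambda_0<0$ we have $r_0=1$ and $w_0\equiv1$. Extinction is almost sure, so $Z_0(t)\to0$ almost surely, because $0$ is absorbing.
\item For $\lambda_0=0$ we have $r_0=2$. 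Again extinction is almost sure, so $t^{-1}Z_0(t)\to0$.
\end{itemize}
In every case the limit exists almost surely.

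\emph{Inductive step.} Assume that $w_{\ell-1}(t)Z_{\ell-1}(t)\to\tilde Z_{\ell-1}$ almost surely, and condition on the parent path. Then $t^{-(r_{\ell-1}-1)}e^{-\delta_{\ell-1}t}f(t)\to x:=\mu_\ell\tilde Z_{\ell-1}\ge0$. The intensity $f$ is nonnegative and càdlàg, and $r_{\ell-1}\ge1$.

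\emph{Case $\delta_{\ell-1}>0$.} Hypothesis (i) of \lemmaref{lemma:large_time} holds, and its trichotomy matches the recursions \eref{eq:delta}--\eref{eq:r}.
\begin{itemize}
\item If $\lambda_\ell<\delta_{\ell-1}$, then $\delta_\ell=\delta_{\ell-1}$ and $r_\ell=r_{\ell-1}$. The lemma gives the limit $x/(\delta_{\ell-1}-\lambda_\ell)$.
\item If $\lambda_\ell=\delta_{\ell-1}$, then $r_\ell=r_{\ell-1}+1$. The lemma gives the limit $x/r_{\ell-1}$, which is consistent with $w_\ell=t^{-r_{\ell-1}}e^{-\delta t}$.
\item If $\lambda_\ell>\delta_{\ell-1}$, then $\delta_\ell=\lambda_\ell$ and $r_\ell=1$. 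The lemma gives $e^{-\lambda_\ell t}Z_\ell\to\tilde Z$.
\end{itemize}

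\emph{Case $\delta_{\ell-1}=0$.} Every type up to $v_{\ell-1}$ is non-supercritical.
If $\lambda_\ell>0$, hypothesis (ii) of \lemmaref{lemma:large_time} applies with $\delta=0$ and $r=r_{\ell-1}$, giving $e^{-\lambda_\ell t}Z_\ell\to\tilde Z_\ell$ with $w_\ell=e^{-\lambda_\ell t}$.
If $\lambda_\ell\le0$, the lemma does not apply. Here the $Z^1$ root helps: $Z_0$ goes extinct almost surely at a finite time $\tau_0$. After $\tau_0$, type $v_1$ receives no further immigration, so it becomes a finite sum of clones that are non-supercritical whenever $\lambda_1\le0$. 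It therefore also dies out in finite time. By induction, if $\lambda_k\le0$ for all $k\le\ell$, then almost surely $Z_\ell(t)=0$ for all large $t$, and $w_\ell(t)Z_\ell(t)\to0$ trivially.

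In the extinction subcase, the finite number of clones follows from the finite total arrivals $\int_0^{\tau}\mu Z_{k-1}\rmd u<\infty$.

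\emph{Unconditioning.} Each conditional statement holds for almost every parent trajectory. Integrating over parent trajectories gives unconditional almost-sure convergence, which is the same averaging argument as at the end of \lemmaref{lemma:first_supercritical_type}. Almost-sure convergence then implies convergence in distribution.

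\emph{Main obstacle.} The hardest part is checking that each branch of the trichotomy lines up exactly with the updates of $\delta^\pi_{v_\ell}$ and $r^\pi_{v_\ell}$. The delicate points are the indicator term in \eref{eq:r} when $\delta=0$, and the transition case where $\delta_{\ell-1}=0$ and $\lambda_\ell>0$. In that case $r_\ell$ drops to $1$ and $\delta=0$ in the lemma. I will check this case carefully against case (ii) of \lemmaref{lemma:large_time}, which allows an arbitrary $r\ge1$.
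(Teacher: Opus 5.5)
Your proof is correct and follows the paper's argument step for step. It proceeds by induction on $\ell$, using the martingale limit or almost-sure extinction at the root, case (i) of \lemmaref{lemma:large_time} when $\delta_{\ell-1}>0$, case (ii) when $\delta_{\ell-1}=0<\lambda_\ell$, and almost-sure extinction of the non-supercritical, immigration-free prefix otherwise. Your explicit checks of how $\delta_\ell$ and $r_\ell$ update in each branch, together with the clone-by-clone extinction argument, spell out details that the paper leaves implicit.
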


\begin{proof}
We prove by induction on $\ell$. For $\ell=0$, the result is the standard one-type birth-death limit: if $\lambda_0\le0$, the process becomes extinct in finite time almost surely, while if $\lambda_0>0$,
\[
e^{-\lambda_0t}Z_0(t)\overset{a.s.}{\longrightarrow}\tilde Z_0.
\]
Assume the claim holds for $v_{\ell-1}$. Conditional on the parent trajectory, mutations into $v_\ell$ form an inhomogeneous Poisson process with intensity
\[
f(u)=\mu_\ell Z_{\ell-1}(u).
\]
By the inductive hypothesis,
\[
t^{-\left(r_{\ell-1}-1\right)}e^{-\delta_{\ell-1}t}f(t)
\overset{a.s.}{\longrightarrow}
\mu_\ell\tilde Z_{\ell-1}.
\]
If $\delta_{\ell-1}>0$, then applying case \textup{(i)} of \lemmaref{lemma:large_time} conditionally on the parent path gives
\[
w_\ell(t)Z_\ell(t)\overset{a.s.}{\longrightarrow}\tilde Z_\ell.
\]
It remains to consider $\delta_{\ell-1}=0$. If $\lambda_\ell>0$, then case \textup{(ii)} of \lemmaref{lemma:large_time} gives
\[
e^{-\lambda_\ell t}Z_\ell(t)
=w_\ell(t)Z_\ell(t)
\overset{a.s.}{\longrightarrow}\tilde Z_\ell.
\]
If $\lambda_\ell\le0$, then all types $v_0,\dots,v_\ell$ are non-supercritical and the path has no immigration. Therefore, the finite multi-type branching process becomes extinct in finite time almost surely, and
\[
w_\ell(t)Z_\ell(t)\overset{a.s.}{\longrightarrow}0.
\]
This completes the induction, and convergence in distribution follows from almost-sure convergence.
\end{proof}

\begin{proposition}
\label{prop:Z0_path_limit}
Let $\bz(t)$ be a $Z^0$-rooted path of length $L$.
For each $0\le\ell\le L$, there exists a nonnegative random variable $\tilde Z_\ell$ such that
\[
w_\ell(t)Z_\ell(t)\overset{d}{\longrightarrow}\tilde Z_\ell.
\]
If $\delta_\ell>0$, then the convergence is almost sure.
\end{proposition}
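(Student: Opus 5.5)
We argue by induction on $\ell$ along the path, splitting according to whether the running maximum $\delta_\ell$ is positive or zero and, in the positive case, whether the first supercritical type has already appeared before $v_\ell$.

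\textbf{Non-supercritical prefix.} Suppose first that $\lambda_j\le0$ for all $0\le j\le\ell$, so $\delta_\ell=0$. If all these types are strictly subcritical, then $r_\ell=1$ and $w_\ell\equiv1$, and \lemmaref{lemma:subcritical_path_Z0} gives $Z_\ell(t)\overset{d}{\to}\tilde Z_\ell$ directly. If at least one type is critical, then $w_\ell(t)=t^{-(r_\ell-1)}$, and \lemmaref{lemma:non_supercritical_path} gives the distributional limit $Y_\ell=:\tilde Z_\ell$. At $\ell=0$ the root is $Z^0_0$, and both cases reduce to \lemmaref{lemma:1type}; if $\lambda_0>0$, part~(iii) of that lemma gives almost-sure convergence.

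\textbf{Supercritical case.} Now let $k\le\ell$ be the index of the first supercritical type, so $\delta_j>0$ for all $j\ge k$. At $j=k$, \lemmaref{lemma:first_supercritical_type} (or \lemmaref{lemma:1type}(iii) when $k=0$) gives $e^{-\lambda_k t}Z_k(t)\to\tilde Z_k$ almost surely. Since $\lambda_k=\delta_k$ is attained for the first time, we have $r_k=1$, so this is exactly $w_k(t)Z_k(t)\to\tilde Z_k$ almost surely. For $j>k$, we induct as in \propref{prop:Z1_path_limit}. Conditionally on the parent trajectory, mutations into $v_j$ form a Poisson process with intensity $f(u)=\mu_jZ_{j-1}(u)$, and the inductive hypothesis gives
\[
t^{-(r_{j-1}-1)}e^{-\delta_{j-1}t}f(t)\to\mu_j\tilde Z_{j-1}
\]
almost surely. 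Case~(i) of \lemmaref{lemma:large_time}, applied with $\delta=\delta_{j-1}>0$, $r=r_{j-1}$ and $x=\mu_j\tilde Z_{j-1}$, yields almost-sure convergence with the following scalings:
\begin{itemize}
\item if $\delta_{j-1}>\lambda_j$, the scaling is $t^{-(r_{j-1}-1)}e^{-\delta_{j-1}t}$;
\item if $\delta_{j-1}=\lambda_j$, the scaling is $t^{-r_{j-1}}e^{-\delta_{j-1}t}$;
\item if $\delta_{j-1}<\lambda_j$, the scaling is $e^{-\lambda_jt}$.
\end{itemize}
By \eref{eq:delta} and \eref{eq:r}, these are exactly $w_j(t)$ in the three cases. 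The counter $r$ increments precisely when $\lambda_j=\delta_{j-1}>0$, and resets to $1$ when a strictly larger rate appears; the indicator term in \eref{eq:r} is inactive because $\delta>0$. The lemma requires $f$ to be nonnegative and c\`adl\`ag, which holds pathwise for a counting process. The conditional almost-sure statement holds for almost every parent path, so it integrates to unconditional almost-sure convergence, as in \lemmaref{lemma:first_supercritical_type}.

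\textbf{Main obstacle.} The delicate point is the hand-off at the first supercritical type $k\ge1$. There, \lemmaref{lemma:large_time} cannot be applied with the parent's scaling, because the parent limit is only distributional (or the parent scaling is polynomial with $\delta_{k-1}=0$). This is why \lemmaref{lemma:first_supercritical_type} must be used instead; its integrability hypothesis relies on the polynomial moment bounds of \propref{prop:uniform_mgf}. Once almost-sure convergence holds at $v_k$, every later step stays in case~(i) of \lemmaref{lemma:large_time}, since $\delta_{j-1}\ge\delta_k>0$. This gives almost-sure convergence whenever $\delta_\ell>0$, and distributional convergence in all cases.
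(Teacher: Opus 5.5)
Your proposal is correct and follows the paper's proof essentially step for step: the same split on $\delta_\ell$, \lemmaref{lemma:subcritical_path_Z0} or \lemmaref{lemma:non_supercritical_path} when $\delta_\ell=0$, and \lemmaref{lemma:1type} or \lemmaref{lemma:first_supercritical_type} at the first supercritical type followed by case~(i) of \lemmaref{lemma:large_time}. Your explicit check that the three scalings coincide with $w_j(t)$ is a detail the paper leaves implicit; one small correction to your ``main obstacle'' remark is that case~(ii) of \lemmaref{lemma:large_time} does allow $\delta\le0$, so the real obstruction at the hand-off is only that the non-supercritical parent does not converge pathwise, not that its scaling is polynomial.
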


\begin{proof}
Fix $0\le\ell\le L$. First suppose $\delta_\ell>0$, and let $v_k$ be the first supercritical type in $\pi_{:\ell}$. 
By \lemmaref{lemma:1type} for $k=0$ and \lemmaref{lemma:first_supercritical_type} for $k>0$,
\[
w_k(t)Z_k(t)
=e^{-\lambda_kt}Z_k(t)
\overset{a.s.}{\longrightarrow}\tilde Z_k.
\]
For $k<j\le\ell$, suppose inductively that $w_{j-1}(t)Z_{j-1}(t)\overset{a.s.}{\longrightarrow} \tilde Z_{j-1}$. 
Conditional on the parent trajectory, mutations into $v_j$ have intensity $f(u)=\mu_jZ_{j-1}(u)$, so
\[
t^{-(r_{j-1}-1)}e^{-\delta_{j-1}t}f(t)
\overset{a.s.}{\longrightarrow}
\mu_j\tilde Z_{j-1}.
\]
Since $\delta_{j-1}>0$, case \textup{(i)} of \lemmaref{lemma:large_time} gives $w_j(t)Z_j(t)\overset{a.s.}{\longrightarrow}\tilde Z_j$, which completes the induction.

If $\delta_\ell=0$, every type in $\pi_{:\ell}$ is non-supercritical, and $w_\ell(t)=t^{-(r_\ell-1)}$. 
If $r_\ell=1$, the conclusion follows from \lemmaref{lemma:subcritical_path_Z0}. 
If $r_\ell\ge2$, it follows from \lemmaref{lemma:non_supercritical_path}.
\end{proof}

\subsection{Laplace transforms along a path}
\label{app:lt_path_decomp}

Fix $t > 0$ and $s > 0$. Conditioned on the parent trajectory $\{Z_{\ell-1}(u)\}_{u\le t}$, the Laplace transform of $Z_\ell(t)$ satisfies
\begin{equation}
\label{eq:general_cond_LP}
\E\left[\exp\left(-s\, w_\ell(t)Z_\ell(t)\right)\,\middle|\,\left\{Z_{\ell-1}(u)\right\}_{u\le t}\right]
=
\exp\left(-\mu_\ell\int_0^t Z_{\ell-1}(u)H_t(s;u)\rmd u\right),
\end{equation}
where
\[
H_t(s;u)=1-\mathcal L_\ell^1\left(sw_\ell(t);t-u\right),
\]
and $\mathcal L_\ell^1$ is the Laplace transform of the $Z_\ell^1$ process defined in \eref{eq:L_1}.
To express the transform in terms of the large-time parent population, we decompose the parent trajectory around the following time-independent variable:
\[
X_{\ell-1}=
\begin{dcases}
\bar z_{\ell-1}, & Z_0=Z_0^0,\delta_{\ell-1}=0,r_{\ell-1}=1,\\
\tilde Z_{\ell-1}, & \text{otherwise}.
\end{dcases}
\]
In the first case, $Z_{\ell-1}(t)$ is ergodic (\lemmaref{lemma:subcritical_path_Z0}) and converges to a stationary distribution with nonvanishing fluctuations. 
Therefore, we use the stationary mean $\bar z_{\ell-1}$ (\eref{eq:subcritical_coordinate_average}). 
In the second case, when the parent converges as in \eref{eq:large_time_limit}, we use an independent random variable $\tilde Z_{\ell-1}$ with the corresponding limiting distribution.
Next, define
\begin{align}
I_t(s)&=\int_0^t w_{\ell-1}(u)^{-1}H_t(s;u)\rmd u,\\
R_t(s)&=\int_0^t\left(Z_{\ell-1}(u)-w_{\ell-1}(u)^{-1}X_{\ell-1}\right)H_t(s;u)\rmd u,\\
M_t(s)&=R_t(s)-\E\left[R_t(s)\right].
\end{align}
Taking expectations of \eref{eq:general_cond_LP} leads to the following decomposition:
\begin{equation}
\label{eq:LP_decomp}
    \E \left[ \exp\left(-s \, w_{\ell}(t)Z_\ell(t)\right) \right] = \exp(-\mu_{\ell} \E[R_t(s)]) \;
        \E \left[\exp(-\mu_{\ell}  I_t(s) X_{\ell - 1}) \;\exp(-\mu_{\ell} M_t(s))\right].
\end{equation}
By \thmref{thm:second_order_error} in \aref{app:bounded_mgf}, for each fixed scaled $s>0$, uniformly in $t>0$,
\begin{equation}
\label{eq:second_order_error}
    \E\left[\exp\left(-s\,w_{\ell}(t)Z_\ell(t)\right)\right]
    =\exp\left(-\mu_{\ell}\,\E\left[R_t(s)\right]\right)\,
    \E\left[\exp\left(-\mu_{\ell}\,I_t(s)\,X_{\ell - 1}\right)\right]\left(1 +\mathcal{O}(\mu_{\ell}^2)\right).
\end{equation}
Therefore, dropping the $M_t(s)$ term introduces a relative error of order $\mathcal{O}(\mu_{\ell}^2)$ for each fixed scaled argument $s>0$.

Since $I_t(s)\to I_\infty(s)$ (\propref{prop:It}) and $\E\left[R_t(s)\right]\to R_\infty(s)$ (\propref{prop:Rt}), for large $t$ and small $\mu_\ell$,
\begin{equation}
\label{eq:scaled_LT_limit}
    \E\left[\exp\left(-s\,w_{\ell}(t)Z_\ell(t)\right)\right]  \approx  \exp\left(-\mu_{\ell}\,R_{\infty}(s)\right)\,
\E\left[\exp\left(-\mu_{\ell}\,I_{\infty}(s)\,X_{\ell - 1}\right)\right].
\end{equation}
The corresponding formulas for $I_\infty$ and $R_\infty$ are provided in \aref{app:lt_exact}.

\subsection{Bounded moment generating functions and finite moments}
\label{app:bounded_mgf}

The results in this section are organized as follows.
We first derive bounds for the elementary one-type moment generating function in \lemmaref{lemma:elementary_kernel_bounds}.
Then, we use these bounds to prove uniform boundedness of the moment generating functions of $Z_\ell(t)$ and $w_\ell(t)Z_\ell(t)$ in \propref{prop:uniform_mgf}.
The corollaries \corref{cor:all_moments_bounded} and \corref{cor:limit_mgf_bounded} establish uniform moment bounds and transfer the MGF bound to the limiting variable $\tilde Z_\ell$.
Finally, we prove uniform boundedness of the moment generating functions of $R_t(s)$ and $M_t(s)$ in \propref{prop:Rt_uniform_mgf} and \corref{cor:Mt_uniform_mgf}, which is used to control the second-order error in \thmref{thm:second_order_error}.
Note that it suffices to prove the bounds for nonnegative $\theta$, since for any nonnegative random variable $Y$ and any $\theta\in\mathbb R$, $e^{\theta Y}\le e^{|\theta|Y}$.

\begin{lemma}
\label{lemma:elementary_kernel_bounds}
Fix $\varepsilon>0$ and an edge $v_{\ell-1}\to v_\ell$.
Let $t$ be the observation time of a clone founded at the mutation time $u\in[0,t]$ with age $h=t-u$.
Let
\[
\Psi_\ell(h,\theta)=\E\left[e^{\theta Z_\ell^1(h)}\right]-1.
\]
There exist constants $\theta_0>0$ and $K\in(0,\infty)$ such that
\begin{equation}
\label{eq:elementary_kernel_small_h_bound}
\Psi_\ell(h,\theta)\le K\theta
\quad\text{for }0\le h\le\varepsilon,\ 0\le\theta\le\theta_0,
\end{equation}
\begin{equation}
\label{eq:elementary_kernel_scaled_bound}
\Psi_\ell(t-u,\theta w_\ell(t))
\le K\theta e^{\lambda_\ell(t-u)}w_\ell(t)
\quad\text{for }t\ge\varepsilon,\ 0\le u\le t,\ 0\le\theta\le\theta_0,
\end{equation}
and
\begin{equation}
\label{eq:elementary_kernel_integral_bound}
\sup_{t\ge\varepsilon}\int_\varepsilon^t
\frac{e^{\lambda_\ell(t-u)}w_\ell(t)}{w_{\ell-1}(u)}\rmd u\le K.
\end{equation}
\end{lemma}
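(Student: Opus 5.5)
The plan is to work directly with the explicit law of the single-cell process. From \eref{eq:L_1}, $Z_\ell^1(h)$ is zero with probability $a(h)$ and otherwise geometric on $\{1,2,\dots\}$ with parameter $1-b(h)$. This gives
\[
\E\bigl[e^{\theta Z_\ell^1(h)}\bigr]=a(h)+\bigl(1-a(h)\bigr)\frac{(1-b(h))e^{\theta}}{1-b(h)e^{\theta}},
\]
which is finite whenever $b(h)e^\theta<1$. Subtracting $1$ and simplifying yields
\[
\Psi_\ell(h,\theta)=\bigl(1-a(h)\bigr)\frac{e^\theta-1}{1-b(h)e^\theta}.
\]
Two elementary facts then drive everything. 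First, $1-a(h)=\E[Z^1_\ell(h)](1-b(h))$, with $\E[Z^1_\ell(h)]=e^{\lambda_\ell h}$. Second, $1-b(h)$ is bounded below by a constant multiple of $\min\{1,e^{-\lambda_\ell h}\}$ (for $\lambda_\ell>0$), by $1/(1+\alpha_\ell h)$ (for $\lambda_\ell=0$), and by a positive constant (for $\lambda_\ell<0$). All three bounds follow from the closed forms of $a$ and $b$.

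For \eref{eq:elementary_kernel_small_h_bound}, note that on $0\le h\le\varepsilon$ we have $b(h)\le b(\varepsilon)<1$. Choosing $\theta_0$ with $b(\varepsilon)e^{\theta_0}\le(1+b(\varepsilon))/2$ makes the denominator bounded below. Then $e^\theta-1\le e^{\theta_0}\theta$ and $1-a\le1$ give $\Psi_\ell\le K\theta$.

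For \eref{eq:elementary_kernel_scaled_bound}, I substitute $\theta w_\ell(t)$ and write
\[
\Psi_\ell=e^{\lambda_\ell h}(1-b)\frac{e^{\theta w_\ell}-1}{1-b+b(1-e^{\theta w_\ell})}.
\]
It suffices to show $1-b(h)\ge 2b(h)\bigl(e^{\theta w_\ell(t)}-1\bigr)$ uniformly, since then the ratio is at most $2(e^{\theta w_\ell}-1)\lesssim\theta w_\ell(t)$. This is the main obstacle. It requires $w_\ell(t)\lesssim 1-b(t-u)$ for all $u\in[0,t]$. Because $1-b$ is decreasing in $h$, the worst case is $h=t$, so the requirement reduces to $w_\ell(t)\lesssim 1-b(t)$.

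I would check this requirement case by case, using $\delta_\ell\ge\max\{\lambda_\ell,0\}$.
\begin{itemize}
\item If $\lambda_\ell>0$, then $w_\ell(t)\le t^{-(r_\ell-1)}e^{-\lambda_\ell t}$ while $1-b(t)\asymp e^{-\lambda_\ell t}$.
\item If $\lambda_\ell=0$, then $\delta_\ell=0$ and $r_\ell\ge2$, so $w_\ell(t)\le t^{-1}\asymp 1-b(t)$.
\item If $\lambda_\ell<0$, then $w_\ell\le1$ is harmless because $1-b$ is bounded below.
\end{itemize}
For $t\ge\varepsilon$ the polynomial factor is bounded, so shrinking $\theta_0$ uniformly closes the estimate.

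For \eref{eq:elementary_kernel_integral_bound}, I substitute $h=t-u$ and compare exponents in
\[
w_\ell(t)/w_{\ell-1}(u)=\frac{u^{r_{\ell-1}-1}}{t^{r_\ell-1}}\,e^{\delta_{\ell-1}u-\delta_\ell t}.
\]
I split into three cases.
\begin{itemize}
\item If $\delta_\ell=\delta_{\ell-1}>\lambda_\ell$, then $r_\ell=r_{\ell-1}$ and the integrand is at most $e^{-(\delta_\ell-\lambda_\ell)(t-u)}$, whose integral is bounded.
\item If $\delta_\ell=\delta_{\ell-1}=\lambda_\ell$, then $r_\ell=r_{\ell-1}+1$ (adding the indicator convention when $\delta=0$). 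The integrand is $u^{r_\ell-2}/t^{r_\ell-1}$, which integrates to at most $1/(r_\ell-1)$.
\item If $\delta_\ell=\lambda_\ell>\delta_{\ell-1}$, then $r_\ell=1$ and the integrand is at most $u^{r_{\ell-1}-1}e^{-(\lambda_\ell-\delta_{\ell-1})u}$, which is integrable on $[0,\infty)$.
\end{itemize}
Care is needed at the $\delta=0$ boundary, where $r$ carries the extra indicator. Taking the maximum of the constants obtained in the three bounds gives $K$.
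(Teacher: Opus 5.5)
Your proposal is correct and follows the paper's proof: your $b(h)/(1-b(h))$ is exactly the paper's $\rho_\ell(h)$, so requiring $w_\ell(t)/(1-b(t))$ to stay bounded for $t\ge\varepsilon$ is the paper's uniform bound $\rho_\ell(t-u)w_\ell(t)\le C_{\varepsilon,\ell}$ that keeps the MGF denominator away from zero, and your three-case integral estimate is the paper's. One small slip is that $\lambda_\ell=0$ does not force $\delta_\ell=0$, since an ancestor on the path may be supercritical; in that subcase $w_\ell(t)$ decays exponentially, so the needed bound $w_\ell(t)\le C(1+\alpha_\ell t)^{-1}$ on $t\ge\varepsilon$ still holds trivially.
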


\begin{proof}
For a clone of age $h$, by the elementary one-type formula \eref{eq:L_1}, we have
\begin{equation}
\label{eq:shifted_mgf}
\Psi_\ell(h,\theta)=e^{\lambda_\ell h}
\frac{e^\theta-1}{1-\rho_\ell(h)(e^\theta-1)},
\qquad
\rho_\ell(h)=\alpha_\ell\int_0^h e^{\lambda_\ell s}\rmd s,
\end{equation}
whenever $\rho_\ell(h)(e^\theta-1)<1$.

Define
\[
C_{\varepsilon,\ell}:=
\begin{dcases}
\dfrac{\alpha_\ell\varepsilon^{-(r_\ell-1)}}{\delta_\ell},
& \delta_\ell>0,\\[2ex]
\dfrac{\alpha_\ell\varepsilon^{-(r_\ell-1)}}{-\lambda_\ell},
& \delta_\ell=0,\ \lambda_\ell<0,\\[2ex]
\alpha_\ell\varepsilon^{2-r_\ell},
& \delta_\ell=\lambda_\ell=0,
\end{dcases}
\]
where $r_\ell\ge2$ for $\delta_\ell=\lambda_\ell=0$ by definition \eref{eq:r}.
Since $\delta_\ell\ge\max\{0,\lambda_\ell\}$, we have $\rho_\ell(t-u)w_\ell(t)\le C_{\varepsilon,\ell}$ for all $t\ge\varepsilon$ and $0\le u\le t$.
The function $\rho_\ell$ is non-decreasing, whereas $w_\ell(t)=t^{-(r_\ell-1)}e^{-\delta_\ell t}$ is non-increasing. Define
\[
\theta_0:=\min\left\{
\frac{1}{1+w_\ell(\varepsilon)},
\frac{1}{2e\left(1+\max\{\rho_\ell(\varepsilon),C_{\varepsilon,\ell}\}\right)}
\right\}>0.
\]
For $0\le h\le\varepsilon$, $t\ge\varepsilon$, $0\le u\le t$, and $0\le\theta\le\theta_0$, we have $\theta<1$ and $\theta w_\ell(t)\le\theta_0w_\ell(\varepsilon)<1$.
Applying $e^\theta-1\le e\theta$ to both arguments and using monotonicity and $\rho_\ell(t-u)w_\ell(t)\le C_{\varepsilon,\ell}$, we have
\[
\max\left\{
\rho_\ell(h)(e^\theta-1),
\rho_\ell(t-u)\bigl(e^{\theta w_\ell(t)}-1\bigr)
\right\}
\le e\theta_0\max\{\rho_\ell(\varepsilon),C_{\varepsilon,\ell}\}
\le\frac12.
\]
By \eref{eq:shifted_mgf}, we have
\[
\begin{aligned}
\Psi_\ell(h,\theta)
&\le2e^{1+\max\{\lambda_\ell,0\}\varepsilon}\theta,\\
\Psi_\ell(t-u,\theta w_\ell(t))
&\le2e^{1+\max\{\lambda_\ell,0\}\varepsilon}
\theta e^{\lambda_\ell(t-u)}w_\ell(t).
\end{aligned}
\]

Finally, $r_{\ell-1}\ge1$. By the definitions \eref{eq:delta}, \eref{eq:r}, and \eref{eq:large_time_weight}, we have
\[
\int_\varepsilon^t
\frac{e^{\lambda_\ell(t-u)}w_\ell(t)}{w_{\ell-1}(u)}\rmd u
\le
\begin{dcases}
\dfrac{(r_{\ell-1}-1)!}
{(\lambda_\ell-\delta_{\ell-1})^{r_{\ell-1}}},
& \lambda_\ell>\delta_{\ell-1},\\[2ex]
\dfrac{1}{r_{\ell-1}},
& \lambda_\ell=\delta_{\ell-1},\\[2ex]
\dfrac{1}{\delta_{\ell-1}-\lambda_\ell},
& \lambda_\ell<\delta_{\ell-1}
\end{dcases}.
\]
Define
\[
K:=\max\left\{
2e^{1+\max\{\lambda_\ell,0\}\varepsilon},
\sup_{t\ge\varepsilon}\int_\varepsilon^t
\frac{e^{\lambda_\ell(t-u)}w_\ell(t)}{w_{\ell-1}(u)}\rmd u
\right\}<\infty.
\]
Since $2e\le2e^{1+\max\{\lambda_\ell,0\}\varepsilon}\le K$, we obtain all required bounds.
\end{proof}

\begin{proposition}
\label{prop:uniform_mgf}
Fix $\varepsilon>0$. For every $0\le\ell\le L$, there exist constants $\eta_\ell>0$ and $C_\ell<\infty$ such that for all $|\theta|\le\eta_\ell$,
\begin{subequations}
\label{eq:uniform_mgf}
\par\noindent
\begin{minipage}{0.49\linewidth}
\begin{equation}
\sup_{0\le t\le\varepsilon}\E\left[e^{\theta Z_\ell(t)}\right]
\le C_\ell,
\label{eq:uniform_mgf_early}
\end{equation}
\end{minipage}\hfill
\begin{minipage}{0.49\linewidth}
\begin{equation}
\sup_{t\ge\varepsilon}\E\left[e^{\theta w_\ell(t)Z_\ell(t)}\right]
\le C_\ell.
\label{eq:uniform_mgf_late}
\end{equation}
\end{minipage}\par
\end{subequations}
\end{proposition}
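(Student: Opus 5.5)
We prove both bounds by induction on $\ell$, conditioning on the parent trajectory and using the Poisson representation. As noted before \lemmaref{lemma:elementary_kernel_bounds}, it suffices to take $0\le\theta\le\eta_\ell$, since $e^{\theta Y}\le e^{|\theta|Y}$ for $Y\ge0$.

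\textbf{Base case $\ell=0$.} For a $Z^1$-rooted path we have $Z_0=Z_0^1$. For a $Z^0$-rooted path we have $Z_0=Z_0^0$, which is Negative-Binomial$(\rho,p(t))$ by the proof of \lemmaref{lemma:1type}. In both cases the MGF is explicit from \eref{eq:L_0}--\eref{eq:L_1}. On $[0,\varepsilon]$ it is continuous in $t$ and finite for $\theta$ small, which gives \eref{eq:uniform_mgf_early}. For $t\ge\varepsilon$ we use the closed forms. For $Z^1$, \eref{eq:shifted_mgf} with $\theta w_0(t)$ shows $1+\Psi_0(t,\theta w_0(t))\le 1+K\theta e^{\lambda_0t}w_0(t)$, which is bounded because $e^{\lambda_0 t}w_0(t)\le \varepsilon^{-(r_0-1)}$. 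For $Z^0$,
\[
\E\bigl[e^{\theta w_0Z_0}\bigr]=\Bigl[p/\bigl(1-(1-p)(e^{\theta w_0}-1)/p\cdot p\bigr)\Bigr]^{\rho}
\]
is controlled once $(e^{\theta w_0(t)}-1)/p(t)$ stays small. This holds because $w_0(t)/p(t)$ is bounded on $[\varepsilon,\infty)$ in each of the supercritical, critical and subcritical cases: $p\sim e^{-\lambda t}$, $1/t$, and const respectively, matching $w_0$.

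\textbf{Inductive step.} Assume the statement for $\ell-1$. Conditionally on $\{Z_{\ell-1}(u)\}$, the Poisson construction gives
\[
\E\bigl[e^{\theta' Z_\ell(t)}\bigr]=\E\Bigl[\exp\Bigl(\mu_\ell\int_0^t Z_{\ell-1}(u)\Psi_\ell(t-u,\theta')\rmd u\Bigr)\Bigr],
\]
with $\theta'=\theta$ or $\theta'=\theta w_\ell(t)$.

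For $t\le\varepsilon$, we apply \eref{eq:elementary_kernel_small_h_bound}. Jensen in $u$ (normalised measure on $[0,t]$) then gives the bound $\sup_{u\le\varepsilon}\E[e^{\mu_\ell K\theta\varepsilon Z_{\ell-1}(u)}]$, which is at most $C_{\ell-1}$ once $\mu_\ell K\theta\varepsilon\le\eta_{\ell-1}$.

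For $t\ge\varepsilon$, we split the integral at $u=\varepsilon$ and combine the two pieces with Hölder (Cauchy--Schwarz).
\begin{itemize}
\item On $[0,\varepsilon]$, the kernel is bounded by $K\theta e^{\lambda_\ell(t-u)}w_\ell(t)$ via \eref{eq:elementary_kernel_scaled_bound}. This is uniformly bounded by a constant times $\theta$, again since $e^{\lambda_\ell t}w_\ell(t)$ is bounded on $[\varepsilon,\infty)$ because $\delta_\ell\ge\lambda_\ell$. We then use the early bound for $Z_{\ell-1}$.
\item On $[\varepsilon,t]$, we write $Z_{\ell-1}(u)=w_{\ell-1}(u)^{-1}\cdot w_{\ell-1}(u)Z_{\ell-1}(u)$. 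With the weight
\[
g_t(u)=\frac{e^{\lambda_\ell(t-u)}w_\ell(t)}{w_{\ell-1}(u)},
\]
whose total mass $\int_\varepsilon^t g_t\,\rmd u\le K$ by \eref{eq:elementary_kernel_integral_bound}, Jensen's inequality with respect to the probability measure $g_t/\!\int g_t$ followed by Tonelli gives
\[
\E\Bigl[\exp\Bigl(\mu_\ell K\theta\int_\varepsilon^t g_t(u)\, w_{\ell-1}(u)Z_{\ell-1}(u)\rmd u\Bigr)\Bigr]\le\sup_{u\ge\varepsilon}\E\bigl[e^{\mu_\ell K^2\theta\, w_{\ell-1}(u)Z_{\ell-1}(u)}\bigr]\le C_{\ell-1},
\]
provided $\mu_\ell K^2\theta\le\eta_{\ell-1}$.
\end{itemize}
The unscaled case $\theta'=\theta$ on $t\le\varepsilon$ is exactly the early bound treated above. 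Choosing $\eta_\ell$ small enough for all these constraints (including $\eta_\ell\le\theta_0$ and the Hölder exponents) closes the induction.

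\textbf{Main obstacle.} The delicate point is the $Z^0$-rooted base case in the non-supercritical regimes. There the weight $w_0$ and the law of $Z_0^0$ must be matched carefully; in the subcritical case $w_0\equiv1$ and the NB parameter tends to $-\lambda/\beta$, so $\eta_0$ must be kept below $-\log(1+\lambda/\beta)$. The same issue recurs in ensuring that \eref{eq:elementary_kernel_integral_bound} is applied with the correct $w_{\ell-1}$ when $\delta_{\ell-1}=0$. I would check these regimes case by case, using the explicit formulas.
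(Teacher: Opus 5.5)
Your proposal is correct and follows the paper's proof essentially step for step: the same induction, conditional Poisson formula, split of the integral at $u=\varepsilon$ combined by Cauchy--Schwarz, and Jensen against the kernel of mass at most $K$ from \eref{eq:elementary_kernel_integral_bound}. In the base case the paper likewise uses $\inf_{t\ge\varepsilon}p(t)/w_0(t)>0$ to get $1-(1-p(t))e^{\theta w_0(t)}\ge p(t)/2$, but handles $Z_0^1$ through $1-b(t)=p(t)$ rather than your (equally valid) use of \eref{eq:elementary_kernel_scaled_bound} at $\ell=0$; one small slip is that your displayed $Z^0$ transform should read $\bigl[p/\bigl(p-(1-p)(e^{\theta w_0}-1)\bigr)\bigr]^{\rho}$, though the smallness criterion you draw from it is the right one.
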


\begin{proof}
Since $Z_\ell(t)\ge0$, it suffices to consider $\theta\ge0$. We prove
\eref{eq:uniform_mgf} by induction on $\ell$.

For $\ell=0$, by \eref{eq:L_0} and \eref{eq:large_time_weight}, $p(t)$ and
$w_0(t)$ are non-increasing and
\[
\inf_{t\ge\varepsilon}\frac{p(t)}{w_0(t)}>0.
\]
Define
\[
\eta_0:=\min\left\{
\log(1+p(\varepsilon)/2),\frac{1}{w_0(\varepsilon)},
\frac{1}{2e}\inf_{t\ge\varepsilon}\frac{p(t)}{w_0(t)}
\right\}\in(0,1).
\]
For $0\le\theta\le\eta_0$, we have $0\le\theta w_0(t)\le1$ for
$t\ge\varepsilon$. Using $e^\theta-1\le e\theta$ for $0\le\theta\le1$,
we have
\[
\begin{aligned}
1-(1-p(t))e^\theta
&\ge p(t)-(e^\theta-1)
\ge\frac{p(t)}2,
&&0\le t\le\varepsilon,\\
1-(1-p(t))e^{\theta w_0(t)}
&\ge p(t)-e\theta w_0(t)
\ge\frac{p(t)}2,
&&t\ge\varepsilon.
\end{aligned}
\]
By \eref{eq:L_0} and \eref{eq:L_1}, with $1-b(t)=p(t)$, we have
\[
\begin{aligned}
\sup_{0\le t\le\varepsilon}\E\left[e^{\theta Z_0^0(t)}\right]
&\le2^{\nu_0/\alpha_0},
&
\sup_{0\le t\le\varepsilon}\E\left[e^{\theta Z_0^1(t)}\right]
&\le1+2e,\\
\sup_{t\ge\varepsilon}\E\left[e^{\theta w_0(t)Z_0^0(t)}\right]
&\le2^{\nu_0/\alpha_0},
&
\sup_{t\ge\varepsilon}\E\left[e^{\theta w_0(t)Z_0^1(t)}\right]
&\le1+2e.
\end{aligned}
\]
Thus, \eref{eq:uniform_mgf} holds at level $0$ with
\[
C_0:=\max\left\{2^{\nu_0/\alpha_0},1+2e\right\}<\infty.
\]

Fix $1\le\ell\le L$ and assume that \eref{eq:uniform_mgf} holds at level
$\ell-1$ with $\eta_{\ell-1}$ and $C_{\ell-1}$. Let $\theta_{0,\ell}$ and
$K_\ell$ be the constants from \lemmaref{lemma:elementary_kernel_bounds}, and
define
\[
\eta_\ell:=\min\left\{
\theta_{0,\ell},
\frac{\eta_{\ell-1}}{
\max\left\{
K_\ell\mu_\ell\varepsilon,\,
2\mu_\ell K_\ell e^{\max\{\lambda_\ell,0\}\varepsilon}
w_\ell(\varepsilon)\varepsilon,\,
2\mu_\ell K_\ell^2
\right\}}
\right\}>0.
\]
Fix $0\le\theta\le\eta_\ell$. By the exponential formula for the
mutation-time Poisson process, we have
\begin{equation}
\label{eq:conditional_clone_mgf}
\E\left[e^{\theta Z_\ell(t)}\middle|
\sigma\{Z_{\ell-1}(u):0\le u\le t\}\right]
=\exp\left(
\mu_\ell\int_0^t Z_{\ell-1}(u)\Psi_\ell(t-u,\theta)\rmd u
\right).
\end{equation}
For $0<t\le\varepsilon$, every clone has age at most $\varepsilon$.
By \eref{eq:elementary_kernel_small_h_bound}, Jensen's inequality, and
\eref{eq:uniform_mgf_early} at level $\ell-1$, we have
\[
\E\left[e^{\theta Z_\ell(t)}\right]
\le
\sup_{0\le u\le\varepsilon}
\E\left[e^{K_\ell\mu_\ell\theta\varepsilon Z_{\ell-1}(u)}\right]
\le C_{\ell-1}.
\]
Since $Z_\ell(0)=0$,
\eref{eq:uniform_mgf_early} holds at level $\ell$ with bound $C_{\ell-1}$.

For $t\ge\varepsilon$, set
\[
A_t:=\int_0^\varepsilon Z_{\ell-1}(u)
\Psi_\ell(t-u,\theta w_\ell(t))\rmd u,
\qquad
B_t:=\int_\varepsilon^t Z_{\ell-1}(u)
\Psi_\ell(t-u,\theta w_\ell(t))\rmd u.
\]
By \eref{eq:conditional_clone_mgf} at $\theta w_\ell(t)$ and
Cauchy--Schwarz, we have
\[
\E\left[e^{\theta w_\ell(t)Z_\ell(t)}\right]
=\E\left[e^{\mu_\ell(A_t+B_t)}\right]
\le
\E[e^{2\mu_\ell A_t}]^{1/2}\E[e^{2\mu_\ell B_t}]^{1/2}.
\]
For $0\le u\le\varepsilon$,
\[
e^{\lambda_\ell(t-u)}w_\ell(t)
\le e^{\max\{\lambda_\ell,0\}\varepsilon}w_\ell(\varepsilon).
\]
By \eref{eq:elementary_kernel_scaled_bound}, Jensen's
inequality, and \eref{eq:uniform_mgf_early} at level $\ell-1$, we have
\[
\E[e^{2\mu_\ell A_t}]
\le
\sup_{0\le u\le\varepsilon}
\E\left[
e^{2\mu_\ell K_\ell e^{\max\{\lambda_\ell,0\}\varepsilon}
w_\ell(\varepsilon)\theta\varepsilon
Z_{\ell-1}(u)}
\right]
\le C_{\ell-1}.
\]

For $t>\varepsilon$, the kernel
$e^{\lambda_\ell(t-u)}w_\ell(t)/w_{\ell-1}(u)$ on $[\varepsilon,t]$
has mass at most $K_\ell$ by \eref{eq:elementary_kernel_integral_bound}.
By \eref{eq:elementary_kernel_scaled_bound}, Jensen's inequality for the
normalized kernel, and \eref{eq:uniform_mgf_late} at level $\ell-1$, we have
\[
\E[e^{2\mu_\ell B_t}]
\le\sup_{u\ge\varepsilon}\E\left[
e^{2\mu_\ell K_\ell^2\theta
w_{\ell-1}(u)Z_{\ell-1}(u)}
\right]
\le C_{\ell-1}.
\]
For $t=\varepsilon$, $B_t=0$. By Cauchy--Schwarz,
\eref{eq:uniform_mgf_late} also holds with bound $C_{\ell-1}$.
Setting $C_\ell:=C_{\ell-1}$ completes the induction.
\end{proof}

\begin{corollary}
\label{cor:all_moments_bounded}
Fix $\varepsilon>0$. For every $0\le\ell\le L$ and integer $m\ge1$,
there exists a constant $C_{\ell,m}<\infty$ such that
\begin{subequations}
\label{eq:uniform_moments}
\par\noindent
\begin{minipage}{0.49\linewidth}
\begin{equation}
\sup_{0\le t\le\varepsilon}\E\left[Z_\ell(t)^m\right]\le C_{\ell,m},
\label{eq:uniform_moments_early}
\end{equation}
\end{minipage}\hfill
\begin{minipage}{0.49\linewidth}
\begin{equation}
\sup_{t\ge\varepsilon}\E\left[\left(w_\ell(t)Z_\ell(t)\right)^m\right]
\le C_{\ell,m}.
\label{eq:uniform_moments_late}
\end{equation}
\end{minipage}\par
\end{subequations}
\end{corollary}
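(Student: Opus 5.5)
The bound follows directly from \propref{prop:uniform_mgf}, using the elementary inequality that a moment generating function finite at some positive argument controls every polynomial moment. Fix $\varepsilon>0$, $0\le\ell\le L$, and an integer $m\ge1$. Let $\eta_\ell>0$ and $C_\ell<\infty$ be the constants from \propref{prop:uniform_mgf}, and take $\theta=\eta_\ell$.

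For every $x\ge0$, the exponential series gives $e^{\eta_\ell x}\ge(\eta_\ell x)^m/m!$. Equivalently,
\[
x^m\le\frac{m!}{\eta_\ell^m}e^{\eta_\ell x},\qquad x\ge0 .
\]

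For the early-time bound \eref{eq:uniform_moments_early}, apply this inequality with $x=Z_\ell(t)\ge0$ and take expectations. Then \eref{eq:uniform_mgf_early} gives
\[
\sup_{0\le t\le\varepsilon}\E\left[Z_\ell(t)^m\right]\le\frac{m!}{\eta_\ell^m}C_\ell .
\]

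For the late-time bound \eref{eq:uniform_moments_late}, apply the same inequality with $x=w_\ell(t)Z_\ell(t)\ge0$, which is nonnegative because $w_\ell(t)>0$. Then \eref{eq:uniform_mgf_late} gives
\[
\sup_{t\ge\varepsilon}\E\left[\left(w_\ell(t)Z_\ell(t)\right)^m\right]\le\frac{m!}{\eta_\ell^m}C_\ell .
\]

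Setting $C_{\ell,m}:=m!\,C_\ell/\eta_\ell^m$ proves both bounds. There is no real obstacle here. The only point to check is that \propref{prop:uniform_mgf} allows the positive argument $\theta=\eta_\ell$, and it does, since it covers all $|\theta|\le\eta_\ell$.
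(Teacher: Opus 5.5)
Your proof is correct and follows the paper's own argument: take $\eta_\ell$ and $C_\ell$ from \propref{prop:uniform_mgf}, apply $y^m\le m!\,\eta_\ell^{-m}e^{\eta_\ell y}$ for $y\ge0$ to $Z_\ell(t)$ and to $w_\ell(t)Z_\ell(t)$, and conclude with $C_{\ell,m}=m!\,C_\ell\,\eta_\ell^{-m}$. Your check that $\theta=\eta_\ell$ lies in the admissible range $|\theta|\le\eta_\ell$ is the only point that needs verifying, and it holds.
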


\begin{proof}
Let $\eta_\ell$ and $C_\ell$ be as in \propref{prop:uniform_mgf}.
Since $y^m\le m!\eta_\ell^{-m}e^{\eta_\ell y}$ for $y\ge0$,
we may take $C_{\ell,m}:=m!C_\ell\eta_\ell^{-m}$.
\end{proof}

\begin{corollary}
\label{cor:limit_mgf_bounded}
For every $0\le\ell\le L$ such that
$w_\ell(t)Z_\ell(t)\overset{d}{\to}\tilde Z_\ell$ as $t\to\infty$,
there exist constants $\eta_\ell>0$ and $C_\ell<\infty$ such that
for all $|\theta|\le\eta_\ell$,
\begin{equation}
\label{eq:limit_mgf}
\E\left[e^{\theta\tilde Z_\ell}\right]\le C_\ell.
\end{equation}
In particular, $\E[\tilde Z_\ell^m]\le m!C_\ell\eta_\ell^{-m}<\infty$
for every integer $m\ge1$.
\end{corollary}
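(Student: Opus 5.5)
The plan is to transfer the uniform late-time bound \eref{eq:uniform_mgf_late} of \propref{prop:uniform_mgf} to the limit by Fatou's lemma along a Skorokhod coupling. Take $\eta_\ell$ and $C_\ell$ from \propref{prop:uniform_mgf}. It suffices to treat $0\le\theta\le\eta_\ell$, because $\tilde Z_\ell\ge0$ is a distributional limit of nonnegative variables and so $e^{\theta\tilde Z_\ell}\le e^{|\theta|\tilde Z_\ell}$.

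Fix such a $\theta$ and a sequence $t_n\to\infty$ with $t_n\ge\varepsilon$. By the Skorokhod representation theorem there are random variables $Y_n\overset{d}{=}w_\ell(t_n)Z_\ell(t_n)$ and $Y\overset{d}{=}\tilde Z_\ell$ on a common probability space with $Y_n\to Y$ almost surely. Continuity of $y\mapsto e^{\theta y}$ and Fatou's lemma then give
\[
\E\left[e^{\theta\tilde Z_\ell}\right]=\E\left[e^{\theta Y}\right]\le\liminf_{n\to\infty}\E\left[e^{\theta Y_n}\right]=\liminf_{n\to\infty}\E\left[e^{\theta w_\ell(t_n)Z_\ell(t_n)}\right]\le C_\ell .
\]
As an alternative that avoids coupling, one can note that for each $K$ the function $y\mapsto\min\{e^{\theta y},K\}$ is bounded and continuous. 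Weak convergence gives $\E[\min\{e^{\theta\tilde Z_\ell},K\}]\le C_\ell$, and letting $K\uparrow\infty$ with monotone convergence gives the same bound.

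For the moment statement, use $y^m\le m!\,\eta_\ell^{-m}e^{\eta_\ell y}$ for $y\ge0$, exactly as in \corref{cor:all_moments_bounded}. This yields $\E[\tilde Z_\ell^m]\le m!\,C_\ell\,\eta_\ell^{-m}$.

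No step is a real obstacle. The one point needing care is that \propref{prop:uniform_mgf} was proved for the path system with a $Z^0$ or $Z^1$ root, so the corollary must be read in that same setting, where the convergence is supplied by \propref{prop:Z1_path_limit} and \propref{prop:Z0_path_limit}. In the ergodic case where the limit is a stationary marginal, $w_\ell\equiv1$, and the argument is unchanged.
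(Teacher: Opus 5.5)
Your proposal is correct and follows the paper's proof: the paper fixes $\varepsilon>0$, takes $\eta_\ell,C_\ell$ from \propref{prop:uniform_mgf}, and applies the Portmanteau lemma to the nonnegative continuous function $x\mapsto e^{\theta x}$ to obtain $\E[e^{\theta\tilde Z_\ell}]\le\liminf_{t\to\infty}\E[e^{\theta w_\ell(t)Z_\ell(t)}]\le C_\ell$. Your Skorokhod--Fatou argument, or equivalently the truncation-plus-monotone-convergence argument, is exactly a proof of that Portmanteau inequality, and your moment bound via $y^m\le m!\,\eta_\ell^{-m}e^{\eta_\ell y}$ is identical to the paper's.
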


\begin{proof}
Fix $\varepsilon>0$ and take $\eta_\ell$ and $C_\ell$ from
\propref{prop:uniform_mgf}.
For $|\theta|\le\eta_\ell$, the function $x\mapsto e^{\theta x}$ is
nonnegative and continuous. By the Portmanteau lemma and
\eref{eq:uniform_mgf_late}, we have
\[
\E\left[e^{\theta\tilde Z_\ell}\right]
\le
\liminf_{t\to\infty}
\E\left[e^{\theta w_\ell(t)Z_\ell(t)}\right]
\le C_\ell.
\]
For the moment bound, apply $y^m\le m!\eta_\ell^{-m}e^{\eta_\ell y}$
with $y=\tilde Z_\ell\ge0$.
\end{proof}

\begin{proposition}
\label{prop:Rt_uniform_mgf}
Fix $s>0$. For every $1\le\ell\le L$, there exist constants
$\eta_\ell>0$ and $C_\ell<\infty$ such that for all $|\theta|\le\eta_\ell$,
\begin{equation}
\label{eq:Rt_mgf_uniform}
\sup_{t>0}\E\left[e^{\theta |R_t(s)|}\right]\le C_\ell.
\end{equation}
In particular, for every integer $m\ge1$,
\begin{equation}
\label{eq:Rt_moment_uniform}
\sup_{t>0}\E\left[|R_t(s)|^m\right]\le m!C_\ell\eta_\ell^{-m}<\infty.
\end{equation}
\end{proposition}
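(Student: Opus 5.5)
The plan is to bound $|R_t(s)|$ pathwise by two nonnegative quantities whose moment generating functions are already controlled. By definition,
\[
|R_t(s)|\le \int_0^t Z_{\ell-1}(u)H_t(s;u)\rmd u+X_{\ell-1}\,I_t(s)=:P_t+X_{\ell-1}I_t(s),
\]
since $0\le H_t(s;u)\le1$ and $w_{\ell-1}(u)^{-1}H_t(s;u)\ge0$. By Cauchy--Schwarz,
\[
\E[e^{\theta|R_t|}]\le \E[e^{2\theta P_t}]^{1/2}\,\E[e^{2\theta I_t(s) X_{\ell-1}}]^{1/2}.
\]
It therefore suffices to establish three facts: a uniform bound $\sup_{t>0} I_t(s)\le \bar I(s)<\infty$, a uniform MGF bound for $P_t$, and an MGF bound for $X_{\ell-1}$ near zero. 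The last is trivial when $X_{\ell-1}=\bar z_{\ell-1}$ is a constant, and otherwise follows from \corref{cor:limit_mgf_bounded}.

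For the kernel I will use $1-e^{-x}\le x$ together with the representation $H_t(s;u)=1-\E[e^{-sw_\ell(t)Z^1_\ell(t-u)}]$. This gives
\[
H_t(s;u)\le s\,w_\ell(t)\E[Z_\ell^1(t-u)]=s\,w_\ell(t)e^{\lambda_\ell(t-u)}.
\]
Hence
\[
I_t(s)\le s\int_0^t \frac{e^{\lambda_\ell(t-u)}w_\ell(t)}{w_{\ell-1}(u)}\rmd u.
\]
I split the integral at a fixed $\varepsilon>0$. On $[\varepsilon,t]$ it is bounded by $K$ through \eref{eq:elementary_kernel_integral_bound}. On $[0,\varepsilon]$, $w_{\ell-1}(u)^{-1}=u^{r_{\ell-1}-1}e^{\delta_{\ell-1}u}$ is bounded, and $e^{\lambda_\ell(t-u)}w_\ell(t)\le e^{\max\{\lambda_\ell,0\}\varepsilon}\sup_{t}e^{\lambda_\ell t}w_\ell(t)$. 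This last supremum is finite because $\delta_\ell\ge\lambda_\ell$, and when $\delta_\ell=\lambda_\ell$ the polynomial factor $t^{-(r_\ell-1)}$ only helps for $t\ge\varepsilon$. Small $t\le\varepsilon$ is handled directly, since the integrand is bounded there. This yields the uniform bound $\bar I(s)$.

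For $P_t$, I apply the same kernel bound: $P_t\le s\int_0^t Z_{\ell-1}(u)e^{\lambda_\ell(t-u)}w_\ell(t)\rmd u$. I then split again into $[0,\varepsilon]$ and $[\varepsilon,t]$ and follow the proof of \propref{prop:uniform_mgf} essentially verbatim. On $[0,\varepsilon]$, Jensen's inequality for the uniform measure combined with \eref{eq:uniform_mgf_early} at level $\ell-1$ does the job. On $[\varepsilon,t]$, I write $Z_{\ell-1}(u)=w_{\ell-1}(u)^{-1}\cdot w_{\ell-1}(u)Z_{\ell-1}(u)$ and apply Jensen's inequality with respect to the normalized kernel of mass at most $K$, together with \eref{eq:uniform_mgf_late}. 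A final Cauchy--Schwarz step combines the two pieces, giving $\sup_t\E[e^{c\theta P_t}]\le C_{\ell-1}$ for $\theta$ small enough that $cs K\theta\le\eta_{\ell-1}$.

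Choosing $\eta_\ell$ as the minimum of the resulting thresholds (divided by $2s\max\{K,\bar I(s)\}$) gives \eref{eq:Rt_mgf_uniform}. The moment bound \eref{eq:Rt_moment_uniform} then follows from $y^m\le m!\eta^{-m}e^{\eta y}$, exactly as in \corref{cor:all_moments_bounded}. I expect the main obstacle to be the uniform-in-$t$ control of the kernel on small $u$ and small $t$, that is, checking that $e^{\lambda_\ell t}w_\ell(t)$ stays bounded in every case of $(\delta_\ell,r_\ell)$. The key input is that $\delta_\ell\ge\lambda_\ell$ by definition \eref{eq:delta}. Note also that crude triangle-inequality bounds suffice here; no centering or cancellation in $R_t$ is needed.
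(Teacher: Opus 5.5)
Your overall strategy works and matches the paper's: bound $|R_t(s)|$ pathwise by $\int_0^tZ_{\ell-1}(u)H_t(s;u)\,\rmd u+X_{\ell-1}I_t(s)$, split the integral at $\varepsilon$, apply Jensen against a normalized kernel with \eref{eq:uniform_mgf_early}--\eref{eq:uniform_mgf_late} at level $\ell-1$, and recombine by H\"older/Cauchy--Schwarz. Your uniform bound on $I_t(s)$ is also fine, because for $t\le\varepsilon$ you fall back on the raw integrand.

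The step that fails as written is the treatment of $P_t$ for $t<\varepsilon$. You replace $H_t(s;u)$ by $s\,w_\ell(t)e^{\lambda_\ell(t-u)}$ on all of $[0,t]$ and then say the $[0,\varepsilon]$ piece follows from Jensen for the uniform measure. But $e^{\lambda_\ell t}w_\ell(t)=t^{-(r_\ell-1)}e^{(\lambda_\ell-\delta_\ell)t}\to\infty$ as $t\downarrow0$ whenever $r_\ell\ge2$. So the boundedness you single out as the key check is false near $t=0$.

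For $r_\ell\ge3$ even the kernel mass diverges: $s\,w_\ell(t)\int_0^te^{\lambda_\ell h}\,\rmd h\asymp s\,t^{2-r_\ell}$. A concrete case is a $Z^1$-rooted path with $\lambda_0=\lambda_1=0$. Then $r_1=3$, $w_1(t)=t^{-2}$, and with probability tending to one $Z_0\equiv1$ on $[0,t]$. Your upper bound is then of order $s/t$ and admits no MGF bound uniform in $t$, whereas $P_t\le\int_0^tZ_0(u)\,\rmd u\to0$. The linearization $1-e^{-x}\le x$ is the wrong tool when $s\,w_\ell(t)$ is large.

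The repair is the one you already used for $I_t(s)$ at small $t$. On $[0,t\wedge\varepsilon]$, use $0\le H_t\le1$. This piece is then at most $\int_0^{t\wedge\varepsilon}Z_{\ell-1}(u)\,\rmd u$, and Jensen for the uniform measure together with \eref{eq:uniform_mgf_early} bounds it uniformly in $t$. Keep the linearized bound only for $[\varepsilon,t]$ with $t>\varepsilon$, where \eref{eq:elementary_kernel_integral_bound} applies.

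With this change your proof is valid. It then differs from the paper's only on $[\varepsilon,t]$:
\begin{itemize}
\item The paper uses the kernel $w_{\ell-1}(u)^{-1}H_t(s;u)$ directly. Its mass is at most $I_t(s)\le I_\infty(s)$ by \propref{prop:It}, so the single constant $I_\infty(s)$ controls both this piece and $X_{\ell-1}I_t(s)$.
\item Your linearization plus \eref{eq:elementary_kernel_integral_bound} gives the cruder constant $sK$. In exchange, it keeps the argument inside \aref{app:bounded_mgf} and avoids the case-by-case evaluation of $I_\infty$.
\end{itemize}
A harmless slip: on $[0,\varepsilon]$ the relevant factor is $e^{-\lambda_\ell u}\le e^{\max\{-\lambda_\ell,0\}\varepsilon}$, not $e^{\max\{\lambda_\ell,0\}\varepsilon}$.
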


\begin{proof}
Fix $\varepsilon>0$ and take $\eta_{\ell-1}$ and $C_{\ell-1}$ from
\propref{prop:uniform_mgf}.
The variable $X_{\ell-1}$ is either $\tilde Z_{\ell-1}$ or its stationary mean.
It follows from \corref{cor:limit_mgf_bounded} and Jensen's inequality that
\[
\E\left[e^{\eta_{\ell-1}X_{\ell-1}}\right]
\le\E\left[e^{\eta_{\ell-1}\tilde Z_{\ell-1}}\right]\le C_{\ell-1}.
\]
Define
\[
\eta_\ell:=\frac{\eta_{\ell-1}}{3\max\{\varepsilon,I_\infty(s)\}}>0.
\]
Fix $0\le\theta\le\eta_\ell$ and set
\[
A_t:=\int_0^{t\wedge\varepsilon}Z_{\ell-1}(u)H_t(s;u)\rmd u,
\qquad
B_t:=\int_{t\wedge\varepsilon}^t Z_{\ell-1}(u)H_t(s;u)\rmd u.
\]
Since $I_t(s)\le I_\infty(s)$ by \propref{prop:It},
\[
|R_t(s)|=|A_t+B_t-X_{\ell-1}I_t(s)|
\le A_t+B_t+I_\infty(s)X_{\ell-1}.
\]
Using $0\le H_t(s;u)\le1$ and Jensen's inequality, we obtain from
\eref{eq:uniform_mgf_early} at level $\ell-1$ that
\[
\E[e^{3\theta A_t}]
\le\sup_{0\le u\le\varepsilon}
\E\left[e^{3\theta\varepsilon Z_{\ell-1}(u)}\right]
\le C_{\ell-1}.
\]
For $t>\varepsilon$, the kernel $w_{\ell-1}(u)^{-1}H_t(s;u)$ on
$[\varepsilon,t]$ has mass at most $I_\infty(s)$.
By Jensen's inequality for the normalized kernel and
\eref{eq:uniform_mgf_late} at level $\ell-1$, we have
\[
\E[e^{3\theta B_t}]
\le\sup_{u\ge\varepsilon}
\E\left[e^{3\theta I_\infty(s)w_{\ell-1}(u)Z_{\ell-1}(u)}\right]
\le C_{\ell-1}.
\]
For $t\le\varepsilon$, $B_t=0$. We now apply H\"older's inequality:
\[
\E\left[e^{\theta|R_t(s)|}\right]
\le\left(
\E[e^{3\theta A_t}]
\E[e^{3\theta B_t}]
\E[e^{3\theta I_\infty(s)X_{\ell-1}}]
\right)^{1/3}
\le C_{\ell-1}.
\]
Taking $C_\ell:=C_{\ell-1}$ proves \eref{eq:Rt_mgf_uniform}.
The moment bound follows from $|x|^m\le m!\eta_\ell^{-m}e^{\eta_\ell|x|}$.
\end{proof}

\begin{corollary}
\label{cor:Mt_uniform_mgf}
Fix $s>0$. For every $1\le\ell\le L$, there exist constants
$\eta_\ell>0$ and $C_\ell<\infty$ such that for all $|\theta|\le\eta_\ell$,
\begin{equation}
\label{eq:Mt_mgf_uniform}
\sup_{t>0}\E\left[e^{\theta |M_t(s)|}\right]\le C_\ell.
\end{equation}
In particular, for every integer $m\ge1$,
\begin{equation}
\label{eq:Mt_moment_uniform}
\sup_{t>0}\E\left[|M_t(s)|^m\right]\le m!C_\ell\eta_\ell^{-m}<\infty.
\end{equation}
\end{corollary}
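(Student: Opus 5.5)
The bound for $M_t(s)$ follows from \propref{prop:Rt_uniform_mgf}, since $M_t(s)=R_t(s)-\E[R_t(s)]$ is $R_t(s)$ minus a deterministic constant. Fix $\varepsilon>0$ and $s>0$, and take $\eta_\ell'$ and $C_\ell'$ from \propref{prop:Rt_uniform_mgf}. By \eref{eq:Rt_moment_uniform} with $m=1$,
\[
\sup_{t>0}\left|\E[R_t(s)]\right|\le\sup_{t>0}\E|R_t(s)|\le C_\ell'/\eta_\ell'=:D_\ell<\infty.
\]
For every $t>0$ the triangle inequality then gives
\[
|M_t(s)|\le|R_t(s)|+D_\ell.
\]

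Set $\eta_\ell:=\eta_\ell'$ and restrict to $0\le\theta\le\eta_\ell$; negative $\theta$ is covered by the remark that $e^{\theta Y}\le e^{|\theta|Y}$ for nonnegative $Y$, applied here with $Y=|M_t(s)|$. The triangle-inequality bound yields
\[
\E\left[e^{\theta|M_t(s)|}\right]\le e^{\theta D_\ell}\,\E\left[e^{\theta|R_t(s)|}\right]\le e^{\eta_\ell D_\ell}C_\ell'=:C_\ell,
\]
uniformly in $t>0$. This proves \eref{eq:Mt_mgf_uniform}.

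The moment bound \eref{eq:Mt_moment_uniform} follows exactly as in \corref{cor:all_moments_bounded}. Apply $|x|^m\le m!\eta_\ell^{-m}e^{\eta_\ell|x|}$ with $x=M_t(s)$ and take expectations.

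There is no real obstacle here. The only thing to check is that the centering constant $\E[R_t(s)]$ is finite and bounded uniformly in $t$, and this is supplied by \eref{eq:Rt_moment_uniform} with $m=1$.
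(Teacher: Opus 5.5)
Your proof is correct and follows essentially the same route as the paper: you combine the triangle inequality $|M_t(s)|\le|R_t(s)|+|\E[R_t(s)]|$ with the uniform MGF bound of \propref{prop:Rt_uniform_mgf}, then apply the elementary inequality $|x|^m\le m!\eta_\ell^{-m}e^{\eta_\ell|x|}$. The only cosmetic difference is how the centering factor $e^{\theta\E|R_t(s)|}$ is handled: the paper uses Jensen's inequality, giving $C_\ell=\bigl(\sup_{t>0}\E[e^{\eta_\ell|R_t(s)|}]\bigr)^2$, whereas you bound $\E|R_t(s)|$ with the $m=1$ moment estimate, giving $C_\ell=e^{\eta_\ell D_\ell}C_\ell'$.
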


\begin{proof}
Take $\eta_\ell$ from \propref{prop:Rt_uniform_mgf} and define
\[
C_\ell:=\left(\sup_{t>0}\E\left[e^{\eta_\ell|R_t(s)|}\right]\right)^2<\infty.
\]
For $0\le\theta\le\eta_\ell$, applying the triangle inequality and then
Jensen's inequality, we obtain
\[
\E\left[e^{\theta|M_t(s)|}\right]
\le e^{\theta\E[|R_t(s)|]}\E\left[e^{\theta|R_t(s)|}\right]
\le\left(\E\left[e^{\theta|R_t(s)|}\right]\right)^2
\le C_\ell.
\]
The moment bound follows from $|x|^m\le m!\eta_\ell^{-m}e^{\eta_\ell|x|}$.
\end{proof}

\begin{theorem}
\label{thm:second_order_error}
Fix $s>0$ and $1\le\ell\le L$, with all model parameters other than
$\mu_\ell$ fixed. Define
\[
\mathcal E_t(\mu_\ell)=
\frac{
\E\left[e^{-\mu_\ell I_t(s)X_{\ell-1}}e^{-\mu_\ell M_t(s)}\right]
}{
\E\left[e^{-\mu_\ell I_t(s)X_{\ell-1}}\right]
}-1.
\]
There exist constants $\mu_{0,\ell}>0$ and $C_\ell<\infty$ such that
for all $0<\mu_\ell\le\mu_{0,\ell}$,
\begin{equation}
\label{eq:second_order_error_uniform}
\sup_{t>0}|\mathcal E_t(\mu_\ell)|\le C_\ell\mu_\ell^2.
\end{equation}
\end{theorem}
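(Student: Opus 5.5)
The plan is a second-order Taylor expansion of $e^{-\mu_\ell M_t(s)}$ around $0$. The first-order term should vanish after renormalization, because $M_t(s)$ is centered. The remainder is then controlled by the uniform exponential moments from \corref{cor:Mt_uniform_mgf}.

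Write $W:=e^{-\mu_\ell I_t(s)X_{\ell-1}}\in(0,1]$ and $D:=\E[W]$. Then
\[
\mathcal E_t(\mu_\ell)=\frac{\E\left[W\left(e^{-\mu_\ell M_t(s)}-1\right)\right]}{D}.
\]
Use $e^{-x}-1=-x+\rho(x)$ with $|\rho(x)|\le \tfrac{x^2}{2}e^{|x|}$. This splits the numerator into a linear term $-\mu_\ell\E[WM_t(s)]$ and a remainder. For the remainder, Cauchy--Schwarz together with \corref{cor:Mt_uniform_mgf} gives
\[
\left|\E[W\rho(\mu_\ell M_t)]\right|\le \tfrac{\mu_\ell^2}{2}\,\E[M_t^4]^{1/2}\,\E\left[e^{2\mu_\ell|M_t|}\right]^{1/2}\le C\mu_\ell^2 ,
\]
uniformly in $t>0$. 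This requires $2\mu_\ell\le\eta_\ell$, which determines $\mu_{0,\ell}$.

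For the denominator, Jensen's inequality gives $D\ge e^{-\mu_\ell I_\infty(s)\E[X_{\ell-1}]}$, using $I_t\le I_\infty$ from \propref{prop:It}. Since $\E[X_{\ell-1}]<\infty$ by \corref{cor:limit_mgf_bounded}, $D$ is bounded below uniformly in $t$ for $\mu_\ell\le\mu_{0,\ell}$.

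For the linear term, center $W$. Because $\E[M_t]=0$,
\[
\E[WM_t]=\E[(W-D)M_t].
\]
Since $|W-1|\le \mu_\ell I_\infty(s)X_{\ell-1}$, we have $|W-D|\le \mu_\ell I_\infty(s)\left(X_{\ell-1}+\E[X_{\ell-1}]\right)$. Cauchy--Schwarz then bounds $|\E[WM_t]|$ by $C\mu_\ell\,\E[X_{\ell-1}^2]^{1/2}\sup_t\E[M_t^2]^{1/2}$. Multiplying by the prefactor $\mu_\ell$ makes the linear term $\mathcal O(\mu_\ell^2)$, uniformly in $t$.

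The main obstacle is hidden $\mu_\ell$-dependence in the constants, since the statement fixes every parameter except $\mu_\ell$. Upstream quantities ($X_{\ell-1}$, $w_{\ell-1}$, $Z_{\ell-1}$) do not depend on $\mu_\ell$, so their moment bounds are harmless. However, $H_t(s;u)$ depends only on level-$\ell$ birth and death rates, not on $\mu_\ell$. The bounds on $M_t$ in \propref{prop:Rt_uniform_mgf} and \corref{cor:Mt_uniform_mgf} therefore do not involve $\mu_\ell$ either. I would check this explicitly when choosing $\mu_{0,\ell}:=\eta_\ell/2$ and collecting the constants into $C_\ell$.
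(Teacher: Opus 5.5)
Your proposal is correct and follows essentially the same route as the paper: Taylor-expand $e^{-\mu_\ell M_t(s)}$, remove the linear term using $\E[M_t(s)]=0$, bound the denominator below by Jensen with $I_t\le I_\infty$, take $\mu_{0,\ell}=\eta_\ell/2$ from \corref{cor:Mt_uniform_mgf}, and note (as the paper's proof does at the outset) that $I_t(s)$, $X_{\ell-1}$ and $M_t(s)$ do not depend on $\mu_\ell$. The only differences are cosmetic: the paper writes the linear term as $\E[(A_t-1)M_t(s)]$ rather than centering at $\E[W]$, and it bounds the remainder with the pointwise inequality $x^2e^{\mu_{0,\ell}x}\le 8\eta_\ell^{-2}e^{\eta_\ell x}$ instead of Cauchy--Schwarz.
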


\begin{proof}
The quantities $I_t(s)$, $X_{\ell-1}$, and $M_t(s)$ do not depend on
$\mu_\ell$. Take $\eta_\ell$ from \corref{cor:Mt_uniform_mgf} and set
$\mu_{0,\ell}:=\eta_\ell/2$.
Fix $0<\mu_\ell\le\mu_{0,\ell}$ and set
\[
A_t:=e^{-\mu_\ell I_t(s)X_{\ell-1}},
\qquad
B_t:=e^{-\mu_\ell M_t(s)}.
\]
Applying the Taylor bound $|e^{-x}-1+x|\le x^2e^{|x|}/2$, we write
\[
B_t=1-\mu_\ell M_t(s)+N_t,
\qquad
|N_t|\le\frac{\mu_\ell^2}{2}M_t(s)^2e^{\mu_\ell|M_t(s)|}.
\]

By \corref{cor:limit_mgf_bounded} and Jensen's inequality,
$\E[X_{\ell-1}^2]<\infty$.
Using $\E[M_t(s)]=0$, \propref{prop:It},
$1-e^{-x}\le x$ for $x\ge0$, and $2ab\le a^2+b^2$, we obtain
\[
|\E[A_tM_t(s)]| =|\E[(A_t-1)M_t(s)]|
\le\frac{\mu_\ell}{2}\left(
I_\infty(s)^2\E[X_{\ell-1}^2]+\E[M_t(s)^2]\right).
\]
Since $x^2e^{\mu_{0,\ell}x}\le8\eta_\ell^{-2}e^{\eta_\ell x}$ for $x\ge0$,
\[
\sup_{t>0}\E\left[M_t(s)^2e^{\mu_{0,\ell}|M_t(s)|}\right]
\le\frac{8}{\eta_\ell^2}
\sup_{t>0}\E\left[e^{\eta_\ell|M_t(s)|}\right]<\infty.
\]
Combining the linear term and the remainder, with $0<A_t\le1$, we obtain
\[
|\E[A_tB_t]-\E[A_t]|
\le\mu_\ell|\E[A_tM_t(s)]|+\E[|N_t|].
\]
Finally, by Jensen's inequality,
\[
\E[A_t]\ge
e^{-\mu_\ell I_t(s)\E[X_{\ell-1}]}
\ge e^{-\mu_{0,\ell}I_\infty(s)\E[X_{\ell-1}]}>0.
\]
Since $\mathcal E_t(\mu_\ell)=(\E[A_tB_t]-\E[A_t])/\E[A_t]$,
\eref{eq:second_order_error_uniform} holds with
\[
C_\ell:=
e^{\mu_{0,\ell}I_\infty(s)\E[X_{\ell-1}]}
\times\left(
\frac{I_\infty(s)^2}{2}\E[X_{\ell-1}^2]
+\sup_{t>0}\E\left[M_t(s)^2e^{\mu_{0,\ell}|M_t(s)|}\right]
\right)<\infty.
\]
\end{proof}

\subsection{Exact limits entering the Laplace transforms}
\label{app:lt_exact}

Throughout this subsection, we fix an edge $v_{\ell-1}\to v_\ell$. The definitions of $I_t$, $R_t$, and $M_t$ are provided in \aref{app:lt_path_decomp}.

\begin{proposition}
\label{prop:It}
For $s>0$,
\[
\sup_{t>0}I_t(s)=I_\infty(s)<\infty,
\]
and $I_t(s)\to I_\infty(s)$ as $t\to\infty$, where
\[
I_\infty(s)=
\begin{dcases}
s\,\dfrac{\Gamma(r_{\ell-1})}{\lambda_\ell^{r_{\ell-1}}}\,
\LerchPhi\left(-\dfrac{\alpha_\ell}{\lambda_\ell}s,r_{\ell-1},1-\dfrac{\delta_{\ell-1}}{\lambda_\ell}\right),
& \lambda_\ell>\delta_{\ell-1},\\[2ex]
\dfrac{s}{r_{\ell-1}}, & \lambda_\ell=\delta_{\ell-1}>0\text{ or }(\lambda_\ell=\delta_{\ell-1}=0\text{ and }r_{\ell-1}\ge2),\\[2ex]
\dfrac{1}{\alpha_\ell}\log\left(1+\alpha_\ell s\right), & \lambda_\ell=\delta_{\ell-1}=0\text{ and }r_{\ell-1}=1,\\[2ex]
\dfrac{s}{\delta_{\ell-1}-\lambda_\ell}, & \lambda_\ell<\delta_{\ell-1}\text{ and }(\delta_{\ell-1}>0\text{ or }r_{\ell-1}>1),\\[2ex]
-\dfrac{1}{\alpha_\ell}\log\left(\dfrac{\lambda_\ell}{\alpha_\ell e^{-s}-\beta_\ell}\right),
& \lambda_\ell<\delta_{\ell-1}=0\text{ and }r_{\ell-1}=1.
\end{dcases}
\]
\end{proposition}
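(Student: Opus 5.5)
The plan is to make $H_t(s;u)$ completely explicit, change variables so that the $t$-dependence sits only in the scaled kernel, and then treat each growth regime separately using dominated convergence.

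\textbf{Explicit kernel.} From \eref{eq:L_1}, for clone age $h=t-u$ and argument $\sigma=s\,w_\ell(t)$ with $\lambda_\ell\neq0$,
\[
H_t(s;u)=\frac{(1-a(h))(1-e^{-\sigma})}{1-b(h)e^{-\sigma}}
=\frac{e^{\lambda_\ell h}\,(1-e^{-\sigma})}{1+\frac{\alpha_\ell}{\lambda_\ell}\left(e^{\lambda_\ell h}-1\right)(1-e^{-\sigma})}.
\]
When $\lambda_\ell=0$ the same expression holds with $(e^{\lambda_\ell h}-1)/\lambda_\ell$ replaced by $h$. Hence
\[
I_t(s)=\int_0^t u^{r_{\ell-1}-1}e^{\delta_{\ell-1}u}\,H_t(s;u)\,\rmd u
\]
has a fully explicit integrand. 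In every case I would use the substitution that matches $w_\ell(t)$:
\begin{itemize}
\item if $\delta_\ell=\lambda_\ell>0$ and $r_\ell=1$, write $\sigma e^{\lambda_\ell(t-u)}\approx s e^{-\lambda_\ell u}$ and keep $u$ as the variable;
\item if $\delta_\ell=\delta_{\ell-1}$, put $u=t-h$ or $u=tx$, according to whether the mass concentrates near $u=t$ or is spread over $[0,t]$.
\end{itemize}

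\textbf{Limits, case by case.}
\begin{enumerate}
\item \emph{Case $\lambda_\ell>\delta_{\ell-1}$.} The integrand converges pointwise to
\[
u^{r-1}e^{\delta u}\,\frac{s e^{-\lambda u}}{1+\frac{\alpha s}{\lambda}e^{-\lambda u}}.
\]
It is dominated by $s\,u^{r-1}e^{-(\lambda-\delta)u}$, because $1-e^{-\sigma}\le\sigma$ and the denominator is at least $1$ when $\lambda>0$. Expanding the denominator geometrically for small $s$ and integrating term by term gives
\[
s\,\Gamma(r)\,\lambda^{-r}\sum_k \frac{(-\alpha s/\lambda)^k}{(k+1-\delta/\lambda)^r},
\]
which is the Lerch Phi expression. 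For larger $s$ I would extend the identity by analytic continuation, using the integral representation of $\LerchPhi$.
\item \emph{Cases $\lambda_\ell=\delta_{\ell-1}$ and $\lambda_\ell<\delta_{\ell-1}$ with $\delta_{\ell-1}>0$ or $r_{\ell-1}>1$.} Here $\sigma\to0$ uniformly, so $H_t\approx\sigma e^{\lambda h}$ and the nonlinear correction vanishes. The integrals reduce to:
\begin{itemize}
\item for $\lambda_\ell=\delta_{\ell-1}$, $s\,t^{-r}\int_0^t u^{r-1}\,\rmd u\to s/r$;
\item for $\lambda_\ell<\delta_{\ell-1}$, $s\int_0^\infty e^{-(\delta-\lambda)h}\,\rmd h$, after noting $\big((t-h)/t\big)^{r-1}\to1$.
\end{itemize}
\item \emph{Cases with $\delta_{\ell-1}=0$ and $r_{\ell-1}=1$.} Now $w_{\ell-1}\equiv1$, $\sigma=s$ if $\lambda_\ell<0$, and $\sigma=s/t$ if $\lambda_\ell=0$.
\begin{itemize}
\item If $\lambda_\ell<0$, then $I_t(s)=\int_0^t H\,\rmd h$ converges to $\int_0^\infty$, which integrates in closed form to $-\alpha^{-1}\log\big(\lambda/(\alpha e^{-s}-\beta)\big)$.
\item If $\lambda_\ell=0$, put $h=tx$. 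The kernel becomes $t^{-1}\,\frac{s}{1+\alpha s x}+o(t^{-1})$, giving $\alpha^{-1}\log(1+\alpha s)$.
\end{itemize}
\end{enumerate}
In each case, dominated convergence with the explicit dominating function gives $I_t(s)\to I_\infty(s)$.

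\textbf{Supremum equals limit (main obstacle).} The claim $\sup_{t>0} I_t(s)=I_\infty(s)$ is the delicate part, because $t$ enters both through the upper limit and through $\sigma=s\,w_\ell(t)$. My first attempt would be to show that $t\mapsto I_t(s)$ is nondecreasing, by differentiating the explicit integral in $t$ in the variable where the kernel's $t$-dependence is cleanest.
\begin{itemize}
\item The boundary term $w_{\ell-1}(t)^{-1}H_t(s;t)=w_{\ell-1}(t)^{-1}(1-e^{-sw_\ell(t)})$ is nonnegative.
\item The interior term involves $\partial_t$ of the kernel, which is a function of the single variable $e^{\lambda h}(1-e^{-\sigma})$. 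That variable is monotone in $t$ in each regime, and the map $y\mapsto y/(1+cy)$ is increasing in $y$.
\end{itemize}
If monotonicity fails in some regime, the fallback is a direct pointwise bound $H_t(s;u)\,w_{\ell-1}(u)^{-1}\le k_\infty(\cdot)$ by the limiting integrand. This bound would follow from the inequalities $1-e^{-\sigma}\le\sigma$ and $(t-h)^{r-1}\le t^{r-1}$, which together yield $I_t(s)\le I_\infty(s)$; combined with the convergence $I_t(s)\to I_\infty(s)$, this gives the supremum.
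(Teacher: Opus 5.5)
Your case-by-case limits are essentially the paper's (same explicit kernel, same dominating functions, same substitutions). Your fallback pointwise bound also gives $I_t(s)\le I_\infty(s)$ in the four cases other than $\lambda_\ell>\delta_{\ell-1}$. The gap is the identity $\sup_{t>0}I_t(s)=I_\infty(s)$ in that remaining case, where neither of your two routes works.

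Here $\lambda_\ell>\delta_{\ell-1}\ge0$. Write $c=\alpha_\ell/\lambda_\ell$, which exceeds $1$ because $\beta_\ell>0$. Write $\rho_\ell(h)=c(e^{\lambda_\ell h}-1)$, and let $g_s(u)=se^{-\lambda_\ell u}/(1+cse^{-\lambda_\ell u})$ be the limiting kernel.

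\textbf{The fallback fails.} The pointwise bound $H_t(s;u)\le g_s(u)$ is false. With $x=se^{-\lambda_\ell t}$,
\[
H_t(s;t)=1-e^{-x}\ge \frac{x}{1+x}>\frac{x}{1+cx}=g_s(t),
\]
and by continuity the inequality is reversed on a neighbourhood of $u=t$. The estimate $1-e^{-\sigma}\le\sigma$ cannot repair this, because it lowers the numerator and the denominator $1+\rho_\ell(t-u)(1-e^{-\sigma})$ at the same time. By itself it only yields $I_t(s)\le s\,\Gamma(r_{\ell-1})/(\lambda_\ell-\delta_{\ell-1})^{r_{\ell-1}}$, which gives finiteness but not the sharp supremum.

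\textbf{The monotonicity route rests on a false premise.} The kernel equals $y/(1+cy-c\epsilon)$ with $y=e^{\lambda_\ell h}(1-e^{-\sigma})$ and $\epsilon=1-e^{-\sigma}$. So it is not a function of $y$ alone, and $\epsilon$ enters with the opposite monotonicity in $t$. Indeed, for fixed $u$ the map $t\mapsto H_t(s;u)$ starts at $H_u(s;u)>g_s(u)$ and converges to $g_s(u)$, so it is not nondecreasing. The interior term of $\partial_t I_t$ therefore has a negative part, and your sketch gives no mechanism by which the boundary term compensates for it.

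\textbf{The missing idea.} The paper compares tail integrals instead of integrands. The kernel integrates in closed form, and for $0\le a\le t$,
\[
\int_a^tH_t(s;u)\,\rmd u=\frac{1}{\alpha_\ell}\log\bigl(1+\rho_\ell(t-a)(1-e^{-\sigma})\bigr)\le\frac{1}{\alpha_\ell}\log\bigl(1+cse^{-\lambda_\ell a}\bigr)=\int_a^\infty g_s(u)\,\rmd u,
\]
using $\rho_\ell(t-a)\le ce^{\lambda_\ell(t-a)}$ and $1-e^{-\sigma}\le\sigma=se^{-\lambda_\ell t}$. The weight $W(u)=u^{r_{\ell-1}-1}e^{\delta_{\ell-1}u}$ is nondecreasing. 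Writing $W(u)=W(0)+\int_0^uW'(a)\,\rmd a$ and applying Tonelli gives
\[
I_t(s)=W(0)\int_0^tH_t(s;u)\,\rmd u+\int_0^tW'(a)\int_a^tH_t(s;u)\,\rmd u\,\rmd a\le I_\infty(s).
\]
This is the paper's integration-by-parts step. The same closed form also shows that each tail is nondecreasing in $t$, which is the monotonicity you wanted.

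\textbf{A smaller point.} In the case $\lambda_\ell=\delta_{\ell-1}$, the nonlinear correction disappears because $\sigma\rho_\ell(t)\to0$, not merely because $\sigma\to0$. This product equals $\alpha_\ell s\,t^{1-r_{\ell-1}}$ when $\lambda_\ell=0$, and is at most $(\alpha_\ell s/\lambda_\ell)t^{-r_{\ell-1}}$ when $\lambda_\ell>0$. That distinction is exactly what separates this case from the logarithmic case with $r_{\ell-1}=1$.
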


\begin{proof}
With $\theta_t=sw_\ell(t)$ and $\rho_\ell$ from \eref{eq:shifted_mgf}, we have
\begin{equation}
\label{eq:It_kernel}
H_t(s;u)=
\frac{(1-e^{-\theta_t})e^{\lambda_\ell(t-u)}}
{1+\rho_\ell(t-u)(1-e^{-\theta_t})},
\qquad
0\le H_t(s;u)\le\theta_t e^{\lambda_\ell(t-u)}.
\end{equation}

If $\lambda_\ell>\delta_{\ell-1}$, then $w_\ell(t)=e^{-\lambda_\ell t}$ and, for each fixed $u\ge0$,
\[
H_t(s;u)\longrightarrow
\frac{se^{-\lambda_\ell u}}
{1+\frac{\alpha_\ell}{\lambda_\ell}se^{-\lambda_\ell u}}.
\]
For $0\le u\le t$,
\[
0\le u^{r_{\ell-1}-1}e^{\delta_{\ell-1}u}H_t(s;u)
\le su^{r_{\ell-1}-1}e^{-(\lambda_\ell-\delta_{\ell-1})u}.
\]
Extending the integrand by zero for $u>t$, dominated convergence implies
\[
I_t(s)\longrightarrow
\int_0^\infty u^{r_{\ell-1}-1}e^{\delta_{\ell-1} u}
\frac{se^{-\lambda_\ell u}}{1+\frac{\alpha_\ell}{\lambda_\ell}se^{-\lambda_\ell u}}\rmd u
=I_\infty(s).
\]
The Lerch form follows by substituting $x=\lambda_\ell u$ in the identity
\[
\LerchPhi(z,r,a)
=\frac{1}{\Gamma(r)}
\int_0^\infty\frac{x^{r-1}e^{-ax}}{1-ze^{-x}}\rmd x,
\qquad z<0,\ a>0,\ r=1,2,\ldots.
\]
For $0\le a\le t$, integration of \eref{eq:It_kernel} also shows that
\[
\begin{aligned}
\int_a^t H_t(s;u)\rmd u
&=\frac{1}{\alpha_\ell}\log\left(1+\rho_\ell(t-a)(1-e^{-\theta_t})\right)\\
&\le\frac{1}{\alpha_\ell}\log\left(1+\frac{\alpha_\ell}{\lambda_\ell}se^{-\lambda_\ell a}\right)
=\int_a^\infty
\frac{se^{-\lambda_\ell u}}{1+\frac{\alpha_\ell}{\lambda_\ell}se^{-\lambda_\ell u}}\rmd u.
\end{aligned}
\]
By integration by parts with the non-decreasing weight $u^{r_{\ell-1}-1}e^{\delta_{\ell-1} u}$,
we have $I_t(s)\le I_\infty(s)$.

If $\lambda_\ell=\delta_{\ell-1}>0$, or $\lambda_\ell=\delta_{\ell-1}=0$ and $r_{\ell-1}\ge2$,
then $\theta_t=st^{-r_{\ell-1}}e^{-\lambda_\ell t}\to0$ and
\[
\theta_t\rho_\ell(t)\le
\begin{dcases}
\dfrac{\alpha_\ell s}{\lambda_\ell}t^{-r_{\ell-1}}, & \lambda_\ell>0,\\[1ex]
\alpha_\ell s t^{1-r_{\ell-1}}, & \lambda_\ell=0
\end{dcases}
\longrightarrow0.
\]
Using \eref{eq:It_kernel} and monotonicity of $\rho_\ell$, we obtain
\[
\frac{s}{r_{\ell-1}}
\frac{1-e^{-\theta_t}}{\theta_t(1+\theta_t\rho_\ell(t))}
\le I_t(s)\le\frac{s}{r_{\ell-1}}.
\]
Thus, $I_t(s)\to s/r_{\ell-1}$.

If $\lambda_\ell=\delta_{\ell-1}=0$ and $r_{\ell-1}=1$, then $\theta_t=s/t$ and
\[
I_t(s)=\frac{1}{\alpha_\ell}\log\left(1+\alpha_\ell t(1-e^{-s/t})\right)
\le\frac{1}{\alpha_\ell}\log(1+\alpha_\ell s).
\]
Since $t(1-e^{-s/t})\to s$, we have $I_t(s)\to I_\infty(s)$.

If $\lambda_\ell<\delta_{\ell-1}$ and either $\delta_{\ell-1}>0$ or $r_{\ell-1}>1$,
then $\theta_t=st^{-(r_{\ell-1}-1)}e^{-\delta_{\ell-1}t}\to0$.
Substituting $h=t-u$ in $I_t(s)$, we obtain
\[
I_t(s)=\int_0^t s
\left(1-\frac{h}{t}\right)^{r_{\ell-1}-1}
e^{-(\delta_{\ell-1}-\lambda_\ell)h}
\frac{1-e^{-\theta_t}}
{\theta_t\left[1+\rho_\ell(h)(1-e^{-\theta_t})\right]}\rmd h.
\]
The integrand, extended by zero for $h>t$, converges to
$se^{-(\delta_{\ell-1}-\lambda_\ell)h}$ and is bounded by this integrable function.
Therefore, by dominated convergence,
\[
I_t(s)\longrightarrow\frac{s}{\delta_{\ell-1}-\lambda_\ell},
\qquad
I_t(s)\le\frac{s}{\delta_{\ell-1}-\lambda_\ell}.
\]

Finally, if $\lambda_\ell<\delta_{\ell-1}=0$ and $r_{\ell-1}=1$, then $\theta_t=s$.
Integrating \eref{eq:It_kernel}, we have
\[
I_t(s)=\frac{1}{\alpha_\ell}\log\left(1+\rho_\ell(t)(1-e^{-s})\right)
\longrightarrow-\frac{1}{\alpha_\ell}\log\left(\frac{\lambda_\ell}{\alpha_\ell e^{-s}-\beta_\ell}\right)
=I_\infty(s).
\]
In every case, $I_t(s)\le I_\infty(s)$ and $I_t(s)\to I_\infty(s)$, so
$\sup_{t>0}I_t(s)=I_\infty(s)<\infty$.
\end{proof}

\begin{proposition}
\label{prop:Rt}
For $s>0$, as $t\to\infty$,
\[
\E[R_t(s)] \longrightarrow R_\infty(s):=
\begin{dcases}
\displaystyle\int_0^\infty f(u)g_s(u)\rmd u, & \lambda_\ell>\delta_{\ell-1},\\[2ex]
0, & \lambda_\ell\le\delta_{\ell-1},
\end{dcases}
\]
where, for $u>0$,
\[
f(u)=\E\left[w_{\ell-1}(u)Z_{\ell-1}(u)\right]-\E[X_{\ell-1}],
\]
and, when $\lambda_\ell>\delta_{\ell-1}$,
\[
g_s(u)=u^{r_{\ell-1}-1}e^{\delta_{\ell-1} u}
\frac{se^{-\lambda_\ell u}}{1+\frac{\alpha_\ell}{\lambda_\ell}se^{-\lambda_\ell u}}.
\]
\end{proposition}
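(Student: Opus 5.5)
We have $R_t(s)=\int_0^t (Z_{\ell-1}(u)-w_{\ell-1}(u)^{-1}X_{\ell-1})H_t(s;u)\rmd u$. Since $H_t$ is deterministic, Tonelli/Fubini together with the moment bounds of \corref{cor:all_moments_bounded} and \corref{cor:limit_mgf_bounded} let us exchange expectation and integral. This gives
\[
\E[R_t(s)]=\int_0^t w_{\ell-1}(u)^{-1}f(u)H_t(s;u)\rmd u .
\]
Here $f(u)$ is bounded uniformly for $u\ge\varepsilon$, and $w_{\ell-1}(u)^{-1}|f(u)|$ is bounded on $[0,\varepsilon]$, by those corollaries. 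The plan is then to split the integral at a fixed $\varepsilon$ and use the kernel bound \eref{eq:It_kernel}, $0\le H_t(s;u)\le\theta_te^{\lambda_\ell(t-u)}$ with $\theta_t=sw_\ell(t)$.

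\textbf{Case $\lambda_\ell>\delta_{\ell-1}$.} Here $w_\ell(t)=e^{-\lambda_\ell t}$. As in the proof of \propref{prop:It}, $w_{\ell-1}(u)^{-1}H_t(s;u)\to g_s(u)$ pointwise, and it is dominated by $s\,u^{r_{\ell-1}-1}e^{-(\lambda_\ell-\delta_{\ell-1})u}$, which is integrable. Since $f$ is bounded, dominated convergence on $[\varepsilon,\infty)$ gives the limit $\int_\varepsilon^\infty fg_s$. The piece over $[0,\varepsilon]$ converges by bounded convergence, because the integrand is bounded and converges pointwise. Together these give $R_\infty(s)=\int_0^\infty f g_s$.

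\textbf{Case $\lambda_\ell\le\delta_{\ell-1}$.} Here the kernel $w_{\ell-1}(u)^{-1}H_t(s;u)$ has total mass at most $I_\infty(s)<\infty$ by \propref{prop:It}. The computations in that proof show the mass concentrates near $u=t$: after substituting $h=t-u$, the normalized kernel on any fixed window $[0,T]$ vanishes as $t\to\infty$. For example, its contribution over $u\le T$ is at most $\theta_t e^{\lambda_\ell t}\int_0^T w_{\ell-1}^{-1}$, which tends to $0$ because $w_\ell(t)e^{\lambda_\ell t}\to 0$ in these regimes. Hence the argument reduces to showing $f(u)\to0$ as $u\to\infty$, that is,
\[
\E[w_{\ell-1}(u)Z_{\ell-1}(u)]\to\E[X_{\ell-1}].
\]
Then $|\E R_t(s)|\le I_\infty(s)\sup_{u\ge T}|f(u)|+o(1)\to0$.

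\textbf{Key step: the convergence of means, split into two cases.}
\begin{enumerate}
\item When $X_{\ell-1}=\tilde Z_{\ell-1}$, the convergence $w_{\ell-1}(u)Z_{\ell-1}(u)\overset{d}{\to}\tilde Z_{\ell-1}$ (\propref{prop:Z1_path_limit}, \propref{prop:Z0_path_limit}) combined with uniform integrability gives convergence of expectations. Uniform integrability follows from the uniform second-moment bound \eqref{eq:uniform_moments_late}.
\item When $X_{\ell-1}=\bar z_{\ell-1}$ (the subcritical ergodic case, $w_{\ell-1}\equiv1$), we have $Z_{\ell-1}(u)\overset{d}{\to}\omega$-marginal by \lemmaref{lemma:subcritical_path_Z0}. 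Uniform integrability from the same moment bound then gives $\E Z_{\ell-1}(u)\to\bar z_{\ell-1}$.
\end{enumerate}

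\textbf{Main obstacle.} The hardest part is making the concentration-near-$u=t$ argument uniform across all the sub-cases of \propref{prop:It}. The sub-case $\lambda_\ell<\delta_{\ell-1}=0$, $r_{\ell-1}=1$ has $\theta_t=s$ fixed, so the bound above does not vanish by itself. There I would instead rely on $e^{\lambda_\ell(t-u)}$ with $\lambda_\ell<0$, which makes the contribution from $u\le T$ exponentially small.
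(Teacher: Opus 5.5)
Your proposal is correct and follows essentially the same route as the paper: Fubini to write $\E[R_t(s)]=\int_0^\infty f(u)\,w_{\ell-1}(u)^{-1}H_t(s;u)\rmd u$, dominated convergence against $s\,u^{r_{\ell-1}-1}e^{-(\lambda_\ell-\delta_{\ell-1})u}$ when $\lambda_\ell>\delta_{\ell-1}$, and otherwise vanishing of the kernel contribution on a fixed window $[0,U]$ plus the tail bound $I_\infty(s)\sup_{u\ge U}|f(u)|$, with $f(u)\to0$ obtained from uniform integrability via the second-moment bound. Your ``main obstacle'' is not actually an obstacle: the kernel bound $\theta_t e^{\lambda_\ell(t-u)}$ already carries the factor $e^{\lambda_\ell t}$, so $w_\ell(t)e^{\lambda_\ell t}\to0$ holds in every sub-case of $\lambda_\ell\le\delta_{\ell-1}$, including $\lambda_\ell<\delta_{\ell-1}=0$, $r_{\ell-1}=1$, where $w_\ell\equiv1$ and $\lambda_\ell<0$.
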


\begin{proof}
By \corref{cor:all_moments_bounded} with $\varepsilon=1$ and $m=2$, $\{w_{\ell-1}(u)Z_{\ell-1}(u):u\ge1\}$ is uniformly integrable, and by \eref{eq:large_time_limit},
\[
\E\left[w_{\ell-1}(u)Z_{\ell-1}(u)\right]
\longrightarrow \E[\tilde Z_{\ell-1}]=\E[X_{\ell-1}].
\]
If $Z_0=Z_0^0$, $\delta_{\ell-1}=0$, and $r_{\ell-1}=1$, the last equality follows from $\E[\tilde Z_{\ell-1}]=\bar z_{\ell-1}$ in \lemmaref{lemma:subcritical_path_Z0}.
Thus, $f(u)\to0$.

For $u\ge0$, define
\[
k_t(u)=
\begin{dcases}
w_{\ell-1}(u)^{-1}H_t(s;u), & 0<u\le t,\\
0, & \text{otherwise}.
\end{dcases}
\]
Since $0\le H_t(s;u)\le1$ and $I_t(s)\le I_\infty(s)$ by \propref{prop:It},
\[
\E\int_0^t
\left|
Z_{\ell-1}(u)-w_{\ell-1}(u)^{-1}X_{\ell-1}
\right|H_t(s;u)\rmd u
\le
\int_0^t\E[Z_{\ell-1}(u)]\rmd u
+\E[X_{\ell-1}]I_t(s)
<\infty.
\]
By Fubini's theorem,
\[
\E[R_t(s)]
=\int_0^\infty f(u)k_t(u)\rmd u.
\]

If $\lambda_\ell>\delta_{\ell-1}$, then $k_t(u)\to g_s(u)$ for each $u>0$ by \eref{eq:It_kernel}, and
\[
|f(u)|k_t(u)
\le s\Bigl(
\E[Z_{\ell-1}(u)]e^{-\lambda_\ell u}
+\E[X_{\ell-1}]u^{r_{\ell-1}-1}e^{-(\lambda_\ell-\delta_{\ell-1})u}
\Bigr).
\]
The right-hand side is integrable by \corref{cor:all_moments_bounded} and $\lambda_\ell>\delta_{\ell-1}$. 
Dominated convergence then implies
\[
\E[R_t(s)]
\longrightarrow
\int_0^\infty f(u)g_s(u)\rmd u.
\]

If $\lambda_\ell\le\delta_{\ell-1}$, then $k_t(u)\to0$ for each $u>0$ by \eref{eq:It_kernel}.
For each $U\ge1$, the bound
\[
|f(u)|k_t(u)
\le
\E[Z_{\ell-1}(u)]
+\E[X_{\ell-1}]u^{r_{\ell-1}-1}e^{\delta_{\ell-1} u},
\qquad 0<u\le U,
\]
has an integrable right-hand side, so by dominated convergence, $\int_0^U f(u)k_t(u)\rmd u \longrightarrow 0$.
Since $\int_0^\infty k_t(u)\rmd u=I_t(s)\le I_\infty(s)$ by \propref{prop:It},
\[
\limsup_{t\to\infty}|\E[R_t(s)]|
\le I_\infty(s)\sup_{u\ge U}|f(u)|
\longrightarrow0
\qquad\text{as }U\to\infty.
\]
\end{proof}

\begin{lemma}
\label{lemma:mean_recursion}
For a $Z^1$-rooted path, let $m_\ell(t)=\E[Z_\ell(t)]$ for $0\le\ell\le L$, and define
\[
\hat\delta_\ell=\max_{0\le j\le\ell}\lambda_j,\qquad
\hat r_\ell=\#\{0\le j\le\ell:\lambda_j=\hat\delta_\ell\}.
\]
Then, for $0\le\ell\le L$,
\[
m_\ell(t)\sim A_\ell t^{\hat r_\ell-1}e^{\hat\delta_\ell t}
\qquad\text{as }t\to\infty,
\]
where $\sim$ means that the ratio tends to $1$. The coefficients satisfy $A_0=1$ and, for $1\le\ell\le L$,
\[
A_\ell=
\begin{dcases}
\dfrac{\mu_\ell A_{\ell-1}}{\hat r_{\ell-1}}, & \lambda_\ell=\hat\delta_{\ell-1},\\[2ex]
\dfrac{\mu_\ell A_{\ell-1}}{\hat\delta_{\ell-1}-\lambda_\ell}, & \lambda_\ell<\hat\delta_{\ell-1},\\[2ex]
\displaystyle\prod_{j=1}^{\ell}\frac{\mu_j}{\lambda_\ell-\lambda_{j-1}}, & \lambda_\ell>\hat\delta_{\ell-1}.
\end{dcases}
\]
\end{lemma}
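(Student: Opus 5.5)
The plan is to derive a linear renewal-type recursion for the means and then read off the asymptotics by induction on $\ell$, treating the three cases separately. For the recursion, condition on the parent trajectory and use the Poisson representation of \sref{sec:LT_linear}. Each clone founded at time $u$ has mean $\E[Z^1_\ell(t-u)]=e^{\lambda_\ell(t-u)}$. Taking expectations, with Tonelli justified because \corref{cor:all_moments_bounded} gives finite first moments, yields
\[
m_\ell(t)=\mu_\ell\int_0^t e^{\lambda_\ell(t-u)}m_{\ell-1}(u)\rmd u
=\mu_\ell\int_0^t e^{\lambda_\ell h}m_{\ell-1}(t-h)\rmd h,
\qquad m_0(t)=e^{\lambda_0 t}.
\]
The base case is immediate: $\hat\delta_0=\lambda_0$, $\hat r_0=1$ and $A_0=1$. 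In the inductive step, the hypothesis $m_{\ell-1}(u)\sim A_{\ell-1}u^{\hat r_{\ell-1}-1}e^{\hat\delta_{\ell-1}u}$ together with continuity on compacts gives a global bound $m_{\ell-1}(u)\le C(1+u)^{\hat r_{\ell-1}-1}e^{\hat\delta_{\ell-1}u}$. This bound will supply the dominating functions in every case.

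\emph{Case $\lambda_\ell<\hat\delta_{\ell-1}$.} Here $\hat\delta_\ell=\hat\delta_{\ell-1}$ and $\hat r_\ell=\hat r_{\ell-1}$. Divide the second form of the recursion by $t^{\hat r_{\ell-1}-1}e^{\hat\delta_{\ell-1}t}$. For each fixed $h$, the integrand (extended by zero for $h>t$) converges to $A_{\ell-1}e^{-(\hat\delta_{\ell-1}-\lambda_\ell)h}$. For $t\ge1$ it is dominated by $C'(1+h)^{\hat r_{\ell-1}-1}e^{-(\hat\delta_{\ell-1}-\lambda_\ell)h}$, using $(1+t-h)/t\le 2$ and $(t-h)\le t(1+h)$-type bounds. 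Dominated convergence then gives $A_\ell=\mu_\ell A_{\ell-1}/(\hat\delta_{\ell-1}-\lambda_\ell)$.

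\emph{Case $\lambda_\ell=\hat\delta_{\ell-1}$.} Now $\hat r_\ell=\hat r_{\ell-1}+1$, and $e^{-\lambda_\ell t}m_\ell(t)=\mu_\ell\int_0^t e^{-\hat\delta_{\ell-1}u}m_{\ell-1}(u)\rmd u$. The integrand is asymptotic to $A_{\ell-1}u^{\hat r_{\ell-1}-1}$, whose integral diverges. A standard Ces\`aro argument (split at a large $T$ with an $\epsilon$-band, exactly as in the proof of \lemmaref{lemma:non_supercritical_path}) therefore gives the asymptotic $\mu_\ell A_{\ell-1}t^{\hat r_{\ell-1}}/\hat r_{\ell-1}$.

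\emph{Case $\lambda_\ell>\hat\delta_{\ell-1}$.} Here $\hat\delta_\ell=\lambda_\ell$ and $\hat r_\ell=1$. We have $e^{-\lambda_\ell t}m_\ell(t)\to\mu_\ell\int_0^\infty e^{-\lambda_\ell u}m_{\ell-1}(u)\rmd u=\mu_\ell\widehat m_{\ell-1}(\lambda_\ell)$ by monotone convergence, and the integral is finite by the global bound. To evaluate this Laplace transform $\widehat m_k(s)=\int_0^\infty e^{-su}m_k(u)\rmd u$ for $s>\hat\delta_k$, I would not use the asymptotic induction but an exact one. Applying the convolution theorem to the recursion gives $\widehat m_k(s)=\mu_k\widehat m_{k-1}(s)/(s-\lambda_k)$, starting from $\widehat m_0(s)=1/(s-\lambda_0)$. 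Hence
\[
\widehat m_{\ell-1}(s)=\frac{\prod_{j=1}^{\ell-1}\mu_j}{\prod_{j=0}^{\ell-1}(s-\lambda_j)},
\]
and substituting $s=\lambda_\ell$ gives $A_\ell=\prod_{j=1}^{\ell}\mu_j/(\lambda_\ell-\lambda_{j-1})$, as claimed.

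The main obstacle is bookkeeping rather than depth. The dominating bound must be uniform, including when $\hat\delta_{\ell-1}<0$. Here $\hat\delta$ is the raw maximum rather than the truncated $\delta^\pi$ of \eref{eq:delta}, so subcritical exponents really do appear, and one has to make sure each case's domination still holds with negative exponents. This holds because every estimate involves only the differences $\hat\delta_{\ell-1}-\lambda_\ell$ or their sign.
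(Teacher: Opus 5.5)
Your proposal is correct and follows essentially the same route as the paper. Both use the same variation-of-constants recursion $m_\ell(t)=\mu_\ell e^{\lambda_\ell t}\int_0^t e^{-\lambda_\ell u}m_{\ell-1}(u)\,\rmd u$ (the paper derives it from the mean ODE rather than from the Poisson representation), induct on $\ell$ over the three cases, and in the dominant case rely on the identity $\int_0^\infty e^{-\lambda_\ell u}m_j(u)\,\rmd u=\frac{\mu_j}{\lambda_\ell-\lambda_j}\int_0^\infty e^{-\lambda_\ell u}m_{j-1}(u)\,\rmd u$, which the paper gets by integration by parts and you by the convolution theorem.

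The remaining differences are minor: the paper uses l'H\^opital's rule where you use dominated convergence and a Ces\`aro argument. Also, Tonelli needs only nonnegativity, so your appeal to \corref{cor:all_moments_bounded} at that step is unnecessary.
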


\begin{proof}
For $\ell=0$, we have $m_0(t)=e^{\lambda_0t}$. For $1\le\ell\le L$, the mean satisfies
\begin{equation}
\label{eq:path_mean_ode}
m_\ell'(t)=\lambda_\ell m_\ell(t)+\mu_\ell m_{\ell-1}(t),
\qquad m_\ell(0)=0.
\end{equation}
Multiplying by $e^{-\lambda_\ell t}$ and integrating, we obtain
\begin{equation}
\label{eq:path_mean_integral}
m_\ell(t)=\mu_\ell e^{\lambda_\ell t}\int_0^t m_{\ell-1}(u)e^{-\lambda_\ell u}\rmd u.
\end{equation}
Assume the conclusion holds through index $\ell-1$. If $\lambda_\ell\le\hat\delta_{\ell-1}$, l'H\^opital's rule implies
\[
\int_0^t m_{\ell-1}(u)e^{-\lambda_\ell u}\rmd u
\sim
\begin{dcases}
\dfrac{A_{\ell-1}}{\hat r_{\ell-1}}t^{\hat r_{\ell-1}},
& \lambda_\ell=\hat\delta_{\ell-1},\\[2ex]
\dfrac{A_{\ell-1}}{\hat\delta_{\ell-1}-\lambda_\ell}
t^{\hat r_{\ell-1}-1}e^{(\hat\delta_{\ell-1}-\lambda_\ell)t},
& \lambda_\ell<\hat\delta_{\ell-1}.
\end{dcases}
\]
The asserted asymptotic for $m_\ell(t)$ follows from \eref{eq:path_mean_integral}.

If $\lambda_\ell>\hat\delta_{\ell-1}$, then, for all sufficiently large $u$,
\[
0\le m_{\ell-1}(u)e^{-\lambda_\ell u}
\le 2A_{\ell-1}u^{\hat r_{\ell-1}-1}e^{-(\lambda_\ell-\hat\delta_{\ell-1})u}.
\]
Since $\lambda_\ell-\hat\delta_{\ell-1}>0$, this bound is integrable, so $\int_0^\infty m_{\ell-1}(u)e^{-\lambda_\ell u}\rmd u<\infty$.
For $1\le j\le\ell-1$, multiplying \eref{eq:path_mean_ode} at index $j$ by $e^{-\lambda_\ell t}$ and integrating by parts, we obtain
\[
\int_0^\infty e^{-\lambda_\ell u}m_j(u)\rmd u
=\frac{\mu_j}{\lambda_\ell-\lambda_j}
\int_0^\infty e^{-\lambda_\ell u}m_{j-1}(u)\rmd u.
\]
Iterating this identity and using \eref{eq:path_mean_integral}, we conclude that
\[
e^{-\lambda_\ell t}m_\ell(t)
\longrightarrow \mu_\ell\int_0^\infty e^{-\lambda_\ell u}m_{\ell-1}(u)\rmd u
=\prod_{j=1}^{\ell}\frac{\mu_j}{\lambda_\ell-\lambda_{j-1}}
=A_\ell.
\]
\end{proof}

\begin{corollary}
\label{cor:Rt}
Suppose the path is $Z^1$-rooted and $v_\ell$ is its first supercritical type:
\[
\lambda_j\le0\quad(0\le j\le\ell-1),
\qquad
\lambda_\ell>0.
\]
For $s>0$, with $m_{\ell-1}(t)=\E[Z_{\ell-1}(t)]$,
\[
\E[R_t(s)]\longrightarrow
R_\infty(s)=
\int_0^\infty m_{\ell-1}(u)
\frac{se^{-\lambda_\ell u}}{1+\frac{\alpha_\ell}{\lambda_\ell}se^{-\lambda_\ell u}}\rmd u
\qquad\text{as }t\to\infty.
\]
\end{corollary}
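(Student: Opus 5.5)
This is a specialization of \propref{prop:Rt} to the case $\lambda_\ell>\delta_{\ell-1}$. The plan is to show that $\E[X_{\ell-1}]=0$ here, so that $f(u)=\E[w_{\ell-1}(u)Z_{\ell-1}(u)]$, and then to undo the scaling by $w_{\ell-1}$ inside the integral.

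\textbf{Step 1: the case applies.} All types $v_0,\dots,v_{\ell-1}$ are non-supercritical. Hence $\delta_{\ell-1}=0$, and $\lambda_\ell>0=\delta_{\ell-1}$. By \propref{prop:Rt},
\[
\E[R_t(s)]\longrightarrow\int_0^\infty f(u)g_s(u)\rmd u.
\]

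\textbf{Step 2: $X_{\ell-1}=\tilde Z_{\ell-1}=0$.} Since the root is $Z^1$, the first case in the definition of $X_{\ell-1}$ (which requires $Z_0=Z_0^0$) does not occur, so $X_{\ell-1}=\tilde Z_{\ell-1}$. By the proof of \propref{prop:Z1_path_limit}, in the case $\delta_{\ell-1}=0$ with all of $v_0,\dots,v_{\ell-1}$ non-supercritical, the finite immigration-free system goes extinct almost surely. Therefore $w_{\ell-1}(t)Z_{\ell-1}(t)\to0$ and $\tilde Z_{\ell-1}=0$. (For $\ell-1=0$ this is the extinction case of the one-type process.) Consequently $\E[X_{\ell-1}]=0$ and $f(u)=w_{\ell-1}(u)m_{\ell-1}(u)$.

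\textbf{Step 3: simplify the integrand.} With $\delta_{\ell-1}=0$,
\[
g_s(u)=u^{r_{\ell-1}-1}\,\frac{se^{-\lambda_\ell u}}{1+\frac{\alpha_\ell}{\lambda_\ell}se^{-\lambda_\ell u}},
\qquad
w_{\ell-1}(u)=u^{-(r_{\ell-1}-1)}.
\]
The polynomial factors therefore cancel exactly:
\[
f(u)g_s(u)=m_{\ell-1}(u)\,\frac{se^{-\lambda_\ell u}}{1+\frac{\alpha_\ell}{\lambda_\ell}se^{-\lambda_\ell u}}.
\]
This is the claimed limit.

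\textbf{Main obstacle.} The delicate point is justifying $\tilde Z_{\ell-1}=0$ and not some nondegenerate limit. Here I rely on the almost-sure extinction of a finite, non-supercritical, immigration-free multitype process. I would cite the same extinction statement used in \propref{prop:Z1_path_limit}.

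Finiteness of the integral also needs checking, but it is already built into \propref{prop:Rt}. Independently, \lemmaref{lemma:mean_recursion} gives $m_{\ell-1}(u)=\mathcal O(u^{\hat r_{\ell-1}-1})$ when $\hat\delta_{\ell-1}\le0$. Against the factor $e^{-\lambda_\ell u}$, this makes the integrand integrable.
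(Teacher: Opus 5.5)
Your proposal is correct and follows essentially the same route as the paper. Like the paper, you use the extinction argument from \propref{prop:Z1_path_limit} to get $X_{\ell-1}=\tilde Z_{\ell-1}=0$, hence $f(u)=w_{\ell-1}(u)m_{\ell-1}(u)$, and then apply the case $\lambda_\ell>\delta_{\ell-1}=0$ of \propref{prop:Rt}; your explicit cancellation of $w_{\ell-1}(u)=u^{-(r_{\ell-1}-1)}$ against the factor $u^{r_{\ell-1}-1}$ in $g_s$ is the step the paper leaves implicit, and your separate integrability check is redundant but harmless.
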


\begin{proof}
By the extinction argument in \propref{prop:Z1_path_limit}, $X_{\ell-1}=\tilde Z_{\ell-1}=0$.
Thus, $f(u)=w_{\ell-1}(u)m_{\ell-1}(u)$.
Since $\delta_{\ell-1}=0<\lambda_\ell$, the formula follows from \propref{prop:Rt}.
\end{proof}

\begin{lemma}
\label{lemma:Z0_path_stationary_mean}
Fix $0\le j\le L$. Suppose the path is $Z^0$-rooted and $\lambda_k<0$ for $0\le k\le j$.
The stationary means satisfy
\[
\bar z_0=\frac{\nu_0}{-\lambda_0},
\qquad
\bar z_k=\frac{\mu_k}{-\lambda_k}\bar z_{k-1},\quad 1\le k\le j,
\]
and, for $0\le k\le j$,
\[
\bar z_k=
\frac{\nu_0}{-\lambda_0}
\prod_{a=1}^{k}\frac{\mu_a}{-\lambda_a}.
\]
\end{lemma}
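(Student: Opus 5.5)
The plan is to compute the stationary means directly from the first-moment dynamics. \lemmaref{lemma:subcritical_path_Z0} already guarantees that, for every $k\le j$, the stationary means $\bar z_k$ exist and are finite and positive. It also gives the almost-sure time averages in \eref{eq:subcritical_coordinate_average}. What remains is to identify their values.

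First, I will derive the mean equations. Set $m_k(t)=\E[Z_k(t)]$. These are finite and polynomially bounded by \corref{cor:all_moments_bounded}. From the generator, or equivalently from the semimartingale decomposition $Z_k(t)=Z_k(0)+\int_0^tB_k(u)\rmd u+M_k(t)$ used in the proof of \lemmaref{lemma:non_supercritical_path}, taking expectations gives
\[
m_0'(t)=\nu_0+\lambda_0m_0(t),\qquad m_k'(t)=\mu_km_{k-1}(t)+\lambda_km_k(t),\quad 1\le k\le j,
\]
with $m_k(0)=0$. The martingale parts have zero mean because the quadratic-variation densities are integrable.

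Second, I will solve these equations. The root gives $m_0(t)=\frac{\nu_0}{-\lambda_0}(1-e^{\lambda_0t})\to\nu_0/(-\lambda_0)$. Then variation of constants, $m_k(t)=\mu_k\int_0^te^{\lambda_k(t-u)}m_{k-1}(u)\rmd u$, together with $\lambda_k<0$ and induction, shows that $m_k(t)\to\frac{\mu_k}{-\lambda_k}\lim m_{k-1}$. The convergence follows by dominated convergence after substituting $h=t-u$, since the integrand is bounded by $\sup_u m_{k-1}(u)\,e^{\lambda_kh}$, which is integrable.

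Third, I will identify $\lim_t m_k(t)$ with $\bar z_k=\E_{\omega_j}[Z_k]$. This is the main obstacle, because convergence in distribution (\lemmaref{lemma:subcritical_path_Z0}) does not by itself give convergence of means. I will handle it with uniform integrability. \corref{cor:all_moments_bounded} with $m=2$ bounds $\sup_{t\ge1}\E[Z_k(t)^2]$, since $w_k(t)=1$ here ($\delta_k=0$, $r_k=1$). Hence $\{Z_k(t)\}_{t\ge1}$ is uniformly integrable, and since $Z_k(t)\overset{d}{\to}$ the $k$-th marginal of $\omega_j$, we get $m_k(t)\to\bar z_k$. (Note that the marginal of $\omega_j$ on the first $k+1$ coordinates is $\omega_k$, so the means do not depend on $j$.)

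As a fallback, I can avoid limits entirely. Apply the generator to $V(\mathbf z)=z_k$, which is $\omega_j$-integrable by \lemmaref{lemma:subcritical_path_Z0}, and use stationarity $\E_{\omega_j}[\mathcal L z_k]=0$; this directly gives $\nu_0+\lambda_0\bar z_0=0$ and $\mu_k\bar z_{k-1}+\lambda_k\bar z_k=0$. Making this rigorous would require a truncation argument, so I prefer the uniform-integrability route.

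Finally, iterating the recursion yields the closed-form product $\bar z_k=\frac{\nu_0}{-\lambda_0}\prod_{a=1}^{k}\frac{\mu_a}{-\lambda_a}$.
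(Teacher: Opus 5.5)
Your proof is correct, but your main route differs from the paper's. The paper's proof is essentially your fallback. It writes down the balance equations $0=\nu_0+\lambda_0\bar z_0$ and $0=\mu_k\bar z_{k-1}+\lambda_k\bar z_k$ directly ``in the stationary distribution'' and solves them, without spelling out why the invariant law satisfies the mean equations.

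You instead solve the transient mean ODE from the zero initial state and pass to the limit by dominated convergence. You then identify $\lim_t\E[Z_k(t)]$ with $\bar z_k$ by uniform integrability. For this you combine the second-moment bound of \corref{cor:all_moments_bounded}, where $w_k\equiv1$ because $\delta_k=0$ and $r_k=1$, with the distributional convergence of \lemmaref{lemma:subcritical_path_Z0}. There is no circularity, since neither of those results uses this lemma.

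The paper's route is shorter but leaves the stationarity step implicit. Yours is longer, but every step rests on results already proved in the paper. It also yields $\E[Z_k(t)]\to\bar z_k$ as a byproduct; this is the same uniform-integrability step the paper carries out separately in the proof of \propref{prop:Rt}.

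Your fallback also does not need a truncation argument. Start the chain from $\omega_j$, which has finite first moments. Then the mean vector $\E_{\omega_j}[\mathbf Z^{(j)}(t)]$ satisfies the same linear mean ODE with initial value $(\bar z_0,\dots,\bar z_j)$. It is also constant in $t$ by stationarity, so the right-hand side of the ODE must vanish. This is exactly the paper's argument made precise.
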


\begin{proof}
In the stationary distribution from \lemmaref{lemma:subcritical_path_Z0}, the mean equations reduce to
\[
0=\nu_0+\lambda_0\bar z_0,
\qquad
0=\mu_k\bar z_{k-1}+\lambda_k\bar z_k,\quad 1\le k\le j.
\]
Solving these equations and iterating the recursion establishes the formulas.
\end{proof}

\subsection{Closed-form Laplace transform recursions}
\label{app:lt_closed}

For each fixed scaled argument $s>0$, \eref{eq:scaled_LT_limit} combines the second-order approximation in \eref{eq:second_order_error} with the limits $I_t(s)\to I_\infty(s)$ and $\E[R_t(s)]\to R_\infty(s)$. The relative $\mathcal O(\mu_\ell^2)$ bound controls the removal of the centered fluctuation $M_t$, with all other model parameters fixed. To obtain a closed-form recursion, we distinguish three cases.

First, if $\lambda_\ell\le\delta_{\ell-1}$, then $R_\infty(s)=0$ by \propref{prop:Rt}, and
\begin{equation}
\label{eq:no_mean_correction}
\E\left[\exp\left(-s\,w_\ell(t)Z_\ell(t)\right)\right]
\approx
\E\left[\exp\left(-\mu_\ell I_\infty(s)X_{\ell-1}\right)\right].
\end{equation}

Second, suppose the path is $Z^1$-rooted, $\delta_{\ell-1}=0$, and $\lambda_\ell>0$. Then $X_{\ell-1}=\tilde Z_{\ell-1}=0$, so only the mean correction remains:
\[
\E\left[\exp\left(-s\,w_\ell(t)Z_\ell(t)\right)\right]
\approx \exp\left(-\mu_\ell R_\infty(s)\right).
\]
In the formula for $R_\infty(s)$ in \corref{cor:Rt}, we replace the parent mean by its large-time equivalent from \lemmaref{lemma:mean_recursion} and define, for $s\ge0$,
\[
\begin{aligned}
J_\infty(s)&:=
\int_0^\infty A_{\ell-1}u^{\hat r_{\ell-1}-1}e^{\hat\delta_{\ell-1}u}
\frac{se^{-\lambda_\ell u}}
{1+\frac{\alpha_\ell}{\lambda_\ell}se^{-\lambda_\ell u}}\rmd u\\
&=
A_{\ell-1}s\frac{\Gamma(\hat r_{\ell-1})}{\lambda_\ell^{\hat r_{\ell-1}}}
\LerchPhi\left(
-\frac{\alpha_\ell}{\lambda_\ell}s,
\hat r_{\ell-1},
1-\frac{\hat\delta_{\ell-1}}{\lambda_\ell}
\right).
\end{aligned}
\]
The equality follows from the Lerch integral representation in the proof of \propref{prop:It}, and $J_\infty(0)=0$.
Replacing $R_\infty(s)$ by $J_\infty(s)$, we obtain the closed-form approximation
\begin{equation}
\label{eq:J_approximation}
\E\left[\exp\left(-s\,w_\ell(t)Z_\ell(t)\right)\right]
\approx \exp\left(-\mu_\ell J_\infty(s)\right).
\end{equation}
The mean asymptotic alone does not provide a bound on $|J_\infty(s)-R_\infty(s)|$, so this substitution introduces an additional approximation.

Finally, suppose $\lambda_\ell>\delta_{\ell-1}$ and either $\delta_{\ell-1}>0$ or the path is $Z^0$-rooted. Here, $R_\infty(s)$ need not vanish. Motivated by approximations that replace parent histories by their large-time forms \citep{durrett2015branching,nicholson2023sequential}, we neglect $R_\infty(s)$ and take
\begin{equation}
\label{eq:dropped_mean_correction}
\E\left[\exp\left(-s\,w_\ell(t)Z_\ell(t)\right)\right]
\approx
\E\left[\exp\left(-\mu_\ell I_\infty(s)X_{\ell-1}\right)\right].
\end{equation}
Relative to the right-hand side of \eref{eq:scaled_LT_limit}, this omission introduces the relative error $e^{\mu_\ell R_\infty(s)}-1$. With $s$ and all parameters except $\mu_\ell$ fixed, this error is first order in $\mu_\ell$ when $R_\infty(s)\ne0$.

In \eref{eq:no_mean_correction} and \eref{eq:dropped_mean_correction},
\[
\E\left[\exp\left(-\mu_\ell I_\infty(s)X_{\ell-1}\right)\right]
=
\begin{dcases}
\exp\left(-\mu_\ell I_\infty(s)\bar z_{\ell-1}\right),
& Z_0=Z_0^0,\ \delta_{\ell-1}=0,\ r_{\ell-1}=1,\\[1ex]
\E\left[\exp\left(-\mu_\ell I_\infty(s)\tilde Z_{\ell-1}\right)\right],
& \text{otherwise},
\end{dcases}
\]
where, in the stationary case, \lemmaref{lemma:Z0_path_stationary_mean} states that
\[
\bar z_{\ell-1}
=\frac{\nu_0}{-\lambda_0}
\prod_{k=1}^{\ell-1}\frac{\mu_k}{-\lambda_k}.
\]

When $X_{\ell-1}=\tilde Z_{\ell-1}$, convergence in distribution in \eref{eq:large_time_limit} implies, for each fixed $s>0$,
\[
\E\left[e^{-\mu_\ell I_\infty(s)w_{\ell-1}(t)Z_{\ell-1}(t)}\right]
\longrightarrow
\E\left[e^{-\mu_\ell I_\infty(s)X_{\ell-1}}\right],
\]
because the exponential is bounded and continuous on $[0,\infty)$. We use this replacement in \eref{eq:no_mean_correction} and \eref{eq:dropped_mean_correction}, retaining $X_{\ell-1}=0$ for $Z^1$-rooted paths with $\delta_{\ell-1}=0$ and $X_{\ell-1}=\bar z_{\ell-1}$ in the stationary case.

For an unscaled Laplace argument $s>0$, set $\tilde s=s/w_\ell(t)$. Since $\tilde s$ may grow with $t$, the preceding fixed-argument limits and second-order bound do not provide an error bound for this substitution. We use it as an additional approximation to obtain \eref{eq:laplace_recursion}:
\[
    \E\left[\exp\left(-s \, Z_\ell(t)\right)\right]\approx
    \exp\left(-c^{\pi_{:{\ell}}}_{v_{\ell-1}v_\ell}(s)\right)\,
    \E\left[\exp\left(-h^{\pi_{:{\ell}}}_{v_{\ell-1}v_\ell}(s)\,Z_{\ell-1}(t)\right)\right].
\]
The edge maps, with their dependence on $t$ suppressed, are
\begin{equation}
\label{eq:c_edge_map}
c^{\pi_{:{\ell}}}_{v_{\ell-1}v_\ell}(s) =
\begin{dcases}
    \mu_{\ell} \,J_\infty(\tilde s) & Z_0=Z_0^1,\,\delta_{\ell-1}=0,\,\lambda_\ell>0, \\[1ex]
    \mu_{\ell}\,I_\infty(\tilde s)\,\bar z_{\ell - 1} & Z_0=Z_0^0,\,\delta_{\ell-1}=0,\,r_{\ell-1}=1, \\[1ex]
    0  & \text{otherwise},
\end{dcases}
\end{equation}
and
\begin{equation}
\label{eq:h_edge_map}
h^{\pi_{:{\ell}}}_{v_{\ell-1}v_\ell}(s) =
\begin{dcases}
    0 & Z_0=Z_0^1,\,\delta_{\ell-1}=0, \\[1ex]
    0 & Z_0=Z_0^0,\,\delta_{\ell-1}=0,\,r_{\ell-1}=1, \\[1ex]
    \mu_{\ell}\,w_{\ell - 1}(t)\,I_\infty(\tilde{s})  & \text{otherwise}.
\end{dcases}
\end{equation}

\subsection{Laplace transforms of the general system}
\label{app:lt_general}

Let $\mathcal G=(V,E)$ be a finite DAG and fix $t>0$ and $\bs\in[0,\infty)^V$. We return to vertex-indexed notation and suppress the dependence of the recursive maps on $t$.

For a source $i$ and a path $\pi\in\Pi_{i\to j}$, let $Z_j^{(i),[0],\pi}$ and $Z_j^{(i),[1],m,\pi}$ denote the contributions from the immigration process $Z_i^{(i),0}$ and the $m$-th initial-cell process $Z_i^{(i),1,m}$, respectively. Their sum is $Z_j^{(i),\pi}$ in \eref{eq:path_decomposition}. These root channels are independent, but distinct paths within one channel can be dependent. We suppress $m$ when considering one initial cell.

\paragraph{Recursive maps}

We write the edge maps \eref{eq:c_edge_map} and \eref{eq:h_edge_map} separately for the immigration channel $a=0$ and the initial-cell channel $a=1$. For $s\ge0$ and a path $\pi=(v_0\to\cdots\to v_\ell)$ with $\ell\ge1$, set $\tilde s=s/w_{v_\ell}^\pi(t)$ as in \aref{app:lt_closed}. The functions $I_\infty$ and $J_\infty$ are evaluated for the final edge of $\pi$, using the parameters of the parent prefix $\pi_{:\ell-1}$. We extend $I_\infty$ to zero by $I_\infty(0)=0$.
When $\lambda_{v_k}<0$ for $0\le k\le\ell-1$, the stationary immigration-channel mean is
\[
\bar z_{v_{\ell-1}}^\pi
=\frac{\nu_{v_0}}{-\lambda_{v_0}}
\prod_{k=1}^{\ell-1}\frac{\mu_{v_{k-1}v_k}}{-\lambda_{v_k}}
\]
by \lemmaref{lemma:Z0_path_stationary_mean}. The edge maps are
\begin{align}
c_{v_{\ell-1}v_\ell}^{[0],\pi_{:\ell}}(s)&=
\begin{dcases}
\mu_{v_{\ell-1}v_\ell}I_\infty(\tilde s)\bar z_{v_{\ell-1}}^{\pi},
& \delta_{v_{\ell-1}}^\pi=0,\ r_{v_{\ell-1}}^\pi=1,\\
0, & \text{otherwise},
\end{dcases}
\label{eq:c_edge_map_0}\\
h_{v_{\ell-1}v_\ell}^{[0],\pi_{:\ell}}(s)&=
\begin{dcases}
0, & \delta_{v_{\ell-1}}^\pi=0,\ r_{v_{\ell-1}}^\pi=1,\\
\mu_{v_{\ell-1}v_\ell}w_{v_{\ell-1}}^\pi(t)I_\infty(\tilde s),
& \text{otherwise},
\end{dcases}
\label{eq:h_edge_map_0}\\
c_{v_{\ell-1}v_\ell}^{[1],\pi_{:\ell}}(s)&=
\begin{dcases}
\mu_{v_{\ell-1}v_\ell}J_\infty(\tilde s),
& \delta_{v_{\ell-1}}^\pi=0,\ \lambda_{v_\ell}>0,\\
0, & \text{otherwise},
\end{dcases}
\label{eq:c_edge_map_1}\\
h_{v_{\ell-1}v_\ell}^{[1],\pi_{:\ell}}(s)&=
\begin{dcases}
0, & \delta_{v_{\ell-1}}^\pi=0,\\
\mu_{v_{\ell-1}v_\ell}w_{v_{\ell-1}}^\pi(t)I_\infty(\tilde s),
& \text{otherwise}.
\end{dcases}
\label{eq:h_edge_map_1}
\end{align}

All four edge maps are nonnegative and vanish at zero. For each $a\in\{0,1\}$ and each path $\pi=(v_0\to\cdots\to v_\ell)$, including $\ell=0$, define the backward maps
\begin{equation}
\label{eq:G_map_channels}
G_\ell^{[a]}\left(\bs;\pi\right)
=s_{v_\ell}
+\sum_{u\in\child(v_\ell)}
h_{v_\ell u}^{[a],\pi\to u}
\left(G_{\ell+1}^{[a]}\left(\bs;\pi\to u\right)\right),
\end{equation}
\begin{equation}
\label{eq:C_map_channels}
C_\ell^{[a]}\left(\bs;\pi\right)
=\sum_{u\in\child(v_\ell)}
\left[
c_{v_\ell u}^{[a],\pi\to u}
\left(G_{\ell+1}^{[a]}\left(\bs;\pi\to u\right)\right)
+C_{\ell+1}^{[a]}\left(\bs;\pi\to u\right)
\right],
\end{equation}
with terminal conditions $G_\ell^{[a]}\left(\bs;\pi\right)=s_{v_\ell}$ and $C_\ell^{[a]}\left(\bs;\pi\right)=0$ at leaves.
Distinct paths ending at the same vertex remain separate nodes of the recursion. Since the DAG is finite, the recursions terminate and all transform arguments are nonnegative.

\paragraph{Recursive LTSM construction}
Under the recursive endpoint approximation specified below, these maps define the joint approximation \eref{eq:joint_LT_channels}. For each $i\in V$ and $s\ge0$, its marginal form is
\begin{equation}
\label{eq:marginal_LT_channels}
\begin{aligned}
\E\left[e^{-sZ_i(t)}\right]
&\approx
\prod_{v\in\Anc^\ast(i)}\Biggl[
\exp\left(
-C^{[0]}_0\left(s\be_i;(v)\right)
-z_{0,v}C^{[1]}_0\left(s\be_i;(v)\right)
\right)\\
&\qquad\times
\mathcal L_v^0\left(G^{[0]}_0\left(s\be_i;(v)\right);t\right)
\left[
\mathcal L_v^1\left(G^{[1]}_0\left(s\be_i;(v)\right);t\right)
\right]^{z_{0,v}}\Biggr].
\end{aligned}
\end{equation}

\paragraph{Derivation}
By \eref{eq:subsystem_decomposition} and the independence of the source subsystems,
\begin{equation}
\label{eq:source_LT_factorization}
\E\left[e^{-\bs^\top\bz(t)}\right]
=
\prod_{i\in V}
\E\left[e^{-\bs^\top\bz^{(i)}(t)}\right].
\end{equation}

Fix a source $i$ and a channel $a\in\{0,1\}$, with one initial cell fixed when $a=1$. Recursive use of \eref{eq:cond_laplace_recursion} treats the full child continuation branches as conditionally independent given their parent endpoint. It also replaces each child's continuation law by a function of that child's endpoint alone, even when ancestor endpoints are also conditioned on. These are additional approximations to the dependence on full population histories.

For a path $\pi=(i\to\cdots\to v_\ell)$ and a path $\rho$ from $v_\ell$ to a descendant, write $\pi\to\rho$ for their concatenation, including $v_\ell$ only once. Under the recursive endpoint approximation, reverse induction on the finite tree of paths establishes
\begin{equation}
\label{eq:approx_conditional_subtree}
\begin{split}
&\E\left[
\left.
\exp\left(
-\sum_{j\in\Desc(v_\ell)}
\sum_{\rho\in\Pi_{v_\ell\to j}}
s_jZ_j^{(i),[a],\pi\to\rho}(t)
\right)
\right|Z_{v_\ell}^{(i),[a],\pi}(t)
\right]\\
&\qquad\approx
\exp\left(-C^{[a]}_\ell(\bs;\pi)\right)
\exp\left(
-\left(G^{[a]}_\ell(\bs;\pi)-s_{v_\ell}\right)
Z_{v_\ell}^{(i),[a],\pi}(t)
\right).
\end{split}
\end{equation}
At a leaf, the sums are empty and both sides equal one by the terminal conditions.

At an internal node, suppose \eref{eq:approx_conditional_subtree} holds for every child $u$. Including the child population $Z_u^{(i),[a],\pi\to u}(t)$ in its continuation transform changes the coefficient from $G^{[a]}_{\ell+1}(\bs;\pi\to u)-s_u$ to $G^{[a]}_{\ell+1}(\bs;\pi\to u)$. Under the recursive endpoint approximation, the conditional transform is therefore approximated by
\[
\begin{aligned}
&\prod_{u\in\child(v_\ell)}\Biggl[
\exp\left(
-c_{v_\ell u}^{[a],\pi\to u}
\left(G^{[a]}_{\ell+1}(\bs;\pi\to u)\right)
-C^{[a]}_{\ell+1}(\bs;\pi\to u)
\right)\\
&\qquad\times
\exp\left(
-h_{v_\ell u}^{[a],\pi\to u}
\left(G^{[a]}_{\ell+1}(\bs;\pi\to u)\right)
Z_{v_\ell}^{(i),[a],\pi}(t)
\right)\Biggr].
\end{aligned}
\]
By \eref{eq:G_map_channels} and \eref{eq:C_map_channels}, this expression equals the right-hand side of \eref{eq:approx_conditional_subtree}.

At the root, we include the root population and average over its elementary law. The immigration-channel approximation is
\[
\exp\left(-C^{[0]}_0(\bs;(i))\right)
\mathcal L_i^0\left(G^{[0]}_0(\bs;(i));t\right),
\]
and, for each initial cell, the approximation
\[
\exp\left(-C^{[1]}_0(\bs;(i))\right)
\mathcal L_i^1\left(G^{[1]}_0(\bs;(i));t\right).
\]
The root channels are independent. Multiplying the immigration factor and the $z_{0,i}$ initial-cell factors, then substituting into \eref{eq:source_LT_factorization}, we obtain \eref{eq:joint_LT_channels}.

For the marginal transform, set $\bs=s\be_i$. If $v\notin\Anc^\ast(i)$, there is no directed path from $v$ to $i$. Since the edge maps vanish at zero, backward induction in \eref{eq:G_map_channels} and \eref{eq:C_map_channels} implies
\[
G^{[a]}_0(s\be_i;(v))=C^{[a]}_0(s\be_i;(v))=0,
\qquad a\in\{0,1\}.
\]
Also, $\mathcal L_v^0(0;t)=\mathcal L_v^1(0;t)=1$. Hence, all factors with $v\notin\Anc^\ast(i)$ equal one, which proves the reduction to \eref{eq:marginal_LT_channels}.

Factorization across source subsystems and their root channels is exact. The edge-map approximations in \aref{app:lt_closed} and the recursive replacement of histories by endpoints are additional approximations. No error bound for the joint recursion is established here.

\subsection{Inverse Laplace transforms for rectangular probabilities}
\label{app:ILT-rectangular}

Let $\bn=\left(n_1,\dots,n_d\right)\in\mathbb Z_+^d$, $\bs=\left(s_1,\dots,s_d\right)\in(0,\infty)^d$, and $\bb=\left(b_1,\dots,b_d\right)\in\mathbb Z_+^d$. With continuity corrections, set
\[
\tilde{\bn}=\left(n_1+\tfrac{1}{2},\dots,n_d+\tfrac{1}{2}\right),\qquad
\tilde{\bb}=\left(b_1+\tfrac{1}{2},\dots,b_d+\tfrac{1}{2}\right).
\]
In one dimension, let $F(t,x)=P(Z(t)\le x)$ for $x\ge0$ and $F(t,x)=0$ for $x<0$, and extend the probability mass function to the unit-width piecewise-constant function
\[
p(t,x)=F(t,x)-F(t,x-1)u(x-1),\qquad x\ge0,
\]
where $u(\cdot)$ is the Heaviside step function. The transforms used in \sref{sec:laplace} then follow from
\[
\mathcal L\left\{F(t,\cdot)\right\}(s)=\frac{1}{s}\E\left[e^{-sZ(t)}\right],
\]
and
\[
\mathcal L\{p(t,\cdot)\}(s)
=\frac{1-e^{-s}}{s}\E\left[e^{-sZ(t)}\right].
\]
For integers $0\le a_j\le b_j$, write $\ba=\left(a_1,\dots,a_d\right)$ and $m_j=b_j-a_j+1$, and set $\mathbf{m}=\left(m_1,\dots,m_d\right)$. Define the axis-aligned rectangle window
\[
B_{\mathbf{m}}\left(\mathbf{x}\right)=\prod_{j=1}^d 1_{\left(0,m_j\right]}\left(x_j\right),\qquad \mathbf{x}\in\mathbb R^d.
\]
Let $\mu_t$ be the probability measure 
\[
\mu_t=\sum_{\bn\in\mathbb Z_+^d} p\left(t,\bn\right)\delta_{\bn},
\]
where $\delta_{\bn}$ denotes the unit point mass at $\bn$. 
Consider the convolution of $B_{\mathbf{m}}$ with $\mu_t$,
\[
f_{\mathbf{m}}\left(\mathbf{x}\right)=\int_{\mathbb R_+^d} B_{\mathbf{m}}\left(\mathbf{x}-\mathbf{y}\right)\rmd\mu_t\left(\mathbf{y}\right)=\sum_{\mathbf{k}\in\mathbb Z_+^d} p\left(t,\mathbf{k}\right)\,B_{\mathbf{m}}\left(\mathbf{x}-\mathbf{k}\right).
\]
Evaluating at the midpoint $\tilde{\bb}$, and using that $\mathbf{k}\in\mathbb Z^d$, we have for each coordinate $j$,
\[
1_{\left(0,m_j\right]}\left(\tilde b_j-k_j\right)=1
\ \Longleftrightarrow\
0<b_j+\tfrac{1}{2}-k_j\le m_j=b_j-a_j+1
\ \Longleftrightarrow\
a_j\le k_j\le b_j.
\]
Hence, $B_{\mathbf{m}}\left(\tilde{\bb}-\mathbf{k}\right)=1$ if and only if $\ba\le \mathbf{k}\le \bb$ coordinatewise, and therefore,
\[
f_{\mathbf{m}}\left(\tilde{\bb}\right)=\sum_{\mathbf{k}} p\left(t,\mathbf{k}\right)\,B_{\mathbf{m}}\left(\tilde{\bb}-\mathbf{k}\right)
= P\left(a_1\le Z_1(t)\le b_1,\dots,a_d\le Z_d(t)\le b_d\right).
\]
By separability of $B_{\mathbf{m}}$ and of the Laplace kernel $e^{-\bs^\top \mathbf{x}}=\prod_{j=1}^d e^{-s_j x_j}$, and by Fubini--Tonelli, the $d$-dimensional Laplace transform satisfies
\[
\mathcal L\left\{f_{\mathbf{m}}\right\}\left(\bs\right)=\mathcal L\left\{B_{\mathbf{m}}\right\}\left(\bs\right)\,\mathcal L_{\bz(t)}\left(\bs\right),\qquad
\mathcal L\left\{B_{\mathbf{m}}\right\}\left(\bs\right)=\prod_{j=1}^d \frac{1-e^{-m_j s_j}}{s_j}.
\]
Inverting at $\tilde{\bb}$ gives the rectangular probability
\[
P\left(a_1\le Z_1(t)\le b_1,\dots,a_d\le Z_d(t)\le b_d\right)
=\mathcal L^{-1}\left\{\left(\prod_{j=1}^d \frac{1-e^{-m_j s_j}}{s_j}\right)\mathcal L_{\bz(t)}\left(\bs\right)\right\}\left(\tilde{\bb}\right).
\]
The unit box is the special case $\mathbf{m}=\mathbf{1}$, for which
\[
B\left(\mathbf{x}\right)=B_{\mathbf{1}}\left(\mathbf{x}\right)=\prod_{j=1}^d 1_{\left(0,1\right]}\left(x_j\right),\qquad
\mathcal L\left\{B\right\}\left(\bs\right)=\prod_{j=1}^d \frac{1-e^{-s_j}}{s_j}.
\]
Writing $f\left(\mathbf{x}\right)=\sum_{\mathbf{k}} p\left(t,\mathbf{k}\right)\,B\left(\mathbf{x}-\mathbf{k}\right)$ gives $f\left(\tilde{\bn}\right)=p\left(t,\bn\right)$, hence the joint $\pmf$
\[
p(t,\bn)=\mathcal L^{-1}\left\{\left(\prod_{j=1}^d \frac{1-e^{-s_j}}{s_j}\right)\mathcal L_{\bz(t)}\left(\bs\right)\right\}\left(\tilde{\bn}\right).
\]

\section{Simulations}
\label{sec:sim_appendix}

\subsection{Phase 1 model parameters}
\label{sec:sim_params}

All three diagnostic scenarios use the extended diamond graph with edges $1\to 2$, $2\to 3$, $2\to 4$, $3\to 5$, $4\to 5$, and mutation rates $0.01$, $0.015$, $0.01$, $0.0025$, $0.002$ respectively (in the same order). The cyclic variant additionally includes the feedback edge $5\to 2$ at rate $0.1$. The birth rates $\bm\alpha$, death rates $\bm\beta$, net growth rates $\bm\lambda = \bm\alpha - \bm\beta$, immigration rates $\bm\nu$, and initial counts $\bzs_0$ for types 1--5 are given in \tref{tab:sim_params}.

\begin{table}[!htbp]
\centering
\caption{Parameter values for the three root-regime scenarios.}
\label{tab:sim_params}
\small
\begin{tabular}{llccccc}
\hline
Scenario & Parameter & Type 1 & Type 2 & Type 3 & Type 4 & Type 5 \\
\hline
\multirow{5}{*}{Supercritical} &
  $\bm\alpha$ & 1.3 & 1.5 & 1.1 & 1.6 & 1.7 \\
& $\bm\beta$  & 1.0 & 1.0 & 0.9 & 0.8 & 1.1 \\
& $\bm\lambda$& 0.3 & 0.5 & 0.2 & 0.8 & 0.6 \\
& $\bm\nu$    & 0.3 & 0.7 & 0.8 & 0.1 & 0.2 \\
& $\bzs_0$    & 5   & 5   & 10  & 3   & 7   \\
\hline
\multirow{5}{*}{Critical} &
  $\bm\alpha$ & 1.0 & 1.5 & 1.1 & 1.6 & 1.7 \\
& $\bm\beta$  & 1.0 & 1.0 & 0.9 & 0.8 & 1.1 \\
& $\bm\lambda$& 0.0 & 0.5 & 0.2 & 0.8 & 0.6 \\
& $\bm\nu$    & 0.7 & 0.7 & 0.8 & 0.1 & 0.2 \\
& $\bzs_0$    & 5   & 5   & 10  & 3   & 7   \\
\hline
\multirow{5}{*}{Subcritical} &
  $\bm\alpha$ & 0.77 & 1.5 & 1.1 & 1.6 & 1.7 \\
& $\bm\beta$  & 1.0  & 1.0 & 0.9 & 0.8 & 1.1 \\
& $\bm\lambda$& $-0.23$ & 0.5 & 0.2 & 0.8 & 0.6 \\
& $\bm\nu$    & 0.0 & 0.7 & 0.8 & 0.1 & 0.2 \\
& $\bzs_0$    & 100 & 5   & 10  & 3   & 7   \\
\hline
\end{tabular}
\end{table}

\subsection{Phase 2 simulation setup}
\label{sec:sim_params_phase2}

For each combination of $N\in\{10,20,50\}$ and $D\in\{1,2,3\}$, we generate 50 independent DAGs. The vertices are ordered from 1 to $N$. For every $j=2,\ldots,N$, one parent is chosen uniformly from $\{1,\ldots,j-1\}$. The resulting $N-1$ edges form a spanning arborescence rooted at vertex~1. Consequently, vertex~1 is an ancestor of every other vertex, the DAG has a unique source, and each of the remaining $N-1$ vertices has a well-defined topological depth. The latter is required by our choice of evaluation nodes, which are selected by structural role. Let $M_N=N(N-1)/2$ be the number of possible forward edges. Each remaining forward edge is included independently with probability
\[
p_{N,D}=\frac{ND-(N-1)}{M_N-(N-1)},
\]
so the expected total number of edges is $ND$.

For each random DAG, the parameters are sampled independently. Immigration rates are $\nu_i\sim\mathrm{Unif}(0,1)$. Net growth rates are $g_i \sim \mathrm{Unif}(-0.3, 0.3)$, with $\alpha_i = 1 + g_i/2$ and $\beta_i = 1 - g_i/2$. Mutation rates are $\mu_{ij} \sim \mathrm{Unif}(10^{-4}, 10^{-3})$. Initial counts are sampled uniformly from the integers $\{1, \ldots, 100\}$.

\subsection{Accuracy and runtime metrics}
\label{sec:sim_metrics}

\paragraph{Absolute CDF error (ACE)}
For an evaluation triple $(i, t, n)$ of node type, time point, and population count,
the point-wise absolute error is $|F^m_i(t,n) - \hat{F}^{\mathrm{SSA}}_i(t,n)|$,
where $F^m_i(t,n)$ is the approximated $\cdf$ and $\hat{F}^{\mathrm{SSA}}_i(t,n)$ is
the empirical SSA $\cdf$.

Accuracy is summarised using a two-level aggregation in both experiments.
For the diagnostic experiment (\fref{fig:runtime_vs_accuracy}), evaluated at $t = 10$:
the per-type average ACE for type $i$ is the mean absolute error over all evaluation
counts $n$. The displayed point is the median of these per-type averages across all
five types, with interquartile-range bars spanning the 25th--75th percentiles across
types.
For the large-scale DAG experiments (\fref{fig:accuracy_vs_runtime_phase2}):
for each graph replicate and type $i$, the per-type average ACE is the mean absolute
error over all counts $n$ at $t = 10$. Each faint scatter dot is the median of these
per-type averages across types for one graph replicate. The bold point is the median of
those per-replicate values across graph replicates, with interquartile-range bars.

\paragraph{SSA reference ACE}
For the SSA reference curve at sample size $M$, let $p_k = F_i(t, n_k)$ be the true
$\cdf$\ value at grid point $k$. For $M$ independent trajectories,
$M\hat{F}_M(n_k) \sim \mathrm{Bin}(M, p_k)$, and the exact expected absolute error is
\[
  \mathbb{E}\bigl|\hat{F}_M(n_k) - F(n_k)\bigr|
  = 2p_k(1-p_k)\,\Pr\!\left(B_{M-1,\,p_k} = \lfloor Mp_k \rfloor\right),
\]
where $B_{M-1,p_k} \sim \mathrm{Bin}(M-1, p_k)$. The SSA reference ACE at sample size
$M$ is the mean of this quantity over the same evaluation grid used for the method
comparison. In practice, $p_k$ is estimated by the $M = 10^6$ empirical $\cdf$\ $\hat{F}_{10^6}(t, n_k)$, which itself carries a small Monte Carlo error of order $M^{-1/2}$. Asymptotically, $\mathbb{E}|\hat{F}_M - F| \approx \sqrt{2p(1-p)/(\pi M)}$,
so the SSA ACE scales as $M^{-1/2}$.

\paragraph{Runtime}
SP and LTSM were implemented in C++17 using the Eigen linear algebra library.
CLT is based on the open-source MATLAB implementation of \citet{gunnarsson2023statistical},
adapted here to compute marginal variances directly rather than the full covariance matrix.
All methods are timed at the level of a single $\cdf$\ evaluation on one CPU thread,
without intra-query parallelization, on the ETH Zurich Euler HPC cluster.
Runtime is summarised following the same two-level scheme as ACE.
For the diagnostic experiment, the per-type average runtime for type $i$ is the mean
wall-clock time per $\cdf$\ evaluation over all counts $n$ at $t = 10$. The displayed
point is the median of these per-type averages across types, with interquartile-range bars.
For the large-scale DAG experiments, per-type averages are computed over all counts $n$
at $t = 10$. The per-replicate summary is the median across types. The bold point is the
median across replicates, with interquartile-range bars.
For SSA, $10^4$ trajectories were timed on a single thread; the projected runtime for
$M$ trajectories is obtained by linear scaling $T(M) = T(10^4) \cdot M / 10^4$.

\subsection{Additional figures}
\label{sec:sim_appendix_figs}

\begin{figure}[H]
    \centering
    \includegraphics[width=\textwidth]{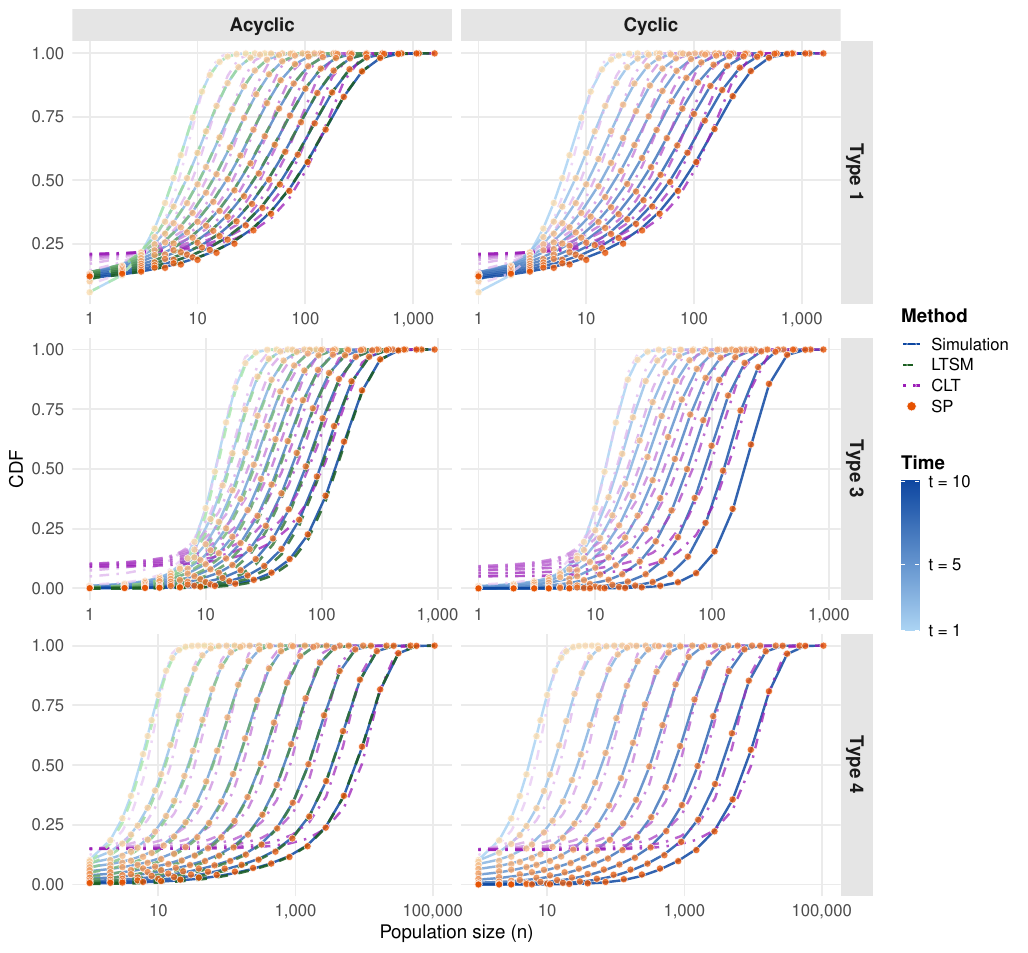}
    \caption{Marginal $\cdf$\ comparison for types~1 (root), 3, and 4 (branches) in the supercritical scenario, for the acyclic and cyclic variants}
    \label{fig:cdf_supercritical_appendix}
\end{figure}

\begin{figure}[H]
    \centering
    \includegraphics[width=\textwidth]{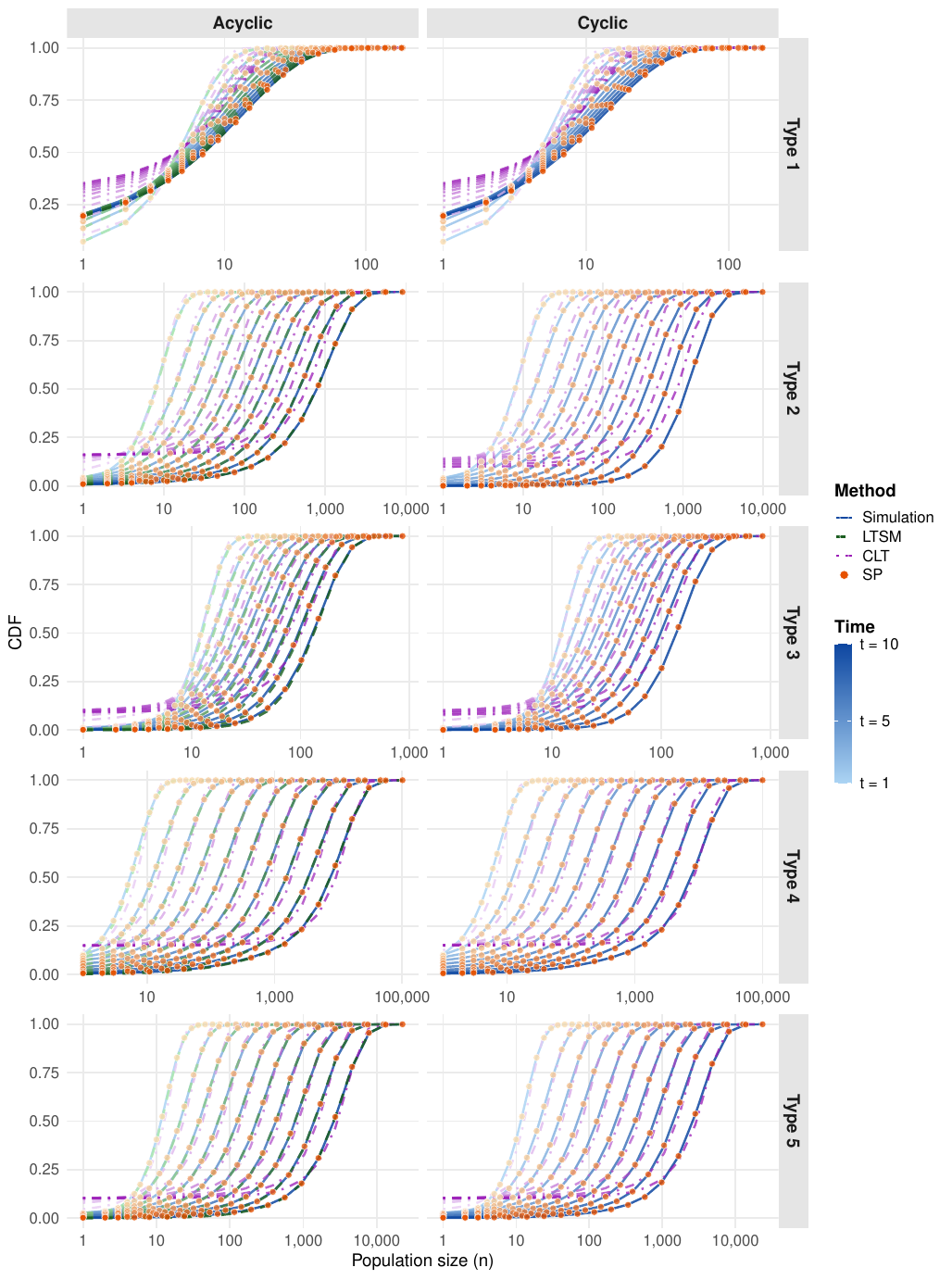}
    \caption{Marginal $\cdf$\ comparison for all five types in the critical root scenario ($\lambda_1 = 0$)}
    \label{fig:cdf_critical_appendix}
\end{figure}

\begin{figure}[H]
    \centering
    \includegraphics[width=\textwidth]{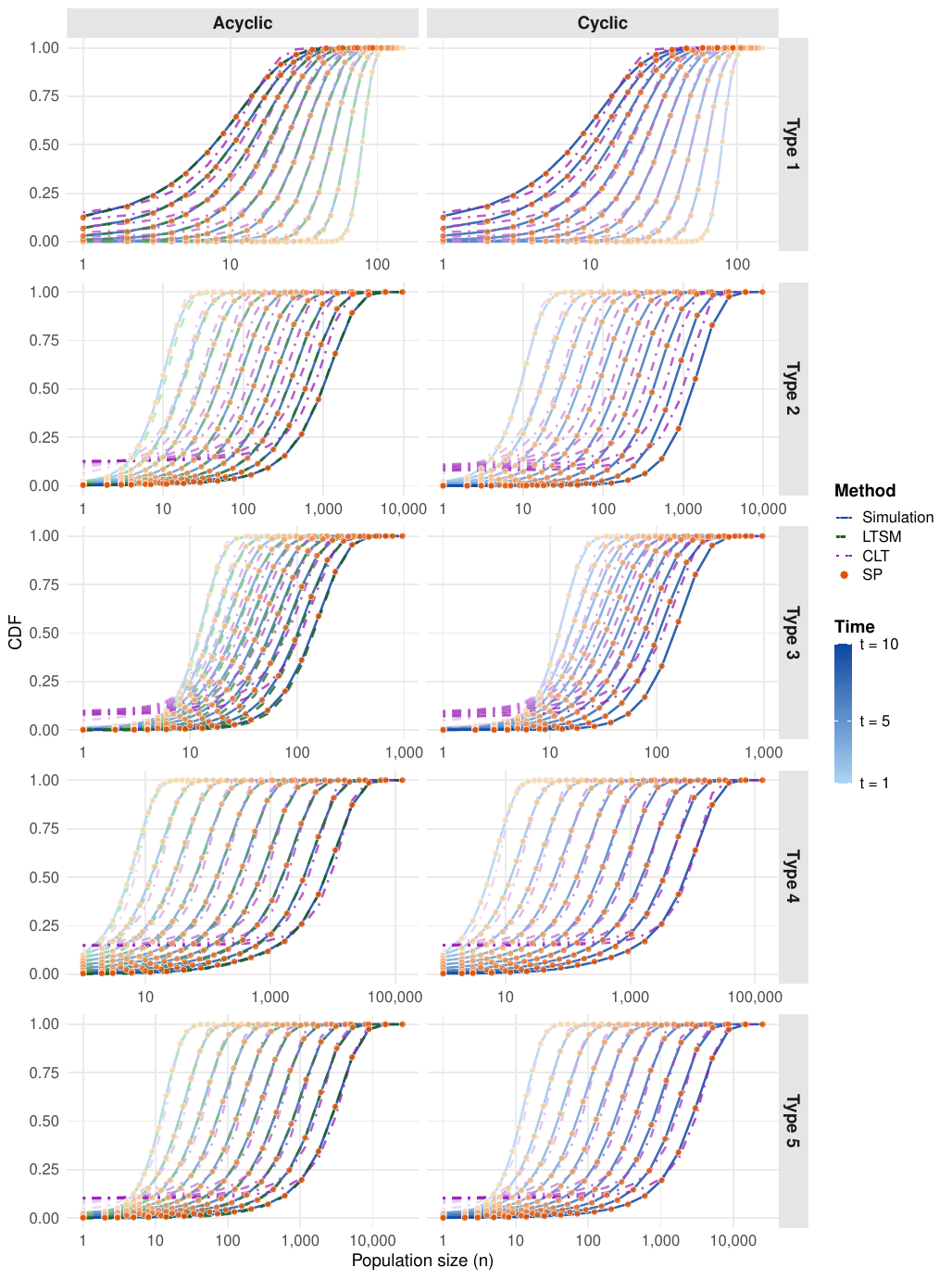}
    \caption{Marginal $\cdf$\ comparison for all five types in the subcritical root scenario ($\lambda_1 = -0.23$, $\nu_1 = 0$, $z_{1,0} = 100$)}
    \label{fig:cdf_subcritical_appendix}
\end{figure}

\begin{figure}[H]
    \centering
    \includegraphics[width=\textwidth]{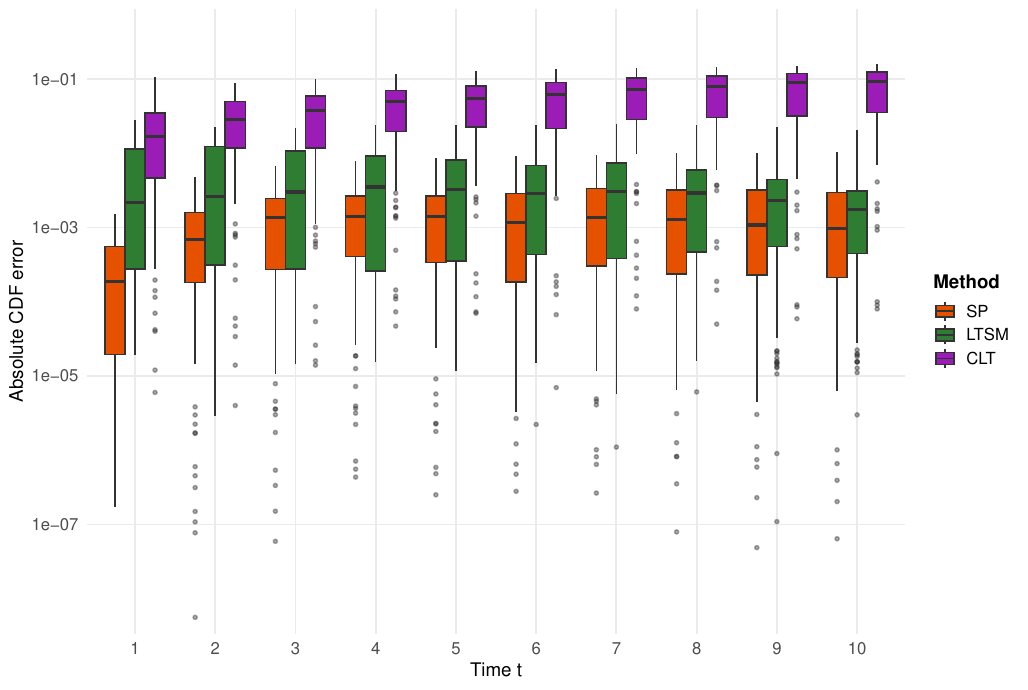}
    \caption{Time-resolved accuracy comparison for the supercritical scenario, acyclic variant. Each point in the boxplot shows the absolute $\cdf$\ error for one type at one time point, colored by methods. Boxplot elements show the median (center line), interquartile range (box), and 1.5$\times$ interquartile range (whiskers) across types at each time point}
    \label{fig:error_over_time}
\end{figure}

\begin{figure}[H]
    \centering
    \includegraphics[width=0.787\textwidth]{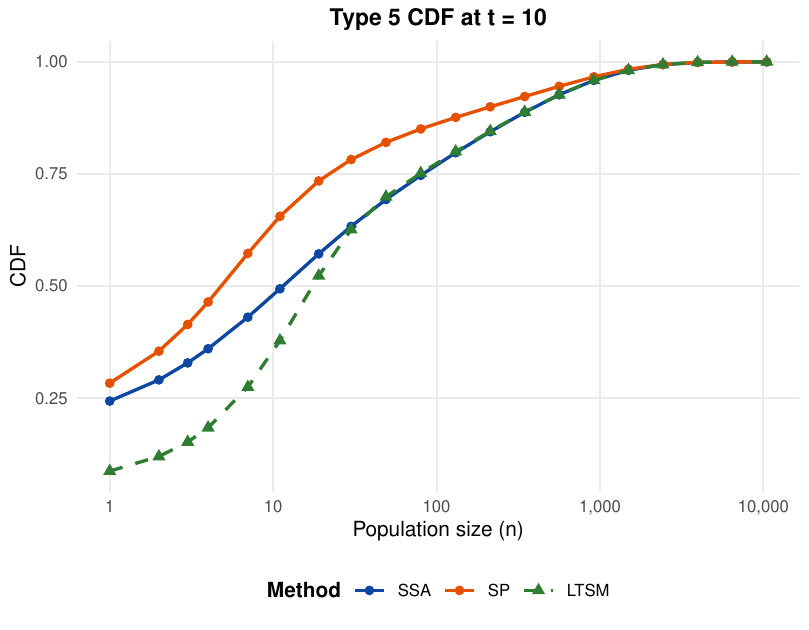}
    \caption{Marginal $\cdf$\ comparison for type~5 at $t = 10$ in the supercritical acyclic scenario, with initial condition set to $z_{5,0}=0$ and mutation rates divided by ten}
    \label{fig:ltsm_special_case_type5_t10_mutdiv10}
\end{figure}

\begin{figure}[H]
    \centering
    \includegraphics[width=\textwidth]{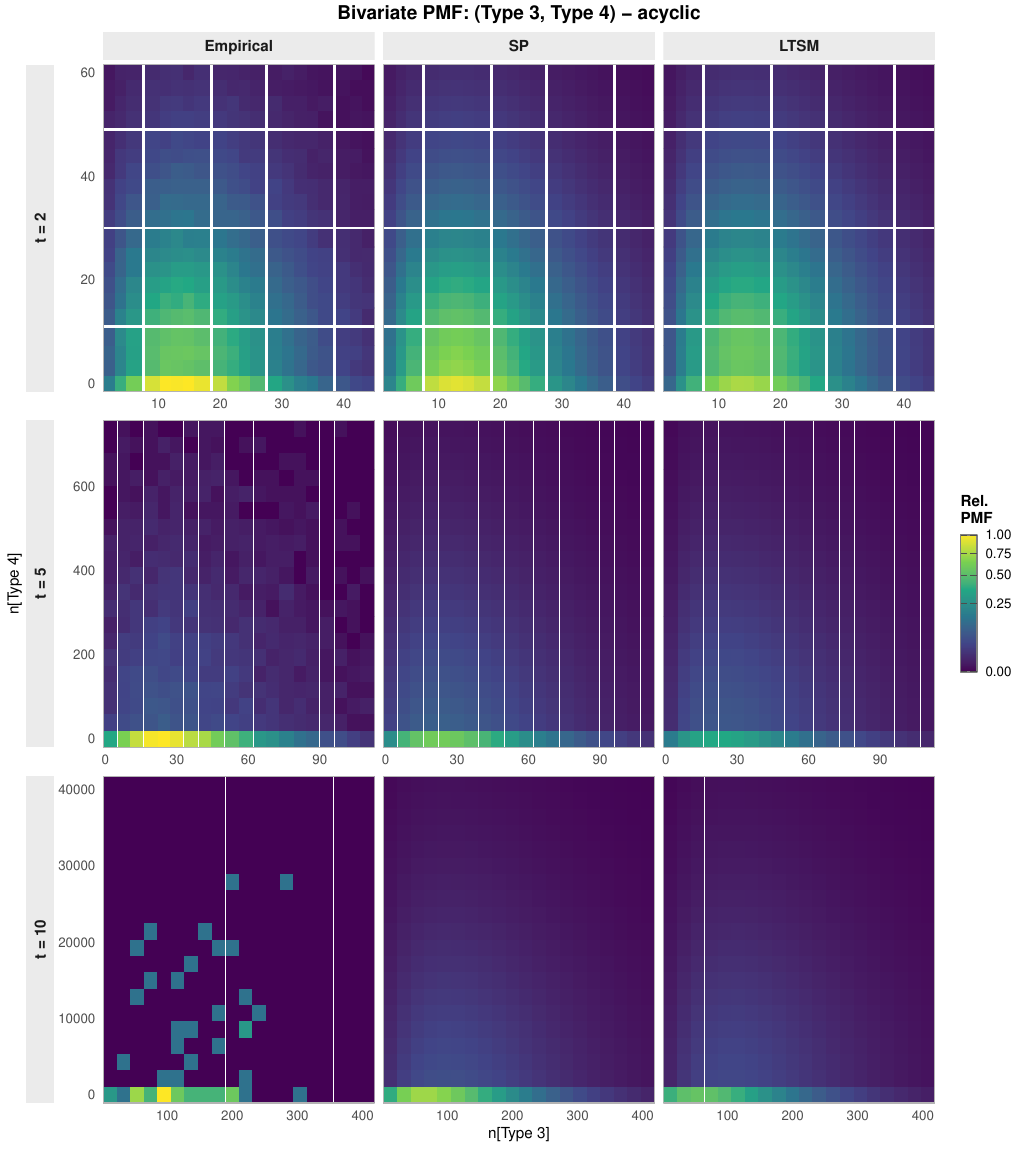}
    \caption{Bivariate $\pmf$\ comparison for the sibling pair $(Z_3, Z_4)$ at $t \in \{2, 5, 10\}$ (rows), supercritical acyclic scenario. Columns show the Gillespie reference, SP, and LTSM approximations. Color encodes probability mass; absolute-difference heatmaps are shown in the rightmost column}
    \label{fig:bivariate_23_acyclic}
\end{figure}

\begin{figure}[H]
    \centering
    \includegraphics[width=\textwidth]{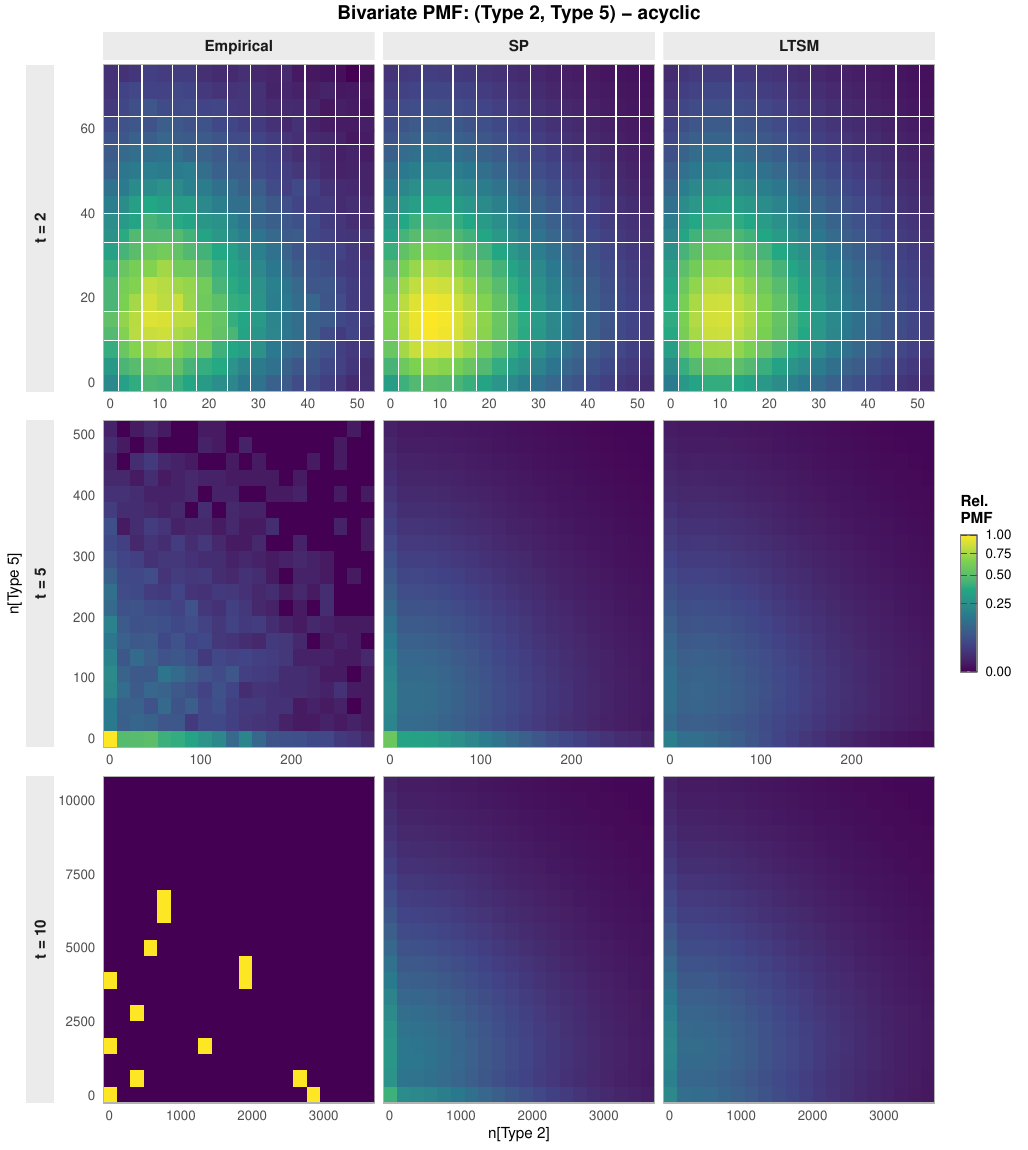}
    \caption{Bivariate $\pmf$\ comparison for the ancestor-descendant pair $(Z_2, Z_5)$ at $t \in \{2, 5, 10\}$, supercritical acyclic scenario. Format as in \fref{fig:bivariate_23_acyclic}}
    \label{fig:bivariate_14_acyclic}
\end{figure}

\begin{figure}[H]
    \centering
    \includegraphics[width=\textwidth]{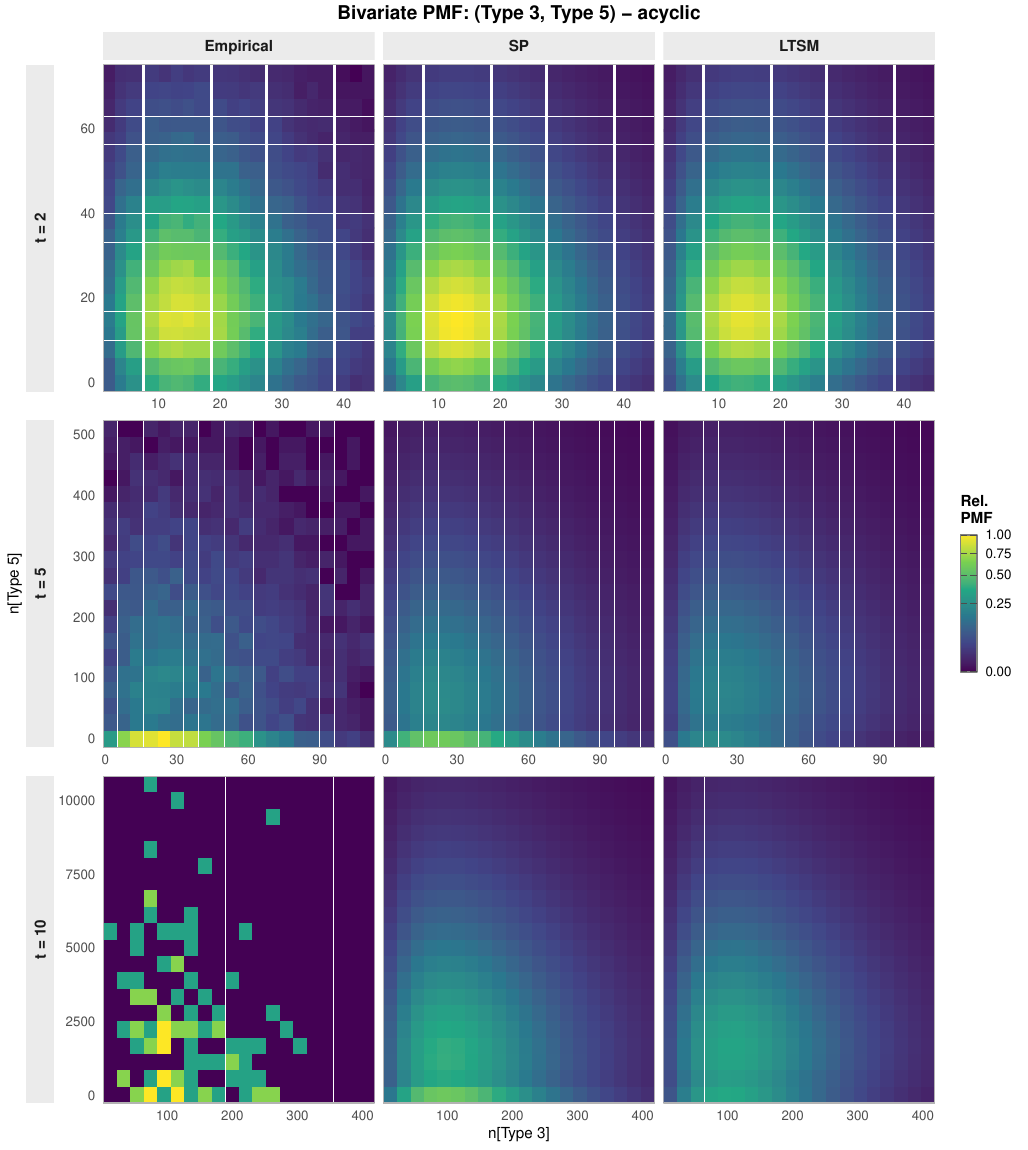}
    \caption{Bivariate $\pmf$\ comparison for the parent-child pair $(Z_3, Z_5)$ at $t \in \{2, 5, 10\}$, supercritical acyclic scenario. Format as in \fref{fig:bivariate_23_acyclic}}
    \label{fig:bivariate_24_acyclic}
\end{figure}

\begin{figure}[H]
    \centering
    \includegraphics[width=\textwidth]{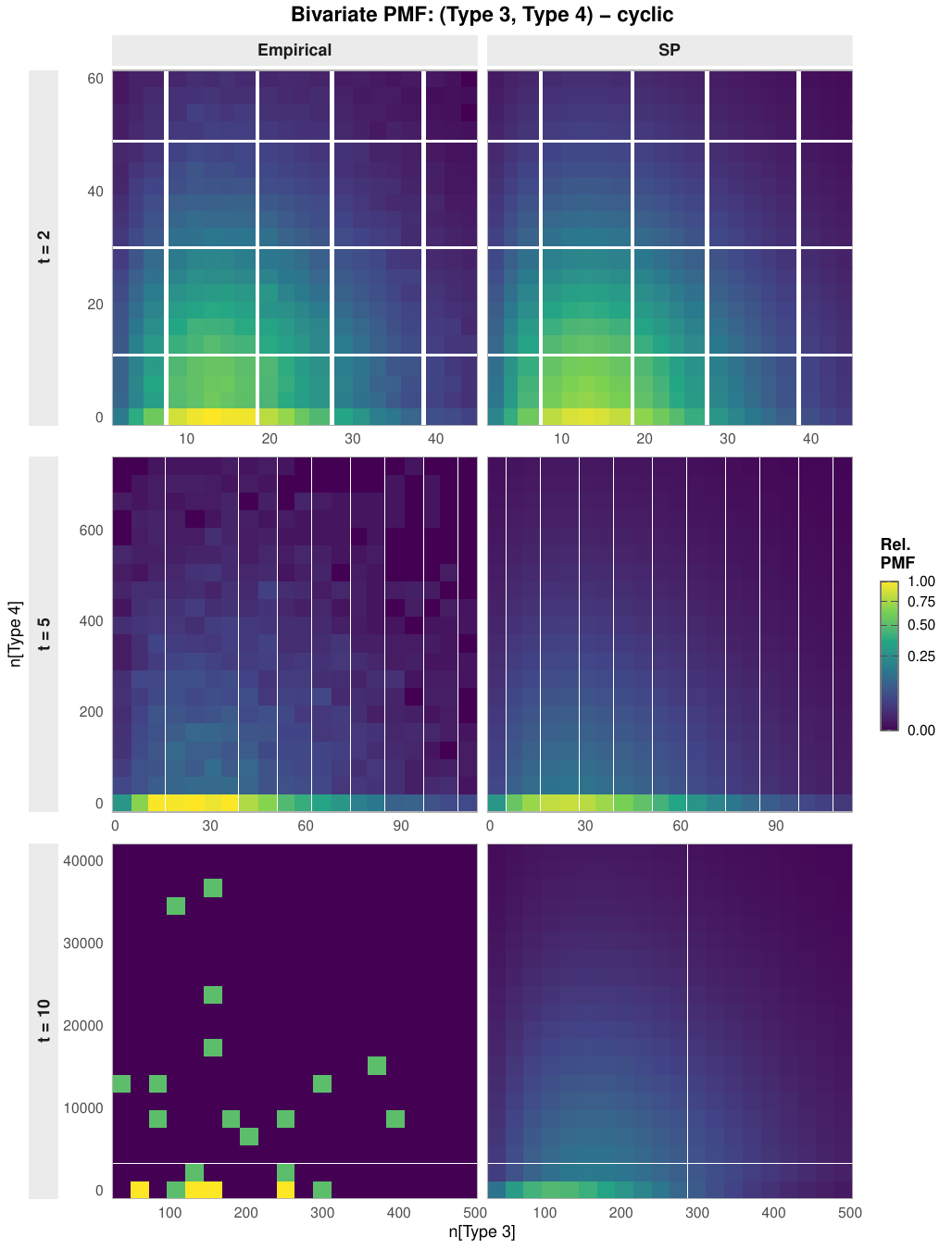}
    \caption{Bivariate $\pmf$\ comparison for the sibling pair $(Z_3, Z_4)$ at $t \in \{2, 5, 10\}$, supercritical cyclic scenario. Columns show the Gillespie reference and SP approximation; LTSM does not apply to cyclic graphs}
    \label{fig:bivariate_23_cyclic}
\end{figure}

\begin{figure}[H]
    \centering
    \includegraphics[width=\textwidth]{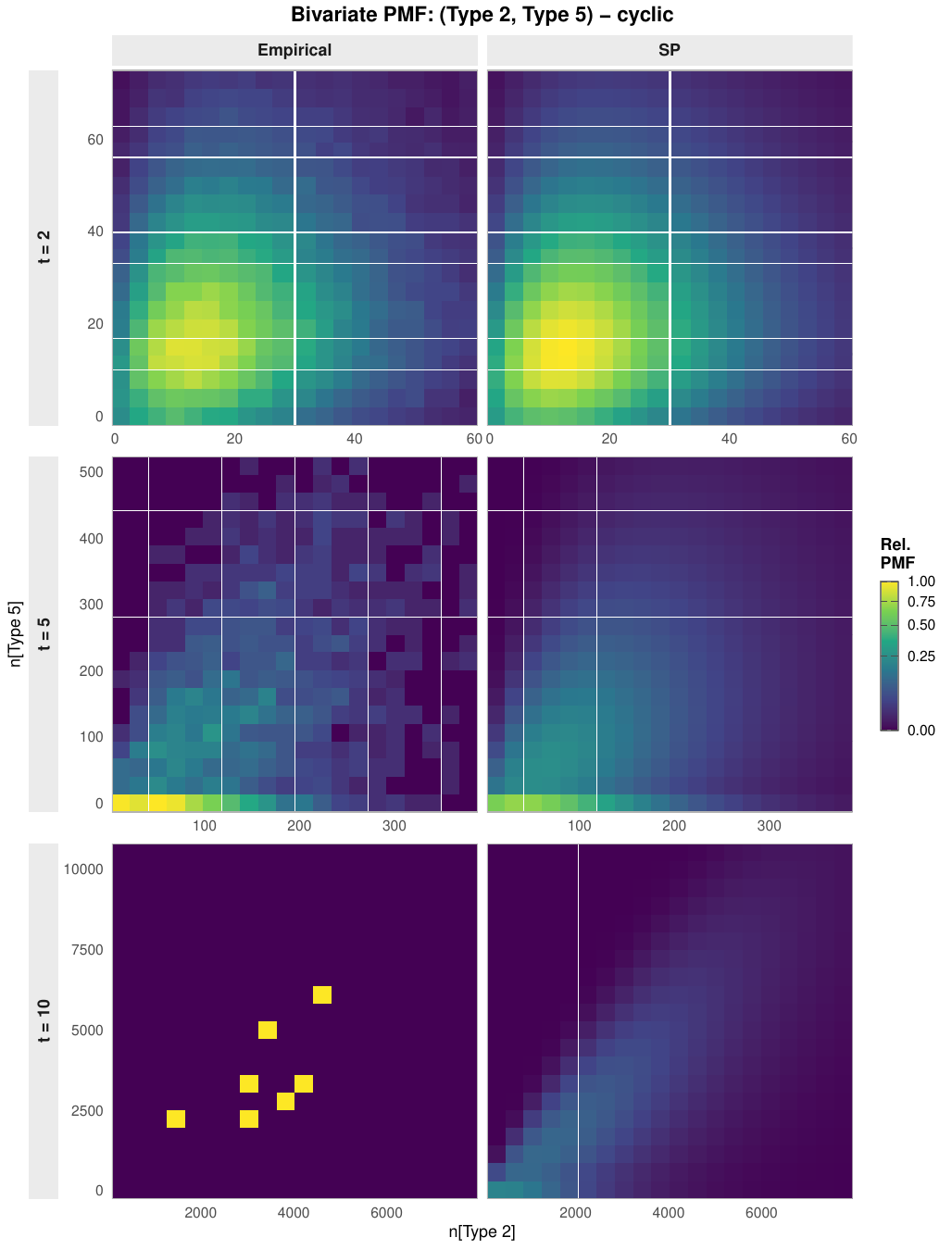}
    \caption{Bivariate $\pmf$\ comparison for the ancestor-descendant pair $(Z_2, Z_5)$ at $t \in \{2, 5, 10\}$, supercritical cyclic scenario. Columns show the Gillespie reference and SP approximation; LTSM does not apply to cyclic graphs}
    \label{fig:bivariate_14_cyclic}
\end{figure}

\begin{figure}[H]
    \centering
    \includegraphics[width=\textwidth]{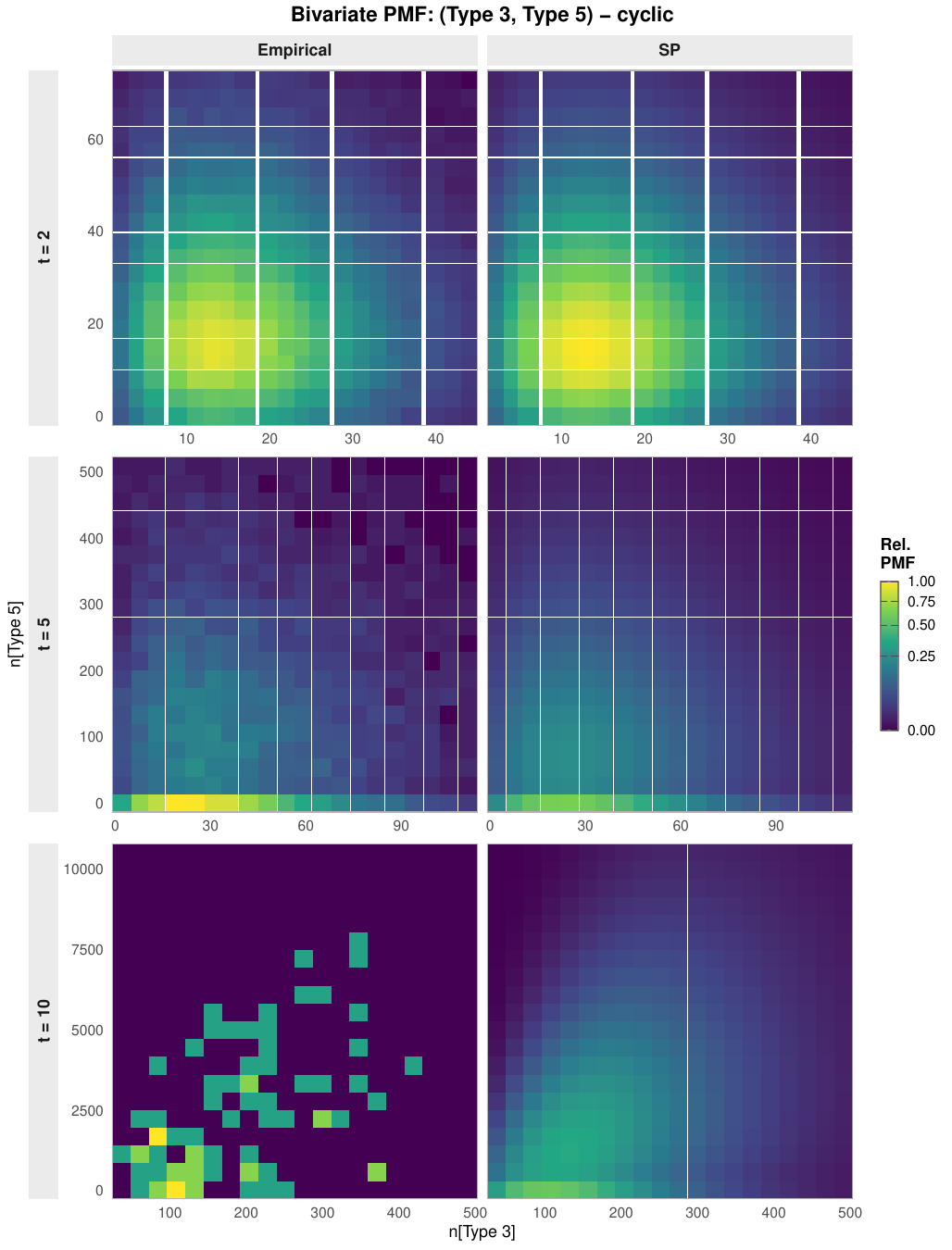}
    \caption{Bivariate $\pmf$\ comparison for the parent-child pair $(Z_3, Z_5)$ at $t \in \{2, 5, 10\}$, supercritical cyclic scenario. Columns show the Gillespie reference and SP approximation; LTSM does not apply to cyclic graphs}
    \label{fig:bivariate_24_cyclic}
\end{figure}

\section{AML-09 application}
\label{sec:aml09_appendix}

\paragraph{Data inputs}
The AML-09 mutation tree is the SCITE tree~\citep{jahn2016tree} reported by \citet{Morita2020ClonalGenomics}. It gives the mutation order and the single-cell subclone proportions at the observed diagnosis and relapse time points. The intervening remission state is unobserved. For confidentiality, the figure merges diagnosis and treatment at display time 0. It shows remission at 155 days and relapse 135 days later. For the branching-process calculation, we reset the model clock to $t=0$ at remission. By the Markov property, the relapse distribution depends on the unobserved remission state and the rounded remission-to-relapse interval
\[
  t_{\Delta}=135/365.25=0.370\ \mathrm{yr}.
\]

\paragraph{Clinical burden estimates}
At diagnosis, the total leukemic burden is estimated as in FiTree~\citep{Luo2025Bayesian},
\[
\begin{split}
C_{\mathrm{tumor}}
&= \mathrm{PB\_WBC}\times \mathrm{PB\_Blast}\times
   \text{adult blood volume} \\
&\quad + \mathrm{BM\_Blast}\times
   \text{adult bone marrow mononuclear cell count}.
\end{split}
\]
We use an adult blood volume of \SI{5e6}{\micro\liter} and an adult bone marrow mononuclear cell count of $7.53\times10^{11}$~\citep{bianconi2013estimation}. At relapse, only the bone-marrow blast percentage is available. We therefore estimate the peripheral-blood blast contribution from the diagnosis ratio between peripheral-blood and marrow blasts. The raw clinical fields are confidential and are not reported. The clinical measurements give total population-size estimates $C_{\mathrm{tumor,diag}}$ and $C_{\mathrm{tumor,rel}}$. We multiply these totals by the corresponding single-cell subclone proportions to obtain the diagnosis and relapse population-size estimates. Let $\pi_{v,\mathrm{rel}}$ denote the single-cell relapse proportion of subclone $v$. We denote the estimated relapse population size by
\[
  \hat z_{v,\mathrm{rel}}
  =C_{\mathrm{tumor,rel}}\pi_{v,\mathrm{rel}}.
\]
The unobserved remission burden is denoted $N_{\mathrm{rem}}$. The clinical complete-remission threshold gives the upper bound $N_{\mathrm{rem}}\le 0.05\times7.53\times10^{11}=3.765\times10^{10}$ cells~\citep{bisel1956criteria}. The inputs are summarized in \tref{tab:aml09_assumptions}.

\paragraph{Six-type branching process}
The root is a static wild-type compartment with $10^5$ cells. It supplies direct immigration only to the founding \gene{NPM1} p.L287fs subclone, at rate $\nu_{\gene{NPM1}}=10^5\mu_{\mathrm{root},\gene{NPM1}}=1.35\times10^{-3}\ \mathrm{yr}^{-1}$. All other direct immigration rates $\nu_v$ are zero. Mutation rates $\mu_{ij}$ on tree edges use the gene-level rates of \citet{Luo2025Bayesian}. Death rates are fixed at $\beta_v=1\ \mathrm{yr}^{-1}$ for all subclones.

\paragraph{Remission initialization}
The sorafenib-sensitive subclones \gene{FLT3}-ITD and \gene{FLT3} p.D835E are initialized at zero at remission~\citep{Smith2015FLT3}. Let
\[
  S=\{\gene{NPM1}\ \mathrm{p.L287fs},\ \gene{KRAS}\ \mathrm{p.G13D},\ \gene{FLT3}\ \mathrm{p.D835Y},\ \gene{WT1}\ \mathrm{p.P372fs}\}
\]
be the non-sensitive subclones. The remission proportions are obtained by renormalizing their diagnosis frequencies,
\[
  \pi_{v,\mathrm{rem}}
  =
    \frac{\pi_{v,\mathrm{diag}}}{\sum_{u\in S}\pi_{u,\mathrm{diag}}},
  \qquad v\in S.
\]
Thus, at model time $t=0$, the initial state is
\[
  \bz(0)=\bzs_0,
  \qquad
  z_{0,v}=z_{v,\mathrm{rem}}
  =\left\lfloor N_{\mathrm{rem}}\pi_{v,\mathrm{rem}}\right\rfloor,
  \qquad v\in S.
\]
The sorafenib-sensitive subclones have $z_{0,v}=0$.

For the \gene{WT1}-specific calculation in panel (c) of \fref{fig:aml09sensitivity}, we vary its latent diagnosis proportion $\pi_{\mathrm{WT1},\mathrm{diag}}$. We scan $\pi_{\mathrm{WT1},\mathrm{diag}}=0$ and positive values corresponding to at least one latent \gene{WT1} diagnosis cell up to the 0.1\% detection limit~\citep{Morita2020ClonalGenomics}.

\begin{table}[!t]
\centering
\begingroup
\small
\renewcommand{\arraystretch}{1.15}
\begin{tabular}{>{\raggedright\arraybackslash}p{0.20\textwidth}>{\raggedright\arraybackslash}p{0.20\textwidth}>{\raggedright\arraybackslash}p{0.52\textwidth}}
\hline
Symbol & Value used & Description \\
\hline
$C_{\mathrm{tumor,diag}}$ & $6.0\times10^{11}$ cells & Total leukemic cells at diagnosis, estimated from peripheral-blood and bone-marrow blast data. \\
$C_{\mathrm{tumor,rel}}$ & $5.0\times10^{11}$ cells & Total leukemic cells at relapse, estimated from bone-marrow blast percentage and the diagnosis peripheral-blood to bone-marrow blast ratio. \\
$N_{\mathrm{rem}}$ & $3.765\times10^{10}$ cells at the ceiling & Total leukemic cells at remission. The displayed ceiling is $0.05\times7.53\times10^{11}$. \\
$t_{\Delta}$ & $135/365.25=0.370$ yr & Rounded remission-to-relapse interval. \\
$\mu_{ij}$ & gene-level rates & Mutation rates assigned to tree edges. The incoming \gene{NPM1} rate is multiplied by $10^5$ static wild-type cells to produce immigration. The incoming edge rates are listed by subclone in \tref{tab:aml09_clones}. \\
$\beta_v$ & $1\ \mathrm{yr}^{-1}$ & Death rate for every modeled subclone. \\
$F=(f_{ij})$ & posterior median & FiTree fitness matrix inferred from diagnosis samples across 123 AML patients. \\
$\lambda_{v,\mathrm{diag}}$ & $\exp(\varphi_v^0)-1$ & Diagnosis net growth rate implied by the FiTree matrix and $\beta_v=1$. \\
$m_{\mathrm{shared}}$ & 0 to 500, with 52.716 for panel (c) & Shared multiplicative factor for relapse net growth rates. \\
$m_{\mathrm{WT1}}$ & 1 to 8 & Additional \gene{WT1}-specific multiplier in panel (c) of the main results figure. \\
\hline
\end{tabular}
\endgroup
\caption{Data and model inputs for the AML-09 application.}
\label{tab:aml09_assumptions}
\end{table}

\paragraph{Growth parameters}
FiTree's tree-structured population process is a special case of the model in \sref{sec:model}~\citep{Luo2025Bayesian}. In FiTree, all mutant subclones start with zero cells, and the static wild-type source acts through $\nu_{\gene{NPM1}}$. FiTree uses a large-time small-mutation-rate approximation to infer fitness from the pre-treatment diagnosis samples. Here, we retain the posterior-median fitness matrix as a fixed, cohort-derived pre-treatment reference for the diagnosis net growth rates. FiTree's full generative model also contains a tumor-sampling mechanism, which is not part of this special-case relation. The finite-time saddle-point calculation instead starts from the structured, generally nonzero remission state $\bzs_0$.

For a subclone $v$, the genotype $g_v$ contains the genes accumulated along the tree path from the root to $v$. The FiTree diagnosis fitness matrix $F=(f_{ij})$ gives
\[
  \varphi_v^0
  = \max_{i\in g_v} f_{ii}
    + \sum_{\{i,j\}\subset g_v,\ i\neq j} f_{ij},
  \qquad
  \lambda_{v,\mathrm{diag}}=\exp(\varphi_v^0)-1.
\]
Across post-treatment scenarios, only the net growth rate, and therefore the replication rate, is varied. The immigration, death, and mutation rates remain fixed at their baseline values. Within each parameter setting, all rates are held fixed between remission and relapse. The shared factor $m_{\mathrm{shared}}$ scales diagnosis net growth rates as
\[
  \lambda_{v,\mathrm{rel}}=m_{\mathrm{shared}}\lambda_{v,\mathrm{diag}},
  \qquad
  \alpha_{v,\mathrm{rel}}=1+\lambda_{v,\mathrm{rel}}.
\]
For the \gene{WT1}-specific calculation, $N_{\mathrm{rem}}$ is fixed at the 5\% remission ceiling and $m_{\mathrm{shared}}=52.716$ is chosen by matching the deterministic six-type mean of \gene{FLT3} p.D835Y to its estimated relapse count. The \gene{WT1} net growth rate then receives an additional multiplier,
\[
  \lambda_{\mathrm{WT1},\mathrm{rel}}
  = m_{\mathrm{WT1}}m_{\mathrm{shared}}\lambda_{\mathrm{WT1},\mathrm{diag}}.
\]
Subclone-specific proportions and rates are listed in \tref{tab:aml09_clones}.

\paragraph{Probability calculation}
For \gene{NPM1} p.L287fs and \gene{FLT3} p.D835Y, we evaluate the marginal exceedance probability
\[
  P(Z_v(t_{\Delta})\ge \hat z_{v,\mathrm{rel}}
  \mid N_{\mathrm{rem}},m_{\mathrm{shared}})
\]
over $\log_{10}N_{\mathrm{rem}}=0,0.25,\ldots,10.5$, the exact 5\% remission ceiling, and $m_{\mathrm{shared}}=0,1,\ldots,500$. For \gene{WT1} p.P372fs, we fix $N_{\mathrm{rem}}$ and $m_{\mathrm{shared}}$ as above and evaluate
\[
  P(Z_{\mathrm{WT1}}(t_{\Delta})\ge \hat z_{\mathrm{WT1},\mathrm{rel}}
  \mid \pi_{\mathrm{WT1},\mathrm{diag}},m_{\mathrm{WT1}})
\]
over latent diagnosis proportions below 0.1\% and $m_{\mathrm{WT1}}=1,1.05,\ldots,8$.
\begin{table}[H]
\centering
\begingroup
\scriptsize
\renewcommand{\arraystretch}{1.5}
\begin{adjustbox}{width=\textwidth}
\begin{tabular}{>{\raggedright\arraybackslash}p{0.18\textwidth}rrrrrrr}
\hline
Subclone & Diagnosis & Relapse & $\lambda_{v,\mathrm{diag}}$ & $\alpha_{v,\mathrm{diag}}$ & $\beta_v$ & $\mu_{\operatorname{pa}(v),v}$ & $\nu_v$ \\
\hline
\gene{NPM1} p.L287fs & 14.9\% & 5.7\% & 0.2787 & 1.2787 & 1 & $1.350\times10^{-8}$ & $1.350\times10^{-3}$ \\
\gene{FLT3}-ITD & 69.4\% & 0\% & 0.2567 & 1.2567 & 1 & $1.297\times10^{-7}$ & 0 \\
\gene{FLT3} p.D835E & 8.6\% & 0\% & 0.2567 & 1.2567 & 1 & $1.297\times10^{-7}$ & 0 \\
\gene{FLT3} p.D835Y & 1.7\% & 86.5\% & 0.2567 & 1.2567 & 1 & $1.297\times10^{-7}$ & 0 \\
\gene{KRAS} p.G13D & 5.4\% & 0\% & 0.2533 & 1.2533 & 1 & $4.925\times10^{-8}$ & 0 \\
\gene{WT1} p.P372fs & 0\% & 7.8\% & 0.1755 & 1.1755 & 1 & $9.685\times10^{-8}$ & 0 \\
\hline
\end{tabular}
\end{adjustbox}
\endgroup
\caption{Subclone-specific inputs parsed from the AML-09 mutation tree and FiTree fitness matrix. Rates are in $\mathrm{yr}^{-1}$.}
\label{tab:aml09_clones}
\end{table}

\end{document}